\Crefname{remark}{Remark}{Remarks}
\Crefname{observation}{Observation}{Observations}
\theoremstyle{plain}
\newtheorem{theorem}{Theorem}
\newtheorem{lemma}[theorem]{Lemma}
\newtheorem{corollary}[theorem]{Corollary}
\theoremstyle{definition}
\newtheorem{definition}[theorem]{Definition}
\newtheorem{observation}[theorem]{Observation}
\theoremstyle{remark}
\newcommand{\LOCAL}{\ensuremath{\mathsf{LOCAL}}\xspace}
\newcommand{\CONGEST}{\ensuremath{\mathsf{CONGEST}}\xspace}
\DeclareMathOperator*{\argmax}{arg\,max}
\newcommand{\myceil}[1]{\left \lceil #1 \right \rceil }
\newcommand{\myfloor}[1]{\left \lfloor #1 \right \rfloor }
\definecolor{darkgreen}{rgb}{0,0.5,0}
\definecolor{darkred}{rgb}{0.4,0,0}
\newenvironment{myabstract}
{\list{}{\listparindent 1.5em%
		\itemindent    \listparindent
		\leftmargin    1cm
		\rightmargin   1cm
		\parsep        0pt}%
	\item\relax}
{\endlist}
\newenvironment{mycover}
{\list{}{\listparindent 0pt
		\itemindent    \listparindent
		\leftmargin    1cm
		\rightmargin   1cm
		\parsep        0pt}%
	\raggedright
	\item\relax}
{\endlist}
\newcommand{\myemail}[1]{\,$\cdot$\, {\small #1}}
\newcommand{\myaff}[1]{\,$\cdot$\, {\small #1}\par\smallskip}
\newcommand{\calA}{\mathcal{A}}
\newcommand{\calG}{\mathcal{G}}
\newcommand{\fA}{\mathcal{A}}
\DeclareMathOperator{\E}{\mathbb{E}}
\newcommand{\nodeavg}{\mathsf{AVG}_V}
\newcommand{\dist}{\operatorname{dist}}
\newcommand{\gs}[1]{{\color{purple} \textbf{Gustav}: #1}}
\newcounter{open}
\newtheorem{oq}[open]{Open Problem}
\renewcommand{\subparagraph}[1]{\medskip\noindent\textbf{#1}}
\crefname{algocf}{Alg.}{Algs.}
\Crefname{algocf}{Algorithm}{Algorithms}
\begin{document}

\setcounter{page}{0}

\renewcommand*{\thefootnote}{\fnsymbol{footnote}}

\begin{mycover}
    {\huge\bfseries On the Node-Averaged Complexity of Locally Checkable Problems on Trees\footnote{This work has been partially funded by the European Union - NextGenerationEU under the Italian Ministry of University and Research (MUR) National Innovation Ecosystem grant ECS00000041 - VITALITY – CUP: D13C21000430001, and by the German Research Foundation (DFG), Grant 491819048.}\par}
  \bigskip
  \bigskip
  \bigskip

  \textbf{Alkida Balliu}
  \myemail{alkida.balliu@gssi.it}
  \myaff{GSSI L'Aquila, Italy}
  \textbf{Sebastian Brandt}
  \myemail{brandt@cispa.de}
  \myaff{CISPA, Saarbr\"ucken, Germany}
  \textbf{Fabian Kuhn}
  \myemail{kuhn@cs.uni-freiburg.de}
  \myaff{University of Freiburg, Germany}
  \textbf{Dennis Olivetti}
  \myemail{dennis.olivetti@gssi.it}
  \myaff{GSSI L'Aquila, Italy}
  \textbf{Gustav Schmid}
  \myemail{schmidg@informatik.uni-freiburg.de}
  \myaff{University of Freiburg, Germany}

\end{mycover}
\bigskip

\renewcommand*{\thefootnote}{\arabic{footnote}}
\addtocounter{footnote}{-1}

\begin{myabstract}
Over the past decade, a long line of research has investigated the distributed complexity landscape of locally checkable labeling (LCL) problems on bounded-degree graphs, culminating in an almost-complete classification on general graphs and a complete classification on trees.
The latter states that, on bounded-degree trees, any LCL problem has deterministic \emph{worst-case} time complexity $O(1)$, $\Theta(\log^* n)$, $\Theta(\log n)$, or $\Theta(n^{1/k})$ for some positive integer $k$, and all of those complexity classes are nonempty. Moreover, randomness helps only for (some) problems with deterministic worst-case complexity $\Theta(\log n)$, and if randomness helps (asymptotically), then it helps exponentially.

In this work, we study how many distributed rounds are needed \emph{on average per node} in order to solve an LCL problem on trees. We obtain a partial classification of the deterministic \emph{node-averaged} complexity landscape for LCL problems. As our main result, we show that every problem with worst-case round complexity $O(\log n)$ has deterministic node-averaged complexity $O(\log^* n)$. \\
Then we show how using randomization we can speed this up and show that every problem with worst case round complexity $O(\log n)$ has randomized node-averaged complexity $O(1)$.\\
We further establish bounds on the node-averaged complexity of problems with worst-case complexity $\Theta(n^{1/k})$: we show that all these problems have node-averaged complexity $\widetilde{\Omega}(n^{1 / (2^k - 1)})$, and that this lower bound is tight for some problems. The lower bound holds even for the randomized case and the upper bound is a deterministic algorithm.\\
\end{myabstract}

\section{Introduction}
\label{sec:intro}

The family of locally checkable labeling (LCL) problems was introduced in the seminal work of Naor and Stockmeyer~\cite{NaorStockmeyer95} and since then, understanding the distributed complexity of computing LCLs has been at the core of the research on distributed graph algorithms. Roughly speaking, LCLs are labelings of the nodes or edges of a graph $G=(V,E)$ with labels from a finite alphabet such that some local, constant-radius condition holds at all the nodes. In the distributed context, $G$ represents a network and one typically assumes that the nodes of $G$ can communicate over the edges of $G$ in synchronous rounds. If this communication is unrestricted, this is known as the \LOCAL model of computation and if messages must consist of $O(\log n)$ bits (where $n$ is the number of nodes), it is known as the \CONGEST model. In our paper, we focus on the \LOCAL model and we therefore do not explicitly analyze the required message sizes of our algorithms. We however believe that all our algorithms can be made to work in the \CONGEST model with minor modifications.

Often LCL problems are studied in the context of bounded-degree graphs. In this case, LCLs include problems such as properly coloring the nodes of $G$ with $\Delta+1$ colors, where $\Delta$ is the maximum degree of $G$. Especially over the last decade, researchers have obtained a thorough understanding of the complexity landscape of distributed LCL problems in general bounded-degree graphs \cite{CP19timeHierarchy,CKP19exponential,FischerGhaffari17LLL,RG20NetDecomposition,BHKLOS18lclComplexity,BBOS20paddedLCL} and also in more special graph families such as in particular in bounded-degree trees~\cite{brandt21trees,BHOS19HomogeneousLCL,CKP19exponential,CP19timeHierarchy,BBOS18almostGlobal,chang20}. Most of this work focuses on the classic notion of worst-case complexity: If all nodes start a computation at time $0$ and communicate in synchronous rounds, how many rounds are needed until \emph{all nodes} have decided about their outputs. In some case however, the worst-case round complexity might be determined by a small number of nodes that require a lot of time to compute their outputs, while most of the nodes find their outputs much faster. Consider for example the simple randomized $(\Delta+1)$-coloring algorithm where in every step, every node picks a random available color and permanently keeps this color if there is no conflict. It is not hard to show that in every step, every uncolored node becomes colored with constant probability~\cite{Johansson99}. Hence, while we need $\Omega(\log n)$ steps (and thus also $\Omega(\log n)$ rounds) until all nodes are colored, for each individual node, the expected time to become colored is constant and consequently the time that nodes need on average to become colored is also constant w.h.p. In some contexts (e.g., when considering the energy cost of a distributed algorithm), this average completion time per node is more meaningful than the worst-case completion time and consequently, researchers have recently showed interest in determining the \emph{node-averaged} time complexity of distributed graph algorithms~\cite{Feuilloley17,BarenboimT19,ChatterjeeGP20,BalliuGKO22_average}. In the present paper, we continue this work and we study the \emph{node-averaged complexity of LCL problems in bounded-degree trees}. Before describing our contributions, we first briefly summarize some of the relevant previous work.

\subparagraph{Previous results on node-averaged complexity.} The first paper that explicitly considered the node-averaged complexity of distributed graph algorithms is by Feuilloley~\cite{Feuilloley17}. The paper mainly considers LCL problems on paths and cycles (i.e., on graphs of maximum degree $2$). It is known that on paths and cycles, when considering the worst-case complexity of LCL problems, randomization does not help and the only complexities that exist are $O(1)$, $\Theta(\log^* n)$, and $\Theta(n)$~\cite{NaorStockmeyer95,CKP19exponential,CP19timeHierarchy}. In \cite{Feuilloley17}, it is shown that for deterministic algorithms, the worst-case complexity and the node-averaged complexity of LCL problems on paths and cycles is the same. This for example implies that the classic $\Omega(\log^* n)$ lower bound of \cite{Linial92} for coloring cycles with a constant number of colors also applies to node-averaged complexity. While this is true for deterministic algorithms, it is also shown in \cite{Feuilloley17} that the randomized node-averaged complexity of $3$-coloring paths and cycles is constant. As sketched above and also explicitly proven in \cite{BarenboimT19}, the same is true for the more general problem of computing a $(\Delta+1)$-coloring in arbitrary graphs. While the results of \cite{Feuilloley17} imply results for general LCLs on paths and cycles, the additional work on node-averaged complexity focused on the complexity of specific graph problems, in particular on the complexity of well-studied classic problems such as computing a maximal independent set (MIS) or a vertex coloring of the given graph. Barenboim and Tzur~\cite{BarenboimT19} show that in graphs of small arboricity, some coloring problems have a deterministic node-averaged complexity that is significantly smaller than the corresponding worst-case complexity. For example, it is shown that if the arboricity is constant, an $O(k)$-vertex coloring can be computed in node-averaged complexity $O(\log^{(k)} n)$ for any fixed integer $k\geq 1$, where $O(\log^{(k)} n)$ denotes the $k$ times iterated logarithm of $n$. As one of the main results of \cite{BalliuGKO22_average}, it was shown that the MIS lower bound of \cite{KMW16} can be generalized to show that even with randomization, computing an MIS on general (unbounded degree) graphs requires node-averaged complexity $\Omega(\sqrt{\log n/\log\log n})$. Hence, while the problem of coloring with $(\Delta+1)$ colors and, as also shown in \cite{BalliuGKO22_average}, the problem of computing a $2$-ruling set have randomized algorithms with constant node-averaged complexity, the same is not true for the problem of computing an MIS.

\subparagraph{LCL complexity in bounded-degree trees.} One of the goals of this paper is to make a step beyond understanding individual problems and to start studying the landscape of possible node-averaged complexities of general LCL problems. We do this by studying LCL problems on bounded-degree trees, a graph family that we believe is relevant and that has recently been studied intensively from a worst-case complexity point of view (e.g., \cite{CKP19exponential,CP19timeHierarchy,ChangHLPU20_treeLLL,BHOS19HomogeneousLCL,FischerGhaffari17LLL,RG20NetDecomposition,brandt21trees}). In bounded-degree trees, we know that for deterministic algorithms, exactly the following worst-case complexities are possible: $O(1)$, $\Theta(\log^* n)$, $\Theta(\log n)$, and $\Theta(n^{1/k})$ for some fixed integer $k\geq 1$. It was shown in \cite{brandt21trees} (and earlier for a special subclass of LCLs in \cite{BHOS19HomogeneousLCL} and for paths in \cite{NaorStockmeyer95,CP19timeHierarchy}) that on bounded-degree trees, there are no deterministic or randomized asymptotically optimal algorithms with a time complexity in the range $\omega(1)$ to $o(\log^* n)$. Further, in \cite{CKP19exponential}, it was shown that even for general bounded-degree graphs, there are no deterministic LCL complexities in the range $\omega(\log^*n)$ to $o(\log n)$. Finally, it was shown in \cite{CP19timeHierarchy} that every LCL problem that requires $\omega(\log n)$ rounds on bounded-degree trees has a worst-case deterministic and randomized complexity of the form $\Theta(n^{1/k})$ for some fixed integer $k\geq 1$ (and all those complexities also exist). It is further known that randomization can only help for LCL problems with a deterministic complexity of $\Theta(\log n)$. Those problems have a randomized complexity of either $\Theta(\log n)$ or $\Theta(\log\log n)$ (and both cases exist)~\cite{CP19timeHierarchy,ChangHLPU20_treeLLL}. We discuss additional related work on the complexity landscape of LCLs in \Cref{apx:related-lcls}.

\subsection{Our Contributions}
\label{sec:contributions}

As our main result, we show that the $\Theta(\log n)$ complexity class vanishes when considering the node-averaged complexity of LCLs on bounded-degree trees. More formally, we prove the following theorem.

\begin{theorem}\label{thm:logstaraverage}
    Let $\Pi$ be an LCL problem for which there is an $O(\log n)$-round deterministic algorithm on bounded-degree trees. Then, $\Pi$ can be solved deterministically with node-averaged complexity $O(\log^* n)$ on bounded-degree trees. 
\end{theorem}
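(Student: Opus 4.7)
The plan is to combine the structural characterization of $O(\log n)$-round LCLs on bounded-degree trees with a decomposition that isolates a small ``difficult core'' $S \subseteq V$ of size $O(n/\log n)$, while handling the large ``periphery'' $V \setminus S$ in $O(\log^* n)$ rounds per node. Once such an algorithm is in place, the sum of all nodes' completion times is at most $O(\log n) \cdot |S| + O(\log^* n) \cdot (n - |S|) = O(n \log^* n)$, giving the desired $O(\log^* n)$ node-averaged complexity after dividing by $n$.

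Concretely, I would first invoke the fact that $\Pi$ admits a rake-and-compress-based solver, since it has worst-case complexity $O(\log n)$ on bounded-degree trees: iteratively peeling leaves and contracting degree-$2$ paths decomposes the tree into $O(\log n)$ layers, and outputs can be assigned layer by layer via constant-radius local extensions. The idea is to execute such a peeling process but stop as soon as the uncollapsed subgraph $S$ has at most $n/\log n$ nodes. Peripheral nodes (those peeled before $S$ is reached) should learn their layer and their local extension context in only $O(\log^* n)$ rounds; the core $S$ is then solved globally using the assumed $O(\log n)$-round algorithm as a black box; finally, the boundary outputs on $S$ are propagated back through the peripheral layers. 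The back-propagation must be arranged so that each peripheral node commits within $O(\log^* n)$ rounds once $S$ is fixed: within each peeled layer the remaining local structure is either a short path or a bounded-depth subtree, and on such structures the residual problem is solvable by classical $O(\log^* n)$-round symmetry breaking (Cole--Vishkin style), using a distance-$k$ coloring computed in $O(\log^* n)$ rounds.

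The two crucial ingredients are therefore (i) a decomposition with $|S| = O(n/\log n)$, computable so that peripheral nodes commit fast, and (ii) a black-box-style $O(\log^* n)$-round extension procedure for the periphery given the boundary outputs on $S$. For (ii), the key observation is that once the hard core $S$ and its boundary labeling are fixed, each peripheral component reduces to an instance of $\Pi$ whose residual worst-case complexity collapses to $O(\log^* n)$, because only purely local symmetry breaking remains.

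The main obstacle I anticipate lies in (i): vanilla rake-and-compress shrinks the remaining graph by only a constant factor per iteration, so reaching $|S| = O(n/\log n)$ requires $\Theta(\log \log n)$ iterations, which already costs $\Theta(\log \log n)$ per peripheral node and falls short of $O(\log^* n)$. Overcoming this seems to require committing multiple peeling layers ``in parallel'' using global structure that is itself computable in $O(\log^* n)$ rounds --- for instance, combining distance-$k$ colorings and ruling sets on trees to directly carve out a core of size $O(n/\log n)$ in one shot, rather than iterating rake-and-compress $\Theta(\log\log n)$ times, and then carefully charging any residual $\omega(\log^* n)$ work to the few core nodes that can afford an $O(\log n)$-round computation without hurting the average.
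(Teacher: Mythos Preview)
Your proposal has a fundamental accounting gap. You write the sum of completion times as $O(\log n)\cdot|S| + O(\log^* n)\cdot(n-|S|)$, but a peripheral node cannot commit to an output until the core $S$ has been solved and the boundary labels have propagated back to it. You yourself say peripheral nodes commit ``within $O(\log^* n)$ rounds once $S$ is fixed'' --- but fixing $S$ already costs $\Theta(\log n)$ rounds, so each peripheral node's completion time is $\Theta(\log n)+O(\log^* n)=\Theta(\log n)$, and the true sum is $\Theta(n\log n)$. Charging the residual to core nodes cannot rescue this: there are $\Theta(n)$ peripheral nodes each running $\Theta(\log n)$ rounds and only $O(n/\log n)$ core nodes to absorb the charge. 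Concretely, take $3$-coloring on a balanced binary tree: almost all nodes are near the leaves, but a leaf cannot choose a color until its parent's color is fixed, which depends on the grandparent, and so on up to the root; the dependency chain has depth $\Theta(\log n)$ for almost every node, and any scheme that first solves a core and only then propagates outward inherits this chain for the vast majority. The $\Theta(\log\log n)$ obstacle you identify in~(i) is real but secondary --- the primary obstacle is the top-down dependency itself, which your framework does not break.

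The paper confronts this dependency directly: labels must be committed \emph{during} the bottom-up decomposition, not only afterward. While running a modified rake-and-compress, after each iteration every remaining free node $r$ selects, at constant depth inside its already-assigned subtree, a node $v^*$ maximizing a \emph{quality} measure (the number of not-yet-terminated descendants of $v^*$), and \emph{promotes} $v^*$ to be a local maximum in the layering by inserting a short artificial compress path above it. A local maximum together with its entire subtree can immediately begin the top-down label assignment and terminate, without waiting for anything above. The technical core is a potential-function argument showing that these greedy promotions force a constant fraction of the remaining unterminated nodes to finish each iteration; hence the average node terminates after $O(1)$ iterations, and the only $O(\log^* n)$ cost is an upfront distance coloring used to split long compress paths into constant-length pieces. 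Your core/periphery framework has no analogue of this interleaved early commitment, and the claim that the peripheral residual ``collapses to $O(\log^* n)$'' is simply false as long as the core's labels are not yet available.
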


A standard example for an LCL problem that requires $\Theta(\log n)$ rounds deterministically is the problem of $3$-coloring a tree. So for 3-coloring \Cref{thm:logstaraverage} states that there is a deterministic distributed $3$-coloring algorithm, for bounded degree trees, with node-averaged complexity $O(\log^* n)$ rounds. Meaning that the average node terminates after $O(\log^*n)$ rounds. Note that for $3$-coloring trees deterministically, this is tight. As shown in \cite{Feuilloley17}, $3$-coloring has deterministic node-averaged complexity $\Omega(\log^* n)$ even on paths. Below, we will use the $3$-coloring problem as a simple example to illustrate some of the challenges in obtaining the above theorem, but first we state the rest of our results.

We show that using randomization gives us a significant advantage over the deterministic case and results in a speedup to constant node-averaged complexity.
\begin{theorem}\label{thm:RandomConst}
    Let $\Pi$ be an LCL problem for which there is an $O(\log n)$-round deterministic algorithm on bounded-degree trees. Then using randomization, $\Pi$ can be solved with node-averaged complexity $O(1)$ on bounded-degree trees. 
\end{theorem}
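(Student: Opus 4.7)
The plan is to reduce Theorem~\ref{thm:RandomConst} to Theorem~\ref{thm:logstaraverage} by inspecting the latter's proof and replacing its only $\log^* n$ source with a randomized subroutine of $O(1)$ expected cost per node. The natural skeleton for the deterministic $O(\log^* n)$ node-averaged algorithm is a Miller--Reif style rake-and-compress decomposition which, after $O(\log n)$ phases, reduces the tree to nothing and guarantees that the number of nodes first removed in phase $i$ is $O(n \cdot 2^{-i})$. A node in layer $i$ is handled deterministically in $O(\log^* n + i)$ rounds: the $O(i)$ term simulates the standard top-down pass of the worst-case $O(\log n)$-round algorithm on its layer-depth-$i$ neighborhood, while the $O(\log^* n)$ term is spent on an initial distance-$O(1)$ symmetry-breaking step (e.g., Linial's coloring) that the deterministic algorithm requires as input. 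Summing gives $\sum_{i \ge 0} O(n \cdot 2^{-i}) \cdot O(\log^* n + i) = O(n \log^* n)$.

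Randomization can shrink the symmetry-breaking cost to $O(1)$ expected rounds per node. Concretely, each node picks in every round a uniformly random color from a palette of size $\Theta(\Delta)$ and keeps it if it does not collide with any neighbor's choice in that round; the success probability in a single round is a constant depending only on $\Delta$, so each node obtains a fixed color after $O(1)$ expected rounds, and the resulting $O(\Delta)$-coloring is then fed as the input to the remaining deterministic rake-and-compress machinery. Under this replacement, a node in layer $i$ completes in $O(i + X)$ rounds where $X$ is a random variable with constant expectation. Taking expectations node by node,
\[
\E\!\left[\sum_{v \in V} T(v)\right] = \sum_{i \ge 0} O(n \cdot 2^{-i}) \cdot O(i+1) = O(n),
\]
so dividing by $n$ yields node-averaged complexity $O(1)$.

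The main obstacle I expect is guaranteeing that the randomized symmetry-breaking step truly composes with the rake-and-compress framework without reintroducing a super-constant overhead. A layer-$i$ node's output may depend on its entire $O(i)$-hop neighborhood; if one naively argues that each node is correctly symmetry-broken only with high probability, the union bound forces a $\Theta(\log n)$ blow-up and destroys the gain. The right way to proceed is to argue purely in expectation per node, using the fact that the geometric decrease in layer sizes absorbs the $O(i)$ dependence: even if a layer-$i$ node has to wait for $O(i)$-many of its ancestors to successfully pick non-conflicting colors, the sum of their expected waiting times telescopes into $O(i)$ and is then dominated by the $\sum_i (i/2^i)$ bound above. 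A secondary step, which I believe to be standard but should be checked, is that the deterministic algorithm underlying Theorem~\ref{thm:logstaraverage} really can treat an $O(1)$-coloring as a black-box oracle and otherwise run in time proportional to the peeling depth of each node.
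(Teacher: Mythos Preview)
Your high-level instinct---replace the single $\log^* n$ bottleneck in the proof of Theorem~\ref{thm:logstaraverage} by a randomized subroutine with $O(1)$ expected per-node cost---is exactly the paper's plan. But your description of \emph{what} that bottleneck is, and hence your proposed replacement, misses the mark in a way that leaves a real gap.

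The deterministic algorithm behind Theorem~\ref{thm:logstaraverage} does not have the structure ``node in peeling layer $i$ finishes in $O(\log^* n + i)$ rounds.'' The $O(\log^* n)$ is a \emph{one-time global precomputation} of a distance-$s$ coloring (for a constant $s$ depending on $\Pi$), and it is used in exactly one place: inside each Compress step, to split long degree-$2$ paths into pieces of constant length by iterating through color classes to build a ruling set $Z$. After that precomputation, the remaining machinery (rake, compress, promotion to local maxima) runs in $O(1)$ rounds per iteration, and the promotion mechanism---not the peeling depth---is what makes the average termination iteration $O(1)$. So your model ``random-color each node, then run the deterministic algorithm'' does not compose cleanly: the Compress step iterates synchronously through color classes, and if you feed it an asynchronously computed coloring, the iteration stalls on uncolored nodes; a compress path can be long, so waiting for its coloring to finish reintroduces a non-constant delay. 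Your ``sum of ancestors' waiting times telescopes into $O(i)$'' argument addresses the wrong dependency (the coloring is consumed along compress paths, not along ancestor chains) and is anyway about a maximum, not a sum.

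The paper's fix is more surgical. It isolates the ``Compress Problem'' on each long path and restructures Compress so that (i) the label-sets on the two outgoing edges $e_\ell, e_r$ are computed \emph{deterministically} in $O(1)$ rounds from short prefixes $P_\ell, P_r$ at the ends, letting the outer decomposition proceed synchronously with no waiting; and (ii) the interior nodes of the path run a direct randomized ruling-set election (each node becomes a candidate with probability $\Theta(1/\ell)$, joins $Z$ if locally unique), which yields $O(1)$ expected rounds per interior node and is decoupled from the rest of the algorithm. Lemma~\ref{lem:FastCompressGivesConstant} formalizes exactly the reduction you were reaching for: any Compress solver with $O(1)$ expected per-node cost, plugged into the existing framework, gives $O(1)$ node-averaged complexity. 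Your ``secondary step, which I believe to be standard''---that the coloring can be treated as a black-box oracle---is precisely the non-trivial decoupling that the paper has to engineer, and it is not achieved by your random-coloring substitution.
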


The main bottleneck of the algorithm from \cref{thm:logstaraverage} is to separate long paths into constant length subpaths. Using randomization we can achieve this in a way that the nodes only run for a constant number of rounds \emph{in expectation}. We show that if we are careful enough we can alter the algorithm to achieve a speedup to $O(1)$ node-averaged complexity. 

In addition to our results about the $O(\log n)$ worst case regime, we also investigate the node-averaged complexity of LCL problems that require polynomial time in the worst case (i.e., time $\Theta(n^{1/k})$ for some integer $k\geq 1$). We show that for such problems, also the node-averaged complexity is polynomial. However at least in some cases, it is possible to obtain a node-averaged complexity that is significantly below the worst-case complexity. In \cite{CP19timeHierarchy}, the hierarchical $2\frac{1}{2}$-coloring problem with parameter $k$ is defined as an example problem with worst-case complexity $\Theta(n^{1/k})$. We show that the node-averaged complexity of this LCL problem is significantly smaller.

\begin{theorem}\label{thm:twohalfcoloring}
    The deterministic node-averaged complexity of the hierarchical $2 \frac{1}{2}$-coloring problem with parameter $k$ is $O(n^{1/(2^k - 1)})$.
\end{theorem}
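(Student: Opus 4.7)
The plan is to describe a recursive algorithm and prove by induction on $k$ that the node-averaged complexity of level-$k$ hierarchical $2\frac{1}{2}$-coloring satisfies $T(k,n) = O(n^{1/(2^k-1)})$. The base case $k=1$ is immediate: $n^{1/(2^1-1)} = n$ matches the worst-case complexity, so simply running the worst-case algorithm at every node already realizes this bound.

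For the inductive step, I split the input tree into $\Theta(n/s)$ connected chunks of size at most $s$, where I choose $s := n^{(2^{k-1}-1)/(2^k-1)}$ to balance the two cost sources described next. On each chunk I invoke the inductive hypothesis for the level-$(k-1)$ version of the problem, which by induction has node-averaged cost $O(s^{1/(2^{k-1}-1)}) = O(n^{1/(2^k-1)})$ per chunk node. I then treat each chunk as a ``super-node'', obtaining a contracted tree on $\Theta(n/s)$ vertices, on which I run the known $O(n^{1/k})$-round worst-case algorithm for level-$k$. Only the $\Theta(n/s)$ chunk-boundary nodes pay this extra $O((n/s)^{1/k})$; amortized over all $n$ nodes, their contribution is $(1/s)\cdot(n/s)^{1/k}$, which for the chosen $s$ is also $O(n^{1/(2^k-1)})$. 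As a sanity check, for $k=2$ the two terms become interior cost $n^{1/3}$ and boundary cost $O(1)$, and for $k=3$ they become $n^{1/7}$ and $n^{-5/21}$, both consistent with the claim.

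The main obstacle is structural rather than computational: one must argue that a valid level-$k$ solution can actually be assembled in this way, by gluing level-$(k-1)$ solutions on the chunks to a level-$k$ solution on the contracted backbone. The hierarchical $2\frac{1}{2}$-coloring of~\cite{CP19timeHierarchy} is designed around exactly such a recursion---each ``escape'' label at level $k$ is certified by a substructure satisfying the level-$(k-1)$ constraints---so the expectation is that chunk boundaries can be placed at, or repurposed as, valid escape markers at level $k$, and that the level-$(k-1)$ colorings of adjacent chunks remain locally compatible when read inside such an escape. Working out these compatibility conditions, and checking that the chunk decomposition itself can be computed within $O(n^{1/(2^k-1)})$ additional rounds on average (e.g.\ by combining an $O(\log^* n)$ ruling-set-style decomposition with local adjustment near boundaries), is the step I expect to be delicate. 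Once these structural claims are in hand, the two-term recurrence above immediately yields the theorem.
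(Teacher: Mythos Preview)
Your arithmetic misses the simulation overhead in the backbone phase, and once that is restored the recursion no longer closes for $k\ge 3$. After contracting chunks of size $s$, the $\Theta(n/s)$ boundary nodes are still at distance $\Theta(s)$ from one another in the real tree, so one round of the worst-case level-$k$ algorithm on the contracted tree costs $\Theta(s)$ real rounds. Each boundary node therefore waits $\Theta\bigl(s\cdot (n/s)^{1/k}\bigr)$ rounds, not $\Theta\bigl((n/s)^{1/k}\bigr)$, and the amortized backbone contribution is $(n/s)^{1/k}$ rather than your $(1/s)\,(n/s)^{1/k}$. Balancing the interior term $s^{1/(2^{k-1}-1)}$ against $(n/s)^{1/k}$ at the target $n^{1/(2^k-1)}$ forces $k\ge 2^{k-1}$, which holds only for $k\le 2$; at $k=3$ with your $s=n^{3/7}$ the backbone term is $n^{4/21}$, strictly above $n^{1/7}=n^{3/21}$. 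There is also the structural gap you yourself flag: a connected chunk of the input still contains nodes of every level $1,\dots,k{+}1$, so it is not an instance of the level-$(k-1)$ problem, and the inductive hypothesis does not apply to it as stated.

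The paper's proof is non-recursive and sidesteps both issues. It runs in $k$ phases: in phase $i$, every level-$i$ node not already forced to $E$ spends $t_i=\Theta\bigl(n^{2^{i-1}/(2^k-1)}\bigr)$ rounds to see whether its level-$i$ path has length at most $t_i$; if so the path is $2$-coloured, otherwise the node outputs $D$. The crux is a charging argument: a level-$i$ node that does not output $E$ sits above $D$-labelled paths at each level $j<i$, each of length exceeding $t_j$ and each shared with at most one other higher-level node, so there are at most $O\bigl(n/\prod_{j<i} t_j\bigr)=O\bigl(n^{(2^k-2^{i-1})/(2^k-1)}\bigr)$ such nodes. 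Their total cost $t_i\cdot O\bigl(n^{(2^k-2^{i-1})/(2^k-1)}\bigr)=O\bigl(n^{1+1/(2^k-1)}\bigr)$ contributes $O\bigl(n^{1/(2^k-1)}\bigr)$ to the average in each of the $k=O(1)$ phases; the same count at $i=k$ shows the level-$k$ paths fit inside $t_k$, so $D$ is never needed there and correctness follows. The geometrically increasing per-level time budgets $t_1,\dots,t_k$ are exactly the balancing you were reaching for, but applied along the problem's own level hierarchy rather than an extrinsic spatial chunking.
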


Finally, we show that for a problem with worst-case complexity $\Theta(n^{1/k})$, this is essentially the best possible node-averaged complexity. Meaning that we also prove that our algorithm for hierarchical $2 \frac{1}{2}$-coloring problems is optimal up to one $\log n$ factor.

\begin{theorem}\label{thm:polylowerbound}
    Let $\Pi$ be an LCL problem with (deterministic or randomized) worst-case complexity $\Omega(n^{1/k})$. Then, the randomized node-averaged complexity of $\Pi$ is $\Omega(n^{1/(2^k - 1)} / \log n)$.
\end{theorem}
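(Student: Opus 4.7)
The plan is to proceed by a boosting-plus-recursion argument that combines Markov's inequality on the distribution of node completion times with the structural hardness of a carefully chosen family of hard instances witnessing the $\Omega(n^{1/k})$ worst-case lower bound. Suppose toward contradiction that $\Pi$ admits a randomized algorithm $\mathcal{A}$ with node-averaged complexity $T = o(n^{1/(2^k - 1)} / \log n)$. Fix a hard instance family $\{H_n\}$ for the worst-case bound, and let $X_v$ denote the random completion time of node $v$ when $\mathcal{A}$ runs on $H_n$. Then $\sum_v \E[X_v] = nT$, so Markov's inequality gives $\E[|\{v : X_v > \tau\}|] \leq nT/\tau$ for every $\tau > 0$. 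Running $\Theta(\log n)$ independent copies of $\mathcal{A}$ and selecting, for each threshold of interest, a copy whose count of unlabeled nodes is close to this expectation upgrades the bound to a high-probability guarantee, at a $\log n$ overhead that accounts for the $\log n$ factor in the final claim.

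The heart of the argument is a recursive use of the hardness of $H_n$. I would choose $\{H_n\}$ to be hierarchical, so that after any partial labeling of the nodes, a constant fraction of the unlabeled vertices still spans a hard sub-instance at a suitably smaller scale. Concretely, I set up a sequence of thresholds $\tau_0 < \tau_1 < \cdots < \tau_k$; at each time $\tau_i$, the high-probability Markov bound gives a ceiling $n_i = O(nT \log n / \tau_i)$ on the number of unlabeled nodes, while the structural lemma exhibits a hard sub-instance of size $\Theta(n_i)$ still present among them. Applying the $\Omega(n_i^{1/k})$ worst-case lower bound to this sub-instance forces $\tau_{i+1} - \tau_i = \Omega(n_i^{1/k})$, and hence the combined recurrence $n_{i+1} \leq O(nT \log n / n_i^{1/k})$. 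Demanding termination after $k$ iterations, that is $n_k \leq 1$, and solving this recurrence algebraically yields the bound $T \geq \Omega(n^{1/(2^k - 1)} / \log n)$; the exponent $1/(2^k - 1)$ emerges from the compounding of the Markov ceiling with the polynomial worst-case lower bound through the $k$ levels of the hierarchy.

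The main obstacle will be the structural lemma underpinning the recursion: showing that the hard instance family $\{H_n\}$ can be built so that no adaptive partial labeling produced by $\mathcal{A}$ destroys all of the hard sub-instances at the relevant scale. Unlike in the worst-case setting, where the adversary chooses the entire instance, here the adversary must commit to $H_n$ in advance and then absorb the algorithm's random choices. To overcome this, the plan is to design $H_n$ to contain many disjoint hard sub-instances at every relevant scale, so that a pigeonhole argument guarantees that a constant fraction of them remain essentially untouched by any partial labeling of a small fraction of the nodes. Formalizing this uniformly for every LCL with worst-case complexity $\Theta(n^{1/k})$ will rely on the recursive canonical form of such problems from the classification of LCL problems on bounded-degree trees.
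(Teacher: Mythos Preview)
Your plan has two genuine gaps.

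First, the recurrence you set up does not yield the exponent $1/(2^k-1)$. From $n_i \le nT\log n/\tau_i$ and $\tau_{i+1} \gtrsim n_i^{1/k}$ you obtain $n_{i+1} \lesssim M/n_i^{1/k}$ with $M = nT\log n$; in logarithms this is $x_{i+1} = \log M - x_i/k$, a linear recurrence with positive fixed point $\frac{k}{k+1}\log M$ that does not drive $n_k$ to $1$ in $k$ steps (try $k=2$). The exponent $1/(2^k-1)$ does not arise from iterating a $1/k$-th-power shrink $k$ times; it comes from a specific geometry in which the level-$i$ path length is $n^{2^{i-1}/(2^k-1)}$, so that the exponents double at each level and sum to $1$. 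Your recursion uses the wrong scaling, and there is in any case no a priori reason the number of Markov thresholds should coincide with the parameter $k$ of the worst-case bound.

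Second, and more serious, you cannot invoke the $\Omega(m^{1/k})$ worst-case lower bound on the ``sub-instance'' of still-running nodes. Those nodes sit inside a larger graph where many neighbours have already committed outputs; the residual task is $\Pi$ with adversarial boundary conditions, not $\Pi$, and no black-box lower bound transfers. Your structural lemma would have to show that no adaptive partial labeling produced by $\mathcal{A}$ can trivialise every remaining copy of the hard gadget, and this is essentially the entire theorem---a pigeonhole over disjoint copies does not suffice, because $\mathcal{A}$'s outputs may cut across all of them in a coordinated way.

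The paper avoids both issues by arguing in the opposite direction. Rather than lower-bounding $\mathcal{A}$ on a hard instance, it uses a hypothetically fast $\mathcal{A}$ \emph{constructively}: it builds the canonical hierarchical tree with level-$i$ compress paths pumped to length $w^{2^{i-1}/(2^k-1)}$, shows via Markov on the midpoint of each path (and, in the randomized case, by splitting each path into $\Theta(\log n)$ independent sub-paths---this is the source of the $\log n$ loss) that with high probability the midpoint does not see either endpoint, and then reads off $\mathcal{A}$'s most frequent output on the designated edge to define a function $f_{\Pi,k+1}$ in the Chang--Pettie framework. That function passes the testing procedure, which certifies worst-case complexity $O(n^{1/(k+1)})$ and contradicts the $\Omega(n^{1/k})$ hypothesis. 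At no point does the argument need to reason about partially labeled sub-instances.
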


Note that the algorithm of \Cref{thm:twohalfcoloring} is deterministic, and that the lower bound of \Cref{thm:polylowerbound} holds for randomized  algorithms as well.

\subsection{High-level Ideas and Challenges}

We next discuss some of the ideas that lead to the known results about solving LCL problems on bounded-degree trees and we highlight some of the challenges that one has to overcome and some of the ideas we use to prove \Cref{thm:logstaraverage,thm:twohalfcoloring,thm:polylowerbound}. 

\subparagraph{Rake-and-compress decomposition.} We start by sketching a generic algorithm that can be used to solve all LCL problems in bounded-degree trees. The generic algorithm can be used to obtain algorithms with an asymptotically optimal worst-case complexity for all problems with worst-case complexity $\Omega(\log n)$. As a first step, the algorithm uses a technique that is known as \emph{rake-and-compress}~\cite{MillerR85} to partition the nodes of a given tree $T=(V,E)$ into $O(\log n)$ layers such that each layer is either a rake layer that consists of a set of independent nodes or it is a compress layer that consists of a sufficiently separated set of paths. Every node in a rake layer has at most one neighbor in a higher layer, and in each path of a compress layer, the two nodes at the end have exactly one neighbor in a higher layer and the other nodes on the path have no neighbors in a higher layer.\footnote{The actual decomposition that we use is a bit more complicated and the formal definition (see \Cref{def:modi}) requires some additional details.} Such a decomposition can be computed in an iterative process that produces the layers in increasing order. Given some tree (or forest), a rake layer can be obtained by taking the set of all leaf nodes\footnote{When two degree-$1$ nodes are neigbors, one just takes one of the two nodes.} and a compress layer can be created by the paths (or more precicely by the inner part of the paths) induced by degree-$2$ nodes. It is not hard to show that when alternating rake and compress layers, this process completes after creating $O(\log n)$ layers~\cite{MillerR85}. 

\subparagraph{Applying the decomposition.} As an example of how to use rake-and-compress to solve an LCL problem, we look at the case of $3$-coloring the nodes of a tree $T$. Given a decomposition into rake and compress layers, this can be done in $O(\log n)$ rounds as follows. First, color each of the paths of the compress layers with $O(1)$ colors. This can be done in $O(\log^* n)$ rounds. Then, the $3$-coloring of $T$ is computed by starting at the highest layer of the decomposition. When processing a rake layer, each node can just be colored with a color different from its (at most one) neighbor in a higher layer. When processing a compress layer, we just have to $3$-color the paths of the layer such that each node at the end of a path picks a color that differs from the color of its neighbor in a higher layer. Given the initial $O(1)$-coloring of the path, this can be done in constant time for each path. The time to compute the coloring is therefore proportional to the number of layers and thus $O(\log n)$. The generic algorithm for solving more general LCL problems is more involved, but still similar at a high level. While creating the decomposition, for each node $v$, one can create a list of labels that can be assigned to $v$ such that the labeling of lower layer nodes that depend on $v$ can still be completed. The LCL problem needs to allow labelings that are flexible enough such that when having long paths of nodes that each can be the root of an arbitrary subtree, the nodes of the path can still be labeled efficiently (in constant time given an appropriate initial coloring of the path).

\subparagraph{Implementation with low node-averaged complexity.} The main challenge when trying to obtain an $o(\log n)$ node-averaged complexity is the following. The generic algorithm first computes the decomposition and it then computes the labeling by starting with the nodes in the highest layers. In the worst case, we therefore need $\Theta(\log n)$ rounds before even the label of a single node is determined. And moreover, most of the nodes are in the first few layers, which are labeled at the very end of the algorithm. In order to obtain a low node-averaged complexity, we therefore need to label most of the nodes already in the ``bottom-up'' phase when creating the rake and compress layers. For some problems, this is challenging: for example, in the $3$-coloring problem, if we ever obtain a node with $3$ neighbors of lower layers that have $3$ different colors, then we cannot complete the solution in any valid way. Hence, we have to label the nodes in such a way that the ``top-down'' phase is still able to extend the partial labeling to a valid labeling of all the nodes. To keep things simple, we here assume that the tree has diameter $O(\log n)$. In this case, it suffices to create rake layers and we do not need compress layers. We further only look at the problem of $3$-coloring the nodes of $T$. This problem is significantly easier to handle than general $O(\log n)$-worst case complexity LCL problems, the solution for $3$-coloring however already requires some of the ideas that we use in the general case.

Let us therefore assume that we have an $O(\log n)$-diameter tree $T$ with maximum degree $\Delta=O(1)$. If we decompose by using only rake layers, we obtain $O(\log n)$ layers, where each layer is an independent set of $T$ and every node has exactly one neighbor in a higher layer, except for the single node $u$ that is in the top layer. We refer to $u$ as the root node and for each node $v$, we refer to the single neighbor $w$ of $v$ in a higher layer as the parent of $v$. Note that when assigning a color to a node $v$ in the top-down phase, only $v$'s parent has already been assigned a color. To complete the top-down phase, it therefore suffices if every node $v$ can choose its color from an arbitrary subset $S_v$ of size $2$ of the colors. So if no nodes below $v$ have already decided on a color, this will always be possible. Hence, if we try to color some nodes already in the bottom-up phase, we have to make sure that all the uncolored nodes still have at least two available colors. This is guaranteed as long as every uncolored node has at most one colored neighbor.

When constructing the layering we therefore proceed as follows. We only color nodes that have already been assigned to some rake layer. Whenever we decide to color a node $v$ in the bottom-up phase, we also directly color the whole subtree of $v$.\footnote{After coloring the root of a subtree, the coloring of the subtree can be done in parallel while proceedings with the rest of the algorithm.} The high-level idea of the algorithm to achieve this is as follows. After each rake step, i.e., after each creation of a new layer, we check whether or not there are some nodes that can be colored. Consider the situation after the $t^{\mathit{th}}$ rake step, let $G^{(t)}$ be the set of nodes that have not been raked at that time (i.e., that have not been assigned to some layer), and let $R^{(t)}$ be the set of nodes that have already been assigned to some layer. Note that if a node $u\in G^{(t)}$ has some neighbor $v\in R^{(t)}$, then $u$ will in the end be the unique neighbor of $v$ in a higher layer. We can therefore think of the nodes in $G^{(t)}$ as the roots of the already raked subtrees. This is illustrated in \Cref{fig:3ColExample}.
\begin{figure}[h!]
    \centering
    \includegraphics[width=.6\textwidth]{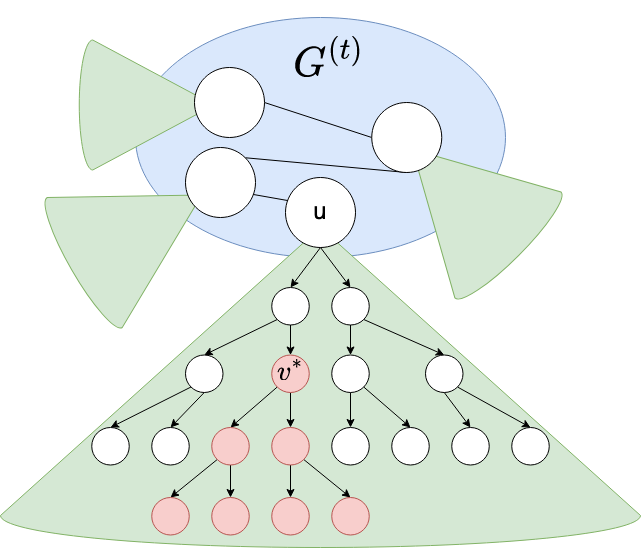}
    \caption{The graph $G^{(t)}$ of nodes that are not yet raked away is colored blue. The already raked away nodes $R^{(t)}$ are colored green. The node $u$ chooses $v^*$ since it has the largest subtree, colored in red, attached and both $v^*$ as well as its entire subtree become colored.}
    \label{fig:3ColExample}
\end{figure}
After each rake step $t$, each node $u\in G^{(t)}$ tries to color some node at distance $2$ in its subtree.\footnote{By only coloring nodes at distance at least $2$ from $u$, we make sure that neighbors of nodes that are not yet layered remain uncolored.}
Node $u$ chooses $v^*$ to be a node at distance $2$ in its subtree such that the subtree rooted at $v^*$ has the largest number of uncolored nodes among all nodes at distance $2$ of $u$ in the subtree of $u$ (observe that nodes can keep track of such numbers). If there are no colored $2$-hop neigbors of $v^*$ outside the subtree of $v^*$ (i.e., no colored siblings of $v^*$), then $u$ decides to color $v^*$ and its complete subtree. Otherwise, no new nodes in $u$'s subtree are getting colored. If $v^*$ and its subtree get colored, then a constant fraction of the uncolored nodes in $u$'s subtree get colored. Otherwise, a sibling $v'$ of $v^*$ with a larger subtree has already been colored while $u$ was the root of the tree. Note that at this time, the subtree of $v^*$ was already in the same state and therefore $v'$ colored more nodes than $v^*$ does. One can use this to show that whenever the height of a raked subtree increases, a constant fraction of the uncolored nodes gets colored. One can further show that this suffices to show that over the whole tree, a constant fraction of the remaining nodes gets colored every constant number of rounds and thus the node-averaged complexity is constant. The algorithm and the analysis for the general family of LCLs for which \Cref{thm:logstaraverage} holds uses similar basic ideas, dealing with the general case is however significantly more involved. 

\subparagraph{Constant Average using randomization}
The $\log^*n$ term in the deterministic case stems from the fact, that we need to break symmetry in the compress paths. We achieve this by precomputing a distance $s$ coloring, for a constant $s$ and with a constant number of colors. As a result we can in every execution of compress compute a constant distance ruling set by iterating through the colors. Using randomization we can omit the initial coloring and instead compute such a ruling set with constant node-average. We then show that replacing the original compress procedure of the deterministic algorithm with the new randomized compress procedure still gives us the same guarantees, but omits the initial coloring. As a result we get a randomized algorithm with node-averaged complexity $O(1)$.

\subparagraph{Improved upper bounds in the polynomial regime.}
We prove that the node-averaged complexity of the hierarchical $2 \frac{1}{2}$-coloring problem with parameter $k$ is $O(n^{1/(2^k - 1)})$. In order to give some intuition for this, we focus on the case $k=2$ where the worst-case complexity is $\Theta(\sqrt{n})$. Instead of providing a formal definition of the problem, it is helpful to present the problem by describing how a worst-case instance for the problem looks like, and how a solution in such an instance looks like. A worst-case instance for this problem consists of a path $P$ of length $\Theta(\sqrt{n})$, where to each node $v_j$ of $P$ is attached a path $Q_j$ of length $\Theta(\sqrt{n})$. We call the nodes of the path $P$ \emph{$p$-nodes} and we call the nodes of a path $Q_j$ \emph{$q$-nodes}. For each path $Q_j$, the algorithm has to decide to either $2$-color it or to mark the whole path as \emph{decline}. Then, the subpaths of $P$ induced by nodes that are neighbors of $q$-nodes that output decline need to be labeled with a proper $2$-coloring. In particular, \emph{decline} is not allowed on $p$-nodes.
Let us now describe an algorithm with optimal worst-case complexity for instances with a similar structure, but where the paths may have different lengths.
For $q$-nodes, the algorithm first checks if the length of the path containing those nodes is $O(\sqrt{n})$ (note that, in order to perform this operation, the algorithm needs to know $n$, and it is actually unknown whether an LCL problem can have $\Theta(\sqrt{n})$ worst-case complexity when $n$ is unknown). In such a case, the algorithm is able to produce a proper $2$-coloring of the path. Otherwise, the path is marked as decline. Then, it is possible to prove that the subpaths of $P$ induced by nodes having  $q$-node neighbors that output decline must be of length $O(\sqrt{n})$, and hence they can be properly $2$-colored in $O(\sqrt{n})$ rounds. We observe that in the worst-case instance described above, the majority of the nodes of the graph are $q$-nodes, and hence, from an average point of view, it would be fine if $p$-nodes spend more time. In fact, it is possible to improve the node-averaged complexity of the described algorithm by letting $q$-nodes run for at most $O(n^{1/3})$ rounds and $p$-nodes for at most $O(n^{2/3})$ rounds. In this case, a worst-case instance contains a path $P$ of length $O(n^{2/3})$ and all paths $Q_j$ are of length $O(n^{1/3})$. We obtain that both the $p$-nodes and the $q$-nodes contribute $O(n^{4/3})$ to the sum of the running times, obtaining a node-averaged complexity of $O(n^{1/3})$.

\subparagraph{Lower bounds in the polynomial regime.}
It is known by \cite{chang20} that if an LCL problem $\Pi$ has worst-case complexity $o(n^{1/k})$, then it can actually be solved in $O(n^{1/(k+1)})$ rounds. The intuition about what determines the exact value of $k$ in the complexity of a problem is related to how many compress layers of a rake-and-compress decomposition one can handle. In the example presented above, namely $3$-coloring, one can handle an arbitrary number of compress paths and that is the reason why the problem can be solved in $O(\log n)$ rounds. In particular, no matter how many rake or compress operations have been applied, we can handle any compress path by producing a $3$-coloring on it and leaving the endpoints uncolored (such nodes can decide their color after their higher layer neighbors picked a color), and this can be done fast. Not all problems are of this form, that is, for some problems we cannot handle an arbitrary amount of compress paths: it is possible to define problems in which different labels need to be used in compress paths of different layers (hierarchical $2\frac{1}{2}$ coloring is indeed such a problem where in fact $p$-nodes are not allowed to output \emph{decline}).
For such problems, it may not be possible at all to efficiently produce a valid labeling for long compress paths of layers that are too high, say of layers strictly more than $k$. In order to solve this issue, we can modify the generic algorithm sketched above by increasing the number of rake operations that are performed between each pair of compress operations. When using $\Omega(n^{1/(k+1)})$ rake operations at the beginning and between any two compress operations, the total number of compress layers is at most $k$. This however makes the algorithm slower, resulting in a complexity of $\Theta(n^{1/(k+1)})$ (while $3$-coloring has worst-case complexity $\Theta(\log n)$).

In other words, for some LCL problems, compress paths are something that is difficult to handle, and the number of compress layers that we can recursively handle is what determines the complexity of a problem. If we can handle an arbitrary amount of compress layers, then the problem can be solved in $O(\log n)$ rounds, but if we can handle only a constant amount of compress layers, say $k$, then the complexity of the problem is $\Theta(n^{1/(k+1)})$.
In \cite{chang20} it is proved that, if a problem has complexity $o(n^{1/k})$, then it is possible to handle $k$ compress layers, implying a complexity of $O(n^{1/(k+1)})$. We show that the same can be obtained by starting from an algorithm $\mathcal{A}$ with node-averaged complexity $o(n^{1/(2^k - 1)} / \log n)$, implying that if a problem has complexity $\Omega(n^{1/k})$, then it cannot have node-averaged complexity $o(n^{1/(2^k - 1)} / \log n)$, since otherwise it would imply that the problem can actually be solved in $O(n^{1/(k+1)})$ rounds in the worst case, which then leads to a contradiction. Starting from an algorithm that only has guarantees on its node-averaged complexity instead of on its worst-case complexity introduces many additional challenges that we need to tackle. For example, in \cite{chang20} it is argued that an $o(n^{1/k})$-rounds algorithm can never see both the endpoints of a carefully crafted path that is too long. This kind of reasoning, that is very common when we deal with worst-case complexity, does not work for node-averaged complexity.

\section{Road Map}

The remainder of the paper is organized as follows.

\subparagraph{Preliminaries.} We start, in \Cref{sec:definitions}, by providing some definitions. In particular, we define the class of problems that we consider, and the notion of node-averaged complexity.

\subparagraph{Locally checkable labelings.}
We continue, in \Cref{sec:lcls}, by providing an overview of what is known about solving LCL problems on trees of bounded degree. 
We present a generic algorithm that is known to be able to solve all LCLs, that has optimal worst-case complexity whenever the considered problem has worst-case complexity $\Omega(\log n)$.  
The content of this section is heavily based on existing results.

\subparagraph{A fast algorithm for problems with intermediate complexity.}
In \Cref{sec:lowreg}, we present an algorithm with node-averaged complexity $O(\log^* n)$, that is able to solve all problems that have $O(\log n)$ worst-case complexity. This algorithm is based on the one presented in \Cref{sec:lcls}, but we need to tackle many challenges in order to improve its node-averaged complexity. Then using randomization, we improve on this algorithm and obtain randomized $O(1)$ node-averaged complexity.

\subparagraph{A fast algorithm for some problems with polynomial complexity.}
In \Cref{sec:polyupper} we consider a class of problems, called hierarchical $2 \frac{1}{2}$ coloring, that are parametrized by an integer $k$. For these problems it is known that their worst-case complexity is $\Theta(n^{1/k})$. We show that these problems can be solved with node-averaged complexity $O(n^{1/(2^k - 1)})$. 

\subparagraph{A lower bounds for problems with polynomial complexity.}
It is known that all LCL problems on trees either have worst-case complexity $O(\log n)$, or $\Theta(n^{1/k})$ for some integer $k \ge 1$. While we show in  \Cref{sec:lowreg} that the worst-case complexity class $O(\log n)$ becomes $O(\log^* n)$ for node-averaged complexity, in \Cref{sec:lower} we show that all problems that have polynomial worst-case complexity also have polynomial node-averaged complexity. In particular, we show that if a problem has worst-case complexity  $\Theta(n^{1/k})$, then it has node-averaged complexity $\Omega(n^{1/(2^k -1)} / \log n)$. 

\subparagraph{Open questions}
In \Cref{sec:open}, we provide some open questions.

\subparagraph{More on LCLs}
In \Cref{apx:related-lcls}, we provide additional related work about LCLs.

\subparagraph{\boldmath An algorithm for solving all LCLs in $O(D)$ rounds.}
In \Cref{sec:diamsolver}, we show a simple bandwidth-efficient algorithm that solves all LCL problems in $O(D)$ rounds, where $D$ is the diameter of the graph, and then we give some intuition on the challenges that one needs to tackle in order to improve its complexity. The content of this section can be useful to better understand \Cref{sec:lcls}.

\subparagraph{Different ways to define LCLs.}
In \Cref{sec:different-lcls}, we provide different definitions of LCLs, and we prove that the notion that we study (called black-white formalism) is equivalent, for node-averaged complexity, to the standard one studied in the literature (this was previously known only for the case of worst-case complexity).

\section{Preliminaries}\label{sec:definitions}

\subparagraph{LCLs in the black-white formalism.}
We start by defining the class of problems that we consider, called LCLs in the black-white formalism. We show in \Cref{lem:node-edge-enough} that on trees, studying this class of problems is equivalent to studying LCLs as they are usually defined in the literature.
A problem $\Pi$ described in the black-white formalism is a tuple $(\Sigma_{\mathrm{in}},\Sigma_{\mathrm{out}},C_W,C_B)$, where:
\begin{itemize}
    \item $\Sigma_{\mathrm{in}}$ and $\Sigma_{\mathrm{out}}$ are finite sets of labels.
    \item $C_W$ and $C_B$ are both multisets of pairs, where each pair $(\ell_{\mathrm{in}},\ell_{\mathrm{out}})$ is in $\Sigma_{\mathrm{in}} \times \Sigma_{\mathrm{out}}$. 
\end{itemize}
Solving a problem $\Pi$ on a graph $G$ means that:
\begin{itemize}
    \item $G = (W \cup B,E)$ is a graph that is properly $2$-colored, and in particular each node $v \in W$ is labeled $c(v) = W$, and each node $v \in B$ is labeled $c(v) = B$.
    \item To each edge $e \in E$ is assigned a label $i(e) \in \Sigma_{\mathrm{in}}$.
    \item The task is to assign a label $o(e) \in \Sigma_{\mathrm{out}}$ to each edge $e \in E$ such that, for each node $v \in W$ (resp.\ $v \in B$) it holds that the multiset of incident input-output pairs is in $C_W$ (resp.\ in $C_B$).
\end{itemize}

Note that when expressing a given LCL problem on a tree $T$ in the black-white formalism, we often have to modify the tree $T$ as follows. We subdivide every edge $e$ of $T$ by inserting one node in the middle of the edge. Each edge is then split into two ``half-edges'' and the new tree is trivially properly $2$-colored (say the original nodes of $T$ are the black nodes and the newly inserted nodes for each edge of $T$ are the white nodes).

\subparagraph{Node-averaged complexity.}
We define the notion of node-averaged complexity as in \cite{BalliuGKO22_average}.
Let $\calA$ be an algorithm that solves a problem $\Pi$. Assume $\calA$ is run on a given graph $G=(V,E)$. Let $v\in V$. We define $T_v^G(\calA)$ to be the number of rounds after which $v$ terminates when running $\calA$. 
The node-averaged complexity of an algorithm $\calA$ on a family of graphs $\calG$ is defined as follows.
\[
    \nodeavg(\calA)  :=  \max_{G\in \calG} \,\frac{1}{|V|}\cdot\E\left[\sum_{v\in V(G)}T_v^G(\calA)\right]\ =\
    \max_{G\in \calG} \frac{1}{|V|}\cdot\sum_{v\in V(G)}\E\big[T_v^G(\calA)\big]\\
\]
The complexity of $\Pi$ is defined as the lowest complexity of all the algorithms that solve $\Pi$.

\section{Locally Checkable Labelings}\label{sec:lcls}
In this section we introduce a class of problems called Locally Checkable Labelings (LCLs) and we summarize known results about the worst-case complexity of LCLs in the case in which we restrict the family of considered graphs to be trees of bounded degree. We first provide an overview of what are the possible complexities of LCLs on trees, and then we describe a generic method that can be used to solve some of these problems optimally.

\subsection{Introduction}
LCLs have been extensively studied on trees of bounded degree and we know that in this family of graphs, these problems can only have the deterministic complexities $O(1)$, $\Theta(\log^* n)$, $\Theta(\log n)$, and $\Theta(n^{1/k})$ for any $k \in \mathbb{N}$ \cite{CP19timeHierarchy,chang20,BBOS18almostGlobal,LCLs_in_rooted_trees,B0COSS22_LCLregularTrees}. Moreover, we know that randomness may only help for problems with deterministic complexity $\Theta(\log n)$. Finally, we know that for problems with deterministic complexity $\Omega(\log n)$, given the description of an LCL problem, we can automatically determine its time complexity. These decidability results have first been shown in two important papers \cite{CP19timeHierarchy, chang20}, which we summarize in the rest of the section.

On a high level, these decidability results have been proven as follows: first, there is a generic method to solve all problems, based on a procedure called \emph{rake-and-compress}; then, it is shown that this method has optimal time complexity, meaning that if this method is not able to provide a fast algorithm, then the problem cannot be solved fast with any other algorithm.
In \Cref{sec:diamsolver}, we present the procedure for a simplified setting in which we aim at solving a restricted set of problems in $O(D)$ rounds.
There, we also explain which challenges one needs to tackle in order to improve the running time.

\subsection{A Generic Way to Solve All LCLs}\label{ssec:lcls}
In this section we present known results about solvability of LCLs. The content of this section is heavily based on results presented in \cite{CP19timeHierarchy,chang20,bcmos21}, that provide a generic method that is able to solve any (solvable) LCL. This method has an optimal worst-case complexity for all problems that require $\Omega(\log n)$ worst-case rounds in the deterministic setting. In later sections, we will use some ingredients that we present in this section, in order to provide algorithms that have faster node-averaged complexity. We follow a similar route of \cite{bcmos21}: in order to keep our proofs more accessible we prove our statements for LCLs expressed in the black-white formalism (which is simpler to deal with than standard LCLs as they are usually defined in the literature). In \Cref{lem:node-edge-enough} we prove that on trees, for any standard LCL, we can define an LCL in the black-white formalism that has the same asymptotic node-averaged complexity as the original one, implying that our results hold for all standard LCLs as well.

\subparagraph{Classes.}
A summary of the generic algorithm for solving any LCL in $O(D)$ rounds presented in \Cref{sec:diamsolver} is the following: nodes of degree $1$ are recursively removed from the tree; at each step, nodes that become leaves compute the set of labels that can be put on the edge connecting them with the rest of the tree, in a way that the labeling in their removed subtree can be completed in a valid manner; once all edges get a set assigned, it is possible to pick a valid labeling for all the edges by processing nodes in reverse order.

The set computed by a node essentially behaves as an interface between the remaining tree and the part of the tree that got removed, in the sense that it is not important what is the actual subtree, and the only thing that matters is the content of the set. Informally, we call this set the \emph{class} of the subtree.
In order to obtain algorithms that are faster than $O(D)$ rounds, in the removal process it is required to handle also nodes of degree $2$, and this is what complicates the formal definition of class. 

While the following definition is generic (in order to minimize repetition), it may help the reader to check \Cref{fig:label-set}, which shows the two possible cases in which the definition will be applied.

\begin{definition}[\cite{bcmos21}]\label{def:classes}
    Assume we are given an LCL $\Pi = (\Sigma_{\mathrm{in}},\Sigma_{\mathrm{out}},C_W,C_B)$ in the black-white formalism.
    Consider a tree $G = (V,E)$, and a connected subtree $H = (V_H, E_H)$ of $G$. Assume that the edges connecting nodes in $V_H$ to nodes in $V \setminus V_H$ are split into two parts, $F_{\mathrm{incoming}}$ and $F_{\mathrm{outgoing}}$, that are called, respectively, the set of incoming and outgoing edges. Assume also that for each edge $e \in F_{\mathrm{incoming}}$ is assigned a set $L_e \subseteq \Sigma_{\mathrm{out}}$. This set is called \emph{the label-set of $e$}. Let $\mathcal{L}_{\mathrm{incoming}} = (L_e)_{e \in F_{\mathrm{incoming}}}$. 
    A \emph{feasible labeling} of $H$ w.r.t.\ $F_{\mathrm{incoming}}$, $F_{\mathrm{outgoing}}$, and $\mathcal{L}_{\mathrm{incoming}}$ is a tuple $(L_{\mathrm{outgoing}},L_{\mathrm{incoming}},L_{\mathrm{H}})$ where:
    	\begin{itemize}
		\item $L_{\mathrm{incoming}}$ is a labeling $(l_e)_{e\in F_{\mathrm{incoming}}}$ of $F_{\mathrm{incoming}}$ satisfying $l_e\in (\mathcal{L}_{\mathrm{incoming}})_e$ for all $e \in F_{\mathrm{incoming}}$,
		\item $L_{\mathrm{outgoing}}$ is a labeling $(l_e)_{e\in F_{\mathrm{outgoing}}}$ of ${F_{\mathrm{outgoing}}}$ satisfying $l_e \in \Sigma_{\mathrm{out}}$ for all $e \in F_{\mathrm{outgoing}}$,
		\item $L_H$ is a labeling $(l_e)_{e\in E_H}$ of $E_H$ satisfying $l_e \in \Sigma_{\mathrm{out}}$ for all $e \in E_H$,
		\item the output labeling of the edges incident to nodes of $H$ given by $L_{\mathrm{outgoing}}, L_{\mathrm{incoming}},$ and $L_H$
        is such that all node constraints of each node $v\in V_H$ are satisfied.
	\end{itemize}
 	Also, we define the following:
	\begin{itemize}
		\item a \emph{class}  is a set of feasible labelings,
		\item a \emph{maximal class}  is the unique inclusion maximal class, that is, it is the set of all feasible labelings,
		\item an \emph{independent class} is a class $A$ such that
		for any $\big(L_{\mathrm{outgoing}}, L_{\mathrm{incoming}}, L_H\big)\in A$ and $\big(L'_{\mathrm{outgoing}}, L'_{\mathrm{incoming}}, L'_H\big)\in A$ the following holds. Let $L''_{\mathrm{outgoing}}$ be an arbitrary combination of $L_{\mathrm{outgoing}}$ and $L'_{\mathrm{outgoing}}$, that is, $L''_{\mathrm{outgoing}} = (l_e)_{e\in F_{\mathrm{outgoing}}}$ where $l_e \in \{ (L_{\mathrm{outgoing}})_e, (L'_{\mathrm{outgoing}})_e \}$. There must exist some $L''_{\mathrm{incoming}}$ and $L''_H$ satisfying $\big(L''_{\mathrm{outgoing}}, L''_{\mathrm{incoming}}, L''_H\big)\in A$.
	\end{itemize}
\end{definition}
Note that the maximal class with regard to some given $\Pi, H, F_{\mathrm{incoming}}, F_{\mathrm{outgoing}},$ and $\mathcal{L}_{\mathrm{incoming}}$, is unique. In contrast, there may be different ways (or none) to restrict a maximal class to a (nonempty) independent class. 

\subparagraph{Computing label-sets.}
We will use \Cref{def:classes} for two specific types of graphs $H$: either single nodes, or short paths. In each of these cases we will need to compute a label-set for each outgoing edge. We now describe the two cases in detail, and we provide a way to compute the label-sets. The cases are shown in \Cref{fig:label-set}.

\begin{figure}
	\centering
	\includegraphics[width=0.6\textwidth]{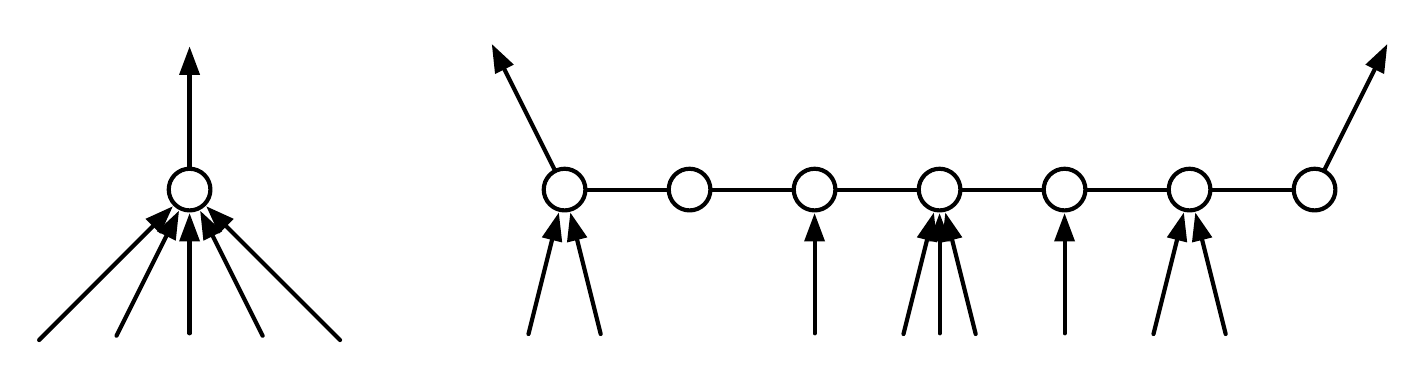}
	\caption{The figure illustrates the two cases of the label-set computation, where it is assumed that the incoming edges have already a label-set assigned and the goal is to assign a label-set to the outgoing edges; the left side depicts the case of a single node, the right side shows the case of a short path.}
	\label{fig:label-set}
\end{figure}

\begin{definition}[label-set computation]\label{def:computing-label-set}
Assume we are given some function $f_{\Pi, k}$ (to be specified later). We define a function $g(v)$ that can be used to compute label-sets for the outgoing edges as a function of $H$, $\Pi$, $F_{\mathrm{incoming}}$, $F_{\mathrm{outgoing}}$, $\mathcal{L}_{\mathrm{incoming}}$, and $f_{\Pi,k}$, for two specific types of graphs $H$.
\begin{itemize}
	\item \textbf{Single nodes:} the graph $H$ consists of a single node $v$ that has a single outgoing edge $e$, and hence $F_{\mathrm{outgoing}} = \{e\}$. All the other edges (which might be $0$) are incoming, and for each of them we are given a label-set (where $\mathcal{L}_{\mathrm{incoming}}$ represents this assignment). We assign, to the outgoing edge, the label-set $g(v)$, that consists of the set of labels that we can assign to the outgoing edge, such that we can pick a label for each incoming edge in a valid manner. More in detail, let $B$ be the maximal class of $H$ w.r.t.\ $\Pi, F_{\mathrm{incoming}}, F_{\mathrm{outgoing}},$ and $\mathcal{L}_{\mathrm{incoming}}$. Then, we denote $g(v)=\bigcup_{(L_{\mathrm{outgoing}}, L_{\mathrm{incoming}}, L_H)\in B}\{(L_{\mathrm{outgoing}})_e\}$. We have $g(v)\subseteq \Sigma_{\mathrm{out}}$. Observe that each node $v$ can compute $g(v)$ if it is given the value of $g(u)$ (that is, the label-set of the edge $\{u,v\}$) for each incoming edge $\{v,u\}$. 
	\item \textbf{Short paths:} the graph $H$ is a path of length between $\ell$ and $2 \ell$, for some $\ell = O(1)$ that depends solely on $\Pi$ and the target running time. The endpoints of the path are $v_1$ and $v_2$. The outgoing edges are $F_{\mathrm{outgoing}} = \{e_1,e_2\}$, where $e_1$ (resp.\ $e_2$) is the outgoing edge incident to $v_1$ (resp.\ $v_2$). Let $B$ be the maximal class of $H$.
	We assume to be given a function $f_{\Pi,k}$, that depends solely on $\Pi$ and some parameter $k$ (that, in turn, depends on the target running time), that maps a class $B$ into an independent class $B' = f_{\Pi,k}(B)$. For $i \in \{1,2\}$, let $g(v_i) = \bigcup_{(L_{\mathrm{outgoing}}, L_{\mathrm{incoming}}, L_H)\in B'}\{(L_{\mathrm{outgoing}})_{e_i}\}$. We have $g(v_i)\subseteq \Sigma_{\mathrm{out}}$.  The label-set of $e_1$ (resp.\ $e_2$) is $g(v_1)$ (resp.\ $g(v_2)$). Observe that the values of $g(v_i)$, for $i\in\{1, 2\}$, can be computed given $H$ and $\mathcal{L}_{\mathrm{incoming}}$.
\end{itemize}
\end{definition}

\subparagraph{Tree decompositions.}
All problems with worst-case complexity $O(\log n)$ or $O(n^{1/k})$ for any $k \in \mathbb{N}$ can be solved by following a generic algorithm \cite{CP19timeHierarchy,chang20,bcmos21}.
This algorithm decomposes the tree into layers by iteratively removing nodes in a rake-and-compress manner \cite{MillerR85} (and then uses the computed decomposition to solve the given problem).

We first define the decomposition that the algorithm uses and then elaborate on how fast (and how) it can be computed.

\begin{definition}[$(\gamma,\ell,L)$-decomposition]\label{def:modi}
    Given three integers $\gamma, \ell, L$, a \emph{$(\gamma,\ell,L)$-decomposition} is a partition of $V(G)$ into $2L-1$ layers $V_1^R = (V^R_{1,1},\ldots,V^R_{1,\gamma}), \ldots, V_{L}^R  = (V^R_{L,1},\ldots,V^R_{L,\gamma})$, $V_1^C, \ldots, V_{L-1}^C$ such that the following hold.
	\begin{enumerate}
		\item \label{prop:compress} Compress layers: The connected components of each $G[V_i^C]$ are paths of length in $[\ell,2\ell]$, the endpoints have exactly one neighbor in a higher layer, and all other nodes do not have any neighbor in a higher layer.
		\item \label{prop:rake} Rake layers: The diameter of the connected components in $G[V_i^R]$ is $O(\gamma)$, and for each connected component at most one node has a neighbor in a higher layer.
		\item \label{prop:isolated}The connected components of each sublayer $G[V^R_{i,j}]$ consist of isolated nodes. Each node in a sublayer $V^R_{i,j}$ has at most one neighbor in a higher layer or sublayer. 
	\end{enumerate}
\end{definition}

In the following, for completeness, we provide an algorithm of \cite{CP19timeHierarchy,chang20} that shows how to compute a $(\gamma,\ell,L)$-decomposition where, at the end, each node knows the layer that it belongs to.

\begin{enumerate}
    \item Iteratively do the following, until the obtained graph is empty, starting with $i=1$:
    \begin{enumerate}
        \item Iteratively, perform $\gamma$ \emph{rake operations}. For each $1 \leq j \leq \gamma$, the $j$th rake operation consists of removing all nodes that have degree $1$ in the graph induced by the remaining nodes and putting them into the preliminary set $W^R_{i,j}$.  For technical reasons, if two adjacent nodes have degree $1$, the rake operation only removes one of them, chosen arbitrarily.
        \item Perform one \emph{compress operation}, that is, remove all nodes of degree $2$ that, in the graph induced by the remaining nodes, are contained in paths of length at least $\ell$ where each node of the path has degree exactly $2$. Put the removed nodes into the preliminary set $W_i^C$.
        \item $i \leftarrow i+1$.
    \end{enumerate}
    \item Observe that, for all $i$, $W_i^C$ is a collection of paths of length at least $\ell$. Split long paths into shorter paths as follows. Compute, for each $W_i^C$, an independent set $I_i \subset W_i^C$ of nodes that satisfies two properties: no endpoint of any path in $W_i^C$ is contained in $I_i$, and the maximal connected components of the graph obtained by removing the nodes in $I_i$ from $W_i^C$ are paths of length between $\ell$ and $2 \ell$.
    For each $i$, we promote the nodes in $I_i$ to the next rake layer, that is, we define $V_i^C$ as $W_i^C \setminus I_i$, $V_{i+1}^R$ as $W_{i+1}^R \cup I_i$, and $V_{i+1,1}^R$ as $W_{i+1,1}^R \cup I_i$.
    We denote with $L$ the largest index $i$ such that $V_{i}^R \cup V_{i-1}^C \neq \emptyset$.
\end{enumerate}

The following lemma provides upper bounds for the deterministic worst-case complexity of computing a $(\gamma,\ell,L)$-decomposition using the algorithm of \cite{CP19timeHierarchy,chang20}.

\begin{lemma}[\cite{CP19timeHierarchy, chang20}]\label{lem:decomposition}
	Assume $\ell = O(1)$. Then the following hold.
    \begin{itemize}
        \item For any positive integer $k$ and $\gamma = n^{1/k}(\ell / 2)^{1 - 1/k}$, a $(\gamma,\ell,k)$-decomposition can be computed in $O(k\cdot n^{1/k})$ rounds.
        \item For $\gamma = 1$ and $L = O(\log n)$, a $(\gamma,\ell,L)$-decomposition can be computed in $O(\log n)$ rounds.
    \end{itemize}
\end{lemma}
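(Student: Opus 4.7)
The plan is to analyze the algorithm stated just before the lemma in three pieces: (i) bound the cost of one rake-and-compress phase, (ii) bound the total number of phases $L$, and (iii) show that the final independent-set step that splits overly long compress paths is not a bottleneck.

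For (i), one rake operation is a single round: each remaining node tests whether its current degree is at most $1$, with the adjacent-degree-$1$ tie broken by IDs. A compress operation costs $O(\ell)$ rounds because a degree-$2$ node only needs to inspect $\ell$ hops along its maximal degree-$2$ path in each direction in order to decide membership in a long degree-$2$ path. Since $\ell = O(1)$, one full phase costs $O(\gamma)$ rounds.

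For (ii), the natural potential to track is the contracted tree $\widehat{T}_i$ obtained from the current remaining tree by collapsing every maximal degree-$2$ path into a single super-edge; its vertices are exactly the current leaves and the branching (degree $\geq 3$) nodes. The Miller-Reif style lemma of \cite{MillerR85, CP19timeHierarchy, chang20}, in its parametrised version, states that one rake-and-compress phase reduces the number of vertices of $\widehat{T}_i$ by a factor depending on $\gamma$ and $\ell$. With $\gamma = 1$ and $\ell = O(1)$ this factor is a constant strictly less than $1$, giving $L = O(\log n)$ phases; combined with (i) this yields the $O(\log n)$ bound. With $\gamma = n^{1/k}(\ell/2)^{1-1/k}$, a direct counting argument in the contracted tree shows that after one phase every surviving branching node must have had three incident contracted edges of virtual length at least $\ell\cdot\gamma/\ell = \gamma$; iterating this and using $\gamma^k = n\cdot(\ell/2)^{k-1}$ forces emptiness at phase $k$. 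Combined with (i), this gives $O(k\cdot n^{1/k})$.

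Step (iii) is routine. For each $W_i^C$, an independent set $I_i$ whose removal leaves fragments of length in $[\ell, 2\ell]$ and avoids the path endpoints is a purely local property of paths and can be read off any $O(1)$-vertex-coloring of $W_i^C$; such a coloring is produced in $O(\log^* n)$ rounds by Linial's algorithm \cite{Linial92}. Since $\log^* n$ is absorbed in both $O(\log n)$ and $O(n^{1/k})$, the cleanup does not change the asymptotic bound. The main obstacle is the quantitative potential-decrease claim in step (ii) for the polynomial regime: one must pin down the exact per-phase shrinkage factor and verify that the stated value of $\gamma$ is exactly the one that balances a $k$-fold geometric decay from $n$ to $O(1)$. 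This is where the exponent $1-1/k$ appears, and it is the only substantive calculation; the remaining pieces are either bookkeeping or direct appeals to prior work.
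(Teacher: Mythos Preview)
The paper does not actually prove this lemma; it is stated with a citation to \cite{CP19timeHierarchy, chang20} and used as a black box. So there is no paper-proof to compare against, only the underlying argument in those references.

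Your steps (i) and (iii) are correct, as is step (ii) in the logarithmic regime. The genuine gap is the one you flag yourself: in the polynomial regime you gesture at a contracted-tree counting argument but never compute the per-phase shrinkage, and the expression ``$\ell\cdot\gamma/\ell = \gamma$'' you write is vacuous. The clean statement you are missing is exactly the paper's later \Cref{lem:rcremaining} (taken from \cite{chang20}): after $\gamma$ rakes and one compress with minimum path length $\ell$, at most $\frac{\ell}{2\gamma}\cdot n$ nodes remain. With $\gamma = n^{1/k}(\ell/2)^{1-1/k}$ one has $\frac{\ell}{2\gamma} = (\ell/(2n))^{1/k}$, so after $k-1$ rake-and-compress phases at most $\bigl(\frac{\ell}{2\gamma}\bigr)^{k-1} n = n^{1/k}(\ell/2)^{1-1/k} = \gamma$ nodes survive, and the final batch of $\gamma$ rakes eliminates them (any nonempty tree loses at least one node per rake). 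That is the ``exact per-phase shrinkage factor'' you say must be pinned down; once stated, the balancing is two lines rather than an obstacle. Your identity $\gamma^k = n(\ell/2)^{k-1}$ is correct but by itself does not force emptiness without the shrinkage lemma supplying the factor $\ell/(2\gamma)$.
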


\subparagraph{The generic algorithm with optimal worst-case complexity.}
We now explain the algorithm due to \cite{CP19timeHierarchy, chang20, bcmos21} that is able to solve any LCL $\Pi$ with worst-case complexity $\Theta(\log n)$ or $\Theta(n^{1/k})$ asymptotically optimally. 
The initial objective is to compute a $(\gamma,\ell,L)$-decomposition; however, first we need to determine suitable parameters $\gamma$, $\ell$, and $L$.
To this end, we begin by determining the worst-case time complexity of the given problem $\Pi$, which, by \cite{CP19timeHierarchy, chang20}, can be computed in finite time solely as a function of $\Pi$. Then, as a function of the target time complexity, we can determine $\gamma$ and $\ell$: if the target time complexity is $O(\log n)$, then $\gamma = 1$, and then $\ell$ can be computed as a function of $\Pi$; otherwise, if the target time complexity is $O(n^{1/k})$, then we can first compute $\ell$ as a function of $\Pi$ and $k$, and then set $\gamma = n^{1/k}(\ell / 2)^{1 - 1/k}$. By \Cref{lem:decomposition}, we get that if in the former case we set $L = k$, and in the latter case we set $L = O(\log n)$, then a $(\gamma,\ell,L)$-decomposition can be computed within a running time that matches the target complexity. In \cite{CP19timeHierarchy, chang20} it is shown how to determine the value of $\ell$ in each case.

After computing a $(\gamma,\ell,L)$-decomposition with the determined parameters, the decomposition is used to propagate label-sets up through the layers.
The goal is to iteratively assign \emph{label-sets} to edges, in a way that, when we handle nodes in layer $i$, all their edges connected to nodes of layers $< i$ have already a label-set assigned.
As a last step of the generic algorithm, we will use the label-sets to pick a valid solution, by propagating labels through the layers in reverse order.

We first explain the procedure that computes a label-set for all edges of the graph by going up through the layers of the computed $(\gamma,\ell,L)$-decomposition. This procedure uses \Cref{def:computing-label-set}, and hence it requires to be given a function $f_{\Pi,k}$ (if the  target runtime is $O(\log n)$, then let $k = \infty$). This function, as shown in \cite{CP19timeHierarchy,chang20,bcmos21}, can be computed solely as a function of $\Pi$ and the target time complexity (also, similarly as for $\ell$, this function does not depend on $n$). 
\begin{itemize}
    \item For $j = 1, \ldots, \gamma$, by \Cref{def:modi}, we have that the graph induced by $V_{i,j}^R$ is composed of isolated nodes. Each node $v \in V_{i,j}^R$ waits until all its neighbors $z$ in lower layers have assigned a label-set to the edge $\{v,z\}$. We are now in the first case of \Cref{def:computing-label-set}. Hence, $v$ computes $g(v)$, the label-set of the edge connecting $v$ to its only neighbor $w$ in upper layers, if it exists, and sends this label-set to $w$.
    \item Nodes in $V_i^C$, by \Cref{def:modi}, form paths of length between $\ell$ and $2 \ell$. For each of these paths, nodes wait until a label-set has been computed for each edge connecting them to lower layers. We are now in the second case of \Cref{def:computing-label-set}. Hence, the endpoints of the path can compute the label-sets for the edges connecting them to their neighbor in upper layers. Each endpoint sends its computed label-set to its higher-layer neighbor.
\end{itemize}
Now we explain how, in the last step of the generic algorithm, we use the computed label-sets to determine the final output labels by going through the layers of the computed decomposition in reverse order (i.e., from larger to smaller index). Observe that it may happen that rake nodes do not have any outgoing edge. In this case, a node considers its maximal class $B$ (that is guaranteed to be non-empty), takes an arbitrary element $((),L_{\mathrm{incoming}},())$ from $B$, and uses $L_{\mathrm{incoming}}$ to assign a label to each incoming edge. Then, we process the rest of the graph as follows.
\begin{itemize}
    \item Each rake node, once a label $l$ for its outgoing edge is assigned, considers its maximal class $B$, picks an arbitrary element $(L_{\mathrm{outgoing}},L_{\mathrm{incoming}},())$ compatible with $l$ from $B$, and uses $L_{\mathrm{incoming}}$ to assign a label to each incoming edge, where compatible with $l$ means that $L_{\mathrm{outgoing}}$ assigns $l$ to the outgoing edge.
    \item Similarly, each short path, once a label is assigned to the edges outgoing from the endpoints, can pick a valid assignment for each internal and incoming edge.
\end{itemize}
This concludes the description of the generic algorithm.
We note that the generic algorithm does not require a specific $(\gamma,\ell,L)$-decomposition---any $(\gamma,\ell,L)$-decomposition (for the parameters $\gamma,\ell,L$ determined in the beginning of the generic algorithm) works.
We will make use of this fact when designing algorithms with a good node-averaged complexity in \Cref{sec:lowreg}.

We now define a total order on the layers of a $(\gamma, \ell, L)$-decomposition in the natural way.
This will be useful in the design of our algorithm in \Cref{sec:lowreg}.
\begin{definition}[layer ordering]\label{def:ordering}
    We define the following total order on the (sub)layers of a $(\gamma, \ell, L)$-decomposition.
    \begin{itemize}
        \item  $V_{i,j}^R < V_{i',j'}^R$ iff $i < i' \lor (i = i' \land j < j')$
        \item $V_{i,j}^R < V_{i}^C$
        \item $V_{i}^C < V_{i+1,j}^R$
    \end{itemize}
    Accordingly, we will use terms such as ``lower layer'' to refer to a layer that appears earlier in the total order than some considered other layer.
\end{definition}
For the interested reader, in \Cref{ssec:good-functions}, we provide some more intuition about the function $f_{\Pi,k}$, and about how the existence of this function is related with the complexity of a problem.

\section{Algorithm for Intermediate Worst-Case Complexity Problems}\label{sec:lowreg}
In this section, we provide an algorithm with node-averaged complexity $O(\log^* n)$ on bounded-degree trees for all LCLs that can be solved in worst-case complexity $O(\log n)$ on bounded-degree trees. The high-level idea of our algorithm is similar to the strategy in the generic algorithm of \Cref{ssec:lcls}: compute a $(\gamma,\ell, L)$-decomposition, propagate types up through the layers, and then propagate a choice of labels through the layers in reverse order.
Importantly, in order to obtain a good node-averaged complexity, we will not wait for the decomposition to be completely computed, but instead we show that we can allow a sufficient amount of nodes to terminate early. A crucial ingredient for achieving this is identifying some nodes with nice properties which we promote to be a \emph{local maximum} (that is, nodes that have a layer number that is strictly higher than all their neighbors), allowing it and many other nodes to terminate.

However this approach requires us to find a suitable $(\gamma,\ell, L)$-decomposition that in particular guarantees that sufficiently many nodes become local maxima. Observe that the standard decomposition algorithm may not guarantee this property: in a balanced binary tree for example, only the root would become a local maximum. To achieve the required property we will compute an altered decomposition that in some sense leaves a bit of extra space between two compress layers. We will use this extra space to insert new compress paths creating local maxima. Here we need to take special care that such a promotion is possible while guaranteeing that at the end we still have a valid $(\gamma,\ell, L)$-decomposition.

As this part is quite long and it is possible to get lost in the details, we next provide an overview of the structure of this section.

\begin{itemize}
    \item \textbf{\Cref{sec:newdec}:} The Decomposition Algorithm
        \begin{itemize}
            \item Stating a centralized version of the algorithm. \Cref{alg:Rake,alg:Compress,alg:Promote,alg:Decomposition}
            \item Proving that it computes a valid $(\gamma, \ell, L)$-decomposition and thereby proving correctness. \Cref{def:partialdec,lem:rakeCorrect,lem:compressCorrect,lem:promoteCorrect}
            \item Proving that we need $O(\log n)$ iterations. \Cref{lem:rcremaining,lem:epsilon,cor:inlog}
        \end{itemize}
    \item \textbf{\Cref{sec:insights}:} Local Maxima and Bounding Quality
        \begin{itemize}
            \item Decomposing the tree into subtrees. \Cref{def:OrientationNames,def:subtreeAssigned,lem:decomposeV}
            \item The promotion creates local maxima and they account for a large amount of quality. \Cref{lem:fixV,lem:fixX}
            \item Upper bounding the quality of a free node. \Cref{lem:layersPartiallyFixed,lem:distanceOfJustRemoved,lem:boundAssignedNode}.
        \end{itemize}
    \item \textbf{\Cref{sec:dectoave}:} Distributed Algorithm and Node Averaged Complexity
        \begin{itemize}
            \item The distributed implementation and the round complexity of an iteration. \Cref{lem:TimeOfIteration}
            \item Proving $O(\log^*n)$ node-averaged complexity. \Cref{lem:TimeToFixSubtree,lem:FixedInIteration,lem:AvgIteration,cor:markedNodesTerminate}
        \end{itemize}
    \item \textbf{\Cref{sec:randomConstant}:} Improved Node Averaged Complexity using randomization
        \begin{itemize}
            \item Compress as the key bottleneck. \cref{lem:FastCompressGivesConstant}
            \item Fast randomized algorithm for Compress \cref{alg:RandCompress,alg:Elect} and \Cref{lem:ProbJoinZ,lem:ElectRuntime,lem:RulingSet,lem:RandCompressCorrect,lem:ProbZDone,lem:constantExpectation}
        \end{itemize}
\end{itemize}

\subsection{The Decomposition Algorithm}\label{sec:newdec}
During the execution of the algorithm we will not yet have a complete $(\gamma, \ell, L)$-decomposition. Say we only have layers up until the $i$-th layer, $V_{1}^R, \dots, V_{i}^R$ and $V_{1}^C, \dots, V_{i}^C$, that already satisfy the definition of a $(\gamma, \ell, L)$-decomposition. We distinguish between nodes that have already been assigned a layer and those that have not.
\begin{definition}[free and assigned nodes] \label{def:freeOrAssigned}
    We call a node that has not been assigned to a layer a \emph{free} node.
    A node that has been assigned to a layer is called \emph{assigned}.
\end{definition}

\begin{figure}[h!]
    \centering
    \includegraphics[width=.8\textwidth]{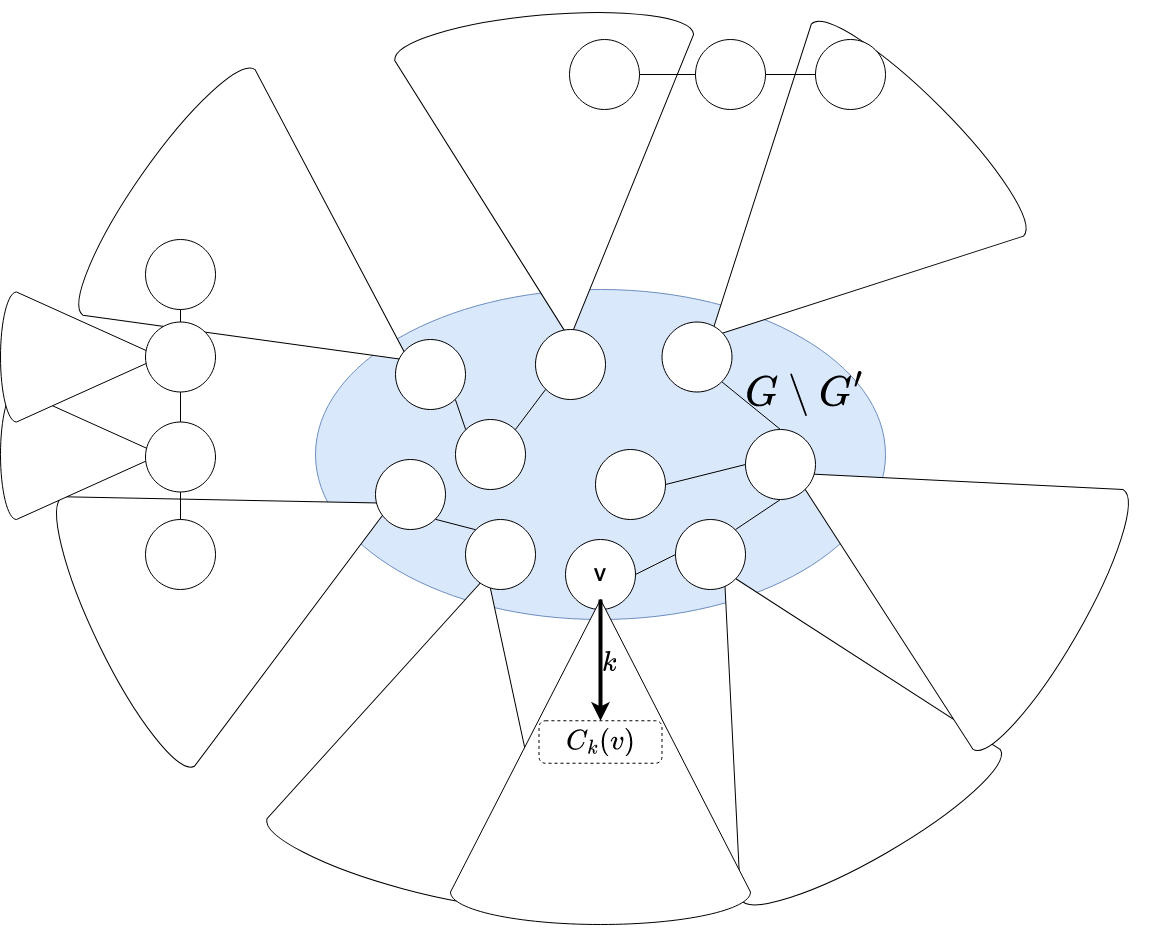}
    \caption{A picture during the execution of a rake and compress like algorithm. We have in blue the graph of remaining free nodes. All of the already assigned nodes are hanging in subtrees from the nodes of $G\setminus G'$. The compress paths are not part of any subtrees and connect the components of free nodes. Also the set $C_k(v)$ is exactly the nodes at distance $k$ from $v$ that are in the tree hanging from $v$.}
    \label{fig:structure}
\end{figure}
Let $G$ denote our input tree and assume that a subset of nodes has already been assigned to some layers.
Let $G'$ denote the subgraph of $G$ induced by all assigned nodes. View \Cref{fig:structure} for the general picture that should be kept in mind for the rest of this section.\\
We now define two notions that will be crucial for our approach.
Recall that we have a total ordering on the layers of a decomposition due to \Cref{def:ordering} (that naturally extends to partial decompositions).
\begin{definition}[local maximum]\label{def:locMax}
    A \emph{local maximum} is an assigned node $v \in V(G')$ with the following two properties:
    \begin{enumerate}
        \item Node $v$ and all of its neighbors are assigned, i.e., they are all contained in $V(G')$.
        \item For each neighbor $w$ of $v$, the layer of $w$ is strictly smaller than the layer of $v$.
    \end{enumerate}
\end{definition}
\Cref{alg:Promote} is a subroutine that changes an existing partial decomposition, by \emph{promoting} a node to a higher layer in order to create a new local maximum. We will later see, that local maxima and all nodes \emph{below} them can terminate early, so this is a powerful tool. The second notion allows us to measure how efficient a node would be as a local maximum and we will use it to decide which nodes to promote. The quality of a node $v$ is the number of nodes that are depending only on $v$ to pick its label so they can pick their labels. So if $v$ terminates and chooses an output all of these nodes can also terminate. 

To keep track of which nodes are waiting for other nodes in order to terminate, we will orient edges. Our algorithm will orient the edges in such a manner that any node $u$ will have a consistently oriented path from $v$ to $u$ if and only if $u$ will be able to terminate if $v$ does. So the orientation keeps track of how labels would propagate through the graph see \Cref{fig:quality}. 
In practice this means if $u$ is raked away we orient the single edge from $u$'s parent towards itself and orienting the ends of compress paths inwards.
\begin{figure}[h!]
    \centering
    \includegraphics[width=.8\textwidth]{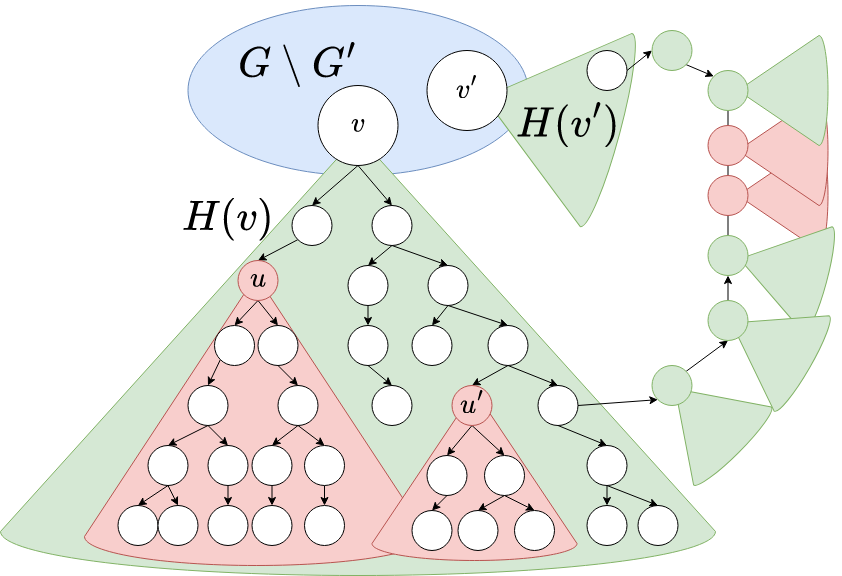}
    \caption{Illustrating the quality of a node. $v$ and $v'$ are free nodes inside of $G\setminus G'$. $H(v)$ (respectively $H(v')$) are all nodes inside the green cone attached to $v$ ($v'$) and the green nodes in the compress path. $u$ and $u'$ are local maxima, so because of point 3 in \Cref{def:quality} they and the red trees hanging from them do not contribute to $H(v)$ (respectively $H(v')$.}
    \label{fig:quality}
\end{figure}
If an edge is not explicitly oriented it is not considered oriented at all. For some given $v$ this orientation now defines $H(v)$ a subgraph of nodes that can be reached over oriented paths. This graph is then exactly all of the nodes that are only waiting for $v$ to choose an output and hence these could all terminate if $v$ became a local maximum.

\begin{definition}[quality] \label{def:quality}
    For any node $v \in V(G)$, let $H(v)$ denote the set of all nodes $w$ that can be reached from $v$ via a path $(v=v_0, v_1, \dots, v_j = w)$ such that the following hold:
    \begin{enumerate}
        \item The edge $\{ v_{i-1}, v_i \}$ is oriented from $v_{i-1}$ to $v_i$, for each $1 \leq i \leq j$.
        \item All nodes on the subpath from $v_1$ to $w$ are assigned, i.e., they are all contained in $V(G')$.
        \item The layer of $v_i$ is smaller than or equal to the layer of $v_{i-1}$, for each $2 \leq i \leq j$, and if $v_0$ is assigned, then the layer of $v_1$ is smaller than or equal to the layer of $v_0$.
    \end{enumerate}
    If $v$ is a local maximum, or a descendant of a local maximum, then the quality $q(v):= 0$.
    Otherwise the \emph{quality} $q(v)$ of a node $v$ is the number of nodes in $H(v)$, i.e., $q(v) := |H(v)|$.
\end{definition}

We chose the name quality, since we will later decide on which nodes to fix by trying to maximize this quantity. We will later also see that the sum over the quality of all free nodes are exactly all of the nodes that are yet to terminate. So by giving an upper bound on $\sum_{v\in G\setminus G'}q(v)$ we will be able to argue about the progress of our algorithm.

\subparagraph{The algorithm.}
For the correctness of our algorithm we maintain the invariant of a valid partial $(\gamma, \ell, L)$-decomposition. It is almost an exact restatement of \Cref{def:modi}, with two key differences. First we only care about a subset $V' \subseteq V(G)$ of nodes. Second, we add additional rake and compress layers for promoted nodes.

\begin{definition}[partial $(\gamma, \ell, L)$-decomposition] \label{def:partialdec}
    A \emph{partial $(\gamma, \ell, L)$-decomposition} of a graph $G$ is a partition of some subset $V' \subseteq V(G)$ into $2L-1$ layers $V_1^R = (V^R_{1,1},\ldots,V^R_{1,\gamma}), \ldots, V_{L}^R  = (V^R_{L,1},\ldots,V^R_{L,\gamma})$, $V_1^C, \ldots, V_{L-1}^C$ plus additional layers $V^C_{i,p}$ for all appropriate $i$.
    The layering satisfies the following.
	\begin{enumerate}
		\item \label{prop:compressprime} Compress layers: The connected components of each $G[V_i^C]$ and $G[V_{i,p}^C]$ are paths of length in $[\ell,2\ell]$, the endpoints have exactly one neighbor that is in a higher layer or free, and all other nodes do not have any neighbor that is in a higher layer or free.
		\item \label{prop:rakeprime} Rake layers: The diameter of the connected components in $G[V_i^R]$ is $O(\gamma)$, and for each connected component at most one node has a neighbor that is in a higher layer or free.
		\item \label{prop:isolatedprime} The connected components of each sublayer $G[V^R_{i,j}]$ consist of isolated nodes. Each node in a sublayer $V^R_{i,j}$ has at most one neighbor in a higher layer or sublayer.
	\end{enumerate}
    The ordering is now given by
    \begin{itemize}
        \item  $V_{i,j}^R < V_{i',j'}^R$ iff $i < i' \lor (i = i' \land j < j')$
        \item $V_{i,j}^R < V^C_{i,p} < V_{i}^C < V^R_{i+1,1}$
    \end{itemize}
\end{definition}

We note, that the above definition is equivalent to the old definition in the sense that any decomposition as described above can  be turned into a valid $(\gamma, \ell, 2L)$-composition, by simply inserting $\gamma$ empty rake layers before each $V^C_{i,p}$.

\begin{corollary}\label{cor:PartialIsProperDecomposition}
A partial $(\gamma, \ell, O(\log n))$-decomposition with additional promotion compress layers $V^C_{i,p}$ is equivalent to a partial $(\gamma, \ell, O(\log n)))$-decomposition without them, where just the constant in the big $O$ changes by a factor of 2.
\end{corollary}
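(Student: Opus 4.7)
The plan is to give an explicit construction turning a partial $(\gamma, \ell, L)$-decomposition with promotion compress layers into a standard partial $(\gamma, \ell, 2L)$-decomposition (without promotion layers), and then verify the three properties of \Cref{def:partialdec} for the new decomposition. The other direction is trivial: any partial decomposition without promotion layers is a special case with all $V^C_{i,p} = \emptyset$.

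For the forward direction, let $\calD$ be a partial $(\gamma, \ell, L)$-decomposition as in \Cref{def:partialdec}, so its layers in the induced order are $V^R_1, V^C_{1,p}, V^C_1, V^R_2, V^C_{2,p}, V^C_2, \ldots, V^R_L$ (where some $V^C_{i,p}$ may be empty). The construction I would use is: for each $i \in \{1,\ldots,L-1\}$ relabel $V^R_i$ as $V'^R_{2i-1}$ (with its $\gamma$ sublayers inherited), relabel $V^C_{i,p}$ as the new compress layer $V'^C_{2i-1}$, insert $\gamma$ empty sublayers forming a fresh rake layer $V'^R_{2i}$ (i.e., $V'^R_{2i,j} = \emptyset$ for every $j$), and relabel $V^C_i$ as $V'^C_{2i}$; finally relabel $V^R_L$ as $V'^R_{2L-1}$. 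The resulting object has $2L-1$ rake layers and $2L-2$ compress layers, matching the shape of a standard partial $(\gamma, \ell, 2L)$-decomposition. Since $L = O(\log n)$, we get $2L = O(\log n)$, i.e., only the hidden constant doubles.

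It remains to check the three properties of \Cref{def:partialdec} for the relabeled $\calD'$. The rake condition and the isolated-sublayer condition are immediate on unchanged sublayers and trivial on the newly inserted empty sublayers. The compress condition for $V'^C_{2i}$ (the former $V^C_i$) is inherited verbatim from $\calD$. The only point that merits a moment of attention is the compress condition for $V'^C_{2i-1}$ (the former $V^C_{i,p}$): we need its endpoints to still have exactly one neighbor that is in a higher layer of $\calD'$ or free, and its internal nodes to have none. This is where I would verify that the relabeling preserves the ordering: a node was higher than some $u \in V^C_{i,p}$ in $\calD$ iff it lay in $V^C_i$, in some $V^R_{i'}$ or $V^C_{i'}$ with $i' > i$, or in some $V^C_{i',p}$ with $i' > i$, and each of these nodes is mapped to a strictly higher layer of $\calD'$ under the relabeling, while no node previously below $V^C_{i,p}$ is moved above it. The ``higher or free'' condition for $V^C_{i,p}$ in $\calD$ therefore translates directly into the same condition for $V'^C_{2i-1}$ in $\calD'$, completing the verification. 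The only ``work'' here is this bookkeeping; no combinatorial argument is needed, which is why I expect no real obstacle beyond carefully writing down the ordering-preservation step.
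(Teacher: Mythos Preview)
Your proposal is correct and takes essentially the same approach as the paper: the paper's entire argument is the one-sentence note immediately preceding the corollary (``by simply inserting $\gamma$ empty rake layers before each $V^C_{i,p}$''), and you have carried out exactly this relabeling-with-empty-rake-layers construction, just with the bookkeeping spelled out in more detail than the paper bothers with.
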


We first describe the two altered versions of rake and compress that we use in \cref{alg:Rake} and \cref{alg:Compress} respectively. Then we describe the procedure that promotes nodes in order to create local maxima in \cref{alg:Promote}. Finally the entire algorithm is given in \cref{alg:Decomposition} which then only consists of iterating the other procedures.\\
We start with the rake procedure, where the only difference is that we also create the edge orientation necessary to keep track of the quality of nodes.

\begin{algorithm2e}
\caption{Orienting Rake}\label{alg:Rake}
\KwIn{$G=(V,E), i, \gamma$} 
$G^{(i)} \gets G \setminus (\bigcup_{j<i}(V_i^R\cup V_{j}^C))$ \Comment{Graph induced by free nodes.}\\
\For{$j = 1$ \KwTo $\gamma$} 
{
    \For{every node $v \in G^{(i)}$ of degree 0 or 1 node}{
        \If{$v$ has a neighbor $u$} {
		      Orient $(u,v)$ from $u$ to $v$. \label{alg:line:orient3}
	    }
        $V_{i,j}^R \gets V_{i,j}^R \cup v$
    }    
}
\end{algorithm2e}
For the compress procedure given in \cref{alg:Compress} we change a bit more. We will later need the guarantee that compress paths are far enough away from nodes we want to promote. We achieve this by leaving a bit of slack at the ends of the compress path. Furthermore we also keep track of dependencies here by orienting the edges accordingly. The rest is just the classic compress procedure.
Such a compress procedure will always be followed by a set of orienting rakes, as a result all of the slack nodes are raked away nicely.

\begin{algorithm2e}
\caption{Compress with Slack}\label{alg:Compress}
\KwIn{$G=(V,E), i, \ell$} 
$G^{(i)} \gets G \setminus (\bigcup_{j<i}(V_i^R\cup V_{j}^C))$ \Comment{Graph induced by free nodes.}\\
\For{each maximal path $P$ consisting of nodes of degree exactly $2$ in $G^{(i)}$}  
{
    \If{$|V(P)|\geq 4\ell + 9$} {
        $P' \gets$ the subpath of $P$ consisting of all nodes that have distance at least $\ell + 3$ from both endpoints of $P$ \\
        compute a subset $Z \subseteq V(P')$ such that no two nodes in $Z$ are adjacent, no endpoint of $P'$ is in $Z$, and every maximal subpath of $P$ consisting only of nodes in $V(P') \setminus Z$ has length in $[\ell, 2\ell]$ \label{computez} \\
        add every node in $V(P') \setminus Z$ to $V_{i}^C$ \label{alg:line:firstCompress}\\
        add every node in $Z$ to $V_{i+1,1}^R$ \label{alg:line:maxone} \\
        let $P' = (v_0, \dots, v_w, \dots, v_y, \dots v_j)$, and let $v_w \in Z$ be the first node on $P'$ that is in $Z$ and $v_y \in Z$ the last. \\
        orient the edges in the path $(v_0,\dots, v_{w-1})$ from $v_0$ towards $v_{w-1}$ \label{alg:line:orient2.1}\\
        orient the edges in the path $(v_{y+1},\dots, v_{s})$ from $v_s$ towards $v_{y+1}$ \label{alg:line:orient2.2}\\ 
        $N^{(i)} \gets N^{(i-1)} \cup \{v_w, v_{w+1}, \dots , v_{y-1}, v_y\}$ \label{alg:line:assignN} \Comment{Only for the analysis}\\
	}
}
\end{algorithm2e}

In \cref{fig:compressOperation} we illustrate exactly whats happening in \cref{alg:Compress}. It is essentially just doing the normal compress procedure on the inner path $P'$ and orienting the edges inwards. Furthermore we note that the 
set $N^{(i)}$ consists exactly of those nodes that are either local maxima themselves or are between two local maxima. We will later see that all of the nodes in $N^{(i)}$ can already terminate after the compress procedure is done.

\begin{figure}[h!]
    \centering
    \includegraphics[width=\textwidth]{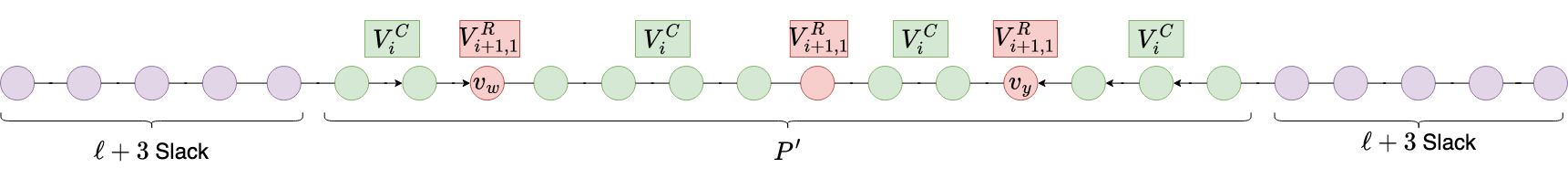}
    \caption{Illustrating the Compress with Slack procedure for $\ell = 2$, the purple nodes are slack at both ends of the path, the red nodes are in the set $Z$ and the green nodes make up the actual compress layer.}
    \label{fig:compressOperation}
\end{figure}

Now that we have the two subroutines that will create our decomposition we will take care of actually promoting some nodes to local maxima. For this we use the following definition
For each node $v \in V(G)$, let $C_k(v)$ denote the set of assigned nodes $w$ that have distance exactly $k$ from $v$ and for which all internal nodes of the unique path from $v$ to $w$ are assigned as well, i.e.,
\[
    C_k(v) = \{ w \in V(G') \mid \dist(v,w) = k, \text{ $x \in V(G')$ for all $x \neq v$ on the path from $v$ to $w$} \}.
\]
Refer to \Cref{fig:structure} for an example.
We determine the node that we want to promote by choosing the node $v^*$ that has the highest quality among nodes in $C_b(r)$. This  ensures that a lot of nodes can terminate early.\\
We promote a node $v^*$ by inserting an extra compress path and having $v^*$ be one of the endpoints of the new compress path. This only works if we do not intersect with any other compress path that already exists, as a result we are not always able to promote $v^*$. We will later see that when we are not able to, then this is because we have very recently been successful in promoting another close-by node. 

\begin{algorithm2e}
\caption{Promote if possible}\label{alg:Promote}
\KwIn{$G=(V,E), \Pi, i, b$} 
$G^{(i)} \gets G \setminus \left( \bigcup_{1 \leq a \leq i} V_a^R \cup \bigcup_{1 \leq a \leq i - 1} V_a^C \right)$ \label{alg:line:definitionG}\Comment{all free nodes} \\
\For{each node $r \in G^{(i)}$ \label{alg:line:eachr}}{
    $v^* \gets \argmax_{v \in C_b(r)} q(v)$ \Comment{breaking ties arbitrarily} \label{alg:line:argmax} \\
    $P \gets$ the path from $v^*$ to $r$ \label{alg:line:promotePath}\\
    \If{no node in $V(P)$ is currently assigned to a layer of the form $V_a^C$, or $V_{a,p}^C$ for some $1 \leq a \leq i-1$ \label{alg:line:conditionFixV*}}
    {
        remove every node in $V(P) \setminus \{ v^*, r \}$ from the layer it is currently assigned to and add it to $V_{i,p}^C$ \label{alg:line:reassigncee} \\
        remove $v^*$ from the layer it is currently assigned to and add it to $V_{i+1,1}^R$ \label{alg:line:maxtwo} \\
        mark $v^*$ as \emph{promoted} \label{alg:line:markPromoted}
    }
}
\end{algorithm2e}

\cref{fig:afterPromotion} shows how things will look after having successfully promoted $v^*$. 

\begin{figure}[h!]
    \centering
    \includegraphics[width=.8\textwidth]{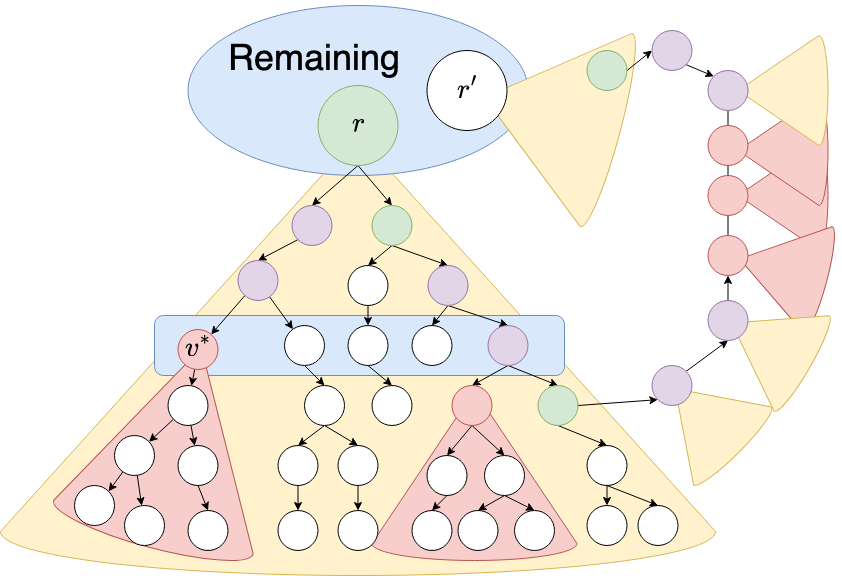}
    \caption{Illustrating the Promote procedure for $\ell = 2$, the purple nodes in the path from $r$ to $v^*$ are the compress layer $V^C_{i,p}$. Red nodes are nodes that can already terminate, and since $v^*$ is now a local maximum, it can choose a label and propagate its choice down.}
    \label{fig:afterPromotion}
\end{figure}

Now the algorithm that performs our modified $(\gamma, \ell, L)$-decomposition, just consists of calling these 3 subroutines in every iteration.
Note that \Cref{alg:Decomposition} provides a description of the steps of the algorithm without specifying how the algorithm is implemented in a distributed manner. We will give the distributed implementation, and the details of when nodes terminate in \Cref{sec:dectoave}.
\begin{algorithm2e}
\caption{Compute Decomposition}\label{alg:Decomposition}
\KwIn{$G=(V,E), \Pi$} 
\SetKwFunction{FOrientingRakes}{orientingRake}
\SetKwFunction{FCompressWithSlack}{compressWithSlack}
\SetKwFunction{FPromoteIfPossible}{promoteIfPossible}
compute $\ell$ from $\Pi$ \label{alg:line:one}\\
$b \gets \ell + 2$ \\ 
$\gamma \gets \ell + 3$ \\
\FOrientingRakes{$G$, $1$, $\gamma$}\\
$i \gets 2$ \\
\For{$O(\log n)$ times (until every node is assigned to a layer) \label{alg:line:startout}}{
    \FCompressWithSlack{$G$, $i-1$, $\ell$}\\
    \FOrientingRakes{$G$, $i$, $\gamma$} \Comment{Removing the Slack from Compress}\\
    \FPromoteIfPossible{$G$, $i$, b}\\
    $i \gets i + 1$
}
\end{algorithm2e}

We now want to show that \cref{alg:Decomposition} actually computes a valid $(\gamma, \ell, L)$-decomposition.

We note that at all times $V_i^R$ is to be understood as the union of all nodes in the layers $V_{i,j}^R$, i.e.,
\[
V_i^R := \bigcup_{1 \leq j \leq \gamma} V_{i,j}^R.
\]

We now prove for every subroutine that they produce valid partial $(\gamma, \ell, L)$-decompositions. The only tricky part, is that inside of the promotion subroutine we put the promoted node in layer $V^R_{i+1,1}$, while the rest of the rake nodes so far are only in layer $V^R_{i,j}$. We essentially put $v^*$ in the layer that will be computed in the next iteration. As a result we need something slightly stronger than just a pre existing partial decomposition.

\begin{lemma}\label{lem:compressCorrect}
Assume we are given a graph $G$ and a valid partial $(\gamma, \ell, i-1)$-decomposition. Then after running \cref{alg:Compress} \emph{Compress with Slack} with parameter $(G,i-1, \ell)$ we have a valid partial $(\gamma, \ell, i)$-decomposition. All nodes in $N^{(i)}$ are either a local maximum themselves, or are between two local maximums that are at distance at most $2\ell$. Furthermore it holds that for all free nodes $v$ all neighbors of $v$ are in a layer strictly smaller than $V^R_{i,1}$.
\end{lemma}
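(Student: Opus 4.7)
The plan is to verify the three statements of the lemma separately: (i) the algorithm produces a valid partial $(\gamma,\ell,i)$-decomposition; (ii) every node in $N^{(i)}$ is either a local maximum or lies between two local maxima at distance at most $2\ell$; and (iii) no free node has an assigned neighbor in the new layer $V_{i,1}^R$. A recurring observation I would use is that, since $P$ consists of nodes of degree exactly two in $G^{(i-1)}$, any neighbor of a node on $P$ that does not itself lie on $P$ must already be assigned to a strictly earlier layer before the compress step begins.

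For (i) I would check the three conditions of \Cref{def:partialdec}. The compress condition for $V_{i-1}^C$ follows directly from the choice of $Z$: the maximal subpaths of $V(P')\setminus Z$ have length in $[\ell,2\ell]$, and every such subpath either terminates at a node adjacent to a $Z$-node (now in the higher layer $V_{i,1}^R$) or at an endpoint of $P'$ whose other neighbor is a slack node in $V(P)\setminus V(P')$ that remains free. Interior compress nodes have both on-path neighbors in $V_{i-1}^C$ and every off-path neighbor in a strictly earlier layer. For the rake condition on $V_{i,1}^R$, the set $Z$ is independent by construction, so its nodes form singleton components; each $Z$-node has zero higher-layer or free neighbors, which also yields the isolated-sublayer property.

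For (ii), I would argue that every $Z$-node is already a local maximum: its two on-path neighbors lie in $V_{i-1}^C$, which is strictly below $V_{i,1}^R$ in the layer order; its off-path neighbors all sit in earlier layers by the degree-two remark; and hence every neighbor is assigned and strictly lower. This takes care of the boundary $Z$-nodes $v_w$ and $v_y$ as well. The remaining members of $N^{(i)}$ are the compress nodes strictly between $v_w$ and $v_y$; each such node lies on the subpath between two consecutive $Z$-nodes whose path distance is at most $2\ell$ by the $Z$-spacing constraint, hence between two local maxima at distance at most $2\ell$.

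For (iii), the free nodes after compress split into two kinds. Nodes that were free before and not touched by the procedure have the same assigned neighborhood as before, and the inductive hypothesis places every such neighbor in a layer strictly below $V_{i,1}^R$. A slack node's on-path neighbors are either another slack node (still free) or the endpoint of $P'$, which joined $V_{i-1}^C$; any off-path neighbor was already assigned to an earlier layer. The main subtlety I expect to handle carefully is the interaction at the boundary of the slack region: one must verify that the slack width $\ell+3$, together with the rule that endpoints of $P'$ are excluded from $Z$, ensures the newly created compress path does not collide with previously assigned compress components nor produce a compress endpoint that violates (i). The remainder is direct bookkeeping against \Cref{def:partialdec}.
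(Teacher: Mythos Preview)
Your approach is essentially the same as the paper's: verify the decomposition conditions for the newly created layers, argue that $Z$-nodes are local maxima with the remaining path nodes sandwiched between them, and observe that the only new $V_{i,1}^R$-nodes (the $Z$-nodes) are surrounded entirely by assigned lower-layer nodes. The paper's proof is much terser (``by construction'', ``by induction''), so your expanded case analysis is welcome and correct.

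One small omission: in part (ii) you describe ``the remaining members of $N^{(i)}$'' as the compress nodes strictly between $v_w$ and $v_y$, but $N^{(i)} = N^{(i-1)} \cup \{v_w,\dots,v_y\}$ also contains everything carried over from earlier iterations. You need a sentence saying that the claimed property for nodes in $N^{(i-1)}$ persists --- the paper dispatches this ``by induction'', and it works because every neighbor of an $N^{(i-1)}$-node was already assigned before this compress step and is untouched by it. Also, the ``subtlety'' you flag about collision with previously assigned compress components is a non-issue: the procedure acts only on $G^{(i)}$, the graph of free nodes, so no previously assigned node is reassigned here.
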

\begin{proof}
The nodes that are put into $V^C_{i-1}$ satisfy the properties of a $(\gamma, \ell, i)$-decomposition by construction.
All of the nodes in $Z$ are local maxima and all other nodes in $\{v_w, v_{w+1}, \dots , v_{y-1}, v_y\} \setminus Z$ are in paths of length $[\ell,2\ell]$ that are enclosed by nodes in $Z$. The same holds for the set $N^{(i-1)}$ by induction.
Lastly the only nodes that are put into $V^R_{i,1}$ are the nodes in $Z$ which are enclosed by lower layer nodes, so the furthermore part holds. 
\end{proof}

\begin{lemma}\label{lem:rakeCorrect}
Assume we are given a graph $G$ and a valid partial $(\gamma, \ell, i)$-decomposition, where free nodes are only adjacent to nodes in layers that are strictly smaller than $V^R_{i,1}$. Then after running \cref{alg:Rake} \emph{Orienting Rake} with parameter $(G,i, \gamma)$ we have a valid partial $(\gamma, \ell, i)$-decomposition.
\end{lemma}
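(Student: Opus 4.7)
The plan is to verify, one by one, the three properties required by \Cref{def:partialdec} for the partial decomposition produced after \Cref{alg:Rake}. The algorithm only adds new nodes to the sublayers $V^R_{i,j}$ for $j = 1, \dots, \gamma$ and equips some edges with orientations; no already-assigned node is moved and no compress layer is touched. Consequently Property~\ref{prop:compressprime} is inherited verbatim from the input partial $(\gamma, \ell, i)$-decomposition. The orientations created on line~\ref{alg:line:orient3} serve only the bookkeeping of \Cref{def:quality} and do not affect layer membership, so they are irrelevant for the decomposition properties.

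Next I check Property~\ref{prop:isolatedprime}. By the standard rake convention (only one of two adjacent degree-$1$ nodes is raked in the same step), the nodes newly added to $V^R_{i,j}$ in sublayer iteration $j$ are pairwise non-adjacent; and each such node was of degree at most $1$ in $G^{(i)}$ at the moment of being raked, so it has at most one neighbor that is either still free after the rake or will be raked into a later sublayer of $V^R_i$. Preexisting members of $V^R_{i,1}$, namely the $Z$ nodes produced by the preceding \Cref{alg:Compress} and the possible $v^*$ promoted in the previous iteration, are isolated from the newly raked nodes: the hypothesis of the lemma states that no free node is adjacent to any node in $V^R_{i,1}$, so in particular any node that was free immediately before being raked into some sublayer $V^R_{i,j}$ had no neighbor in $V^R_{i,1}$. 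By \Cref{lem:compressCorrect} and the construction of \Cref{alg:Promote}, both the $Z$ nodes and $v^*$ have all of their remaining neighbors in layers strictly smaller than $V^R_{i,1}$, so they contribute zero higher-layer or free neighbors.

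Then I check Property~\ref{prop:rakeprime}. The connected components of $G[V^R_i]$ after the rake fall into two types: singleton components consisting of a preexisting $Z$ or $v^*$ node (isolated by the argument in the previous paragraph) and tree components formed by newly raked nodes spanning one or more sublayers. In the latter case, each sublayer iteration appends at most one generation of boundary nodes to any given component, so across $\gamma$ rake steps every such component has diameter $O(\gamma)$. In each tree component there is a unique ``top'' node whose single free-or-later neighbor is not raked into the same component, and it is this node that may carry a higher-layer or free neighbor; all other members of the component have their sole ``upward'' edge going to another node of the same component. Singleton $Z$ and $v^*$ components carry no higher-layer or free neighbor at all.

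The only real subtlety is confirming that the preexisting content of $V^R_{i,1}$ remains safely isolated from the newly raked material; this is exactly where the hypothesis ``every free node has all its neighbors in layers strictly below $V^R_{i,1}$'' is used, and once this is in hand the standard rake-and-compress bookkeeping carries through unchanged, yielding a valid partial $(\gamma,\ell,i)$-decomposition.
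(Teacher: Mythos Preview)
Your proof is correct and follows essentially the same approach as the paper: use the hypothesis that free nodes have no neighbors in $V^R_{i,1}$ or higher to conclude that newly raked nodes cannot collide with the preexisting content of $V^R_{i,1}$, and use the degree-at-most-$1$ condition (together with the rake convention) for isolation and the ``at most one higher-or-free neighbor'' property. Your write-up is considerably more detailed than the paper's---you explicitly verify all three properties of \Cref{def:partialdec} and spell out why the preexisting $Z$ and $v^*$ nodes are local maxima---whereas the paper's proof checks only Property~\ref{prop:isolatedprime} and leaves Properties~\ref{prop:compressprime} and~\ref{prop:rakeprime} implicit.
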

\begin{proof}
We only add nodes to rake layers $V^R_i = V^R_{i,1}\cup \dots \cup V^R_{i,\gamma}$, so we only have to ensure that each such sublayer consists only of isolated nodes and that each such node has at most one neighbor in a higher layer, or free.\\
Now initially all active nodes are adjacent to layers that are strictly smaller than $V^R_{i,1}$ so no active node that we put in $V^R_{i,j}$ for any $1\leq j\leq \gamma$ can be adjacent to a node of the same sublayer. As a result of this and because we only add degree 0 or 1 nodes, all of these layers $V^R_{i,1},\dots,V^R_{i,\gamma}$ consist of only isolated nodes. Also they can only have one neighbor that is free or in a higher layer, because they had degree 0 or 1 when they were added.
\end{proof}

We will later need that regular compress paths are far away from active nodes. This will help us make a distinction between paths that are created as a result of the normal compress procedure and paths that are created in the promotion procedure.
\begin{lemma}\label{lem:compressNodesAreDeep}
Assume we are given a graph $G$ and a valid partial $(\gamma, \ell, i-1)$-decomposition. Then after running \cref{alg:Compress} \emph{Compress with Slack} with parameter $(G,i-1, \ell)$ and afterwards running \cref{alg:Rake} \emph{Orienting Rake} with parameter $(G,i, \gamma)$, every free node $v$ it has distance at least $\gamma$ from any node in layer $V^C_{i-1}$.
\end{lemma}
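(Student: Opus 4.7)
The plan is to trace Compress with Slack and the $\gamma = \ell+3$ iterations of Orienting Rake carefully on each maximal degree-$2$ path, and then to combine this peeling argument with a general structural invariant about assigned nodes to conclude. Throughout I fix a maximal degree-$2$ path $P = (u_0, \ldots, u_m)$ in $G^{(i-1)}$ with $|V(P)| \geq 4\ell + 9$ that Compress with Slack processes; every node of $V^C_{i-1}$ lives inside $V(P')$ of some such $P$, so it suffices to argue the distance bound relative to this $P$.

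For the peeling step, Compress assigns $V(P') \setminus Z$ to $V^C_{i-1}$ and $Z$ to $V^R_{i,1}$, but leaves the slack $u_0, \ldots, u_{\ell+2}$ (and symmetrically the $\ell + 3$ nodes closest to $u_m$) free. Because no endpoint of $P'$ lies in $Z$, we have $u_{\ell+3} \in V^C_{i-1}$ and hence in $G^{(i)}$ the innermost slack node $u_{\ell+2}$ has a single free neighbor $u_{\ell+1}$. A simple induction on $k \in \{1, \ldots, \ell+2\}$ then shows Orienting Rake assigns $u_{\ell+3-k}$ to $V^R_{i,k}$: at the start of iteration $k$ the node $u_{\ell+4-k}$ is already assigned by the inductive hypothesis while $u_{\ell+2-k}$ is still free, so $u_{\ell+3-k}$ has degree exactly $1$ in the current free graph; iteration $\ell+3$ then absorbs $u_0$ (and symmetrically $u_m$). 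So after Orienting Rake the entire path $P$ is assigned.

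Next I would establish the following structural invariant, maintained throughout the algorithm: whenever $x$ is assigned and $u$ is a currently-free neighbor of $x$ in $G$, every other neighbor of $x$ in $G$ is also assigned. By permanence of assignments, any currently-free neighbor $u$ was already free when $x$ was first placed into its layer; so if $x$ had two currently-free neighbors $u,y$, both were free at that moment. If $x$ was placed by a rake, it had at most one free neighbor then, contradiction; if $x$ was placed by (regular) compress, its two free neighbors lay on $x$'s compress path and got assigned in the same step, contradicting that $u,y$ are still free. The promotion step only moves already-assigned nodes to new layers and hence preserves the invariant through the original placement.

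Finally, let $v$ be free after Orienting Rake and $w = u_j \in V^C_{i-1}$ with $\ell+3 \leq j \leq m - \ell - 3$. Let $\pi$ denote the unique tree path from $v$ to $w$ in $G$ and let $u_\star$ be its first node on $P$; I claim $u_\star \in \{u_0, u_m\}$. Suppose otherwise, that $u_\star = u_k$ with $1 \leq k \leq m-1$, entered from an off-$P$ neighbor $x$ of $u_k$. Since $u_k$ has degree exactly $2$ in $G^{(i-1)}$ with both free neighbors on $P$, $x$ is assigned. On the subpath of $\pi$ from $v$ to $x$, consider the node $p_{t^*}$ among $\{$assigned nodes on the subpath$\} \cup \{x\}$ with the earliest assignment time; by minimality, both on-$\pi$ neighbors of $p_{t^*}$ (or the one on-$\pi$ neighbor together with the always-free $u_k$, in case $p_{t^*} = x$) were free when $p_{t^*}$ was placed, contradicting the invariant applied at the moment of $p_{t^*}$'s placement. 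The same argument with $u_k$ replaced by $w$ itself (also an interior node of $P$ of $G^{(i-1)}$-degree $2$) rules out $\pi$ entering $w$ via an off-$P$ neighbor of $w$. Hence $u_\star \in \{u_0, u_m\}$, and since $u_\star$ is assigned after Orienting Rake while $v$ is free, $d_G(v, w) \geq d_G(v,u_\star) + j \geq 1 + (\ell + 3) = \ell + 4 > \gamma$. The main obstacle is this minimality argument: when two nodes along $\pi$ happen to be raked in the same sub-iteration, both are still mutually present in the free graph at the start of that sub-iteration, so the invariant's conclusion remains effective; the analogous subtlety for a shared compress step is ruled out because a compressed node's free neighbors must lie on its compress path and hence cannot currently be free.
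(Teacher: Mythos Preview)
Your approach is considerably more careful than the paper's. The paper's proof simply observes that the $\gamma$ slack nodes $u_0,\ldots,u_{\ell+2}$ at each end of $P$ form a chain that is completely absorbed by the $\gamma$ rake sub-iterations, and then asserts the distance bound. It never justifies why the tree path in $G$ from a free $v$ to $w\in V^C_{i-1}$ must run through this slack chain rather than through one of the already-assigned subtrees hanging off an interior node of $P$. Your structural invariant and the first-entry argument are exactly what is needed to close that gap, so in that sense your route is more rigorous than the paper's.

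However, the minimality argument in your final paragraph does not go through as written when $p_{t^*}$ was placed by a compress step. At the moment of its placement, a compress node genuinely has \emph{two} free neighbours (its two neighbours on the old compress path $P_j$), so there is no contradiction with ``the invariant applied at the moment of $p_{t^*}$'s placement.'' Your attempted fix (``a compressed node's free neighbors must lie on its compress path and hence cannot currently be free'') correctly deduces that $p_{t^*-1}$ and $p_{t^*+1}$ lie on $P_j$ and are therefore currently assigned --- but this is perfectly consistent with $p_{t^*}$ being earliest, since they were placed in the \emph{same} step. No contradiction has been reached.

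The repair is to iterate rather than stop: if $p_{t^*}$ lies on an earlier compress path $P_j$, then each $\pi$-neighbour of $p_{t^*}$ either lies on $P_j$ as well, or was assigned strictly before Compress($j$) (since off-$P_j$ neighbours of a $P_j$-node were already assigned), the latter contradicting minimality. Hence $p_{t^*-1}\in P_j$; repeat for $p_{t^*-1}$, then $p_{t^*-2}$, and so on down to $p_0=v$. But every node of $P_j$ is assigned by the end of Rake($j{+}1$) with $j+1\le i-1$, whereas $v$ is still free after Rake($i$). That is the contradiction you need, and once you insert it your argument is complete.
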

\begin{proof}
This follows directly from the slack that was left in the compress subroutine. Consider $G'$ to be the graph induced by active nodes after the compress operation. Then because of the way the compress subroutine works any active node $u \in G'$ that is adjacent to a node in $V^C_{i-1}$, is contained in a path $P_u$ of length exactly $\gamma$ nodes. Furthermore $u$ has degree exactly 1 in $G'$ and all other nodes in $P_u$ have degree exactly 2. As a result after running the rake operation where nodes of degree 1 are removed $\gamma$ times, the entire path $P_u$ gets assigned to layer $V^R_i$. Clearly $u$ now has distance $\gamma -1$ from any free node $v$ and by extension the neighbor of $u$ that is in layer $V^C_{i-1}$ now has distance $\gamma$ to any free node.
\end{proof}

Before we prove that also the promotion procedure produces a valid partial decomposition, we make an observation about orientations.
\begin{lemma}\label{lem:CloseNodesAreDescendants}
For all iterations $i$, consider a free node $r$ and any node already assigned node $u$ at distance at most $b$. If the path $P$ from $r$ to $u$ consists only of assigned nodes, then it is consistently oriented from $r$ to $u$.
\end{lemma}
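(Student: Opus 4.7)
My plan is to show that each edge $\{w_{j-1}, w_j\}$ on the path $P = (r = w_0, w_1, \dots, w_k = u)$ carries the orientation $w_{j-1} \to w_j$; chaining these together yields the desired consistent orientation from $r$ to $u$. The first step is to argue that each $w_j$ was originally assigned by \Cref{alg:Rake} rather than by \Cref{alg:Compress}. Since $k \le b = \ell + 2 < \gamma$, \Cref{lem:compressNodesAreDeep} (applied to each prior compress iteration, whose conclusion is preserved as the free set shrinks) rules out any $w_j$ sitting in a layer $V^C_a$: such a $w_j$ is at distance $j \le b < \gamma$ from the free node $r$, contradicting the required distance bound. A node currently in a promoted compress layer $V^C_{a,p}$ was originally raked as well, and \Cref{alg:Promote} never re-orients any edge, so it is enough to analyse the orientation set at the time of the original rake.

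The heart of the proof is an induction on $j$ asserting that, at the moment $w_j$ is originally raked, $w_{j-1}$ is still free. The base case $j=1$ is immediate since $w_0 = r$ stays free throughout the algorithm. For the inductive step, the hypothesis gives that $w_{j-2}$ is free when $w_{j-1}$ is raked; because the rake operation requires $w_{j-1}$ to have free degree exactly one, $w_{j-2}$ must be its unique remaining free neighbour, which in particular forces $w_j$ to be already assigned by that time. Moreover, since \Cref{alg:Rake} fixes its to-be-raked set from the free-degrees at the start of each sub-round, $w_{j-1}$ and $w_j$ cannot be raked in the same sub-round: if $w_j$ were still free at the start of the sub-round in which $w_{j-1}$ is raked, then $w_{j-1}$ would simultaneously have $w_{j-2}$ and $w_j$ as free neighbours, violating the degree-$1$ condition. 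Hence $w_j$ is raked in a strictly earlier sub-round than $w_{j-1}$, so $w_{j-1}$ is still free at $w_j$'s rake, completing the induction.

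With the induction established the conclusion is immediate: at the original rake of $w_j$, the tree neighbour $w_{j-1}$ is free, and by the degree-$1$ condition it must be the \emph{unique} remaining free neighbour of $w_j$; line~\ref{alg:line:orient3} of \Cref{alg:Rake} therefore orients $\{w_{j-1}, w_j\}$ from $w_{j-1}$ towards $w_j$. Concatenating this over $j = 1, \dots, k$ yields the orientation of $P$ claimed in the lemma. The step I expect to require the most care is precisely the simultaneity argument inside the inductive step: one has to use the initial-degree semantics of the rake sub-round and also rule out that subsequent promotions or the inward-pointing slack orientations produced by \Cref{alg:Compress} ever overwrite the orientation of $\{w_{j-1}, w_j\}$; both are minor checks but each is easy to overlook.
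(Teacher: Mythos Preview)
Your proof is correct and closely related to the paper's, but organised differently. Both arguments first invoke \Cref{lem:compressNodesAreDeep} to conclude that every $w_j$ was placed by a rake step (so it has exactly one incoming edge, set at that moment). From there the paper argues by contradiction: $r$ is an orphan, so the first edge must point $r\to w_1$, and a first wrongly-oriented edge would give some node two incoming edges (the paper's text says ``two edges oriented away from it,'' an apparent slip). You instead run a forward induction establishing the temporal order of the rakes directly, showing that $w_j$ is raked strictly before $w_{j-1}$ and hence has $w_{j-1}$ as its unique free neighbour. Your route has the advantage of making the same-sub-round case explicit and of not tacitly assuming that every edge on $P$ carries \emph{some} orientation---a step the paper's write-up leaves vague.

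One small gap, which the paper's proof shares: \Cref{alg:Compress} assigns not only the $V^C_{i-1}$ nodes but also the $Z$-nodes (placed directly into $V^R_{i,1}$ in line~\ref{alg:line:maxone}), and these are orphans with no incoming edge. Your appeal to \Cref{lem:compressNodesAreDeep} only rules out the $V^C$ case. The $Z$-case is easily dispatched: a $Z$-node's two path-neighbours in $P'$ are $V^C$ nodes, so if $w_j$ were a $Z$-node then either $w_{j-1}$ is a $V^C$ node at distance $<\gamma$ from $r$ (contradicting \Cref{lem:compressNodesAreDeep}), or $w_{j-1}$ is a non-path neighbour of $w_j$ that was assigned in an earlier iteration---but then your own induction applied to $w_1,\dots,w_{j-1}$ shows $w_{j-1}$ had both $w_{j-2}$ and $w_j$ free at the moment of its rake, violating the degree-$1$ condition. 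Either way no $w_j$ can be a $Z$-node, and your argument goes through unchanged.
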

\begin{proof}
    Let $v$ be a node that was already assigned a layer and let $P$ be a path from $r$ to $v$ of length at most $b$. We need to show, that all edges are oriented from $r$ to $v$.
    By \Cref{lem:compressNodesAreDeep} we get that none of these nodes were assigned a layer inside the compress operation. As a result all of them must have been assigned a layer in a rake operation. An edge oriented in \cref{alg:Rake} is always oriented towards a free node. As a result all of the edges in $P$ must have some orientation.\\
    Lets call nodes without incoming edges orphans.
    Clearly every node that is not an orphan has a path that is oriented from an orphan to itself, since orientations are only ever made towards orphans. Then since $r$ is an orphan, $\{r, v_0\}$ must therefore be oriented towards $v_0$. \\
    Now assume for contradiction that the edges inside $P$ are not consistently oriented from $r$ to $v$, then there must exist a first edge $\{v_j,v_{j+1}\}$ that is oriented in the other direction, so from $v_{j+1}$ to $v_j$. But now $v_{j+1}$ would have two edges oriented away from it, which cannot happen.
\end{proof}

This now allows us to prove the correctness of the promotion procedure.
\begin{lemma}\label{lem:promoteCorrect}
Assume we are given a graph $G$ and a valid partial $(\gamma, \ell, i)$-decomposition. Assume that there are no $V^C_{i-1}$ nodes at distance $\gamma$ from any free node. Then after running \cref{alg:Promote} with parameter $(G,i, b = \gamma -1)$ we have a valid partial $(\gamma, \ell, i+1)$-decomposition. Additionally $v^*$ is a local maximum and free nodes are only adjacent to nodes in layers that are strictly smaller than $V^R_{i,1}$.
\end{lemma}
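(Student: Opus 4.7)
The plan is to verify each conclusion of the lemma in sequence, leveraging the single key structural observation that any candidate $v^* \in C_b(r)$ lies inside the assigned subtree attached to the free node $r$, so that in a tree the only path from $v^*$ to a free node through assigned intermediaries is the path $P$ itself. This forces every off-$P$ neighbor of $v^*$ to be assigned, and forces every off-$P$ neighbor of an interior node of $P$ to have been assigned strictly before that interior node was raked (since a rake step only removes a node once it has degree at most one in the current free subgraph, so all of that node's other neighbors were already assigned by then).

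First I would verify the new compress layer $V^C_{i,p}$. Because $v^* \in C_b(r)$ with $b=\ell+2$ and the if-condition of line \ref{alg:line:conditionFixV*} excludes any pre-existing compress node from $P$, the interior $V(P)\setminus\{v^*,r\}$ is a path of exactly $\ell+1 \in [\ell,2\ell]$ nodes, all previously in rake sublayers. Its two endpoints have exactly one free-or-higher-layer neighbor each, namely $r$ (free) and $v^*$ (now in $V^R_{i+1,1}$, which is higher than $V^C_{i,p}$ in the ordering of \Cref{def:partialdec}); by the rake-history argument every other neighbor of an endpoint lies in a strictly lower layer. The same argument covers the non-path neighbors of interior nodes of $V^C_{i,p}$. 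Second I would verify the rake sublayer $V^R_{i+1,1}$: the unique potential free-or-higher-layer neighbor of $v^*$ is its path-neighbor, now in $V^C_{i,p} < V^R_{i+1,1}$, so adding $v^*$ respects property \ref{prop:isolatedprime}, provided that promoted nodes for different roots are not adjacent.

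For the local-maximum claim, the structural observation gives that every neighbor of $v^*$ is assigned, and the rake-history argument shows that every off-$P$ neighbor lies in a layer strictly smaller than $v^*$'s pre-promotion layer $V^R_{a,j}$ with $a\le i$, which is itself strictly smaller than $V^R_{i+1,1}$; the path-neighbor is in $V^C_{i,p} < V^R_{i+1,1}$. Both conditions of \Cref{def:locMax} are then met. For the adjacency statement on free nodes, Promote does not change the free/assigned status of any node and only moves the nodes of $V(P)\setminus\{r\}$ up to the layers $V^C_{i,p}$ and $V^R_{i+1,1}$; any newly created free-to-assigned adjacency is therefore between $r$ and the endpoint of $V^C_{i,p}$ adjacent to $r$, and this layer is strictly below $V^R_{i+1,1}$, recovering the stated adjacency bound.

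The step I expect to be the hardest is coping with simultaneous promotions triggered by different free roots $r$ in the same invocation of \Cref{alg:Promote}, because the if-condition of line \ref{alg:line:conditionFixV*} only rejects paths through $V^C_a$ and $V^C_{a,p}$ for $a\le i-1$, so freshly created $V^C_{i,p}$ nodes are not yet forbidden. I would address this either by serializing the for-loop over roots and dynamically adding the newly created $V^C_{i,p}$ portion to the forbidden set, or by showing that the $\argmax$ in line \ref{alg:line:argmax}, together with the choice $b=\ell+2$ and the promoted endpoints becoming local maxima, forces the candidate paths of distinct roots to lie in disjoint branches of the assigned forest.
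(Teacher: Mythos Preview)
Your approach is essentially the paper's: both arguments rest on the fact that the nodes of $P\setminus\{r\}$ were all assigned in rake sublayers, with layers strictly increasing towards $r$ (the paper gets this via the reasoning of \Cref{lem:CloseNodesAreDescendants}, you via the rake-history/orientation observation), so every off-$P$ neighbor of a path node $w$ was already in a strictly lower layer than $w$ and remains so after $w$ is moved up.

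Two remarks where you are in fact more careful than the paper. First, you are right to flag simultaneous promotions; the paper's proof silently treats a single root $r$ and never revisits the for-loop. Your second proposed resolution is the correct one and needs no serialization: by \Cref{lem:CloseNodesAreDescendants} each path $P$ is consistently oriented from its root $r$ down to $v^*$, and since every assigned node has at most one incoming oriented edge, the paths for distinct free roots lie in vertex-disjoint oriented subtrees, so the reassignments cannot collide. Second, your adjacency conclusion $V^C_{i,p} < V^R_{i+1,1}$ is exactly what the argument yields; the bound $V^R_{i,1}$ printed in the lemma statement is not established in the paper's own proof (which never addresses this clause) and is in fact violated by $r$'s new $V^C_{i,p}$ neighbor, so treat it as a typo for $V^R_{i+1,1}$ rather than a gap in your reasoning.
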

\begin{proof}
Because we have a valid partial $(\gamma, \ell, i)$-decomposition, no node is in layer $V^C_{i,p}$ and the highest layer so far is $V^R_{i,\gamma}$.
Consider the path $P$, first because of the if condition, all nodes in $P$ are in layers of the form $V^R_a$ for $1 \leq a \leq i$. Analogous to the argumentation in \cref{lem:CloseNodesAreDescendants} and since $r$ is a free node, the nodes in $P$ have strictly increasing layers from $v^*$ to $r$.
The nodes in $P' := P\setminus \{v^*,r\}$ are added to $V^C_{i,p}$ and $v^*$ is added to $V^R_{i+1,1}$. \\
We first argue, that the reassigned nodes satisfy the conditions of a $(\gamma, \ell, i+1)$-decomposition and then we argue that we did not disrupt any of the other layers. The component of layer $V^R_{i+1,1}$ created here, consists only of $v^*$ and all of its neighbors are in smaller layers, since no node had a layer bigger or equal then $V^R_{i+1,p}$ before. So $V^R_{i+1,1}$ satisfies the conditions for a $(\gamma, \ell, i)$-decomposition and $v^*$ is now a local maximum.\\
Furthermore consider the nodes in $P$, they form a path of length exactly $b = \ell +2$. Now as a result the path $P'$ without $r$ and $v^*$ is a path of length exactly $\ell$. At one of its endpoints is $v^*$, which is in a strictly higher layer and at the other endpoint is $r$ which is still an active node, but will be assigned a strictly higher layer in the future. Again all other nodes adjacent to $P'$ have strictly smaller layers, since $V^R_{i, \gamma} < V^C_{i,p}$. So the nodes in $P'$ for am valid compress layer.\\
Now consider any node $u$ adjacent to a node in $P\setminus \{r\}$, we will argue that the layer assignment for $u$ still satisfies all of the necessary conditions for a $(\gamma, \ell, i)$-decomposition. Let $w \in P\setminus \{r\}$ be a neighbor of $u$ in the path, since $w$ has a rake label, it cannot be in a compress path together with $u$. As a result $u$ is either a rake node itself, or is an endpoint of a compress path. In either case $u$ must have at most one neighbor of a higher layer. Since the path $P$ had strictly increasing layers from $v^*$ to $r$, $w$ already has its unique higher layer neighbor in $P$. Since $w$ cannot have two neighbors in a higher layer, $w$ is the unique higher layer neighbor of $u$. Since $w$ was assigned a layer that is higher than its previous layer, $w$ is still a neighbor with a higher layer. So for all nodes that were not reassigned a new layer, they still fulfill all of the necessary conditions of a $(\gamma, \ell, i)$-decomposition.
\end{proof}

We get as a corollary of \cref{lem:compressCorrect,lem:rakeCorrect,lem:promoteCorrect} that our algorithm always maintains a valid partial decomposition.

\begin{corollary}\label{cor:itispartdec}
In \cref{alg:Decomposition} for every iteration $i$, after every execution of a subroutine we have a valid partial $(\gamma, \ell, i)$-decomposition, or a valid $(\gamma, \ell, i+1)$-decomposition.
\end{corollary}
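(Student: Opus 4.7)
The plan is to prove the corollary by a straightforward induction on the iteration counter $i$ of the outer loop in \Cref{alg:Decomposition}, chaining together \Cref{lem:compressCorrect,lem:rakeCorrect,lem:promoteCorrect}. At each of the three subroutine calls within one iteration, the preceding lemmas directly give the conclusion; the only content left for me is to verify that the hypotheses of each lemma are satisfied by the state that the previous subroutine leaves behind.

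For the base case, before entering the main loop the only thing that has been run is the initial call \texttt{OrientingRake}$(G,1,\gamma)$. At that point no node has been assigned, so vacuously we have a valid partial $(\gamma,\ell,0)$-decomposition in which every free node has no assigned neighbors at all. This vacuously satisfies the precondition of \Cref{lem:rakeCorrect} with $i = 1$, and the lemma yields a valid partial $(\gamma,\ell,1)$-decomposition. Since the loop starts with $i = 2$, this is exactly the input hypothesis required for the first compress call.

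For the inductive step I assume that at the start of iteration $i \geq 2$ we have a valid partial $(\gamma,\ell,i-1)$-decomposition. Then \texttt{CompressWithSlack}$(G,i-1,\ell)$ yields a valid partial $(\gamma,\ell,i)$-decomposition by \Cref{lem:compressCorrect}; moreover, the ``furthermore'' clause of that lemma states that every free node is adjacent only to layers strictly smaller than $V_{i,1}^R$, which is precisely the precondition of \Cref{lem:rakeCorrect} at parameter $i$. Applying \Cref{lem:rakeCorrect} leaves us with a valid partial $(\gamma,\ell,i)$-decomposition. At this point \Cref{lem:compressNodesAreDeep} applies and certifies that no node of layer $V_{i-1}^C$ sits within distance $\gamma$ of any free node, which is exactly the precondition of \Cref{lem:promoteCorrect}. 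Invoking \Cref{lem:promoteCorrect} then yields a valid partial $(\gamma,\ell,i+1)$-decomposition, completing the iteration. Thus after every one of the three subroutine calls within iteration $i$, we have a valid partial $(\gamma,\ell,i)$- or $(\gamma,\ell,i+1)$-decomposition, which is what the corollary asserts; and the output of iteration $i$ is the valid partial $(\gamma,\ell,i+1)$-decomposition needed to feed iteration $i+1$.

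The only mildly delicate point is the index bookkeeping: \Cref{lem:compressCorrect} increments the level index (from $i-1$ to $i$) while \Cref{lem:rakeCorrect} keeps it fixed and \Cref{lem:promoteCorrect} increments it once more. Because the outer loop invokes \texttt{CompressWithSlack} with parameter $i-1$ and \texttt{OrientingRake} and \texttt{PromoteIfPossible} both with parameter $i$, the three level-shifts line up exactly, so no mismatch between the loop counter and the lemma indices occurs. There is no real obstacle beyond tracking these indices carefully; the heavy lifting has already been done in the three component lemmas.
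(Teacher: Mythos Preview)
Your approach is correct and matches the paper's, which simply asserts the corollary follows from \Cref{lem:compressCorrect,lem:rakeCorrect,lem:promoteCorrect} without spelling out the induction; you have provided more detail than the paper itself. One small bookkeeping glitch: your inductive hypothesis says we enter iteration $i$ with a valid partial $(\gamma,\ell,i-1)$-decomposition, but the \texttt{Promote} of iteration $i-1$ (via \Cref{lem:promoteCorrect}) actually hands you a $(\gamma,\ell,i)$-decomposition, so the induction does not literally close as written---this is the same off-by-one that the paper's ``$i$ or $i{+}1$'' disjunction in the corollary statement is absorbing, and it does not affect the substance of your argument.
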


Next we argue that our algorithm behaves in a similar way to a normal rake and compress algorithm. We use the following lemma.
\begin{lemma}[\cite{chang20}]\label{lem:rcremaining}
Given a tree with $n$ nodes, by performing $\alpha$ rakes\footnote{A rake operation is the removal of all nodes of degree $0$ or $1$.} and $1$ compress\footnote{A compress operation consists of the removal of all paths of nodes of degree exactly 2, that are of length at least $\beta$.} with minimum path length $\beta$, the number of remaining nodes is at most $\frac{\beta}{2 \alpha} n$.
\end{lemma}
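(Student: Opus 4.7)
The plan is to analyze the intermediate tree $T'$ that remains after the $\alpha$ rake operations, then control what the single compress step removes from $T'$. Let $L$, $B$, and $M$ denote the sets of nodes of $T'$ of degree $1$, degree $\geq 3$, and degree exactly $2$ in $T'$, respectively. A handshake count on $T'$ immediately gives $|L| \geq |B| + 2$, so the branching nodes are dominated by the leaves.

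The key step will be to bound $|L|$ by $n / \alpha$ through a charging argument. A leaf $v \in L$ has survived $\alpha$ rake rounds, so it had degree $\geq 2$ at the start of rake $\alpha$ but only degree $1$ at the end; hence at least one neighbor $w_\alpha$ of $v$ in $T$ was removed exactly in rake $\alpha$. Applying the same reasoning to $w_\alpha$ (it must itself have lost a neighbor $w_{\alpha - 1} \neq v$ in rake $\alpha - 1$) and iterating, I obtain a chain $w_\alpha, w_{\alpha - 1}, \ldots, w_1$ of $\alpha$ pairwise distinct raked nodes, all lying in the component of $T \setminus T'$ that hangs off $v$ through $w_\alpha$. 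Because the components of $T \setminus T'$ adjacent to distinct leaves of $T'$ are disjoint, the chains for different leaves are disjoint as well, yielding $\alpha \cdot |L| \leq |V(T) \setminus V(T')| \leq n$.

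For the compress step, I will contract every maximal degree-$2$ path of $T'$ to a single edge to obtain a skeleton tree on vertex set $L \cup B$ with $|L| + |B| - 1$ edges, one per maximal degree-$2$ path of $T'$. Compress deletes precisely those paths whose length is $\geq \beta$, so the surviving nodes after compress consist of $L \cup B$ together with the degree-$2$ interiors of the short surviving paths, each of which contributes at most $\beta$ additional nodes. Combined with $|B| \leq |L|$ and $|L| \leq n/\alpha$, this already gives $O(\beta n / \alpha)$ survivors, matching the stated bound up to a constant factor.

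I expect the main obstacle to be the constant-factor bookkeeping needed to sharpen the crude bound to exactly $\beta n /(2\alpha)$: the internal part of a short surviving path has at most $\beta - 2$ degree-$2$ nodes (its two endpoints already lie in $L \cup B$ and are counted separately), and the handshake bound $|L| \geq |B| + 2$ implies that the skeleton tree has at most $2|L| - 3$ edges; together with an analogous refinement of the chain argument (charging the raked mass in the whole hanging component off a leaf, not only the minimal chain) these improvements should suffice. The structural picture, namely that leaves of $T'$ are scarce because each underwrites an $\alpha$-node raked chain while compress collapses everything else into a skeleton decorated by short paths of length below $\beta$, is robust and drives the argument.
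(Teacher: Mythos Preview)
The paper does not supply its own proof of this lemma; it is quoted from \cite{chang20} and used only as a black box in \Cref{lem:epsilon}. So I evaluate your argument on its own.

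Your chain-and-skeleton argument is sound and delivers the correct order of magnitude: the $\alpha$-node chain hanging off each leaf of $T'$ gives $|L|\le n/\alpha$; combined with $|B|\le|L|-2$ and the skeleton bound (at most $2|L|-3$ edges, each contributing at most $\beta-1$ surviving degree-$2$ nodes) you get fewer than $2\beta n/\alpha$ survivors. The gap is the factor $4$ to the stated $\beta n/(2\alpha)$, and the refinements you list do not close it. Replacing $\beta-1$ by $\beta-2$ and $2|L|$ by $2|L|-3$ are lower-order corrections. ``Charging the whole hanging component'' cannot improve $|L|\le n/\alpha$, because that bound is asymptotically tight (attach $k$ paths of length $\alpha+1$ to a common center: then $|L|=k$ while $n=1+k(\alpha+1)$), and the component hanging off a leaf is only guaranteed to contain the $\alpha$-node chain, nothing more. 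Indeed the constant as literally stated fails on small instances: on a $7$-vertex path with $\alpha=2$, $\beta=1$ two vertices remain but $\beta n/(2\alpha)=7/4$, so the sharp form presumably carries an additive term or a parameter restriction from \cite{chang20} that is suppressed here. Your argument proves the lemma up to a constant, which is all \Cref{lem:epsilon} actually needs (one can compensate by enlarging $\gamma$), but as a proof of the sharp inequality the constant-factor step is a real gap that your listed refinements cannot bridge.
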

Now to show that our algorithm fulfills the conditions for this lemma every few iterations.
\begin{lemma}\label{lem:epsilon}
There exists some constant $0< \varepsilon < 1$ such that, for any positive integer $i \ge 5$, the number of free nodes after iteration $i$ of \Cref{alg:Decomposition} is at most  $n\varepsilon^{i}$.
\end{lemma}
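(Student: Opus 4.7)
The plan is to apply \cref{lem:rcremaining} to every pair of consecutive loop iterations of \Cref{alg:Decomposition}, obtaining a constant-factor decrease in the number of free nodes every two iterations and hence exponential decay in $i$.

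Two bookkeeping observations simplify the accounting. First, \Cref{alg:Promote} only moves already-assigned nodes between layers and therefore does not change the set of free nodes. Second, although \Cref{alg:Compress} with threshold $\beta := 4\ell+9$ only assigns the middle $[\ell,2\ell]$-segment of each long degree-$2$ path to a compress layer and leaves the $\ell+3$ slack nodes on either side free, the immediately following call to \Cref{alg:Rake} with $\gamma := \ell+3$ rake steps peels off exactly this slack (compare the argument in \Cref{lem:compressNodesAreDeep}). Hence, for the sole purpose of counting free nodes, each pass through the loop body of \Cref{alg:Decomposition} performs at least one standard compress with threshold $\beta$ together with $\gamma$ standard rake steps.

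Next, fix any $i \ge 5$ and consider the two loop iterations $i-1$ and $i$. Together they apply at least $2\gamma = 2\ell+6$ rakes and at least two compresses with threshold $\beta$, possibly interleaved with promotes (which are inert with respect to the free-node count). Since rake and compress are monotone node-removal operations, the interleaved schedule can only shrink the free forest at least as much as the canonical schedule ``$2\gamma$ rakes followed by one compress'' to which \cref{lem:rcremaining} directly applies. With $\alpha = 2\gamma$ and $\beta = 4\ell+9$ the lemma gives
\[
    n_i \;\le\; \frac{\beta}{2\cdot 2\gamma}\, n_{i-2} \;=\; \frac{4\ell+9}{4\ell+12}\, n_{i-2} \;=:\; c\cdot n_{i-2},
\]
where $c < 1$ is a constant that depends only on $\ell$, which in turn depends only on $\Pi$.

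Unrolling the recursion yields $n_i \le c^{\lfloor i/2 \rfloor} n$ for every $i \ge 1$. A short check shows that $\lfloor i/2 \rfloor / i \ge 2/5$ for every $i \ge 5$ (the worst case is $i = 5$, where $2/5 = 2/5$), so setting $\varepsilon := c^{2/5} \in (0,1)$ gives $\varepsilon^i \ge c^{\lfloor i/2 \rfloor}$ and hence $n_i \le n\,\varepsilon^i$ for every $i \ge 5$, as required. The delicate point in this outline is the monotonicity claim underpinning the lemma application: one must verify that interspersing the extra compress and the promotes between the $2\gamma$ rakes cannot leave more free nodes than the canonical rakes-then-compress ordering. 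This is immediate because both primitives are purely node-removal and no operation ever reassigns a layered node back to the free set, so the final free forest under our schedule is contained in the free forest produced by any ``sub-schedule'' to which the lemma applies directly.
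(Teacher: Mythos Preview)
Your approach is the same as the paper's: both reduce to \cref{lem:rcremaining} by arguing that a constant number of loop iterations (two for you, five for the paper) subsume enough rakes and one compress. The difference in parameters is harmless; your constant $c=\frac{4\ell+9}{4\ell+12}<1$ and the resulting $\varepsilon=c^{2/5}$ are perfectly acceptable.

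The gap is in the monotonicity claim you flag as ``delicate''. Your justification---that because both primitives only remove nodes, the free forest under the actual schedule must be contained in the free forest under any sub-schedule---is false. Compress is \emph{not} monotone with respect to forest inclusion: removing a node can break a long degree-$2$ path into two short ones that then survive the compress. Concretely, on a bare path of $\beta+2$ nodes, the schedule (rake, compress) leaves $\beta$ free nodes while its sub-schedule (compress) leaves only $2$; the containment you assert goes the wrong way. A more structural counterexample for your exact two-iteration schedule: take a spine of $L$ nodes, attach to each spine node a pendant path of $\gamma$ nodes, and attach a leaf to every pendant-path node. Two iterations of $(C_s,R^\gamma)$ leave the spine essentially intact (roughly $L$ free nodes), whereas the ``canonical'' $R^{2\gamma}$-then-$C$ compresses the spine down to two nodes. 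So the set containment you invoke genuinely fails, and with it the direct appeal to \cref{lem:rcremaining}.

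The paper's own proof is also brief here, but it sidesteps part of the issue by dedicating one whole iteration to realise a full compress and counting only the rakes from the \emph{other} four iterations; it does not try to count the slack-removing rakes twice (once toward the compress and once as ``free'' rakes), which is effectively what your two-iteration accounting does. Even so, turning either sketch into a fully rigorous application of \cref{lem:rcremaining} requires more than the node-removal observation---one needs either a robust version of the lemma that tolerates additional removals interleaved before the final compress, or a direct potential argument.
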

\begin{proof}
    We argue that after 5 iterations of our algorithm at least one ''normal'' compress with $\beta = 4\ell + 9$ and $\alpha = 4\ell + 12$ rakes are performed.
    In our algorithm, the minimum path length is $\beta = 4\ell + 9$, but not all of these nodes are immediately put into a compress layer. Instead we leave $\gamma$ nodes at both ends. However all of these nodes that were left are immediately raked away with the next rake procedure. So in one iteration all nodes in paths of length at least $\beta$ are assigned a layer and therefore a compress is performed. To keep the argumentation clean, we will consider this iteration to be used exclusively for the compress.
    In the following 4 iterations, completely ignore the compress procedure and simply focus on the fact that we perform $\gamma$ rakes per iteration for a total of $\alpha = 4\gamma = 4\ell +12$.\\
    As a result, by \cref{lem:rcremaining} after 5 iterations the number of remaining nodes decreases by a factor of 
    \[
    \frac{\beta}{2\alpha} = \frac{4\ell +9}{2(4\ell +12)} < \frac{1}{2}
    \]
    Hence, at iteration $i$, for $i$ integer multiple of $5$, we get that the number of remaining nodes is at most $n / 2^{(i/5)}$. In order to obtain a smooth progression, we pick $\varepsilon = \frac{1}{\sqrt[10]{2}}$. Observe that, for all $i \ge 5$, $n \epsilon^i \ge n / 2^{(i/5)}$, and hence the claim follows.
\end{proof}

As a result of this \Cref{lem:epsilon} we immediately get that we will be done in at most $O(\log n)$ rounds and because of \Cref{cor:itispartdec} we also get that we will have a complete $(\gamma, \ell, O(\log n))$-decomposition at the end.

\begin{corollary}\label{cor:inlog}
    The decomposition (i.e., the assignment of nodes to layers) produced by \Cref{alg:Decomposition} is a $(\gamma, \ell, L)$-decomposition, where $\gamma$ and $\ell$ are the parameters from \Cref{alg:Decomposition} and $L$ is a suitably chosen value from $O(\log n)$.
\end{corollary}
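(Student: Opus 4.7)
The plan is to stitch together the three ingredients already proved. By \Cref{lem:epsilon}, there is a constant $0<\varepsilon<1$ such that after $i\ge 5$ iterations of \Cref{alg:Decomposition} the number of free nodes is at most $n\varepsilon^{i}$. Choosing $i = c\log n$ for a sufficiently large constant $c$ (for instance any $c > 1/\log(1/\varepsilon)$) makes $n\varepsilon^{i} < 1$, which, since the count of free nodes is a non-negative integer, forces it to be $0$. Hence after $O(\log n)$ iterations of the outer loop (\Cref{alg:line:startout}), every node of $G$ has been assigned to some layer.

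Next I would invoke \Cref{cor:itispartdec}, which guarantees that after every subroutine call the current layering is a valid partial $(\gamma,\ell,i')$-decomposition for the current index $i'$. Once the outer loop terminates, the subset $V'\subseteq V(G)$ of assigned nodes from \Cref{def:partialdec} equals $V(G)$ by the previous paragraph, so the partial decomposition is in fact a complete decomposition over $V(G)$, with index $i' = O(\log n)$.

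Finally, the layering still carries the extra promotion layers $V^C_{i,p}$ that are not part of \Cref{def:modi}. I would now appeal to \Cref{cor:PartialIsProperDecomposition}, which tells us that inserting $\gamma$ empty rake layers before each $V^C_{i,p}$ converts the partial $(\gamma,\ell,O(\log n))$-decomposition into a standard $(\gamma,\ell,L)$-decomposition in the sense of \Cref{def:modi}, at the cost of only doubling the constant hidden in the $O(\log n)$. Setting $L$ to be that (doubled) value in $O(\log n)$ gives the claimed $(\gamma,\ell,L)$-decomposition.

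This is essentially a bookkeeping argument — no new combinatorial content is needed, and the main (already resolved) obstacle was establishing the geometric decay of free nodes in \Cref{lem:epsilon}, which in turn required showing that a full ``rake-and-compress round'' is completed within every $5$ iterations despite the interleaving of the slack, rake, and promote subroutines.
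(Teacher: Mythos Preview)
Your argument is correct and matches the paper's approach: the paper states the corollary as an immediate consequence of \Cref{lem:epsilon} and \Cref{cor:itispartdec}, and you have simply spelled out the details, including the (appropriate) use of \Cref{cor:PartialIsProperDecomposition} to eliminate the extra promotion layers $V^C_{i,p}$.
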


In \Cref{sec:dectoave} we will then show that local maxima allow us to have nodes terminate early, such that we will obtain a good node-averaged complexity at the end. In the next section our main goal is to prove that enough of these local maxima actually exist and are nicely distributed in the graph.

\subsection{Local Maxima and Bounding Quality}\label{sec:insights}
We will first see that during the execution of our algorithm we can actually decompose the graph into two parts. First the free nodes and all already raked nodes hanging from them and second the nodes that are in compress paths who also have raked nodes hanging from them see \Cref{fig:structure} from before. In compress paths there already are \emph{natural} local maxima just from the way the original decomposition algorithm works. These local maxima are sufficient to have most of the nodes in compress paths terminate. The only exception are the ends of the compress paths. These ends are oriented such that they are charged to the quality of some remaining free node. We will see that by summing over the quality of all free nodes we get exactly all of these nodes that are not yet descendants of a local maximum.

\smallskip

We start with the following definitions with regards to the orientation of the nodes. Intuitively nodes are oriented, such that if a node is raked away, the edge is oriented from a parent towards the just removed node and the ends of compress paths are oriented inwards. This is illustrated in \Cref{fig:orientation}. Let $i$ be some arbitrary iteration and consider the input tree $G$ together with the partial assignment of nodes to layers at the end of the iteration. Let $G^{(i)}$ denote the set of nodes that have not yet been assigned any layer.
\begin{figure}[h!]
    \centering
    \includegraphics[width=.6\textwidth]{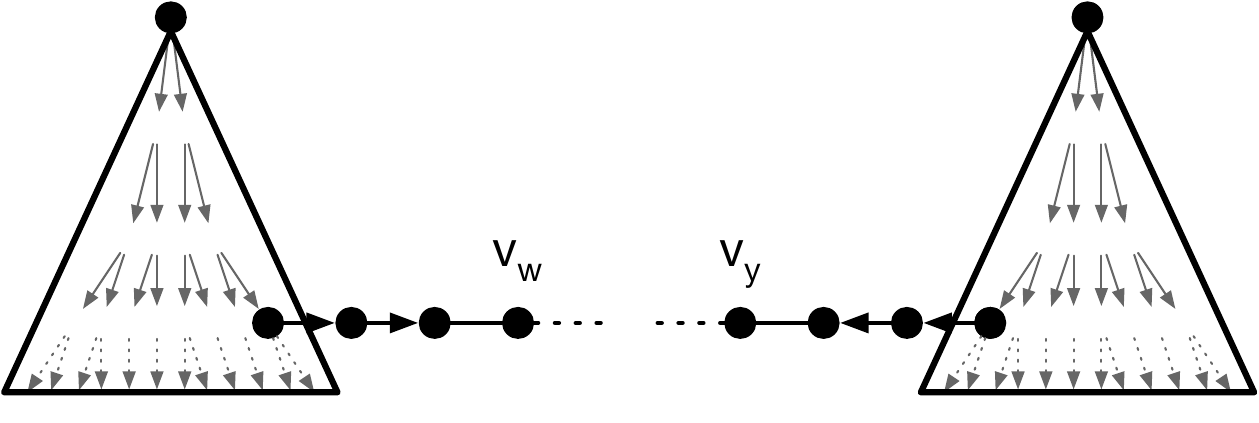}
    \caption{An example of how edges are oriented. The two nodes at the top are free nodes and they are connected by a compress path. The nodes inside of the respective trees are oriented from the root towards the leaves. The ends of the compress path are oriented inwards, however the middle part of the compress path is not oriented in any way, as illustrated at the innermost nodes, the incident edges of which are not oriented.}
    \label{fig:orientation}
\end{figure}
Again, let $G'$ denote the subgraph of $G$ consisting of all nodes that have already been assigned a layer. We can express $G'= G\backslash G^{(i)}$.

\begin{definition}[child, parent, descendant, ancestor, orphan] \label{def:OrientationNames}
    For any edge $\{ w, w'\}$ oriented from $w$ to $w'$, we call $w'$ a \emph{child} of $w$ and $w$ the {parent} of $w'$.
    For any oriented path $(w, \dots, w')$ that is consistently oriented from $w$ to $w'$, we call $w'$ a \emph{descendant} of $w$ and $w$ an \emph{ancestor} of $w'$. We call a node with no edges oriented towards itself an \emph{orphan}.
\end{definition}

We give one of the most important definitions, that of a subtree of assigned nodes. We illustrate how they exist in relation to $G^{(i)}$ in \Cref{fig:Subtrees}.

\begin{figure}[h!]
    \centering
    \includegraphics[width=.5\textwidth]{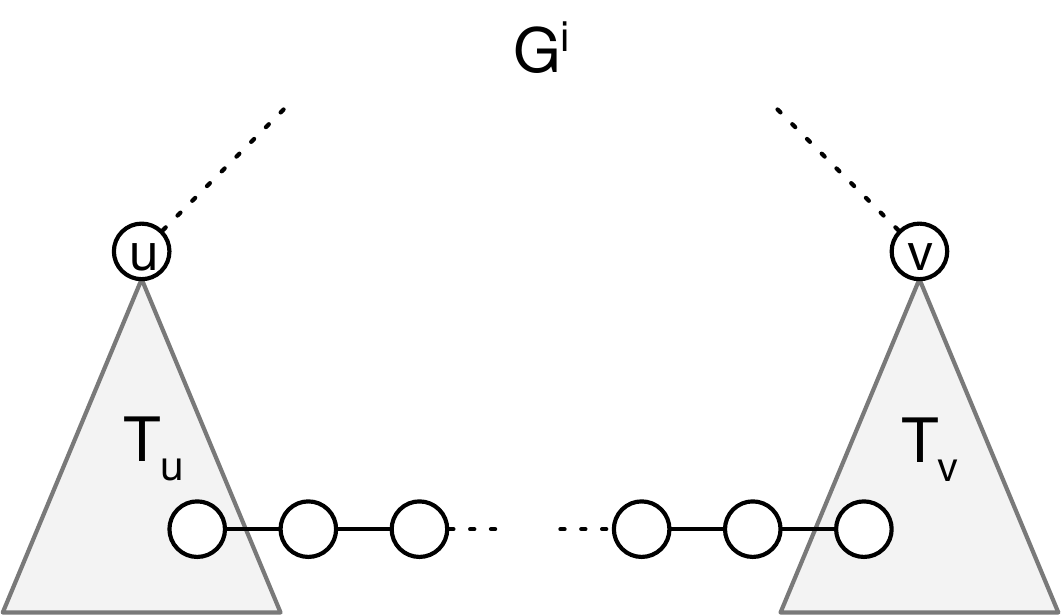}
    \caption{The graph induced by $G^{(i)}$ is not necessarily connected, it might happen that some two nodes $u \in G^{(i)}$ and $v \in G^{(i)}$ are both having a subtree of assigned nodes hanging from them. In this illustration we name them $T_u$ and $T_v$ respectively. These trees are then necessarily connected by a compress path.}
    \label{fig:Subtrees}
\end{figure}

\begin{definition}[subtree of assigned nodes]\label{def:subtreeAssigned}
     For any node $v \in V(G)$ and any positive integer $i$ let , the subtree of assigned nodes $T^{(i)}(v)$ denote the set that contains $v$ and all nodes $w$ that can be reached from $v$ via a path $(v=v_0, v_1, \dots, v_j = w)$ with the following property: the edge $\{ v_{a-1}, v_a \}$ is oriented from $v_{a-1}$ to $v_a$, for each $1 \leq a \leq j$.
\end{definition} 
We denote by $h^{(i)}(v)$ the height of $T^{(i)}(v)$, formally
\[
    h^{(i)}(v) = \max \{\dist(v,u)\}_{u \in T^{(i)}(v)}.
\]

To make the definition of $T^{(i)}(v)$ a bit clearer, we are going to make some small observations about these trees. 
\begin{observation} \label{obs:Orientations}
The following hold:
    \begin{enumerate}
        \item \textbf{Orientations} of the edges are only set once and then will not be changed anymore. This holds true simply because edges are only oriented when a node is first assigned a layer and not when layers are changed later. Additionally each node only ever orients one edge toward itself. For an illustration of the orientation refer to \Cref{fig:orientation}
        \item \textbf{Compress Paths}: For a path $P'$ that is handled in \cref{alg:Compress}, only the first and last few nodes of the path are oriented. As a result the nodes between the first and last nodes  $v_w$ and $v_y$ only have incident edges that are oriented away from them (and unoriented incident edges). Therefore all of the nodes on the unique path $v_w, \dots, v_y$ are orphans.
        \item \textbf{Promoted Compress Paths}: For a path that has its layers changed inside of \cref{alg:Promote}, the orientation stays unchanged.
         \item \textbf{The Set} $N^{(i)}$ defined in Line \ref{alg:line:assignN} of the compress procedure only contains nodes that have no edge oriented towards them.
    \end{enumerate}
\end{observation}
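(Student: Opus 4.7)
The plan is to verify each of the four bullets by direct inspection of the three subroutines (Rake, Compress, Promote), tracking precisely the lines where an edge receives an orientation.

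For bullet 1, I would enumerate every line of code that sets an orientation: Line~\ref{alg:line:orient3} of \Cref{alg:Rake} (when a degree-$\leq 1$ node $v$ is removed, its unique remaining neighbor $u$ orients $(u,v)$), and Lines~\ref{alg:line:orient2.1}--\ref{alg:line:orient2.2} of \Cref{alg:Compress} (the two slack subpaths inside $P'$). \Cref{alg:Promote} contains no orientation command. To argue uniqueness, I would note that (a) once a node is added to some $V^R_{i,j}$ it is no longer in $G^{(i)}$, so Rake processes each node only once; (b) edges oriented inside $P'$ by Compress lie between pairs of nodes added in that same execution to $V^C_{i-1}$ (and $V^R_{i,1}$), so these edges never again belong to the path $P'$ of a later Compress, and their endpoints are never raked again; (c) orientations written by Rake concern an edge incident to a freshly-removed node, so the same edge is not re-oriented in a later iteration.

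For bullet 2, the orientation statements of Compress touch only edges lying in $(v_0,\dots,v_{w-1})$ and $(v_{y+1},\dots,v_j)$, and on both subpaths the direction is outward-to-inward with respect to the slack, so no edge incident to a middle node $v_k$ with $w\le k\le y$ is oriented by Compress. What remains is to forbid any edge from outside the path $P$ being oriented into $v_k$. Since the internal nodes of $P$ have degree exactly $2$ in $G^{(i-1)}$, any other neighbor $u$ of $v_k$ must already be assigned; by \Cref{lem:compressNodesAreDeep} such a $u$ cannot lie in a compress layer, so it was raked, and Rake oriented $\{v_k,u\}$ away from $v_k$. Combining with bullet 1 (orientations are permanent), this shows the middle nodes are orphans.

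For bullet 3, inspection of \Cref{alg:Promote} shows that its effect is limited to reassigning the layer of nodes on the $r$-$v^*$ path to $V_{i,p}^C$ and of $v^*$ to $V_{i+1,1}^R$; no line adds, removes, or flips an edge orientation, so the pre-existing orientations of promoted paths persist unchanged. For bullet 4, I would proceed by induction on $i$. The base case is $N^{(0)}=\emptyset$. For the step, \Cref{alg:line:assignN} adds exactly the middle nodes $v_w,\dots,v_y$ of the current Compress, which are orphans by the argument for bullet 2, together with the previous $N^{(i-1)}$, whose members stay orphans by bullet 1.

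The main obstacle will be ruling out, in the argument for bullet 2, that a middle node $v_k$ has an assigned neighbor sitting in a compress layer (which would potentially carry an unexpected inward orientation). I expect this is handled by the combination of the degree-$2$ constraint of an internal path node in $G^{(i-1)}$, the separation guarantee of \Cref{lem:compressNodesAreDeep}, and the fact that compress layers are made of degree-$2$ strands so $v_k$ cannot be a third neighbor of such a strand; everything else then follows mechanically from bullet 1.
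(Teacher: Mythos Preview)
Your proposal is correct and takes the same approach as the paper: direct inspection of the three subroutines to track where orientations are set. In fact the paper states this as an observation without a separate proof environment—the brief justifications are embedded in the statement itself—so your writeup is more detailed than what the paper provides.

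One small refinement for bullet~2: your appeal to \Cref{lem:compressNodesAreDeep} to rule out that an outside neighbor $u$ of a middle node $v_k$ lies in a compress layer is slightly indirect, and as stated that lemma does not literally cover the promoted layers $V^C_{j,p}$. A cleaner route is to note that every orientation command (in both Rake and Compress) orients an edge \emph{toward} a node that is being assigned a layer at that very moment. Since $v_k$ is still free when the current Compress executes, no prior step can have oriented any edge toward $v_k$; combined with the fact that the current Compress only orients edges inside the two end-subpaths of $P'$, this immediately gives that $v_k$ is an orphan, without needing to classify the layer of $u$ at all.
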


We now show how we can express the entire input graph $G$ in terms of these trees. For this we will make use of the sets $N^{(i)}$, defined in the compress procedure in \cref{alg:line:assignN}. This set contains exactly those nodes which were assigned a layer, but which can not be reached through oriented edges.  Intuitively any node that is already assigned a layer has to hang in some subtree of assigned nodes. By taking a union over all of the subtrees hanging from $G^{(i)}$ and $N^{(i)}$ we will cover the entire tree.

\begin{lemma} \label{lem:decomposeV}
    For each positive integer $i$, the following holds: 
    \[
        V(G) = \left( \bigcup_{v \in G^{(i)}}T^{(i)}(v)\right) \cup \left(\bigcup_{v \in N^{(i)}}T^{(i)}(v)\right)
    \]
    Furthermore the two big unions are disjoint.
\end{lemma}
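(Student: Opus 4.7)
The plan is to interpret the orientations produced by Algorithm~\ref{alg:Decomposition} as defining an in-forest on $V(G)$ and then to identify its set of roots (orphans) with exactly $G^{(i)} \cup N^{(i)}$. Once this is established, the two claims in the lemma follow immediately.

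First I would verify that each node has \emph{at most one} incoming oriented edge. Orientations are set in only two places. In \Cref{alg:Rake}, for each degree-$\le 1$ free node $v$ being raked, a single edge from its (unique) remaining free neighbor is oriented toward $v$; once $v$ becomes assigned, no later rake can add an incoming edge to it. In \Cref{alg:Compress}, orientations are set on the two boundary sub-paths $(v_0,\dots,v_{w-1})$ and $(v_{y+1},\dots,v_j)$ of $P'$, and within each sub-path each node receives at most one incoming orientation from its predecessor. Since by \Cref{obs:Orientations}(1) no edge is ever re-oriented and the input graph is a tree, the resulting oriented subgraph is a rooted forest whose edges point away from the roots.

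The heart of the argument is to show that the roots of this forest are exactly $G^{(i)} \cup N^{(i)}$. One direction is straightforward: free nodes are orphans (all orientation rules point toward a node being assigned, never toward a free node), and by \Cref{obs:Orientations}(2)(4) every node added to $N^{(i)}$ inside the middle segment $\{v_w,\dots,v_y\}$ has only outgoing or unoriented incident edges, hence is also an orphan. For the converse, I would induct on the iteration $i$, with the inductive hypothesis that after every subroutine call of \Cref{alg:Decomposition} every assigned node outside $N^{(i)}$ has an incoming oriented edge. After an \FOrientingRakes{} call, any node newly moved into some $V^R_{i,j}$ had degree $\ge 1$ at the moment of its raking (if it had degree $0$ it is accounted for by being the last surviving node of a component, which must itself have been free or the root of a promoted compress), so it acquires an incoming edge. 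After a \FCompressWithSlack{} call, the assigned nodes in $V(P')\setminus Z$ that strictly lie between $v_0$ and $v_{w-1}$ (resp.\ $v_{y+1}$ and $v_j$) inherit incoming edges from the compress orientations, and the two endpoints $v_0, v_j$ become descendants of free ancestors through the slack: the $\gamma=\ell+3$ slack nodes on each side are raked in the immediately following \FOrientingRakes{} call, producing an oriented chain running from a branching free node all the way back to the compress endpoints. The \FPromoteIfPossible{} call creates no new orientations and does not add or remove orphans.

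Once the orphans are identified with $G^{(i)}\cup N^{(i)}$, both claims of the lemma follow. Every $w\in V(G)$ lies in a unique tree of the forest, rooted at some orphan $v\in G^{(i)}\cup N^{(i)}$; by definition $w\in T^{(i)}(v)$, so $V(G)\subseteq \bigcup_{v\in G^{(i)}}T^{(i)}(v)\cup\bigcup_{v\in N^{(i)}}T^{(i)}(v)$, and the reverse inclusion is trivial. For disjointness, any $w$ in both unions would witness two distinct roots reachable by reverse-oriented paths from $w$, contradicting the forest property. The main obstacle I expect is the careful bookkeeping at the transition between each \FCompressWithSlack{} and the subsequent \FOrientingRakes{}: one must show that after the slack is raked, the compress endpoints $v_0, v_j$ are genuinely descendants of some orphan in $G^{(i)}\cup N^{(i)}$ (and not a stray new orphan), and that the promotion step preserves this property when it moves nodes into $V^C_{i,p}$ and $V^R_{i+1,1}$.
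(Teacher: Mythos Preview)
Your approach matches the paper's: both arguments identify the orphans of the orientation forest with $G^{(i)}\cup N^{(i)}$ and read off the covering and disjointness claims from the forest structure. The paper's proof is considerably terser---it simply asserts that every assigned node outside $N^{(i)}$ received an incoming edge at the moment it was assigned and that each node has at most one such edge---and does not carry out the inductive case analysis or the compress/rake boundary bookkeeping you sketch.
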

\begin{proof}
    Now any node that has an edge oriented towards it, is clearly hanging in a tree rooted at an orphan node. So as long as we have all trees of nodes that are orphans, we cover the entirety of $V(G)$. The only nodes that are orphans are the nodes of $G^{(i)}$ and the nodes in $N^{(i)}$, since all other nodes had an edge oriented towards them, when they were assigned to a layer.\\
    Since $v$ is in its own subtree of assigned nodes we trivially get that all nodes of $G^{(i)}$ are in $\bigcup_{v \in G^{(i)}}T^{(i)}(v)$. And the same way we get that also all nodes of $N$ are in $\bigcup_{v \in N^{(i)}}T^{(i)}(v)$.\\
    We see that the two are disjoint, because $G^{(i)}$ and $N^{(i)}$ are disjoint. Furthermore any node has at most one edge oriented towards itself and it therefore also only belongs to one subtree of assigned nodes, rooted at an orphan.
\end{proof}

Notice that for each of these subtrees of assigned nodes, the quality of the root counts exactly how many nodes in this tree still need to terminate. So by giving a good upper bound on the quality of nodes in $G^{(i)}$ we will be able to show that in each iteration a constant fraction of the remaining nodes terminate. \\

We first prove that if the statement in Line~\ref{alg:line:conditionFixV*} is true and we promote $v^*$ to layer $V^R_{i+1,1}$ we get that $q(v^*)$  will be a large part of $q(r)$. Thereby showing that with each promotion we reduce the quality by a constant fraction. Recall the definitions of the quality $q(v)$ of a node $v$ and $H(v)$, provided in \Cref{def:quality}. Also we will omit some of the $^{(i)}$, where the iteration is clear, to keep things cleaner.

\begin{lemma}\label{lem:fixV}
    If the condition in Line \ref{alg:line:conditionFixV*} holds and $v^*$ is promoted, then $q(v^*) \geq \frac{q(r)}{2\Delta^b}$ when Line~\ref{alg:line:conditionFixV*} is evaluated.
\end{lemma}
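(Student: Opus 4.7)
The plan is to bound $q(r) = |H(r)|$ by partitioning $H(r)$ according to distance from $r$ and then exploiting the maximality of $q(v^*)$. The key covering I will establish is
\[
H(r) \;\subseteq\; B_b(r) \;\cup\; \bigcup_{v \in C_b(r)} H(v),
\]
where $B_b(r)$ denotes the ball of radius $b$ around $r$ in $G$. Indeed, any $w \in H(r)$ witnessed by an oriented path $(r = v_0, v_1, \ldots, v_j = w)$ either satisfies $j \le b$ and lies in $B_b(r)$, or $j > b$, in which case $v_b$ is at distance exactly $b$ from $r$ with all of $v_1, \dots, v_b$ assigned (by condition 2 in the definition of $H$), so $v_b \in C_b(r)$; the suffix $(v_b, \ldots, v_j)$ then witnesses $w \in H(v_b)$.

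Next I would show that $q(v) = |H(v)|$ for every $v \in C_b(r)$, so in particular $q(v^*) \ge 1$. By the hypothesis that Line~\ref{alg:line:conditionFixV*} succeeds, the tree-path $P$ from $r$ to $v$ consists entirely of rake-layer nodes in its interior, and by \Cref{lem:CloseNodesAreDescendants} it is consistently oriented from $r$ to $v$. Along any such oriented path in the rake forest, layers strictly decrease: an edge $(u,v)$ oriented from $u$ to $v$ is installed while $u$ is still free and $v$ is being raked, so $u$'s eventual layer strictly exceeds that of $v$. Hence $v$'s oriented parent on $P$ has strictly higher layer than $v$, so $v$ is not a local maximum; by the same monotonicity applied iteratively, no ancestor of $v$ along $P$ is a local maximum either (the final ancestor being the free, unassigned node $r$). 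Therefore $v$ is neither a local maximum nor a descendant of one, and $q(v) = |H(v)| \ge 1$.

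Combining the two steps, using the standard bounds $|B_b(r)|, |C_b(r)| \le \Delta^b$ in a tree of maximum degree $\Delta$ (small constants in the ball-growth count can be absorbed into the leading factor $2$), and using $|H(v)| \le |H(v^*)| = q(v^*)$ for every $v \in C_b(r)$ by choice of $v^*$, we obtain
\[
q(r) \;=\; |H(r)| \;\le\; |B_b(r)| + \sum_{v \in C_b(r)} |H(v)| \;\le\; \Delta^b + \Delta^b\, q(v^*) \;\le\; 2\Delta^b\, q(v^*),
\]
where the last inequality uses $q(v^*) \ge 1$. Rearranging gives $q(v^*) \ge q(r)/(2\Delta^b)$.

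The main obstacle is the second paragraph: ruling out that $v^*$ is a descendant of an already-existing local maximum, which is what makes $q(v^*)$ coincide with $|H(v^*)|$ and the bound non-vacuous. This rests on the if-condition of Line~\ref{alg:line:conditionFixV*} excluding compress-layer nodes from $P$ (so only the layer-monotonic orientations created by \textsc{Orienting Rake} appear on $P$) together with \Cref{lem:CloseNodesAreDescendants}; everything else is a routine ball-growth calculation in a bounded-degree tree.
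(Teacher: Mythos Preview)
Your argument follows the same outline as the paper's proof, and the covering
\[
H(r) \;\subseteq\; B_b(r) \;\cup\; \bigcup_{v \in C_b(r)} H(v)
\]
together with the final arithmetic is exactly what the paper does. There is, however, one gap.

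You claim that the success of Line~\ref{alg:line:conditionFixV*} implies that the path from $r$ to \emph{every} $v \in C_b(r)$ is compress-free, and hence $q(v) = |H(v)|$ for all such $v$. But the condition in Line~\ref{alg:line:conditionFixV*} concerns only the path from $r$ to $v^*$; it says nothing about paths to other vertices of $C_b(r)$. In particular, $r$ may in an earlier iteration have promoted some other $v' \in C_b(r)$, in which case $v'$ is now a local maximum with $q(v') = 0$ while $|H(v')|$ can be large. Since $v^*$ is chosen to maximize $q$ (not $|H|$), your step ``$|H(v)| \le |H(v^*)|$ for every $v \in C_b(r)$ by choice of $v^*$'' is not justified.

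The repair is easy and is what the paper does implicitly: if a far node $w \in H(r)$ is witnessed by a path $(r, v_1, \ldots, v_b, \ldots, w)$, then condition~3 in \Cref{def:quality} forces the layers along $v_1, \ldots, v_b$ to be weakly decreasing, so no $v_k$ on this path can be a local maximum (its parent $v_{k-1}$, or the free node $r$ when $k=1$, would contradict the local-maximum property). Hence the particular $v_b \in C_b(r)$ reached by any far $w$ always satisfies $q(v_b) = |H(v_b)|$, and one obtains directly $q(r) \le \Delta^b + \sum_{v \in C_b(r)} q(v)$. From there your maximality and arithmetic steps go through unchanged. Your separate verification that $q(v^*) \ge 1$ (via the compress-free path to $v^*$ guaranteed by Line~\ref{alg:line:conditionFixV*}) is correct and is in fact a point the paper's own proof leaves implicit.
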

\begin{proof}
    The statement follows from the fact that
    \[
    q(r) \leq \Delta^b + \sum_{v \in C_b(r)} q(v)
    \]
    This is true, because we can separate $H(r)$ into the nodes that are close and those that are far. Concretely a node $u$ is close, if the path to $r$ is strictly less than $b$. In this case we get by \Cref{lem:CloseNodesAreDescendants} that all of these close nodes are actually descendants of $r$. But only $\Delta^b$ such nodes can exist. A node $u$ is far, if the path to $r$ is at least $b$ nodes long, at which point it has to pass through one of the nodes $v \in C_b(r)$. Now since $u \in H(r)$ the unique path from $r$ to $u$ satisfies the criteria for $u$ to be in $H(r)$, then the subpath from $w$ to $u$ must also satisify these criteria and $u$ is therefore included in $q(w)$.
    We then get
    \[
    q(r) \leq \Delta^b + \sum_{v \in C_b(r)} q(v) \leq \Delta^b + |C_b(r)| \cdot q(v^*) \leq \Delta^b + \Delta^b q(v^*) \leq 2\Delta^b q(v^*).
    \]
    The second inequality comes from the fact $v^*$ has the highest quality among nodes in $C_b(r)$.
\end{proof}

\begin{figure}[h!]
    \centering
    \includegraphics[width=.5\textwidth]{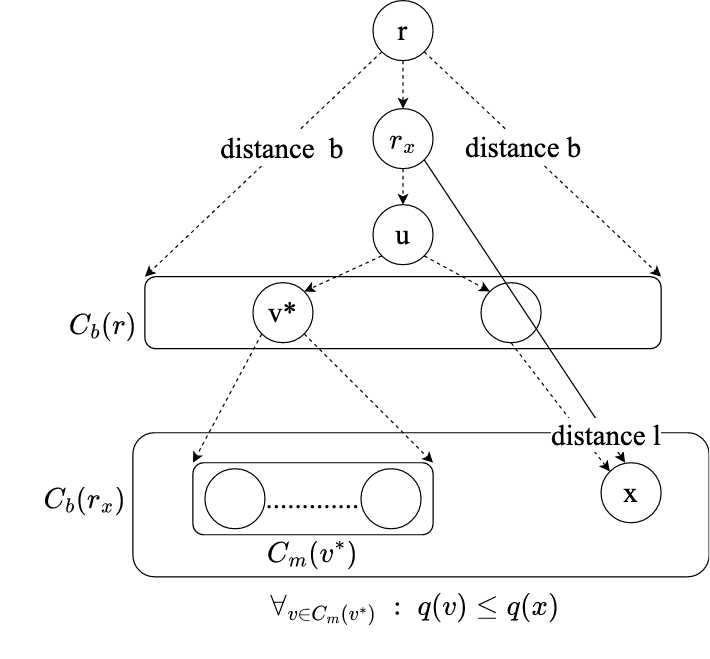}
    \caption{$v^*$ cannot be fixed, because $x$ is already fixed. $u$ is the common ancestor and $r_x$ was root, when $x$ was fixed. Clearly the way $m$ is chosen, $C_m(v^*)$ is a subset of $C_b(u)$ \gs{fix distance arrow towards x to say $b$}}
    \label{fig:obstructed}
\end{figure}
However the condition in Line \ref{alg:line:conditionFixV*} may not hold in every iteration. So if it does not hold, we show that there must exist some node $x$ promoted (earlier) in Line~\ref{alg:line:markPromoted} that is close to $v^*$. 
\begin{lemma}\label{lem:fixX}
    If the condition in Line \ref{alg:line:conditionFixV*} of the promotion Procedure \cref{alg:Promote} does not hold, then there exists a promoted node $x$ at distance at most $2b-1$ from $r$, such that $q(x) \geq \frac{q(v^*)}{2\Delta^b}$ immediately before $x$ was promoted.
\end{lemma}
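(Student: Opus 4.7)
The first step is to pin down the node violating the condition in Line~\ref{alg:line:conditionFixV*}. Pick any $y \in V(P)$ currently assigned to a compress-type layer. By \Cref{lem:compressNodesAreDeep}, every node in a regular compress layer $V_a^C$ lies at distance at least $\gamma = \ell + 3$ from any free node, whereas $y$ lies at distance at most $b = \ell + 2 < \gamma$ from the free node $r$. Hence $y$ must instead belong to some promoted layer $V_{a,p}^C$, so it is a proper internal node of the length-$b$ path inserted during an earlier invocation of \cref{alg:Promote}. Let $x$ be the node promoted in that earlier invocation and $r_x$ the root used there, so $1 \le \dist(y, x) \le b-1$.

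The distance bound is a short triangle inequality with a case split. Since $r$ is free but $y$ is assigned, $y \ne r$ and hence $\dist(r,y) \le b$, with equality possible only when $y = v^*$ (which can occur: $v^*$ itself may be an internal node of a previously promoted compress path and therefore sit in $V_{a,p}^C$). Combined with $\dist(y, x) \le b - 1$, this yields $\dist(r, x) \le 2b - 1$, and the extremal value $2b-1$ is realized precisely in the borderline case $y = v^*$.

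For the quality bound I would apply \Cref{lem:fixV} at the moment immediately before $x$ was promoted. At that moment the condition in Line~\ref{alg:line:conditionFixV*} had to hold (else $x$ would not have been promoted), so the lemma yields $q(x) \ge q(r_x) / (2\Delta^b)$ with both quantities evaluated just before that promotion. It then suffices to show that the value of $q(r_x)$ at that earlier moment is at least the current value of $q(v^*)$. The intuition is that $v^*$'s current witness set $H(v^*)$ already sat inside the subtree that the free orphan $r_x$ controlled at the earlier time: by \Cref{obs:Orientations} orientations are set once and never altered, so the unique path from $r_x$ through $y$ to the $v^*$-side is consistently oriented outward from $r_x$; every orientation-descendant of $v^*$ is therefore also an orientation-descendant of $r_x$, and a short check confirms that the non-increasing-layer condition propagates along the same path.

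The main obstacle is precisely this last step: quality is not monotone in time, because it can grow as new rakes create oriented edges and it can shrink when a promotion lifts internal path nodes into a higher $V_{a,p}^C$ layer, potentially breaking the layer condition along some witness path. To handle this I would establish a small invariant saying that each promotion only raises the layer of the affected nodes, so no path that was a valid witness for $r_x$ at the earlier time is destroyed, while any genuinely new witness for $v^*$ either lies within the close $\Delta^b$-sized neighborhood that is already absorbed by the additive $\Delta^b$ term in the proof of \Cref{lem:fixV}, or connects to $v^*$ through a node that was itself subordinate to $r_x$ in the orientation forest at the earlier time. Once this invariant is in place the desired inequality $q(r_x) \ge q(v^*)$ (up to the absorbed additive term) follows, yielding the claimed bound $q(x) \ge q(v^*)/(2\Delta^b)$.
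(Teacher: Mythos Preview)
Your identification of the obstructing node and the distance bound $\dist(r,x)\le 2b-1$ are correct and essentially match the paper (which calls that node $u$). The quality argument is where you diverge, and where there is a real gap.

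You reduce to the inequality $q(r_x)\ge q(v^*)$ (comparing $q(r_x)$ at the earlier moment to $q(v^*)$ now) and then defer this to an unspecified ``invariant''. You are right that this is the crux, but the sketch you give does not establish it: between the promotion of $x$ and the current iteration there may have been further promotions raising layers of nodes on or below the $r_x$--$v^*$ path, and you would have to argue that every node currently in $H(v^*)$ already lay in $H(r_x)$ back then, with the non-increasing-layer condition holding \emph{at that earlier time}. Your remark about ``genuinely new witnesses'' being absorbed by the additive $\Delta^b$ term from \Cref{lem:fixV} is not a substitute for this case analysis.

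The paper sidesteps $q(r_x)$ entirely and instead exploits the \emph{argmax} that defined $x$. Since $u$ lies on both the oriented $r$--$v^*$ path and the oriented $r_x$--$x$ path, $r_x$ is itself an ancestor of $v^*$; setting $m:=\dist(r_x,x)-\dist(r_x,v^*)$ gives $0\le m<b$ and $C_m(v^*)\subseteq C_b(r_x)$ at the moment $r_x$ selected $x$. Because $x$ was the maximizer over $C_b(r_x)$, every $v\in C_m(v^*)$ satisfies $q(v)\le q(x)$ at that moment, and the same decomposition used to prove \Cref{lem:fixV} yields
\[
q(v^*)\ \le\ \Delta^m+\sum_{v\in C_m(v^*)}q(v)\ \le\ \Delta^b+\Delta^b\,q(x)\ \le\ 2\Delta^b\,q(x).
\]
This single comparison at one fixed time replaces your appeal to \Cref{lem:fixV} plus the unproved containment $H(v^*)\subseteq H(r_x)$, and is what makes the paper's proof go through without any monotonicity invariant.
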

\begin{proof}
    As the intuition of this proof is a lot easier to grasp visually we have provided a sketch of this scenario in \Cref{fig:obstructed}.
    Consider the path $P$ from $r$ to $v^*$.
    Since the condition in Line \ref{alg:line:conditionFixV*} is not true, some node $u \in V(P) \setminus \{ r \}$ must already be in a compress layer (i.e., a layer of the form $V_a^C$), with $\dist(u,v^*) < b$. By \cref{lem:compressNodesAreDeep} and the fact that $\gamma = b + 1$, we have that this compress node $u$ must have been added as the result of a previous promotion procedure, hence there must also be a node $x$ that was actually promoted. Now because of \Cref{lem:CloseNodesAreDescendants} $u$ is an ancestor of both $v^*$ and $x$. Then we have that $\text{dist}(u,x) \leq b$ and $\text{dist}(u,v^*) < b$. Now let $r_x$ be the node that was the free node which decided to promote $x$, so $\text{dist}(r_x, x) = b$, then $\text{dist}(r_x,v^*) < b$.
    Now define $m = d(r_x,x) - d(r_x,v^*) < b$, then we get that $C_m(v^*) \subset C_b(r_x)$ and since $x$ was the choice of $r_x$, we get that 
    \[
    \forall_{v \in C_m(v^*)} \; q(v) \leq q(x)
    \]
    Now to bound $q(v^*)$
    \[
    q(v^*) \leq \Delta^m + \sum_{v \in C_m(v^*)} q(v) \leq \Delta^b + \Delta^b q(x) \leq 2\Delta^b q(x)
    \]
\end{proof}

So the idea is, that during iteration $i$ even though we might fail to promote during iteration $i$, this must mean that we actually were successful recently.

\subparagraph{Upperbounding the quality of a free node}\\
The next lemma will be the main technical result that shows that enough nodes are in subtrees of local maxima. We will show this implicitly by upperbounding the quality of the remaining free nodes. However we will have to introduce some more notation.
We are partitioning each assigned subtree $T^{(i)}(v)$ into $\alpha = \myceil{\frac{h^{(i)}(v) + 1}{b}}$ subsets $S^{(i)}_0(v) , \dots , S^{(i)}_{\alpha - 1}(v)$, where, for each $0 \leq j \leq \alpha - 1$,
\[
S^{(i)}_j(v) := \{u \in T^{(i)}(v) \; | \; b \cdot j\leq \text{dist}(u,v) < b \cdot (j + 1)\}.
\]
So every $b$ layers of the tree are grouped into one such set. This is illustrated in \Cref{fig:layerSets}.

\begin{figure}[h!]
    \centering
    \includegraphics[width=.4\textwidth]{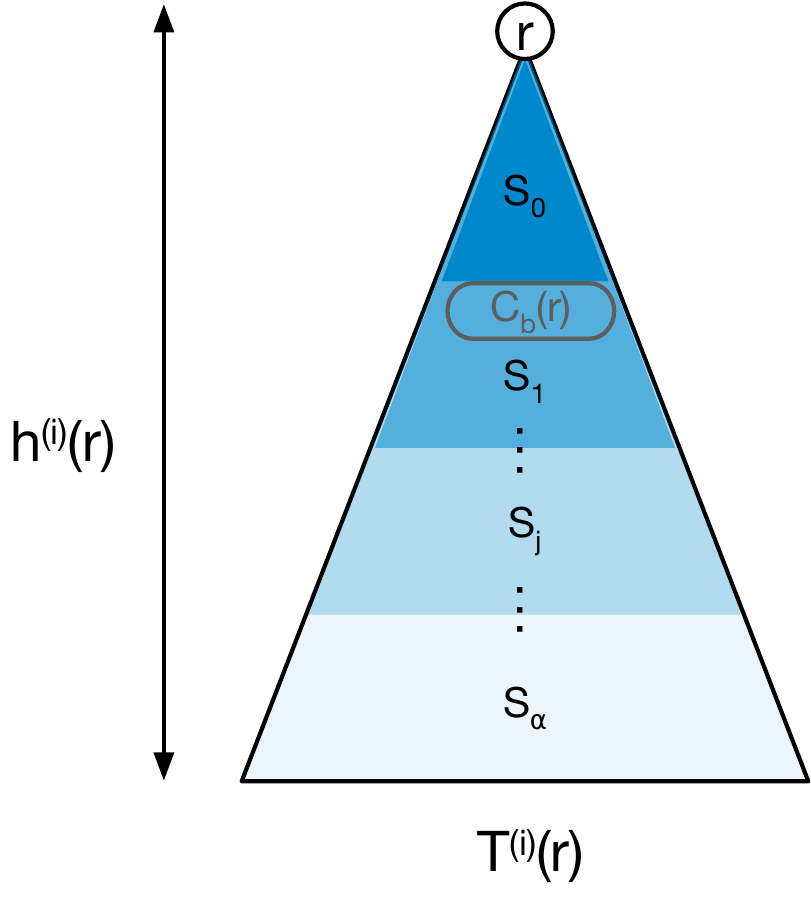}
    \caption{A tree rooted at $r$ is split into the different layer sets $S_0, \dots, S_\alpha$. Since $r$ in included in $S_0$ the union of those layer sets will form the entire tree.}
    \label{fig:layerSets}
    \end{figure}

Clearly,
\[
\bigcup_{0 \leq j \leq \alpha - 1} S^{(i)}_j(v) = T^{(i)}(v).
\]
One useful equality following from these definitions is
\[
\bigcup_{v \in C_b(r)} S^{(i)}_j(v) = S^{(i)}_{j+1}(r),
\]
which holds for any assigned node $r$ and any $1 \leq j \leq \alpha - 1 = \myceil{\frac{h^{(i)}(r) + 1}{b}}$.

For convenience, we also formally define $S^{(i)}_j(v)$ for indices that are larger than $\alpha - 1$: we set $S^{(i)}_j(v) := \emptyset$ for all $j \geq \alpha$.

Now we are ready to give the main result about the quality of free nodes. Essentially we want to look at assigned subtrees that are hanging from free nodes. The main intuition is that the deeper layers contain more nodes and also contain more local maximums. This is simply as a result of them being in the removed part for more iterations.\\

If we were able to proof that for each layer $j$ it was true, that only $\lambda^j |S^{(i)}_j|$ nodes remain in that layer for some constant fraction $\lambda$, then this would be enough for constant node averaged complexity. However such a statement is simply not true. This is because nodes are chosen based on their quality and this might result in some layers that have almost no nodes fixed. But then this implies that these layers are very sparse. What we will instead be able to show is that if we sum over all of the layers of the tree, then a similar statement is actually true.

\begin{lemma} \label{lem:layersPartiallyFixed}
    There exists a constant $0 < \lambda < 1$ (that only depends on $\Pi$ and $\Delta$) such that for all even positive integers $i$, the following inequality holds at the end of iteration $i$, for all nodes $r \in G^{(i)}$:
    \[
    q^{(i)}(r) \leq \sum_{j = 0}^{\myceil{(h^{(i)}(r) + 1) / b} - 1} \lambda^{j}|S^{(i)}_j(r)|
    \]
\end{lemma}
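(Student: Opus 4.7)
The plan is to proceed by induction, simultaneously on the iteration index $i$ (among even positive integers) and on the height $h^{(i)}(r)$ of the assigned subtree rooted at $r$, leaning heavily on the promotion case analysis of \Cref{lem:fixV} and \Cref{lem:fixX}. The evenness of $i$ is used precisely so that at least two consecutive promotion attempts with $r$ as the free root (those occurring in iterations $i-1$ and $i$) are available, guaranteeing that the dichotomy ``promotion succeeds'' versus ``a nearby node was recently promoted'' can be invoked productively. The base case $h^{(i)}(r) = 0$ is immediate: then $H(r) \subseteq T^{(i)}(r) = \{r\}$, so $q^{(i)}(r) \le 1 = \lambda^0 |S_0^{(i)}(r)|$.

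For the inductive step, I would first prove the structural recursion
\[
    q^{(i)}(r) \;\le\; \Delta^b + \sum_{v \in C_b(r)} q^{(i)}(v),
\]
which already appears implicitly in the proof of \Cref{lem:fixV}: split $H(r)$ into ``close'' nodes at distance $<b$ from $r$ (at most $\Delta^b$ of them, by the bounded-degree assumption and \Cref{lem:CloseNodesAreDescendants}) and ``far'' nodes, whose unique path to $r$ crosses some $v \in C_b(r)$ and which therefore lie in $H(v)$. Applying the inductive hypothesis to every $v \in C_b(r)$ (each has strictly smaller height than $r$), combined with the already-observed identity $\bigcup_{v \in C_b(r)} S_{j}^{(i)}(v) = S_{j+1}^{(i)}(r)$, yields
\[
    q^{(i)}(r) \;\le\; \Delta^b \;+\; \sum_{j \ge 0} \lambda^{j}\,|S_{j+1}^{(i)}(r)| \;=\; \Delta^b \;+\; \frac{1}{\lambda}\sum_{j \ge 1} \lambda^{j}\,|S_{j}^{(i)}(r)|.
\]
This inequality by itself is too weak (it has an extra factor of $1/\lambda$ in front of the tail and an unpaid additive $\Delta^b$), and closing it requires the savings from promotion.

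The savings come as follows. Invoking \Cref{lem:fixV} and \Cref{lem:fixX} on the two promotion attempts with root $r$ performed in iterations $i-1$ and $i$, in each attempt either a node $v^\ast \in C_b(r)$ of quality at least $q(r)/(2\Delta^b)$ was promoted to a local maximum (and hence its entire subtree of assigned descendants dropped out of $H(r)$ while still appearing in some $S_j^{(i)}(r)$ with $j \in \{0,1\}$), or a previously promoted node $x$ within distance $2b-1$ of $r$ plays the same role. In both sub-cases a pool of nodes of total size at least a $\Omega(1/\Delta^b)$-fraction of $q^{(i-2)}(r)$ has been ``sequestered'' into the subtree of a local maximum, so it contributes $0$ to $q^{(i)}(r)$ but still positive mass to $|S_0^{(i)}(r)| + |S_1^{(i)}(r)|$. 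Plugging this improvement into the displayed inequality and choosing $\lambda = 1 - c/\Delta^b$ for a sufficiently small constant $c = c(\Pi,\Delta) > 0$, the sequestered mass absorbs both the additive $\Delta^b$ term and the deficit from replacing $1/\lambda$ by $1$ in front of the geometric sum, closing the induction.

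The main obstacle, and the step that will demand the most care, is the bookkeeping of the savings: the subtree that becomes ``dead weight'' when $v^\ast$ is promoted must be identified cleanly with a subset of the layer sets $S_j^{(i)}(r)$, so that the resulting inequality balances without double-counting against the inductive application to $v \in C_b(r)$. A secondary subtlety is that the inductive hypothesis as stated is for free $r$, but the recursion calls it at the assigned nodes $v \in C_b(r)$; I would handle this by strengthening the induction to every orphan root of an assigned subtree (the proofs of \Cref{lem:fixV}, \Cref{lem:CloseNodesAreDescendants} and the structural recursion above make no essential use of $r$ being free), or equivalently by inducting on $|T^{(i)}(r)|$ rather than on $h^{(i)}(r)$. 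With these adjustments, the inductive step closes and the claimed geometric decay across layers follows.
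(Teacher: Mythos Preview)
Your high-level skeleton (induct on height, recurse through $C_b(r)$, harvest savings from \Cref{lem:fixV}/\Cref{lem:fixX}) matches the paper, but there is a real gap at the step where you apply the inductive hypothesis to the nodes $v\in C_b(r)$. These nodes are assigned, not free, so the statement of \Cref{lem:layersPartiallyFixed} does not cover them; and your proposed fix of strengthening the hypothesis to ``every orphan root'' does not help, because a node in $C_b(r)$ is \emph{not} an orphan (by \Cref{lem:CloseNodesAreDescendants} the edge from its parent on the $r$-path is oriented into it). Strengthening instead to \emph{all} assigned nodes does not work directly either: for an assigned $v$ there is no fresh promotion attempt rooted at $v$ in iteration $i$, so you cannot manufacture the $(1-c/\Delta^b)$ savings at $v$.

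The paper closes this gap with a time-shift argument encapsulated in \Cref{lem:boundAssignedNode}: for each $w\in C_b(r)$ one goes back to the last iteration $k_w$ in which $w$ was still free, applies the inductive hypothesis there (valid since $w$'s subtree is strictly shorter), and then uses \Cref{lem:distanceOfJustRemoved} to bound the growth of $T(w)$ in the single iteration $k_w+1$ by three layer-blocks. This produces, after shifting by one more block when combining over $C_b(r)$,
\[
q'(r)\;\le\;\sum_{j=0}^{3}|S_j^{(i)}(r)|\;+\;\sum_{j\ge 4}\lambda^{j-4}|S_j^{(i)}(r)|,
\]
i.e.\ a four-block deficit rather than the single $1/\lambda$ you have. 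A \emph{single} application of \Cref{lem:fixV}/\Cref{lem:fixX} then yields $q^{(i)}(r)\le(1-\tfrac{1}{4\Delta^{2b}})\,q'(r)$, and choosing $\lambda^4=1-\tfrac{1}{4\Delta^{2b}}$ closes the induction exactly. In particular, the paper does not use two consecutive promotion attempts, and the ``even $i$'' hypothesis plays no role in its argument; your reading of the evenness as providing two rounds of savings is a misinterpretation.
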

We are going to prove \Cref{lem:layersPartiallyFixed} by induction on the height $h^{(i)}(r)$ of the subtree of assigned nodes. The idea is to use the induction hypothesis on the nodes in $C_b(r)$. Then by using the equality 
\[
\bigcup_{v \in C_b(r)} S^{(i)}_j(v) = S^{(i)}_{j+1}(r),
\]
we get a first good bound on $q^{(i)}(r)$. However our induction hypothesis holds only for free nodes and the nodes we want to use our induction hypothesis on are already assigned. As we will see in the following lemma it takes some work to get a statement about already assigned nodes from it. First we need to show that in each iteration trees don't grow too much. Notice the $(G^{(i)} \setminus G^{(i+1)})$ is the set of nodes that gets removed in the next iteration. Intuitively the next lemma says that the nodes that were added during the last iteration are not deep in the tree.

\begin{lemma} \label{lem:distanceOfJustRemoved}
For any free node $w$ for all iterations $i>1$. For all nodes $v \in (G^{(i)} \setminus G^{(i+1)}) \cap T^{(i+1)}(w)$, $v$ has distance less than $3b$ from $w$.
\end{lemma}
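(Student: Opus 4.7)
The plan is to follow the unique oriented path from $w$ down to $v$ inside $T^{(i+1)}(w)$, say $w = x_0 \to x_1 \to \cdots \to x_m = v$, and show $m \le 3\ell + 3 < 3b$. The first step is to argue that every intermediate node $x_j$ with $1 \le j \le m$ was assigned to a layer exactly during iteration $i+1$. By \Cref{obs:Orientations}(1) the incoming edge $x_{j-1} \to x_j$ is only ever set when $x_j$ is raked in \Cref{alg:Rake} or is the target of a slack orientation in \Cref{alg:Compress}; in either case $x_j$ receives a layer by the end of that iteration. Conversely, the tail $x_{j-1}$ had to be free at the moment the edge was set, so if $x_{j+1}$ (or $v = x_m$, in the base case) was assigned in iteration $i+1$, then $x_j$ was still free at the start of iteration $i+1$ and is therefore assigned in some iteration $\ge i+1$. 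Combining the two facts and inducting from $j = m$ down to $j = 1$ forces every $x_j$ with $j \ge 1$ to be assigned precisely during iteration $i+1$, so every edge along the path is also oriented during iteration $i+1$.

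The second step bounds $m$ using the structure of iteration $i+1$ of \Cref{alg:Decomposition}, which consists of one call to \Cref{alg:Compress}, followed by $\gamma = \ell + 3$ rake passes of \Cref{alg:Rake}, followed by \Cref{alg:Promote} (which never changes orientations). I would split the edges of $x_0 \to \cdots \to x_m$ into rake-oriented and slack-oriented ones. For any maximal contiguous block of rake-oriented edges, the pass index in which each child was raked is strictly smaller than that of its parent (the parent must be free when the child is removed), so such a block contains at most $\gamma = \ell + 3$ edges. A maximal block of slack-oriented edges lies on a single maximal degree-two path of $G^{(i)}$; by the construction of $P'$ and $Z$ in line~\ref{computez} of \Cref{alg:Compress}, the longest such block stretches from the slack end of the path inward only up to just before the first $Z$-node, so it contains at most $2\ell$ edges. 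Because a rake block and a slack block can only meet at the slack endpoint that is adjacent to the external connector of the degree-two path, the chain $x_0 \to \cdots \to x_m$ contains at most one block of each type, giving $m \le \gamma + 2\ell = 3\ell + 3 < 3b$.

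The main obstacle I foresee is the careful bookkeeping at the interface between a rake block and a slack block. Concretely, the corner case to handle is when the external connector of a compressed degree-two path becomes degree-$1$ in the same rake pass as the adjacent slack endpoint; the tie-breaking rule of \Cref{alg:Rake} then decides which of the two is actually raked, potentially flipping the root-child orientation at the boundary so that the slack endpoint itself becomes the root of its own subtree. One has to verify that in each sub-case the chain picks up at most one extra edge at the interface, so that the two-block count $m \le \gamma + 2\ell = 3\ell + 3$ stays valid and hence $\dist(v,w) \le m < 3b$.
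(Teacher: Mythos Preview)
Your approach is essentially the same as the paper's: both reduce to bounding the length of any oriented path created during a single iteration by $2\ell+\gamma=3\ell+3<3b$. The paper's version of the second step is simpler---it just notes that after the compress orientations every oriented path in $R^{(i+1)}$ has length at most $2\ell$, and each of the $\gamma$ rake passes can extend any such path by at most one edge; no block decomposition or interface analysis is needed.

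Two small inaccuracies in your write-up (neither fatal, but both indicating a slightly wrong mental model). First, ``the tail $x_{j-1}$ had to be free at the moment the edge was set'' is literally false for the compress orientations: in \Cref{alg:Compress} all nodes of $P'$ are put into $V_i^C$ in line~\ref{alg:line:firstCompress} \emph{before} the edges are oriented in lines~\ref{alg:line:orient2.1}--\ref{alg:line:orient2.2}, so when $v_0\to v_1$ is set, $v_0$ is already assigned. The correct (and sufficient) statement is that the tail was free at the \emph{start of the iteration}. Second, rake-oriented and compress-oriented edges in fact cannot be adjacent in any oriented chain at all: once $v_0,\dots,v_{w-1}$ are assigned by compress they are no longer free, so no rake edge can have any of them as an endpoint, and the edge between the last genuine slack node and $v_0$ is never oriented by either procedure. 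Hence the chain $x_0\to\cdots\to x_m$ is either entirely rake-oriented (length $\le\gamma$) or entirely compress-oriented (length $\le 2\ell$); the ``interface corner case'' you spend the last paragraph on never occurs.
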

\begin{proof}
Let $R^{(i+1)} := G^{(i)} \setminus G^{(i+1)}$ be the set of all nodes that were first assigned a layer in iteration $i+1$. 
Let $v \in R^{(i+1)} \cap T^{(i+1)}(w)$. As a result there must be an oriented path from $w$ to $v$. Since $v$ was free before iteration $i+2$ and therefore was also an orphan at that point in time, such a path must consist completely of nodes in $R^{(i+1)}$. 
Hence, by giving an upper bound on the longest such oriented path in $R^{(i+1)}$, we can prove the lemma.
        
At the beginning of iteration $i+1$ all nodes in $R^{(i+1)}$ are free. Edges are only oriented in the rake procedure and in the compress procedure. After the orientations in the compress procedure in Lines \ref{alg:line:orient2.1} and \ref{alg:line:orient2.2} the longest oriented path in $R^{(i+1)}$ has length at most the length of one of those paths. Since components in $V(P') \setminus Z$ have length in $[\ell, 2\ell]$ these paths have at most length $2\ell$. So after these lines any oriented path consisting of nodes from $R^{(i+1)}$ has length at most $2\ell$.
What remains is the rake procedure that is called afterwards. Consider some oriented path of nodes in $R^{(i+1)}$ of length at most $2\ell$.
In each execution of Line \ref{alg:line:orient3} we extend this path by at most 1. So after the $\gamma$ iterations of this line, the longest oriented path consisting of nodes from $R^{(i+1)}$ has length at most $2\ell + \gamma = 2\ell + \ell +3 = 3\ell +3 < 3\ell +6 = 3b$.
\end{proof}

Now we are ready to bound the quality of an already assigned node. In the proof \Cref{lem:layersPartiallyFixed}  we will do an inductive argument so for now lets assume it already is true. Intuitively what we will do is ignore all of the nodes that were added in the last iteration (corresponding to the first three layer sets) and then get a bound on everything underneath by applying  \Cref{lem:layersPartiallyFixed}.

\begin{lemma}\label{lem:boundAssignedNode}
Suppose \Cref{lem:layersPartiallyFixed} holds. Suppose $w$ is some already assigned node during iteration $i$. Let $k<i$ be the last iteration in which $w$ was still a free node. Then 
\[
    q^{(i)}(w) \leq q^{(k+1)}(w) 
\]
\[
  \leq |S_0^{(k+1)}(w)| + |S_1^{(k+1)}(w)| + |S_2^{(k+1)}(w)| + \sum_{j = 3}^{\myceil{(h^{(k+1)}(w) + 1) / b} - 1} \lambda^{j-3}|S_j^{(k+1)}(w)|.
\]
\end{lemma}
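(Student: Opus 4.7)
The plan is to prove the two inequalities $q^{(i)}(w) \leq q^{(k+1)}(w)$ and $q^{(k+1)}(w) \leq \sum_{j=0}^{2} |S_j^{(k+1)}(w)| + \sum_{j \geq 3} \lambda^{j-3} |S_j^{(k+1)}(w)|$ separately.

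For the first inequality, the core observation is that once $w$ is assigned at iteration $k+1$, the oriented subtree $T(w)$ is frozen at $T^{(k+1)}(w)$: new outgoing orientations from $w$ or from any existing descendant would require one of those nodes to still be free during a subsequent rake or compress operation, contradicting their being assigned by the end of iteration $k+1$. It then remains to argue that $|H^{(i)}(w)| \leq |H^{(k+1)}(w)|$. Either $w$ becomes a descendant of a local maximum created by some later promotion, in which case $q^{(i)}(w) = 0$ by \Cref{def:quality}, or one verifies directly that each promotion of a node $v_a$ on an oriented path starting at $w$ pushes $v_a$'s layer strictly above that of its orientation-parent, thereby dropping $v_a$ and its path-descendants out of $H(w)$. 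No promotion can introduce new members in the relevant direction, since the layer ordering $V_{i,j}^R < V_{i,p}^C < V_i^C < V_{i+1,1}^R$ preserves the non-increasing condition everywhere it already held.

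For the second inequality, the geometric key fact is that every node in $T^{(k+1)}(w) \setminus T^{(k)}(w)$ lies within distance $3b$ of $w$, and is therefore contained in $S_0^{(k+1)}(w) \cup S_1^{(k+1)}(w) \cup S_2^{(k+1)}(w)$. The argument mirrors that of \Cref{lem:distanceOfJustRemoved}: any oriented path from $w$ to a freshly added descendant must consist entirely of nodes newly assigned during iteration $k+1$ (otherwise the descendant would already be in $T^{(k)}(w)$, since $T(u)$ freezes as soon as $u$ is assigned), and any oriented path whose vertices are all newly assigned during a single iteration has length at most $2\ell + \gamma < 3b$ by the same compress-then-rake accounting. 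With this in hand, I split $H^{(k+1)}(w)$ by distance layer. For $j \in \{0,1,2\}$ use the trivial bound $|H^{(k+1)}(w) \cap S_j^{(k+1)}(w)| \leq |S_j^{(k+1)}(w)|$. For $j \geq 3$, use that $S_j^{(k+1)}(w) = S_j^{(k')}(w)$ where $k' \in \{k-1, k\}$ is chosen to be even so that \Cref{lem:layersPartiallyFixed} applies, together with the monotonicity argument from the first paragraph that every deep member of $H^{(k+1)}(w)$ was already a member of $H^{(k')}(w)$. \Cref{lem:layersPartiallyFixed} then yields $\sum_{j \geq 3} |H^{(k')}(w) \cap S_j^{(k')}(w)| \leq \sum_{j \geq 3} \lambda^{j} |S_j^{(k')}(w)| \leq \sum_{j \geq 3} \lambda^{j-3} |S_j^{(k+1)}(w)|$, using $\lambda < 1$ in the final step.

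The hard part is justifying that deep membership in $H^{(k+1)}(w)$ really does transfer backward to $H^{(k')}(w)$ despite promotions between $k'$ and $k+1$. I expect this to follow from a localized argument: each promotion path has bounded length $b$, \Cref{lem:compressNodesAreDeep} keeps promoted nodes away from free ones during their iteration, and so for $j \geq 3$ the oriented path from $w$ to any deep node largely avoids the disturbance zone of any single intervening promotion. The parity requirement of \Cref{lem:layersPartiallyFixed} is a secondary nuisance handled by possibly shifting from $k$ to $k-1$; this costs at most one extra iteration of growth in $T(w)$, which lands entirely in $S_0 \cup S_1 \cup S_2$ and is absorbed by the three unweighted terms in the target bound.
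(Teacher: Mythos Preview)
Your argument for the first inequality $q^{(i)}(w)\le q^{(k+1)}(w)$ is fine and matches the paper's: once $w$ is assigned, its oriented subtree is fixed, and subsequent promotions can only shrink $H(w)$.

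The second inequality, however, has a real gap. Your ``geometric key fact'' --- that every node of $T^{(k+1)}(w)\setminus T^{(k)}(w)$ lies within distance $3b$ of $w$ --- is false. What \Cref{lem:distanceOfJustRemoved} actually gives is that each \emph{formerly free} node $v\in G^{(k)}$ that becomes a descendant of $w$ during iteration $k{+}1$ is within distance $3b$ of $w$. But each such $v$ drags along its entire pre-existing subtree $T^{(k)}(v)$, which can be arbitrarily deep. Concretely, an oriented path from $w$ to a freshly added descendant $u$ can look like $w\to\cdots\to v\to\cdots\to u$, where the prefix up to $v$ consists of nodes assigned in iteration $k{+}1$ and the suffix lies in $T^{(k)}(v)$, assigned long before. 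Your parenthetical justification (``otherwise $u$ would already be in $T^{(k)}(w)$'') fails precisely here: $u$ was in $T^{(k)}(v)$, not $T^{(k)}(w)$, because the edge orientations linking $w$ to $v$ did not yet exist at the end of iteration $k$. Consequently $S_j^{(k+1)}(w)$ can be strictly larger than $S_j^{(k)}(w)$ for $j\ge 3$, and your plan to apply \Cref{lem:layersPartiallyFixed} only to $w$ at some earlier iteration cannot control the deep layers.

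The paper's fix is to write
\[
T^{(k+1)}(w)=T^{(k)}(w)\cup\bigcup_{v\in V_w^{(k)}}T^{(k)}(v),
\]
where $V_w^{(k)}$ is the set of formerly free nodes that joined $w$'s subtree, and then apply the hypothesis (\Cref{lem:layersPartiallyFixed}) \emph{separately} to $w$ and to every $v\in V_w^{(k)}$ at iteration $k$. Each such $v$ is within $3b$ of $w$, so $S_j^{(k)}(v)\subseteq S_{j'}^{(k+1)}(w)$ for some $j'\le j+3$; in the worst case the coefficient $\lambda^{j}$ on $|S_j^{(k)}(v)|$ becomes $\lambda^{j'-3}$ when re-indexed relative to $w$. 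Summing over $v$ and over $j$ gives the stated bound. You need this per-$v$ application of the inductive hypothesis; a single application at $w$ does not see the newly attached subtrees.
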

\begin{proof}
Critically we have to specify in which iteration $k$ we apply \Cref{lem:layersPartiallyFixed}, in which $w$ was still a free node. The problem is that if we derive some bound on $q^{(k)}(w)$ for some iteration $k$ in which $w$ was still free, then in iteration $k+1$ some neighbor $v$ of $w$ might be raked away and $T^{(k)}(v)$ might become part of $T^{(k+1)}(w)$.
As a result $T^{(k+1)}(w)$ might be larger than $T^{(k)}(w)$ and therefore invalidate our bound on $q^{(k)}(w)$, since $q^{(k+1)}(w) > q^{(k)}(w)$ might be true. The solution to this is the fact that once $w$ is assigned a layer, the subtree of assigned nodes hanging from $w$ can no longer grow\footnote{This is true, because $w$ will no longer be a free node and edges are only ever oriented from free nodes towards nodes that were just assigned a layer, so no new paths from $w$ to other nodes can be created, after $w$ was assigned a layer.} and therefore the quality of $w$ can no longer increase. As a result a bound on the quality that holds when $w$ is assigned a layer, will always stay true afterwards. So we choose $k$ to be last iteration in which $w$ was still a free node and show that in that last iteration the quality didn't grow too much.

If $w$ is assigned in iteration $k = 1$ or before, then by \Cref{lem:distanceOfJustRemoved} the subtree hanging from $w$ must be of height strictly less than $3b$. Therefore the quality is bounded by
\[
q^{(i)}(w) = |S^{(i)}_0(w)| + |S^{(i)}_1(w)| + |S^{(i)}_2(w)|.
\]
As this is a stronger statement than what we will derive later, we can freely assume that $w$ was not assigned in the first iteration.
Let iteration $k$ be the last iteration such that $w$ was a free node at the end of iteration $k$.
This in particular implies that $k \leq i-1$ and that $w$ is assigned a layer in iteration $k+1 \leq i$. So if we give a bound on $q^{(k+1)}(w)$ then this bound still holds in iteration $i$, i.e., for $q^{(i)}(w)$.
(Note that, once a node $w$ is assigned, the definition of $q(w)$ ensures that the quality of $w$ cannot increase since the reassignments of assigned nodes to new layers in \Cref{alg:Decomposition} always assign the nodes to higher layers than used before.)
    
To give a bound on $q^{(k+1)}(w)$ we first need to understand how $T^{(k+1)}(w)$ looks like. Clearly $T^{(k)}(w) \subseteq T^{(k+1)}(w)$, but there might also be some additional nodes that were free nodes after iteration $k$ that now have joined $w$'s subtree of assigned nodes. These nodes then also come with their own subtrees. As a result these nodes would now also be in $T^{(k+1)}(w)$. Let us call the set of these nodes 
\[
V^{(k)}_w := \{v \in G^{(k)}\; | \;v \in T^{(k+1)}(w) \}.
\]
We note that $v \in T^{(k+1)}(w)$ implies $ v \notin G^{(k+1)}$, just from the definition of subtrees of assigned nodes.
Now we can express $T^{(k+1)}(w)$ via
\[
T^{(k+1)}(w) = T^{(k)}(w) \cup \bigcup_{v \in V^{(k)}_w} T^{(k)}(v).
\]
Since all $v \in V^{(k)}_w$ are also contained in $G^{(k)}$ we can apply \Cref{lem:layersPartiallyFixed} and we have 
\[
q^{(k)}(v) \leq \sum_{j = 0}^{\myceil{(h^{(k)}(v) + 1) / b} - 1} \lambda^{j}|S_j(v)|.
\]
To make things precise we need to relate the height of these trees rooted at such a $v$ to the height of the tree of $w$.
Notice that for any node $u \in T^{(k+1)}(v)$ its distance to $w$ is exactly its distance to $v$ plus the distance from $v$ to $w$, i.e.,
\[
\dist(u,w) = \dist(u,v) + \dist(v,w).
\]
By using \Cref{lem:distanceOfJustRemoved} we get that for any node $v \in V^{(k)}_w$ the distance to $w$ is less than $3b$.
    As a result, we get that for any node $u \in T^{(k)}(v)$ for any $v \in V^{(k)}_w$
    \[
    \text{dist}(u,w) = \text{dist}(u,v) + \text{dist}(v,w) \leq \text{dist}(u,v) + 3b.
    \]
    If $u$ was in $S^{(k)}_j(v)$ and the distance from $v$ to $w$ were actually exactly $3b$, then $u$ would be in $S^{(k+1)}_{j-3}(w)$.
    Now since 
    \[
        q^{(k)}(v) \leq \sum_{j = 0}^{\myceil{(h^{(k)}(v) + 1) / b} - 1} \lambda^{j}|S_j^{(k)}(v)|
    \]
    and 
    \[
        T^{(k+1)}(w) = T^{(k)}(w) \cup \bigcup_{v \in V^{(k)}_w} T^{(k)}(v),
    \]
    we get that 
    \[
        q^{(k+1)}(w) \leq q^{(k)}(w) + \sum_{v \in V^{(k)}_w} q^{(k)}(v) 
    \]
    \[
        \leq \sum_{j = 0}^{\myceil{(h^{(k)}(w) + 1) / b} - 1} \lambda^{j}|S_j^{(k)}(w)| + \sum_{v \in V^{(k)}_w} \sum_{j = 0}^{\myceil{(h^{(k)}(v) + 1) / b} - 1} \lambda^{j}|S_j^{(k)}(v)|
    \]
    Now again we look at $S^{(k)}_j(v)$ for some arbitrary $j$. We already know that only $\lambda^j$ of all of these nodes contribute to $q^{(k+1)}(w)$. We now want to express them with respect to the layers $S^{(k+1)}_0(w), \dots$\\
    We want to relate everything to the layer sets of $w$ 
    \[
        q^{(k+1)}(w) \leq \sum_{j = 0}^{\myceil{(h^{(k)}(w) + 1) / b} - 1} \lambda^{j}|S_j^{(k)}(w)| + \sum_{v \in V^{(k)}_w} \sum_{j = 0}^{\myceil{(h^{(k)}(v) + 1) / b} - 1} \lambda^{j}|S_j^{(k)}(v)| 
    \]
    \[
        \leq |S_0^{(k+1)}(w)| + |S_1^{(k+1)}(w)| + |S_2^{(k+1)}(w)| + \sum_{j = 3}^{\myceil{(h^{(k+1)}(w) + 1) / b} - 1} \lambda^{j-3}|S_j^{(k+1)}(w)|.
    \]
    We argue that the second inequality holds. For any node $u \in S^{(k)}_j(v)$, we get that its distance from $w$ is at most $3b$ larger than its distance to $v$. So the farthest layer $S^{(k+1)}_0(w), \dots$ it could be in would be $S^{(k+1)}_{j+3}(w)$ so if we only want to show that an $\lambda^j$ fraction of nodes in $S^{(k+1)}_{j+3}$\footnote{respectively a $\lambda^{j-3}$ fraction of nodes in $S^{(k+1)}_{j}$} contribute to the quality of $q^{(k+1)}(w)$ we can simply assume the worst case and say that $v$ has this distance $3b$ from $w$. 
    Because if the distance was smaller, it would only mean that some of the nodes from $S^{(k)}_j(v)$ end up in layers before $S^{(k+1)}_{j+3}$ and hence have a bigger coefficient.
    
    So now, as we argued before, since this bound holds in iteration $k+1$ in which $w$ was already assigned a layer, this bound will also hold for later iterations. So we get 
    \[
    q^{(i)}(w) \leq q^{(k+1)}(w) 
    \]
    \[
    \leq |S_0^{(k+1)}(w)| + |S_1^{(k+1)}(w)| + |S_2^{(k+1)}(w)| + \sum_{j = 3}^{\myceil{(h^{(k+1)}(w) + 1) / b} - 1} \lambda^{j-3}|S_j^{(k+1)}(w)|.
    \]
\end{proof}
Now we can prove \Cref{lem:layersPartiallyFixed}.
\begin{proof}[Proof of \Cref{lem:layersPartiallyFixed}]
We will prove the claim by induction on $h^{(i)}(r)$, the height of the assigned subtree $T^{(i)}(r)$. \\
\textbf{Base Case: $0 \leq h(r) < b$.}
For $h(r)<b$, the set $S_0(r)$ is just the entire tree $T(r)$, so the statement is trivially true for any $0 < \lambda < 1$.\\
\noindent\textbf{Induction step:} Now we want to prove the statement for some tree with height $h^{(i)}(r)$, so let the statement be true for all trees that have height at most $h^{(i)}(r) - b$.

We will use the induction hypothesis on all of the nodes in $C_b(r)$.
Notice that this is valid, since their heights can be at most $h^{(i)}(r) - b$.
Let $w \in C_b(r)$ be one such node. By invoking \Cref{lem:boundAssignedNode} we get
\[
q^{(i)}(w) \leq q^{(k+1)}(w) 
\]
\[
\leq |S_0^{(k+1)}(w)| + |S_1^{(k+1)}(w)| + |S_2^{(k+1)}(w)| + \sum_{j = 3}^{\myceil{(h^{(k+1)}(w) + 1) / b} - 1} \lambda^{j-3}|S_j^{(k+1)}(w)|.
\]
We note that this is not a cyclic argumentation, as by the inductive hypothesis \Cref{lem:layersPartiallyFixed} already holds for $w$ and all nodes in its subtree (Critically all of the nodes in the set $V_w^{(k)})$.
We get such a bound for every node $w \in C_b(r)$. To keep things clearer we will refer to the iteration $k$ which we used for the induction hypothesis as $k_w$ for a specific $w$, as these iterations might be different for different $w$. So for every $w \in C_b(r)$ we get 
\[
q^{(i)}(w) \leq q^{(k_w+1)}(w) 
\]
\[
\leq |S_0^{(k_w+1)}(w)| + |S_1^{(k_w+1)}(w)| + |S_2^{(k_w+1)}(w)| + \sum_{j = 3}^{\myceil{(h^{(k_w+1)}(w) + 1) / b} - 1} \lambda^{j-3}|S_j^{(k_w+1)}(w)|
\]
\[
= \sum^2_{j=0}|S_j^{(k_w+1)}(w)| + \sum_{j = 3}^{\myceil{(h^{(k_w+1)}(w) + 1) / b} - 1} \lambda^{j-3}|S_j^{(k_w+1)}(w)|
\]
Critically, we get bounds on their quality during iteration $k_w+1$ in which they were just assigned a layer, which must have happened before any node was promoted in Line \ref{alg:line:markPromoted}.
As a result, we get that no node that was at distance $2b-1$ from $r$ that was promoted, is accounted for in these bounds on the qualities. This is because the nodes we have used the induction hypothesis on are at distance exactly $b$ from $r$ and at the point at which they were free nodes, they had only promoted nodes at distance exactly $b$ from themselves, so of distance strictly larger than $2b -1$ with respect to $r$. And at the point in time at which our bounds on $q^{(i)}(w)$ were derived, we did not yet consider any node being promoted in iteration $k_w+1$. We will denote by $q'(r)$ the quality of $r$ at this point in time, i.e., at a point in time before considering any node at distance $2b-1$ from $r$ as promoted.
We get
\[
q'(r) \leq |S^{(i)}_0(r)| + \sum_{w \in C_b(r)} q^{(k_w+1)}(w)  
\]
\[
\leq |S^{(i)}_0(r)| + \sum_{w \in C_b(r)} \left(\sum^2_{j=0}|S_j^{(k_w+1)}(w)| + \sum_{j = 3}^{\myceil{(h^{(k_w+1)}(w) + 1) / b} - 1} \lambda^{j-3}|S_j^{(k_w+1)}(w)|\right).
\]
Since $S^{(k_w+1)}_j(w)$ will not change anymore, the set is equal to the set $S^{(i)}_j(w)$.
By then also observing the fact that 
\[
\bigcup_{w \in C_b(r)} S^{(i)}_j(w) = S^{(i)}_{j+1}(r),
\]
we obtain
\[
 q'(r) \leq \sum^3_{j=0}|S_j^{(i)}(r)| + \sum_{j = 4}^{\myceil{(h^{(i)}(r) + 1) / b} - 1}  \lambda^{j-4}|S_j^{(i)}(r)|.
\]
Again this holds before any node at distance $2b-1$ from $r$ was promoted.
Now we have to differentiate between two cases: either the $v^*$ computed by $r$ was promoted or not. If $v^*$ is promoted, we get by Lemma \ref{lem:fixV} that $q'(v^*)$ is some constant fraction of $q'(r)$. Now since node $v^*$ is at distance exactly $b$ to $r$, the nodes that account for $q'(v^*)$ are not yet included in the bounds on $q'(r)$, hence $q(r) = q'(r) - q'(v^*)$. In the other case, where the condition to promote $v^*$ does not hold, we get by Lemma \ref{lem:fixX} that some $x$ exists such that $q'(x)$ is a constant fraction of $q'(v^*)$ and, by extension, also a constant fraction of $q'(r)$. Furthermore Lemma \ref{lem:fixX} gives us that this $x$ must be in the first $2b-1$ layers and therefore also $q(r) \leq q'(r) - q'(x)$ does hold. \\
By Lemmas \ref{lem:fixV} and \ref{lem:fixX} we know that the fraction of $q'(r)$ that will be subtracted is smaller in the second case, where we just have that some $x$ exists, that was promoted. So w.l.o.g. we can just consider the second case.
By Lemmas \ref{lem:fixV} and \ref{lem:fixX} we obtain
\[
q'(x) \geq \frac{q'(v^*)}{2\Delta^b} \geq \frac{q'(r)}{4 \Delta^{2b}},
\]
which implies
\[
q^{(i)}(r) \leq q'(r) - q'(x) \leq q'(r) - \frac{q'(r)}{4 \Delta^{2b}} = (1 - \frac{1}{4 \Delta^{2b}}) q'(r).
\]
By choosing $\lambda^{4} = (1 - \frac{1}{4 \Delta^{2b}})$, we obtain
\[
q^{(i)}(r) \leq (1 - \frac{1}{4 \Delta^{2l}}) q'(r)
\]
\[
= \lambda^4 \left(\sum^3_{j=0}|S_j^{(i)}(r)| + \sum_{j = 4}^{\myceil{(h^{(i)}(r) + 1) / b} - 1}  \lambda^{j-4}|S_j^{(i)}(r)|.\right) 
\]
\[
\leq  \sum_{j = 0}^{\myceil{(h^{(i)}(r) + 1) / b} - 1} \lambda^{j}|S_j^{(i)}(r)|.
\]
\end{proof}

\subsection{Distributed Algorithm and Node Averaged Complexity}\label{sec:dectoave}
In this section, we will describe how we implement the decomposition algorithm, \Cref{alg:Decomposition} in a distributed manner and how we will use it to design an algorithm $\fA$ that solves the given LCL problem $\Pi$, and we will prove an upper bound of $O(\log^* n)$ for the node-averaged complexity of the latter algorithm.
We start by describing our distributed implementation of \Cref{alg:Decomposition}.
For the remainder of the section, set $s:= 10\ell$.

\subparagraph{Distributed implementation.}
The computation of $\ell$ from $\Pi$ in Line~\ref{alg:line:one} of \Cref{alg:Decomposition} can be performed by every node without any communication.
Next, the nodes compute a distance-$s$ coloring with a constant number of colors.
Since $\Delta$ and $\ell$ are constant, this can be done in worst-case complexity $O(\log^* n)$, e.g., by computing a $(\Delta(G^s) + 1)$-coloring of the power graph $G^s$, using the algorithm of Barenboim, Elkin and Kuhn~\cite{bek15} (that computes a $(\Delta + 1)$-coloring in $O(\log^* n + \Delta)$ rounds).
We argue that each subroutine can be implemented in a constant number of rounds:
\begin{itemize}
    \item \textbf{Rake (\cref{alg:Rake}):} Since all nodes know their layer, if they are assigned one, nodes can easily check whether or not they have at most one neighbor that is not assigned a layer. They can then inform all of their neighbors about their new layer and any edges that are newly oriented. This takes just one round, so repeating it $\gamma$ times takes only $\gamma \in O(1)$ rounds.
    \item \textbf{Compress (\cref{alg:Compress}):} The computation of the subset $Z$ in Line~\ref{computez} can be performed in a constant number of rounds by iterating through the color classes of the computed distance coloring and in each iteration adding a node to $Z$ if it does not cut off a subpath of $P'$ of length $< \ell$ (which can be determined in $\ell$ rounds per iteration).
    Note that no path of length $2\ell + 1$ can remain in the graph induced by $V(P') \setminus Z$ since the middle node of this path would have been added to $Z$ during the iteration of its color class. Analogously, no path of length $>2\ell + 1$ can remain after the computation of $Z$.
    \item \textbf{Promotion (\cref{alg:Promote}):} Both $b$, or $|P|$ are constants, so each free node can collect its $b$ hop neighborhood and compute the outcome of the promotion and then in $b+1$ rounds inform all nodes of the changes in their layers, or in the layers of their neighbors.
\end{itemize}
We obtain the following lemma.
\begin{lemma}\label{lem:TimeOfIteration}
    Assume a distance-$s$ coloring with a constant number of colors is given.
    Then for all iterations $i$ of \Cref{alg:Decomposition} the iteration can be executed in a constant number of rounds. Furthermore, the distance coloring is only used for the compress subroutine and nothing else.
\end{lemma}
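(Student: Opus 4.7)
The plan is to walk through the three subroutines that a single iteration of \Cref{alg:Decomposition} invokes (one call each of Compress, Rake and Promote), and bound the round complexity of each one separately by a constant; summing yields the lemma. Throughout, we exploit that $\ell$, $\gamma = \ell + 3$, $b = \ell + 2$, $s = 10\ell$, and the number of colors in the distance-$s$ coloring are all constants depending only on $\Pi$ and $\Delta$. Rake itself is the easiest: its outer loop runs $\gamma = O(1)$ times, and in each inner iteration a free node can decide locally in one round whether it currently has degree $\le 1$ in the subgraph induced by free nodes (this just needs neighbors to advertise whether they have already been assigned). The edge orientation in Line~\ref{alg:line:orient3} is a purely local label on a single incident edge. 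So Rake finishes in $O(\gamma) = O(1)$ rounds and does not touch the coloring.

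The nontrivial step is Compress, specifically the computation of $Z$ in Line~\ref{computez}. I will iterate through the $O(1)$ color classes of the distance-$s$ coloring in order. When it is the turn of a color class, each node $v \in V(P')$ of that color gathers its $\ell$-hop portion of $P'$ (possible in $\ell$ rounds) and tentatively enters $Z$ iff doing so does not create a maximal $V(P')\setminus Z$-subpath of length $< \ell$. Because $s = 10\ell$, no two candidates are within distance $\ell$, so these local decisions do not interfere and can all be taken in parallel in $O(\ell)$ rounds per color class. The final $Z$ has the right property: gaps of length $< \ell$ are ruled out by construction, and no gap of length $\ge 2\ell + 1$ can survive since the middle node of such a gap would have been added to $Z$ when its color came up. The orientation updates on the first and last $\gamma$ nodes of each path and the bookkeeping of $N^{(i)}$ are then local along the path and cost another $O(\ell) = O(1)$ rounds.

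Finally, for Promote, each free node $r$ gathers its $b$-hop neighborhood in $O(b) = O(1)$ rounds. To evaluate $\argmax_{v \in C_b(r)} q(v)$ and the condition on Line~\ref{alg:line:conditionFixV*}, I note that the quality $q(v)$ of an assigned node can be maintained incrementally during the earlier Rake and Compress steps: when a node is raked, it learns the subtree sizes of its already-assigned children (each child sends a single integer along the newly oriented edge), so at the time Promote runs each assigned node within $b$ hops of $r$ knows $q(v)$ locally and can forward it to $r$. The subsequent reassignments in Lines~\ref{alg:line:reassigncee}--\ref{alg:line:maxtwo} only affect nodes within distance $b$ of $r$, so propagating new layers to all affected nodes and their neighbors costs another $O(b)$ rounds. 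Adding up the three bounds yields the claimed constant round complexity per iteration, and the ``furthermore'' clause is immediate because only the computation of $Z$ in Compress referenced the distance coloring. The main conceptual hurdle is the parallel computation of $Z$ via the distance coloring: one has to check both that the constant-number-of-colors decisions are independent (given by $s \ge \ell$) and that the resulting gaps lie in $[\ell, 2\ell]$, which is precisely what the above color-class iteration guarantees.
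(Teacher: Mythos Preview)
Your proof is correct and follows essentially the same approach as the paper: both argue Rake is $O(\gamma)$ rounds of degree checks, Compress computes $Z$ by iterating through the $O(1)$ color classes of the distance-$s$ coloring (with the same ``middle node of a long gap would have been added'' argument for the upper gap bound), and Promote only needs the $b$-hop neighborhood of each free node. Your write-up is in fact slightly more careful than the paper's on one point: you explicitly note that $q(v)$ is not computable from the $b$-hop view alone and must be maintained incrementally by having each raked node forward its subtree count along the newly oriented edge---the paper's proof of this lemma glosses over that and simply says ``collect its $b$ hop neighborhood and compute the outcome,'' relying on an earlier informal remark that ``nodes can keep track of such numbers.''
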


Note that a node can determine in constant time whether it should take part in a certain operation as this only depends on the node's constant-hop neighborhood.

Next we will describe our algorithm $\fA$ for solving a given LCL problem $\Pi$.

\subparagraph{Algorithm for $\Pi$.}
In order to describe our algorithm $\fA$ for solving $\Pi$ with a small node-average complexity, we first describe an algorithm $\fA'$ that is not optimized for the node-averaged setting and then explain how to tweak $\fA'$ to obtain $\fA$.

Algorithm $\fA'$ proceeds as follows.
Use \Cref{alg:Decomposition} to compute a $(\gamma, \ell, L)$-decomposition, where the values of $\gamma$ and $\ell$ depend on $\Pi$, and $L\in O(\log n)$ (due to \Cref{cor:inlog} and \cref{cor:PartialIsProperDecomposition}).
Then execute the generic algorithm from \Cref{ssec:lcls} using the computed $(\gamma, \ell, L)$-decomposition.
As explained in \Cref{ssec:lcls}, the generic algorithm produces a correct output for $\Pi$ given \emph{any} $(\gamma, \ell, L)$-decomposition. However doing the two algorithms sequentially means, that we first have to wait for $O(\log n)$ rounds until the entire decomposition is done before any node can terminate.

In order to turn $\fA'$ into an algorithm $\fA$ with small node-averaged complexity, we simply let each node start executing the steps in the generic algorithm as soon as the partial decomposition computed so far by \Cref{alg:Decomposition} provides all necessary information. Additionally, when having determined the output labels for all incident edges, each node immediately terminates.

From the description of the generic algorithm provided in \Cref{ssec:lcls}, it follows precisely what information a node needs in order to execute the two steps in the generic algorithm: choosing label-sets for incident edges and choosing labels for incident edges.
In particular, the description of the generic algorithm implies the following:
\begin{enumerate}
    \item\label{prop:theone} Consider a layer computed in the $(\gamma, \ell, L)$-decomposition. At most $2\ell + 1$ rounds after all nodes in all smaller layers (according to the total order given in  \Cref{def:ordering}) have chosen label-sets for all incident edges (or earlier), all nodes in the considered layer know the label-set for each incident edge. If there is no smaller layer, all nodes in the considered layer know their label-sets after the first round.
    \item\label{prop:theother} If a node is a local maximum and knows its incident label-sets, it can immediately output labels for all incident edges.
    If a node is in a rake layer (i.e., in a layer of the form $V_{i,j}^R$) and knows its incident label-sets, then it can output labels for all of its incident (so far unlabeled) edges one round after all incident edges of its parent (if it has one) have an output label.
    If a node is in a compress layer (i.e., in a layer of the form $V_{i}^C$) and each node in its path $P$ in the layer knows its incident label-sets, then each node in the path can output labels for all of their incident (so far unlabeled) edges $2\ell + 1$ rounds after all incident edges of the parents of the two endpoints of the path (if they have one) have an output label.
\end{enumerate}
These two properties enable us to prove the following lemma.

\begin{lemma}\label{lem:TimeToFixSubtree}
    Assume a distance-$s$ coloring with a constant number of colors is given.
    Then there exists an integer constant $t$ such that the following holds:
    if a node $v$ becomes a local maximum in iteration $i$ of \Cref{alg:Decomposition}, then the entire tree $T^{(i)}(v)$ will have terminated after $ti$ rounds in $\fA$. 
    Furthermore for all nodes $u \in N^{(i)}$ both $v$ and all nodes in $T^{(i)}(u)$ will have terminated after $ti$ rounds in $\fA$. 
\end{lemma}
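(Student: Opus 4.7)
The plan is to combine Lemma~\ref{lem:TimeOfIteration} with the two propagation properties \eqref{prop:theone} and \eqref{prop:theother} listed just before the lemma. By Lemma~\ref{lem:TimeOfIteration}, there is a constant $c_1$ such that the first $i$ iterations of \Cref{alg:Decomposition} are completed within $c_1 \cdot i$ rounds (the distance-$s$ coloring is assumed to be given). In particular, if $v$ becomes a local maximum in iteration $i$, the layer assignment of $v$ together with the structure of $T^{(i)}(v)$ is determined within $c_1 \cdot i$ rounds.

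Next, I would run the label-set computation of the generic algorithm from \Cref{ssec:lcls} in parallel with the decomposition. By property~\eqref{prop:theone}, once all nodes in all strictly smaller layers know the label-sets of their incident edges, the nodes of the next layer learn their incident label-sets in at most $2\ell + 1$ additional rounds. Since the decomposition that has been computed by the end of iteration $i$ has at most $O(i)$ layers (using \Cref{cor:PartialIsProperDecomposition}), the bottom-up label-set propagation along the layers of $T^{(i)}(v)$ finishes within $c_2 \cdot i$ rounds for some constant $c_2$. At this point $v$, being a local maximum, has all of its neighbors in strictly lower layers (\Cref{def:locMax}), and hence by property~\eqref{prop:theother} it can immediately fix output labels on all of its incident edges once its label-sets are known.

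Once $v$ has chosen its outputs, labels can be propagated downward through $T^{(i)}(v)$ using property~\eqref{prop:theother}: each rake node in the subtree decides its incident outputs one round after its parent has decided, and each compress path in the subtree decides within $2\ell + 1$ rounds of its parent-edges being labeled. Since $T^{(i)}(v)$ spans at most $O(i)$ layers, this top-down phase also finishes within $c_3 \cdot i$ rounds for some constant $c_3$. Setting $t := c_1 + c_2 + c_3 + O(1)$ yields the desired bound for $T^{(i)}(v)$. The statement about $N^{(i)}$ follows from \Cref{lem:compressCorrect}: each $u \in N^{(i)}$ is either a local maximum itself or lies within distance at most $2\ell$ of two local maxima along its compress path, so its incident label-sets are determined within $O(1)$ extra rounds of $v$'s becoming a local maximum (and analogously for the other end of the compress path), after which the same top-down argument bounds the termination time of $T^{(i)}(u)$.

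The main obstacle I expect is bookkeeping: carefully verifying that label-set propagation can indeed be interleaved with the decomposition computation without waiting for the full $O(\log n)$ layers to be built, and that the per-layer delays of $2\ell + 1$ rounds (rather than a single round) through compress paths still telescope to a linear $O(i)$ bound on both the upward and downward phases. All of the per-layer constants depend only on $\ell$ and $\gamma$, which in turn depend only on $\Pi$ and $\Delta$, so absorbing them into a single constant $t$ is straightforward once the interleaving is set up correctly.
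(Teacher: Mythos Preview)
Your proposal is correct and follows essentially the same approach as the paper: bound the number of layers after $i$ iterations by $O(i)$ (the paper makes this explicit as $4i\gamma$ via \Cref{cor:PartialIsProperDecomposition}), then use property~\eqref{prop:theone} for the upward label-set pass and property~\eqref{prop:theother} for the downward labeling pass, each costing $O(i)$ rounds, and add the $O(i)$ cost of the decomposition itself from \Cref{lem:TimeOfIteration}. Your invocation of \Cref{lem:compressCorrect} for the $N^{(i)}$ part is the right reference (the paper cites \Cref{lem:promoteCorrect} there, but the relevant structural fact about $N^{(i)}$ is indeed stated in \Cref{lem:compressCorrect}).
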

\begin{proof}
    Let $v$ be a node that becomes a local maximum in iteration $i$ of \Cref{alg:Decomposition}.
    The design of \Cref{alg:Decomposition} guarantees that there are at most $4i\gamma$ layers below the layer of $v$, $2i\gamma$ from the rake and compress procedure and another factor of two from \cref{cor:PartialIsProperDecomposition}. Then Property~\ref{prop:theone} in the above discussion implies that $v$ knows the label-sets for its incident edges latest after round $(4i\gamma)(2\ell +1) < 9i\gamma\ell$.
    Then as a direct result of Property~\ref{prop:theother} together with \Cref{lem:promoteCorrect} all of the nodes $u \in N^{(i)}$ can chose output labels for their incident edges and terminate after another at most $9i\gamma\ell$ rounds.
    Now, set $t$ to be the sum of $2\cdot9\gamma\ell = 18\gamma\ell$ and the constant from \Cref{lem:TimeOfIteration}.
\end{proof}

To make the runtime analysis a bit cleaner, we are going to mark all nodes in $T^{(i)}(v)$, once $v$ becomes a local maximum. We emphasize that this is solely for the purpose of the analysis and this does not change the algorithm at all. More specifically, once any node $v$ becomes a local maximum, all of the nodes in $T^{(i)}(v)$ become marked instantly (in $0$ rounds).
We obtain the following corollary from \Cref{lem:TimeToFixSubtree}.

\begin{corollary}\label{cor:markedNodesTerminate}
    Assume a distance-$s$ coloring with a constant number of colors is given.
    If a node $v$ becomes marked in iteration $i$, then $v$ will have terminated in round $ti$, where $t$ is the constant from \Cref{lem:TimeToFixSubtree}.
\end{corollary}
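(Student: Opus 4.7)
The plan is to derive \Cref{cor:markedNodesTerminate} as a direct consequence of \Cref{lem:TimeToFixSubtree}, essentially by unpacking the marking convention stated immediately above the corollary. Recall that marking is defined purely for analysis purposes as follows: the moment some node $v$ becomes a local maximum in iteration $i$, every node in $T^{(i)}(v)$ is instantly (in zero rounds) declared marked. Consequently, whenever an arbitrary node $w$ first becomes marked in iteration $i$, there must exist a witness $v$ such that $v$ becomes a local maximum in iteration $i$ and $w \in T^{(i)}(v)$.

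The second step is then immediate: \Cref{lem:TimeToFixSubtree} guarantees that the entire subtree $T^{(i)}(v)$ of assigned nodes rooted at the new local maximum terminates by round $ti$ in $\fA$, for the constant $t$ provided by that lemma. Applying this to the specific witness $v$ and observing $w \in T^{(i)}(v)$ yields that $w$ has terminated by round $ti$, which is the statement of the corollary. Since $w$ was arbitrary, this suffices. I will also briefly note that the ``furthermore'' clause of \Cref{lem:TimeToFixSubtree} takes care of the case in which the local maximum responsible for $w$'s marking lives in some $N^{(j)}$-set produced by an earlier compress step, since that clause already bounds the termination of $T^{(i)}(u)$ for such $u \in N^{(i)}$ by the same round $ti$.

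I anticipate no real obstacle here; the substantive work is already packed into \Cref{lem:TimeToFixSubtree}, which in turn relies on the per-iteration round bound from \Cref{lem:TimeOfIteration} together with the two properties of the generic LCL solver recalled just before that lemma (label-sets propagate up one layer in $O(\ell)$ rounds, and output labels propagate back down equally fast once a local maximum has chosen its incident labels). The sole purpose of the corollary is to restate this bound in the vocabulary of marked nodes, which is what \Cref{lem:FixedInIteration,lem:AvgIteration} will use to amortize the total running time: there, we will charge each node's termination time against the iteration in which it becomes marked, and this corollary is precisely the tool that legitimizes such a charge.
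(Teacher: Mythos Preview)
Your proposal is correct and matches the paper's approach exactly: the paper gives no explicit proof and simply states the corollary as an immediate consequence of \Cref{lem:TimeToFixSubtree} together with the marking convention introduced just before it. Your unpacking---if $w$ is marked in iteration $i$ then some $v$ became a local maximum in iteration $i$ with $w\in T^{(i)}(v)$, and \Cref{lem:TimeToFixSubtree} then bounds the termination of all of $T^{(i)}(v)$ by round $ti$---is precisely the intended one-line argument. The remark about the ``furthermore'' clause is not strictly needed for this corollary (the statement only concerns marked nodes, and marking is triggered solely by local maxima), but it does no harm.
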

Now we use this result to prove that a large amount of nodes are in subtrees of local maxima after a reasonable amount of time. This will be sufficient to prove that the node-averaged complexity is constant, without the time for the input coloring.
\begin{lemma} \label{lem:FixedInIteration}
    There exists a constant $0 < \sigma < 1$, such that for every iteration $i > 5$ of \Cref{alg:Decomposition} at most $2 \Delta^b n \sigma^{i}$ nodes are not marked.
\end{lemma}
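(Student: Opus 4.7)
My plan is to bound the number of unmarked nodes by the count of free nodes plus the total quality of free nodes (absorbing a constant factor to handle compress middle-node trees), then use \Cref{lem:layersPartiallyFixed} to get a weighted sum over layer sets, and finally use the two complementary bounds on each layer together with \Cref{lem:epsilon} to extract a geometric decay in $i$.

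More concretely, I would first argue via \Cref{lem:decomposeV} that every node lies in some $T^{(i)}(w)$ for $w \in G^{(i)} \cup N^{(i)}$, and that every $w \in N^{(i)}$ is either a local maximum (so $T^{(i)}(w)$ is marked) or a compress middle node sandwiched between two local maxima, whose subtree terminates quickly in the sense of \Cref{lem:TimeToFixSubtree}. Since the compress procedure produces $O(1)$ compress middle nodes per newly created local maximum and their subtrees have size at most $O(\Delta^b)$ times that of the neighboring local-max subtrees (they were just attached in the last few iterations), their contribution can be absorbed in a $\Delta^b$ multiplicative factor, leaving
\[
\#\{\text{unmarked at iteration } i\} \;\le\; \Delta^b\Bigl(|G^{(i)}| + \sum_{v\in G^{(i)}} q^{(i)}(v)\Bigr).
\]

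Next, assuming $i$ is even, I apply \Cref{lem:layersPartiallyFixed} to every free node and swap summations to get $\sum_{v\in G^{(i)}} q^{(i)}(v) \le \sum_{j\ge 0} \lambda^j \, s_j$ where $s_j := \sum_{v\in G^{(i)}} |S_j^{(i)}(v)|$. The crucial point is that each $s_j$ admits two bounds: $s_j \le n$ since the free-rooted trees $T^{(i)}(v)$ are node-disjoint, and $s_j \le |G^{(i)}|\,\Delta^{b(j+1)}$ since each tree has max degree $\Delta$. Splitting the sum at the crossover index $J$ with $|G^{(i)}|\Delta^{b(J+1)} = n$, i.e., $J+1 = \tfrac{1}{b}\log_\Delta(n/|G^{(i)}|)$, and using the first bound for $j > J$ and the second for $j \le J$, a standard geometric-series calculation (whether or not $\lambda\Delta^b \ge 1$, the dominant term is the boundary $j=J$) gives
\[
\sum_j \lambda^j s_j \;\le\; C\, n\, \lambda^{J+1} \;=\; C\, n \, (|G^{(i)}|/n)^{\kappa}
\]
with $\kappa := -\tfrac{1}{b}\log_\Delta \lambda > 0$ and $C$ a constant depending on $\lambda, b, \Delta$. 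Plugging in $|G^{(i)}| \le n\varepsilon^{i}$ from \Cref{lem:epsilon} yields $\sum_v q^{(i)}(v) \le C\, n\, \varepsilon^{i\kappa}$, so
\[
\#\{\text{unmarked at iteration } i\} \;\le\; \Delta^b\bigl(n\varepsilon^i + C n \varepsilon^{i\kappa}\bigr) \;\le\; 2\Delta^b\, n\, \sigma^{i}
\]
for $\sigma := \varepsilon^{\min(1,\kappa)} \in (0,1)$, once we have chosen the constant in $C$ appropriately.

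For odd $i$, I would invoke the bound at iteration $i-1$, which loses only a factor of $1/\sigma$ and is absorbed into the constant. The main obstacle is in the first step: making precise that unmarked nodes hanging off $N^{(i)}$ (compress middle nodes and their descendants) are at most a constant times what hangs off $G^{(i)}$. This requires tracking how quickly the trees rooted at compress middle nodes can grow between the moment they enter $N^{(i)}$ and iteration $i$, using that the enclosing local maxima are within distance $2\ell$ (by \Cref{lem:compressCorrect}) and that the slack rakes confine the distance of newly attached descendants (analogous to \Cref{lem:distanceOfJustRemoved}). The remaining calculations are routine once this accounting is in place.
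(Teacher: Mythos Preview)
Your overall approach---bound the unmarked nodes by $\sum_{r\in G^{(i)}} q^{(i)}(r)$, apply \Cref{lem:layersPartiallyFixed}, then split the resulting sum $\sum_j \lambda^j s_j$ into a shallow part (controlled by $|G^{(i)}|\le n\varepsilon^i$ from \Cref{lem:epsilon}) and a deep part (controlled by $\lambda^j$)---is exactly the paper's. The paper splits at the fixed index $\lfloor\beta t i/b\rfloor$ with $\beta=\ln(1/\varepsilon)/(2t\ln\Delta)$, which makes the shallow part $\le \Delta^b n(\sqrt\varepsilon)^i$ and the deep part $\le n(\lambda^{\beta t/b})^i$, so $\sigma=\max\{\sqrt\varepsilon,\lambda^{\beta t/b}\}$ gives the constant $2\Delta^b$ on the nose; your crossover-point split is equivalent in spirit and yields the same geometric decay, though the specific constant $2\Delta^b$ requires a bit more care.

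Your ``main obstacle'' is not one. The paper disposes of the $N^{(i)}$ contribution in one line: by \Cref{lem:compressCorrect} every $u\in N^{(i)}$ is a local maximum or is sandwiched between two local maxima at distance $\le 2\ell$, and the ``furthermore'' clause of \Cref{lem:TimeToFixSubtree} says that for every $u\in N^{(i)}$ the entire subtree $T^{(i)}(u)$ has terminated by round $ti$. Hence all of $C_i=\bigcup_{u\in N^{(i)}}T^{(i)}(u)$ is marked in iteration $i$, so $|C_i\setminus F|=0$ outright. There is no need to compare sizes with neighboring local-max subtrees or to introduce a $\Delta^b$ factor at this stage; combined with $|T^{(i)}(r)\setminus F|=q^{(i)}(r)$ for free $r$, you get the exact identity $|V\setminus F|=\sum_{r\in G^{(i)}}q^{(i)}(r)$.
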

\begin{proof}
Let $F$ be the set of all marked nodes after iteration $i$. We want to upper bound the size of $V \setminus F$. 
By \Cref{lem:decomposeV}, we get a decomposition of $V(G)$ into
\[
    V(G) = \left( \bigcup_{v \in G^{(i)}}T^{(i)}(v)\right) \cup \left(\bigcup_{v \in N^{(i)}}T^{(i)}(v)\right).
\]
For simplicity, we will define the left part as $R_i$ and the right part as $C_i$, i.e.,
\[
R_i := \bigcup_{v \in G^{(i)}}T^{(i)}(v) , \; C_i := \bigcup_{v \in N^{(i)}}T^{(i)}(v).
\]
We obtain that
\[
V\setminus F = (R_i \cup C_i)\setminus F = (C_i \setminus F ) \cup (R_i \setminus F ).
\]
So we now need to bound the size of these two sets.
To bound the size of $C_i\setminus F$, we observe that by \cref{lem:TimeToFixSubtree} all nodes in $N^{(i)}$ are marked in iteration $i$, so all nodes in $C_i$ will be marked. So $C_i\setminus F = \emptyset$ and we obtain
\[
|C_i\setminus F| = 0.
\]
For the nodes in $R_i$ we will have to put a bit more effort into it. If we look at a tree $T^{(i)}(r)$ without any of the already marked nodes, we obtain exactly $H(v)$ from \Cref{def:quality}. This is because any marked node is in a descendant of some local maximum. Therefore such a node cannot be reached by a path that satisfies the requirements for $H(v)$. So we get that $|T^{(i)}(r) \setminus F| = q^{(i)}(r)$, which implies
\[
    |R_i \setminus F| = |\bigcup_{r \in G^{(i)}} \left(T^{(i)}(r) \setminus F\right)| = \sum_{r \in G^{(i)}} q^{(i)}(r).
\] 
Now by applying \Cref{lem:layersPartiallyFixed}, we get that 
\[
 |R_i \setminus F| = \sum_{r \in G^{(i)}} q^{(i)}(r) \leq \sum_{j = 0}^{\myceil{(h^{(i)}(r) + 1) / b} - 1} \lambda^{j}|S^{(i)}_j(r)|
\]
By noticing that
$S_j(r)$ is empty for $jb > h^{(i)}(r) + 1$ and defining $h = \max \{h^{(i)}(r) + 1 \mid r \in G^{(i)}\}$ we can reformulate the expression as
\[
    \sum_{r \in G^{(i)}} \sum_{j = 0}^{\myceil{(h^{(i)}(r) + 1) / b} - 1} \lambda^{j}|S_j(r)|  = \sum_{r \in G^{(i)}} \sum_{j = 0}^{\myceil{h / b} - 1} \lambda^{j}|S_j(r)|.
\]
Observe that by \cref{lem:distanceOfJustRemoved} the maximum height $h$ is upper bounded by $3ib < 3i\gamma$ which is in turn upper bounded by $ti$ where $t$ is the constant from \Cref{lem:TimeToFixSubtree} and \Cref{cor:markedNodesTerminate}.
Hence, we obtain
\[
    \sum_{r \in G^{(i)}} \sum_{j = 0}^{\myceil{h / b} - 1} \lambda^{j}|S_j(r)| \leq \sum_{r \in G^{(i)}} \sum_{j = 0}^{\myceil{ti / b} - 1} \lambda^{j}|S_j(r)|.
\]
By splitting at some $\beta$ fraction of the height, which we will fix later, we obtain
\[
    \sum_{r \in G^{(i)}} \sum_{j = 0}^{\myceil{ti/ b} - 1} \lambda^{j}|S_j(r)| \leq \sum_{r \in G^{(i)}} \left (\sum_{j = 0}^{\myfloor{\beta ti / b}} \lambda^{j}|S_j(r)| + \sum_{j = \myceil{\beta ti / b}}^{\myceil{ti / b} - 1} \lambda^{j}|S_j(r)| \right)
\]
\[
    \leq \sum_{r \in G^{(i)}} \left (\sum_{j = 0}^{\myfloor{\beta ti / b}} 1 \cdot |S_j(r)| + \sum_{j = \myceil{\beta ti/ b}}^{\myceil{ti / b} - 1} \lambda^{\myceil{\beta ti/b}}|S_j(r)| \right).
\]

Now we want to upper bound the number of nodes that are in the first sum.
To this end, we notice that in a tree of height $\beta ti$ there can be at most $\Delta^{\beta ti}$ nodes.
Furthermore, $|G^{(i)}| \leq  n\varepsilon^i$ by \Cref{lem:epsilon}, which implies
\[
    \sum_{r \in G^{(i)}} \sum_{j = 0}^{\myfloor{\beta ti/ b}} |S_j(r)| \leq \sum_{r \in G^{(i)}} \Delta^{\beta ti + b} \leq |G^{(i)}|\Delta^{\beta ti + b} 
\]
\[
    \leq n\varepsilon^i\Delta^{\beta ti + b} = \Delta^b n \cdot \text{exp}\left(i\left(\beta t \ln(\Delta) - \ln(1/\varepsilon)\right)\right)
\]
Now by choosing $\beta = \frac{\ln(1 / \varepsilon)}{2 t \ln(\Delta)}$, we obtain
\[
    \Delta^b n \cdot \text{exp}\left(i\left(\beta t \ln(\Delta) - \ln(1/\varepsilon)\right)\right) = \Delta^b n \cdot \text{exp}\left(-i \cdot \frac{1}{2} \cdot \ln(1/\varepsilon)\right) = \Delta^b n \cdot \left( \sqrt{\varepsilon} \right)^i.
\]
So the bound we get is 
\[
\sum_{r \in G^{(i)}} \sum_{j = 0}^{\myfloor{\beta ti/ b}} |S_j(r)| \leq \Delta^b n \cdot \left( \sqrt{\varepsilon} \right)^i.
\]
Now to bound the size of the second sum, we observe that
\[
\sum_{r \in G^{(i)}} \sum_{j = \myceil{\beta ti/ b}}^{\myceil{ti / b} - 1} \lambda^{\myceil{\beta ti/b}}|S_j(r)| = \lambda^{\myceil{\beta ti/b}} \cdot \sum_{r \in G^{(i)}} \sum_{j = \myceil{\beta ti/ b}}^{\myceil{ti / b} - 1} |S_j(r)|  \leq n \cdot \left( \lambda^{\beta t/b} \right)^i.
\]
Combining the two obtained inequalities yields
\[
|R_i \setminus F| \leq \sum_{r \in G^{(i)}} \left (\sum_{j = 1}^{\myfloor{\beta ti / b}} |S_j(r)| + \sum_{j = \myceil{\beta ti/ b}}^{\myceil{ti / b} - 1} \lambda^{\myceil{\beta ti/b}}|S_j(r)| \right) \leq \Delta^b n \cdot \left( \sqrt{\varepsilon} \right)^i + n \cdot \left( \lambda^{\beta t/b} \right)^i.
\]
It follows that
\[
|V\setminus F| = |(R_i \cup C_i)\setminus F| = |((C_i \setminus F ) \cup (R_i \setminus F )| \leq 0 + \Delta^b n \cdot \left( \sqrt{\varepsilon} \right)^i +  n \cdot \left( \lambda^{\beta \gamma/l} \right)^i.
\]
By choosing $\sigma = \max \{\lambda^{\beta \gamma/l}, \sqrt{\varepsilon} \}$, we obtain
\[
 |V \setminus F| \leq 2\Delta^b n \sigma^i
\]
\end{proof}

We obtain the following lemma.

\begin{lemma} \label{lem:AvgIteration}
    On average, nodes become marked in $O(1)$ iterations.
\end{lemma}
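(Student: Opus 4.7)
The plan is to turn the geometric decay of the unmarked-node count from Lemma~\ref{lem:FixedInIteration} into a bound on the average marking time via the standard identity that expresses a sum of iteration indices as a sum of tail sizes.

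Let $U_i$ denote the number of nodes that are still unmarked after iteration $i$ of \Cref{alg:Decomposition}, and let $X_i$ be the number of nodes that become marked in iteration $i$. Since every node is eventually marked (\Cref{cor:inlog} together with the fact that each assigned tree is contained in some $T^{(i)}(v)$ with $v \in N^{(i)}$ or in some subtree hanging from a free node), we have $\sum_{i \ge 1} X_i = n$. The average iteration at which a node becomes marked is therefore
\[
    \frac{1}{n} \sum_{i \ge 1} i \cdot X_i \;=\; \frac{1}{n} \sum_{k \ge 1} \sum_{i \ge k} X_i \;=\; \frac{1}{n} \sum_{k \ge 1} U_{k-1} \;=\; \frac{1}{n} \sum_{i \ge 0} U_i,
\]
using that $\sum_{i \ge k} X_i$ counts precisely the nodes that are still unmarked after iteration $k-1$.

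Next I split the sum into a bounded initial segment and a geometric tail. For $i \le 5$ I use the trivial bound $U_i \le n$, contributing $6$ to $\frac{1}{n}\sum U_i$. For $i > 5$, \Cref{lem:FixedInIteration} gives $U_i \le 2\Delta^b n \sigma^i$ for the constant $0 < \sigma < 1$ from that lemma. Hence
\[
    \frac{1}{n} \sum_{i \ge 0} U_i \;\le\; 6 \;+\; 2\Delta^b \sum_{i > 5} \sigma^i \;\le\; 6 \;+\; \frac{2\Delta^b}{1-\sigma} \;=\; O(1),
\]
where we use that $\Delta$, $b$, and $\sigma$ are all constants depending only on $\Pi$ and the maximum degree. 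This establishes that the average iteration in which a node becomes marked is $O(1)$.

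There is no real obstacle here: the only thing to double-check is that every node does eventually get marked within the $O(\log n)$ iterations guaranteed by \Cref{cor:inlog}, so that the series $\sum_i U_i$ is actually finite and the swap of summation is justified; this is immediate since once \Cref{alg:Decomposition} terminates, every node is in some $T^{(i)}(v)$ with $v$ a local maximum or in a compress set $N^{(i)}$, and \Cref{lem:TimeToFixSubtree} together with our marking convention implies such a node is marked at that iteration.
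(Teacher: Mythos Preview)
Your proof is correct and follows essentially the same approach as the paper: both bound the average marking iteration by summing the tail sizes $U_i$ (number of unmarked nodes after iteration $i$), use the trivial bound $U_i \le n$ for the first few iterations, and invoke \Cref{lem:FixedInIteration} to get a geometric tail $2\Delta^b n\sigma^i$ for $i>5$. The paper's computation is slightly terser (it does not spell out the identity $\sum_i i X_i = \sum_i U_{i-1}$ explicitly), but the argument is the same.
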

\begin{proof}
     By \Cref{lem:FixedInIteration}, we get that for each iteration $i>5$, only $2\Delta^b n \sigma^i$ nodes are not marked. The total number of iterations is therefore upper bounded by $5n$ for the initial iterations plus the sum over $2\Delta^b n \cdot \sigma^i$. We then get the average number of iterations per node by dividing by $n$. We therefore get an upper bound of
     \[
     \frac{1}{n} \left( 5n + \sum_{i=6}^\infty2\Delta^b n \cdot \sigma^i \right) \leq 5+2\Delta^b \sum_{i=1}^\infty \sigma^i = 5+ \frac{2\Delta^b}{1-\sigma} \in O(1).
     \]
     on the average number of iterations per node.
 \end{proof}

 Then using this lemma together with \Cref{cor:markedNodesTerminate} we get that an average node terminates after a constant number of rounds. However, we still have to pay for the input distance coloring which takes $O(\log^* n)$, as discussed in the beginning of the section. So by first computing this input coloring and then running the algorithm, we obtain a total node-averaged complexity of $O(\log^* n)$, proving \Cref{thm:logstaraverage}.

\section{Improved Node Averaged Complexity using randomization}\label{sec:randomConstant}
We emphasise again, that the initial coloring is only used for the compress procedure and if we are able to speed up the compress operation we will obtain a better node averaged complexity. This is enough to prove \cref{thm:RandomConst}.
We make this precise in the following and show how, using randomization, we can improve the algorithm to obtain constant node averaged complexity.\\

In the following assume $\Pi$ is some fixed LCL with worst case complexity $O(\log n)$, $f_{\Pi,\infty}$ a feasible function in the sense of \cref{def:computing-label-set}.

\begin{definition}[Compress Problem]
Assume we are given a graph $G$ with a valid partial $(\gamma, \ell, i)$-decomposition. Let $G^{(i)}$ bet the subgraph induced by nodes that do not yet have a layer assigned to them. Consider a path of degree 2 nodes $P\subset G^{(i)}$ of length $\beta$ at least $2\ell + 1$. We split this path into three parts $P=(P_l, N, P_r)$, where $P_l=(v_0, \ldots, v_j)$ and $P_r=(v_k, \ldots, v_\beta)$ for integers $j,k$ such that $|P_l|, |P_r| \in [\ell, 2\ell]$ and $N = P \setminus (P_l \cup P_r)$. Then the Compress Problem is defined as follows:
\begin{itemize}
    \item Let $e_l,e_r \in E$ be the two edges that connect $P$ to the rest of $G^{(i)}$. Then $P_l$ and $P_r$ have to output the labelsets, that $f_{\Pi,\infty}$ would have assigned to them and to $e_l,e_r$.
    \item Every node $v \in N$ chooses output labels for their incident edges, that form a locally correct solution for $\Pi$ with respect to the label sets of their lower layer neighbors.
    \item The output labels in $N$ and below are such that no matter what labels are chosen from the labelsets of $e_l,e_r$ the labeling can be extended to a valid labeling.
\end{itemize}
\end{definition}

An illustration of the setting is given in \cref{fig:compressProblem}.

\begin{figure}[h!]
    \centering
    \includegraphics[width=\textwidth]{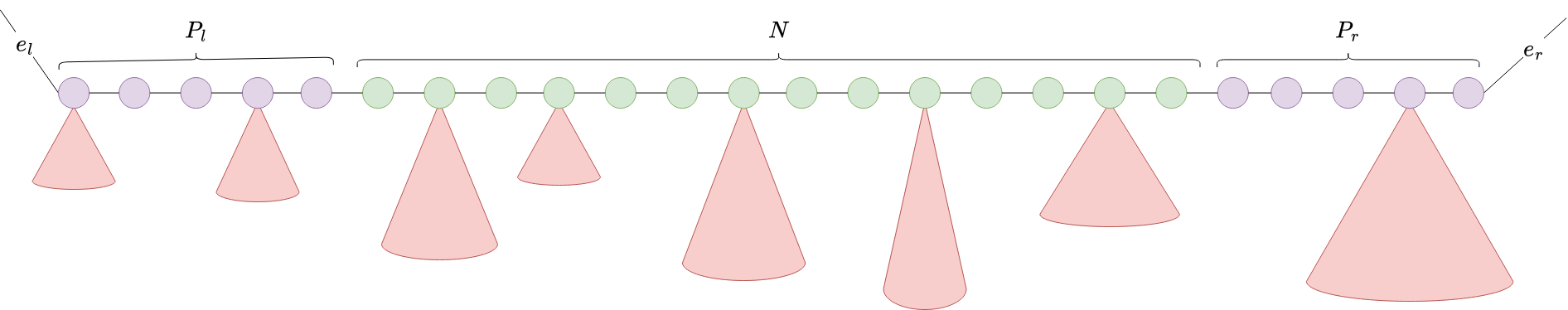}
    \caption{Illustrating the Compress Problem, the purple nodes form the two sets $P_l, P_r$ at the ends of the path. The red cones represent possible lower layer nodes attached to the nodes of $P$.}
    \label{fig:compressProblem}
\end{figure}

This is essentially exactly the problem that the \emph{Compress with Slack} procedure solves every single iteration, by applying the $f_{\Pi, \infty}$ to the compress layers it creates. However doing this requires the small input coloring and therefore $\log^*(n)$ rounds of precomputation.
Assume we are given another algorithm to solve the Compress Problem, then we could change the compress procedure as given in \cref{alg:BetterCompress}. There we manage the extra slack at the ends of the compress path and then create an instance of the Compress Problem which we can then solve using another algorithm.
\begin{algorithm2e}
\caption{Better Compress}\label{alg:BetterCompress}
\KwIn{$G=(V,E), i, \ell, f_{\Pi,\infty}$} 
$G^{(i)} \gets G \setminus (\bigcup_{j<i}(V_i^R\cup V_{j}^C))$ \Comment{Graph induced by free nodes.}\\
\For{each maximal path $P$ consisting of nodes of degree exactly $2$ in $G^{(i)}$}  
{
    \If{$|V(P)|\geq 4\ell + 9$} {
        $P' \gets$ the subpath of $P$ consisting of all nodes that have distance at least $\ell + 3$ from both endpoints of $P$ \\
        compute $P'=(P_l, N, P_r)$\\
        apply $f_{\Pi,\infty}$ to $P_l$ and $P_r$\\
        orient $P_l$ towards $e_l$ and $P_r$ towards $e_r$\\
        assign labelsets to $e_l$ and $e_r$\\
        Solve the compress problem on $P'$\\
        $N^{(i)} \gets N^{(i-1)} \cup N$ \Comment{Only for the analysis}\\
	}
}
\end{algorithm2e}

\begin{corollary}\label{cor:replaceCompress}
When we replace the \emph{Compress with Slack} procedure in the decomposition algorithm \cref{alg:Decomposition} with the \emph{Better Compress} procedure, the algorithm will still output a correct solution.
\end{corollary}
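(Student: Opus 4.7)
The plan is to reduce correctness of the modified algorithm to correctness of the original one by showing that Better Compress produces outputs consistent with a valid execution of the generic algorithm from \Cref{ssec:lcls}. The key observation is that the Compress Problem's specification is precisely what the generic algorithm would achieve for a compress path $P'$: applying $f_{\Pi,\infty}$ to $P_l$ and $P_r$ in the bottom-up phase and then selecting labels for the internal nodes in the top-down phase. Since the generic algorithm is known to be correct on any $(\gamma,\ell,L)$-decomposition for any valid choice of $f_{\Pi,k}$, it suffices to verify that the external behavior of Better Compress matches that of Compress with Slack.

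First I would observe that, structurally, Better Compress leaves the same slack nodes for subsequent rake operations as Compress with Slack does, and it designates the same endpoint subpaths $P_l, P_r$ as the constant-length ``compress layer'' on which $f_{\Pi,\infty}$ is applied. In particular, after the immediately following orienting rake, the partial decomposition produced by the modified algorithm differs from the one produced by the original algorithm only in that the middle segment $N$ is already labeled instead of being assigned to some layer $V_i^C$. By \Cref{lem:compressCorrect} applied mutatis mutandis to the modified procedure, this yields a valid partial $(\gamma,\ell,i)$-decomposition on all remaining (assigned or free) nodes outside $N$.

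Next I would argue that the labelsets produced by Better Compress on $e_l$ and $e_r$ are exactly those that the generic algorithm would have assigned via $f_{\Pi,\infty}$ applied to the compress path. This is built into the first bullet of the Compress Problem specification. Because the subsequent rounds of \Cref{alg:Decomposition} interact with the path $P'$ only through these two labelsets, every node strictly above $P'$ in the layer ordering receives the same inputs as in the unmodified algorithm. An induction along the layer order of \Cref{def:ordering} then shows that the labelsets propagated up through all higher layers, and the labels selected in the top-down phase for those same layers, coincide with a valid execution of the generic algorithm, hence satisfy $\Pi$ at every such node.

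Finally I would handle the internal labeling of $N$. The second and third bullets of the Compress Problem specification give us that (i) every node in $N$ is locally satisfied with respect to the labelsets of its lower-layer neighbors, and (ii) the partial labeling in $N$ extends to a valid labeling regardless of which labels are eventually chosen from the labelsets of $e_l$ and $e_r$. Since the generic top-down phase will in fact choose such labels for $e_l$ and $e_r$, property (ii) guarantees that the labels fixed in $N$ remain compatible with the labels chosen higher up, and property (i) ensures the remaining local constraints at $N$ are met. The main obstacle in the proof is precisely reconciling (ii) with the top-down phase: one must be careful that the ``any-choice-extends'' guarantee of the Compress Problem is compatible with the independent-class guarantee underlying $f_{\Pi,\infty}$ in \Cref{def:classes,def:computing-label-set}, so that the choice made for $e_l, e_r$ by the ancestors (which themselves depend only on the propagated labelsets) always lies in the ``extendable'' set. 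Once this is checked, combining it with the inductive argument above yields a globally valid labeling for $\Pi$ and hence correctness of the modified algorithm.
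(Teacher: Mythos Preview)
Your proposal is correct and follows essentially the same approach as the paper: both argue that Better Compress preserves the external interface (same slack, same labelsets on $e_l,e_r$ produced by $f_{\Pi,\infty}$), so correctness reduces to the generic algorithm of \cite{CP19timeHierarchy}. Your write-up is considerably more detailed than the paper's two-sentence proof, and you explicitly address a point the paper leaves implicit---namely that the ``any-choice-extends'' guarantee in the Compress Problem definition is exactly what is needed to make the internal labeling of $N$ compatible with whatever labels the top-down phase later picks for $e_l,e_r$---but the underlying argument is the same.
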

\begin{proof}
We still remove all paths of length $4\ell + 9$, have the exact same amount of slack at the ends and propagate valid labelsets on $e_l,e_r$ and $P_l,P_r$, as in the algorithm from \cref{sec:dectoave}. Note that these labelsets are valid in the sense, that they are produced by $f_{\pi, \infty}$ and we therefore get correctness from \cite{CP19timeHierarchy}.
\end{proof}

We essentially compute a different kind of $(\gamma,\ell,L)$-decomposition, where the length of the compress paths is not required to be in $[\ell,2\ell]$. Instead the paths must be at least $\ell$ long and can instead be arbitrarily long. However for our analysis to work, all nodes in $N^{(i)}$ must be able to terminate in a constant number of rounds in expectation.

So if we find a nice way to solve to compress problem fast, we can apply it to these long paths and improve on the node-averaged complexity.

\begin{lemma}\label{lem:FastCompressGivesConstant}
Consider some LCL $\Pi$ with worst case complexity $O(\log n)$ and feasible function $f_{\Pi,\infty}$. If there exists an algorithm $A$ solving the compress problem such that \emph{in expectation} nodes only use $O(1)$ rounds. Then there exists an algorithm $A'$ solving $\Pi$ with node averaged complexity $O(1)$.
\end{lemma}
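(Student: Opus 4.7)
The plan is to define $A'$ as the algorithm $\fA$ of \Cref{sec:dectoave}, modified to use the Better Compress procedure (\Cref{alg:BetterCompress}) with $A$ as its inner subroutine in place of Compress with Slack, and to omit the $O(\log^* n)$-round distance-$s$ coloring precomputation entirely. Correctness of $A'$ follows immediately from \Cref{cor:replaceCompress}, and the precomputation can be dropped because, by \Cref{lem:TimeOfIteration}, it was used only to derandomize the deterministic compress step; every other subroutine (Rake, Promote, label propagation) is already distance-coloring-free.

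First I would argue that the entire structural analysis of \Cref{sec:newdec,sec:insights} carries over to $A'$ verbatim. That analysis reasons purely about which nodes land in which (sub)layer, which edges are oriented which way, which nodes become local maxima, and how the quality of free nodes evolves under rake, compress, and promote. Better Compress removes exactly the same paths of length at least $4\ell+9$, assigns the same endpoint sub-paths $P_l,P_r$ to the compress layer, orients the same edges, and causes the same interior nodes to become marked; it differs only in \emph{how} the label sets on $P_l,P_r$ and the final labels on the interior nodes are computed. Consequently every combinatorial invariant---in particular \Cref{lem:FixedInIteration} (at most $2\Delta^b n \sigma^i$ unmarked nodes after iteration $i$) and \Cref{lem:AvgIteration} (the average over $v$ of the iteration $m(v)$ in which $v$ becomes marked is $O(1)$)---continues to hold deterministically for $A'$.

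The only ingredient that must be redone is the time bound of \Cref{lem:TimeToFixSubtree}. Let $X_v^{(j)}$ denote the number of rounds $v$ spends in iteration $j$ of $A'$. Rake and Promote each take $O(1)$ rounds deterministically, and in iteration $j$ the node $v$ lies in at most one compress path and is adjacent to at most $\Delta = O(1)$ others; by the hypothesis on $A$, every node in each such compress path finishes in expected $O(1)$ rounds, so by linearity $E[X_v^{(j)}] = O(1)$. The downward label-propagation phase forwards already-computed labels through the $O(m(v))$ layers between $v$ and the root of its marking subtree, is oblivious to the randomness of $A$, and contributes $O(m(v))$ deterministic rounds. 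By linearity of expectation,
\[
E[T_v] \;\leq\; \sum_{j=1}^{m(v)} E\!\left[X_v^{(j)}\right] + O(m(v)) \;=\; O(m(v)),
\]
and averaging over $v$ and applying \Cref{lem:AvgIteration} yields $\nodeavg(A') = \frac{1}{n}\sum_{v} E[T_v] = O(1)$, which is \Cref{thm:RandomConst}. The main subtlety to pin down carefully is that a single slow compress path cannot cascade and inflate the per-iteration cost of nodes elsewhere in the graph; this follows from the fact that any node only waits on the constantly many compress paths in its immediate neighborhood, and that the max of $O(\Delta)$ random variables each with expectation $O(1)$ still has expectation $O(1)$ when $\Delta$ is a constant.
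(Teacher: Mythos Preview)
Your proposal is correct and follows the same approach as the paper: swap in Better Compress with $A$, observe that the combinatorial lemmas (\Cref{lem:FixedInIteration}, \Cref{lem:AvgIteration}) are unaffected since they depend only on which nodes land in which layers, and recover the analogue of \Cref{lem:TimeToFixSubtree} in expectation via linearity.

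Two small points where your phrasing could be sharpened relative to the paper. First, the paper's cleaner reason why each iteration still costs $O(1)$ is that the label-sets on $e_l,e_r$ are computed \emph{deterministically} in constant time by applying $f_{\Pi,\infty}$ to the constant-length sub-paths $P_l,P_r$; hence free nodes never wait on $A$ at all and the decomposition advances at a fixed deterministic pace. Your closing max-of-$O(\Delta)$-random-variables argument is therefore unnecessary. Second, the claim that downward label propagation is ``oblivious to the randomness of $A$'' is not quite right for a node $v$ hanging in the subtree of some $u\in N^{(j)}$: such a $v$ must wait the random time $Y_u$ for $A$ to hand $u$ its label before propagation can start. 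Since $E[Y_u]=O(1)$ this does not disturb the bound $E[T_v]=O(m(v))$, but it should be acknowledged rather than elided.
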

\begin{proof}
The new algorithm $A'$ is obtained by replacing the \emph{Compress with Slack} procedure \cref{alg:Compress} with the new \emph{Better Compress} from \cref{alg:BetterCompress}.
Because of \cref{cor:replaceCompress} we get that the algorithm will be correct. Furthermore nodes outside of the compress paths don't have to wait for the compress problem to be solved completely.
We first compute $P_l$ and $P_r$ by differentiating between two cases:
\begin{itemize}
    \item \textbf{Case 1} $(\beta < 3\ell+2)$: Then both endpoints can see the entire path and so we choose $P_l=(v_0, \ldots, v_\ell)$ and $P_r=(v_{\ell+2}, \ldots, v_\beta)$. We then have $N=\{v_{\ell+1}\}$ and put $v_{\ell+1}$ in layer $V^R_{i,1}$ and the nodes in $P_l$ and $P_r$ into $V^C_{i-1}$ as usual.
    \item \textbf{Case 2} $(\beta \geq 3\ell+2)$: Then both endpoints $v_0,v_\beta$ see that the path is long enough and choose $P_l = (v_0, \ldots, v_\ell)$ and $P_r=(v_{\beta-\ell}, \ldots, v_\beta)$ \footnote{We don't actually need to differentiate between the ''left'' and ''right'' path, as both simply apply $f_{\Pi, \infty}$}. We immediately assign layers $V^C_{i-1}$ to $P_l$ and $P_r$ and propagate the labelsets up through $e_l$ and $e_r$. We get that $N = (v_{\ell+1}, \ldots, v_{\beta - \ell-1})$.Since $\beta>3\ell +2$ we get that $|N| > \ell+2$ and so by assigning $v_{\ell+1}$ and $v_{\beta-\ell-1}$ to $V^R_{i,1}$, we get that $|N\setminus\{v_{\ell+1}, v_{\beta-\ell-1}\}| > \ell$. Clearly it would be possible to separate the unassinged nodes in $N$ into paths of length $[\ell,2\ell]$, so there exists a valid solution to the compress problem $P=(P_l,N,P_r)$.
\end{itemize}
In the first case we are clearly already done, in the second case we split the path in a way, such that we obtain a valid Compress Problem. Everything up until this point takes only a constant number of rounds. We then use $A$ to solve this compress problem $P = (P_l,N,P_r)$ and have that in expectation nodes only use $O(1)$ rounds. Crucially since the labelsets on $e_l$ and $e_r$ are already known after a constant number of rounds, the decomposition algorithm does not need to wait for the compress problem to be solved. So the decomposition algorithm still performs one iteration in $O(1)$ rounds.
Now the only key difference is that in the analysis of the original decomposition algorithm we always solved the compress problem in a constant number of rounds and therefore all of the nodes in $N$ did terminate after a constant number of rounds. 
Instead $A$ only gives us the guarantee that the nodes in $N$ terminate after a constant number of rounds \emph{in expectation}.
It is however trivial to follow the previous analysis and see that by linearity of expectation in \cref{lem:TimeToFixSubtree,lem:FixedInIteration,lem:AvgIteration}, we obtain randomized $O(1)$ node averaged complexity.
\end{proof}

We now give an algorithm that solves the compress problem in a randomized way, such that in expectation nodes are done after a constant number of rounds. We don't really do more than computing a $[\ell+1,2\ell+1]$ ruling set, but to fill out the details of how we obtain valid output labels, we present the entire analysis. In \cref{alg:RandCompress} we first take care of the trivial case, where the path is very short. We then proceed to have $O(\log n)$ executions of the subroutine \emph{Elect Maximums} given in \cref{alg:Elect} that computes a $[\ell+1,2\ell+1]$ ruling set $Z$ and has nodes output valid labels for the incident edges as soon as possible.

\begin{algorithm2e}
\caption{randomized Compress}\label{alg:RandCompress}
\KwIn{$P=(P_l,N,P_r), \ell, i-1, f_{\Pi,\infty}$} 
$v_l,v_r \gets$ the nodes adjacent to $P_l,P_r$\\
Apply $f_{\Pi,\infty}$ to $P_l$ and $P_r$.\\
\If{$v_l=v_r$}{
    Then $N=\{v_l\}$\\
    Now $v_l$ has only incoming edges and can compute its maximal class $B$ and terminate.\\
    Terminate\\
}
$Z \gets \{v_l,v_r\}$ \Comment{$N$ must have size at least $\ell+2$}\\
$c \gets O(1)$ \Comment{chose $c$ for desired failure probability}\\
\For{$8\ell \cdot c \cdot \ln n$ times}{ 
     ElectMaximums($\ell, Z, f_{\Pi,\infty}$)\\
}
\end{algorithm2e}

\begin{lemma}\label{lem:ProbJoinZ}
In each execution of \emph{Elect Maximums} \cref{alg:Elect}, each active node joins $Z$ with probability at least $\frac{1}{8\ell}$.
\end{lemma}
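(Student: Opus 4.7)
My plan is to exploit what the target constant $\tfrac{1}{8\ell}$ already suggests about Elect Maximums: each currently active node marks itself independently with some small probability $p = \Theta(1/\ell)$ (most naturally $p = \tfrac{1}{4\ell}$), and a marked node joins $Z$ precisely when no other marked node lies within distance $\ell$ of it along the remaining active subpath. Such a rule produces a set whose members are pairwise at distance $> \ell$, which together with iterating $O(\log n)$ times until no long gaps remain yields the $[\ell+1, 2\ell+1]$-ruling set that \Cref{alg:RandCompress} requires. Under this description, the lemma reduces to a one-line Bernoulli-style calculation.

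I would fix an arbitrary active node $v$ and decompose the event ``$v$ joins $Z$'' into the intersection of (i) $v$ marks itself in this round, and (ii) no other active node within distance $\ell$ of $v$ marks itself. Since the subgraph of active nodes is a subpath, $v$ has at most $2\ell$ other nodes within distance $\ell$ of it; and since the coin flips assigned to different nodes are independent, the two events are independent. Hence the probability that $v$ joins $Z$ is at least $p\,(1-p)^{2\ell}$.

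To conclude, I would lower bound $(1-p)^{2\ell}$ using Bernoulli's inequality: with $p = \tfrac{1}{4\ell}$ one has $(1-\tfrac{1}{4\ell})^{2\ell} \geq 1 - 2\ell\cdot\tfrac{1}{4\ell} = \tfrac{1}{2}$. Multiplying the two factors gives the promised bound $p\cdot\tfrac12 = \tfrac{1}{8\ell}$.

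The only point that I expect to require care is making sure the set of ``active'' nodes used in the bound is consistent with the algorithm: nodes already in $Z$, or nodes whose output has been fixed because they sit between two close members of $Z$, should be excluded from the coin-flipping step, and the anchors $v_l, v_r$ placed into $Z$ by \Cref{alg:RandCompress} should be treated as already present so that a still-active $v$ always has at most $2\ell$ competitors in its $\ell$-hop window. Once this bookkeeping is in place, the calculation above is immediate and independent of which round of Elect Maximums we are in, giving the per-round bound uniformly for every active node.
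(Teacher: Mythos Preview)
Your approach matches the paper's exactly in structure: fix an active node, factor the event as ``becomes a candidate'' and ``no other candidate within distance $\ell$'', use that there are at most $2\ell$ competitors on a path, and multiply. The only discrepancy is the value of $p$: in \Cref{alg:Elect} the candidate probability is $p=\tfrac{1}{2\ell}$, not $\tfrac{1}{4\ell}$. With $p=\tfrac{1}{2\ell}$ your plain Bernoulli bound $(1-p)^{2\ell}\ge 1-2\ell p$ collapses to $0$, so the paper instead applies Bernoulli to the $\ell$-th power, $(1-\tfrac{1}{2\ell})^{\ell}\ge \tfrac12$, and squares to get $(1-\tfrac{1}{2\ell})^{2\ell}\ge \tfrac14$, yielding $\tfrac{1}{2\ell}\cdot\tfrac14=\tfrac{1}{8\ell}$. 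Once you adjust for the actual $p$ and use this slightly sharper estimate, your argument is the paper's argument.
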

\begin{proof}
Consider some active node $u$. The decision whether or not $u$ will join $Z$ depends only on its random choice to become a candidate and its $\ell$ hop neighborhood. Let $C= \{v \in N \mid dist(u,v) \leq \ell \text{ and $v$ is active}\}$ be all of the nodes that are close enough to influence $u$'s decision and that will also potentially become a candidate. Then the probability that $u$ joins $Z$ is:
\begin{align*}
Pr[u \text{ joins } Z] &= Pr[u \text{ is a candidate}] \prod_{v \in C} Pr[v \text{ is no candidate}]= \frac{1}{2\ell} \cdot (1- \frac{1}{2\ell})^{|C|} \\
    &\geq \frac{1}{2\ell} \cdot (1- \frac{1}{2\ell})^{2\ell} \geq \frac{1}{2\ell} \cdot (1-\frac{1}{2})^2 \geq \frac{1}{8\ell}
\end{align*}
using that $|C| \leq 2\ell$ and $(1+\frac{x}{m})^m \geq 1+x$ when $m \geq 1, |x| \leq m$ for $m=\ell, x = -\frac{1}{2}$
\end{proof}

\begin{lemma}\label{lem:ElectRuntime}
One execution of \emph{Elect Maximums} \cref{alg:Elect} needs at most $6\ell$ rounds.
\end{lemma}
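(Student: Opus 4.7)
The plan is to walk through one execution of \emph{Elect Maximums} round by round and add up the required number of communication rounds.

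First, each currently active node (every node of $N \setminus Z$) will independently decide to become a \emph{candidate} with probability $1/(2\ell)$. This coin flip is entirely local and needs zero rounds of communication. Next, each candidate must learn whether it is the unique candidate in its $\ell$-hop neighborhood along the compress path. Since the path has maximum degree $2$, this neighborhood spans at most $2\ell+1$ nodes on a line, so every candidate can discover its competitors by forwarding candidacy information for $\ell$ rounds. A candidate that sees no other candidate within distance $\ell$ is then added to $Z$.

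Each newly added node $z \in Z$ will locally compute label-sets for its incident edges via $f_{\Pi,\infty}$ and announce these label-sets to its two path-neighbors, which takes $O(1)$ rounds. Finally, for every pair of consecutive nodes in $Z$ whose in-between subpath of $N$ has length at most $2\ell+1$ and contains no further $Z$-members, the two endpoints can compute compatible labels and propagate outputs through the subpath, letting the interior nodes terminate. Since such a subpath has at most $2\ell$ internal edges and labels propagate from both ends, this propagation finishes in at most $2\ell$ rounds.

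Summing the three phases yields $\ell + O(1) + 2\ell \le 6\ell$ rounds, as claimed. The main subtlety is ensuring that the label-propagation phase can always complete within a single call, which is what limits the sub-paths that can be finalized at the end of a given call to those whose endpoints both lie in $Z$ and are at distance at most $2\ell+1$. Any sub-path that is still too long simply waits for a future call of \emph{Elect Maximums} to subdivide it further, and the $O(\log n)$ repetitions in \Cref{alg:RandCompress} guarantee that every sub-path is eventually split down to a finalizable size with high probability, as will be argued in the subsequent lemmas.
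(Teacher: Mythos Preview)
Your final inequality $\ell+O(1)+2\ell\le 6\ell$ lands inside the target, but two of the three phase counts do not match what \Cref{alg:Elect} actually does, so the argument as written does not establish the bound.

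For the candidate loop, you equate ``active'' with ``in $N\setminus Z$'' and conclude that $\ell$ rounds suffice. In the algorithm, however, a node is active only if it additionally has no $Z$-member within distance $\ell$; being a \emph{candidate} therefore already depends on an $\ell$-hop view. For $u$ to decide whether it is the \emph{unique} candidate in its $\ell$-neighborhood it must know, for every $v$ at distance $\le\ell$, whether $v$ is active---and that in turn depends on $v$'s own $\ell$-neighborhood. Hence this loop needs a $2\ell$-hop view, not $\ell$.

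The larger gap is the second loop. A node $z\in Z$ is marked \emph{done} only after it has located two further $Z$-members $z_1,z_2$ at distance at most $2\ell+1$ and applied $f_{\Pi,\infty}$ to the full paths $P_1,P_2$ between them (each of length up to $2\ell$). Because $z_1,z_2$ may have joined $Z$ in this very call, determining their $Z$-status already costs $2\ell$ rounds, so $z$ needs a $(2\ell{+}1)+2\ell=4\ell{+}1$ view. Your description---$z$ ``computes label-sets for its incident edges and announces them to its two path-neighbors'' in $O(1)$ rounds---does not correspond to this step at all.

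With the corrected counts the three loops require views of $2\ell$, $4\ell{+}1$, and (adding the $\le 2\ell$ distance from an interior node to a done endpoint) roughly $6\ell$, which is precisely the paper's accounting. Your proof reaches the right conclusion only because the two undercounts happen to leave enough slack.
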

\begin{proof}
We will argue that none of the steps requires more information than the $6\ell$ neighborhood.
\begin{itemize}
    \item For loop in Line~\ref{alg:line:CandidateLoop}: Nodes can decide to become active seeing only their $\ell$ hop neighborhood. To determine whether or not $u$ does not have other candidates in its $\ell$ hop neighborhood, it needs to know the random bits and the $\ell$ hop neighborhoods of all of its $\ell$ hop neighbors. So its $2\ell$ neighborhood in total.
    \item For loop in Line~\ref{alg:line:MaximumsTerminateLoop}: For all nodes at distance at most $2\ell+1$, $z$ needs to know whether or not they are in $Z$. Since these nodes might have only joined in this execution of Elect Maximums, we need to additionally see the $2\ell$ neighborhood of each of these nodes, as discussed in the previous point. So the $4\ell+1$ neighborhood of $z$ in total.
    \item For loop in Line~\ref{alg:line:PathsTerminateLoop}: The path $P$ has length at most $2\ell$, since otherwise $z_1,z_2$ would not be done yet. As discussed in the previous point, nodes in $Z$ know whether or not they are done after $4\ell+1$ rounds. As a result nodes in $P$ will know if they are between two nodes that are done after at most $6\ell+1$ rounds.
\end{itemize}
\end{proof}

\begin{algorithm2e}
\caption{Elect Maximums}\label{alg:Elect}
\KwIn{$\ell, Z, f_{\Pi,\infty}$} 
$p = \frac{1}{2\ell}$\\
\For{every node $u \in N \setminus Z$}{\label{alg:line:CandidateLoop}
    \If{$u$ has a neighbor in $Z$ at distance $ \leq \ell$}{
        set $u$ to be inactive\\
    }\Else{
        $u$ becomes a candidate with probability $p = \frac{1}{2\ell}$ \Comment{so $u$ is active}\\
    }
    \If{$u$ is the only candidate in its $\ell$ neighborhood}{
        $u$ joins $Z$\\
    }
}
\For{every node $z \in Z$}{\label{alg:line:MaximumsTerminateLoop}
    \If{there are two other nodes $z_1,z_2 \in Z$ at distance at most $2\ell+1$}{
        $P_1 \gets $ the path between $z_1$ and $z$  \\
        $P_2 \gets $ the path between $z_2$ and $z$  \\
        apply $f_{\Pi,\infty}$ to $P_1$ and $P_2$ \label{alg:line:applyF}\\
        $z$ now has only incoming edges and can compute its maximal class $B_z$ and assign labels to all incoming edges\\
        mark $z$ as done \label{alg:line:done1}\\
     }
}
\For{every path $P$ between two nodes $z_1,z_2 \in Z$ that are done}{\label{alg:line:PathsTerminateLoop}
    Each node $u \in P$ was already assigned a label set in Line~\ref{alg:line:applyF} \\
    Use the labelsets that were assigned by $z_1,z_2$ to chose labels for all edges in $P$\\
    Propagate the choices of labels to all incoming edges of nodes in $P$\\
    mark all nodes in $P$ as done \label{alg:line:done2}\\
}
\end{algorithm2e}

\begin{lemma}\label{lem:RulingSet}
At the end of the execution of \emph{randomized Compress} \cref{alg:RandCompress} the nodes in $N\setminus Z$ form paths of length in $[\ell, 2\ell]$ with high probability.
\end{lemma}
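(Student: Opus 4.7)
My plan is to prove the two-sided bound on the length of subpaths in $N \setminus Z$ separately: the lower bound $\geq \ell$ follows directly from the construction of Elect Maximums, while the upper bound $\leq 2\ell$ requires a probabilistic argument combined with a union bound.

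For the lower bound, note that a node $u$ only joins $Z$ in an execution of Elect Maximums if (i) no node in $Z$ lies within distance $\ell$ of $u$, and (ii) $u$ is the unique candidate in its $\ell$-neighborhood. Condition (i) forces every newly added $Z$-node to be at distance $> \ell$ from all $Z$-nodes inherited from previous iterations (including the initial $v_l,v_r$), while condition (ii) forces any two nodes joining simultaneously to also be at distance $> \ell$. Hence, at termination, every pair of distinct $Z$-nodes is at distance $\geq \ell + 1$, so each maximal subpath of $N \setminus Z$ between two consecutive $Z$-nodes contains at least $\ell$ nodes.

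For the upper bound, suppose towards a contradiction that at the end of the algorithm there are consecutive $Z$-nodes $z_1, z_2$ at distance $d > 2\ell + 1$. Then the node $m$ lying at distance $\ell + 1$ from $z_1$ on the path between them satisfies $\dist(m,z_2) = d - (\ell+1) > \ell$, and since $z_1,z_2$ are the closest final $Z$-nodes to $m$ and $Z$ grows monotonically, $m$ had distance $> \ell$ to every $Z$-node present in every iteration. Thus $m$ was active at the start of every call to Elect Maximums. By \Cref{lem:ProbJoinZ}, conditional on the history, $m$ joins $Z$ in each such iteration with probability at least $1/(8\ell)$, and the random coins used in distinct iterations are independent. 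Conditioning iteratively on the event that $m$ was active and did not join $Z$ in all previous iterations and multiplying the per-iteration bounds yields
\[
\Pr\!\left[m \text{ never joins } Z\right] \;\leq\; \left(1 - \tfrac{1}{8\ell}\right)^{8\ell c \ln n} \;\leq\; e^{-c \ln n} \;=\; n^{-c}.
\]

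Applying a union bound over the at most $n$ possible choices of such an $m$ bounds the total failure probability by $n^{-c+1}$; picking the constant $c$ in \Cref{alg:RandCompress} large enough (e.g.\ $c \geq 2$) yields the claimed high-probability statement. The only subtle point is the conditional-independence step: although the set of active nodes in iteration $i$ depends on the entire history of the previous iterations, the argument above only requires that \emph{whenever} $m$ happens to be active, the fresh randomness of that iteration drives $m$ into $Z$ with probability at least $1/(8\ell)$, which is precisely the content of \Cref{lem:ProbJoinZ}.
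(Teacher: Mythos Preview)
Your proof is correct and follows essentially the same approach as the paper's: the lower bound comes from the construction (active nodes are $>\ell$ from existing $Z$-nodes, and simultaneous joiners are $>\ell$ apart via the unique-candidate rule), and the upper bound comes from observing that a too-long gap forces some node to have been active in every iteration, then applying \Cref{lem:ProbJoinZ} across the $8\ell c\ln n$ iterations and a union bound. Your explicit choice of the witness node $m$ at distance $\ell+1$ from $z_1$ and your closing remark about the conditional-independence step are slightly more careful than the paper's presentation, but the argument is the same.
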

\begin{proof}
Whenever we add a node $u$ into $Z$, then this node must be active and therefore at distance at least $\ell+1$ from any node already in $Z$. Furthermore nodes may only join $Z$, if no other node at distance $\ell$ is joining the set at the same time, that is are not also a candidate. As a result no two nodes at distance less than $\ell+1$ will join $Z$. So paths between two nodes in $Z$ have length at least $\ell$. Now for a path to be longer than $2\ell$, there must be at least one node $u$ in the middle that has distance $>\ell$ from all nodes in $Z$. But then this means, that $u$ was active in all of the executions of \emph{Elect Maximum}. So we get:
\begin{align*}
Pr[u \notin Z \mid \forall z\in Z : dist(u,z)>\ell] &= \prod_{1\leq i \leq 4\ell c \ln n} Pr[u \text{ doesn't join $Z$ in iteration } i \mid u \text{ is active}]
\end{align*}
Then by \cref{lem:ProbJoinZ} the probability that $u$ did not join $Z$ itself in one of the iterations is $1-\frac{1}{8\ell}$. So we get:
\[
 Pr[u \notin Z \mid \forall z\in Z : dist(u,z)>\ell] \leq (1- \frac{1}{8\ell})^{8\ell c \ln n} \leq e^{-\frac{8\ell c \ln n}{8\ell}} = e^{-c\ln n} = \frac{1}{n^c}
\]
Where we used that $(1+x) \leq e^x$. The statement now follows by a simple union bound over all nodes.
\end{proof}

Now we can use this to prove that we correctly solve the compress problem.

\begin{lemma}\label{lem:RandCompressCorrect}
The algorithm \emph{randomized Compress} \cref{alg:RandCompress} computes a valid solution to the Compress Problem.
\end{lemma}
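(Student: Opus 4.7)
The plan is to verify the two correctness requirements of the Compress Problem: that the label-sets assigned to $e_l, e_r$ (and the edges inside $P_l, P_r$) agree with those produced by $f_{\Pi,\infty}$, and that the output labels chosen inside $N$ (plus the labels propagated to incoming edges of $N$) can be completed to a valid labeling for any choice made later at $e_l, e_r$. In spirit, I want to argue that running \Cref{alg:RandCompress} on $P$ produces exactly the same kind of certified partial solution that one iteration of the generic compress step from \Cref{ssec:lcls} would produce on the same path, so that correctness reduces to the correctness of that generic step together with the structural guarantee of \Cref{lem:RulingSet}.

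First I would dispose of the degenerate case $v_l = v_r$, where $N$ consists of a single node whose every incident edge already has a label-set (from $f_{\Pi,\infty}$ applied to $P_l, P_r$ and from the subtrees below). Its maximal class is nonempty by the same argument used for rake nodes with no outgoing edge in the generic algorithm, so picking any feasible labeling from that class assigns valid labels to all incident edges and is trivially compatible with $e_l, e_r$ (which coincide with edges whose label-sets have already been respected).

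For the general case I would use \Cref{lem:RulingSet}: with high probability the maximal subpaths of $N$ induced by $N \setminus Z$ have length in $[\ell, 2\ell]$, and together with the $Z$-nodes $v_l, v_r$ inserted at the ends this is exactly the structural shape of a compress layer processed by the generic algorithm. Every $z \in Z$ eventually becomes done: once its two nearest $Z$-neighbors are within distance $2\ell+1$, the two connecting subpaths are ``short paths'' in the sense of \Cref{def:computing-label-set}, $f_{\Pi,\infty}$ applied to each of them produces independent classes and therefore label-sets for the two edges incident to $z$, and since $z$ has no outgoing edge inside $N$ its maximal class (restricted by the label-sets of all incident edges) is nonempty. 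Committing to any feasible labeling of $z$ and then using the independent-class property to pick labels along each adjacent short path, and finally using the label-sets previously computed for the edges going down into the subtrees, produces a locally valid labeling of all edges of these two short paths together with all incoming edges of the nodes on them. This is exactly the ``top-down'' phase of the generic algorithm applied to a single compress layer, so \Cref{ssec:lcls} gives correctness without any additional argument.

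The remaining point, which is the main obstacle to make fully precise, concerns the two boundary nodes $v_l, v_r \in Z$: their outgoing edges $e_l, e_r$ had label-sets fixed at the beginning of \Cref{alg:RandCompress} by applying $f_{\Pi,\infty}$ to $P_l, P_r$, and these label-sets must be respected by whatever label $v_l, v_r$ commit to. I would argue that when $v_l$ (resp.\ $v_r$) becomes done, its maximal class is taken with respect to the previously fixed label-set on $e_l$ (resp.\ $e_r$) viewed as an incoming label-set; the resulting class is nonempty and ``independent enough'' by the defining property of the $f_{\Pi,\infty}$-induced label-set on $e_l$ combined with the independent-class property of $f_{\Pi,\infty}$ applied inside $N$, which is precisely the same composition of independent classes across adjacent layers used in \Cref{ssec:lcls}. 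Thus any later choice of a label from the label-set on $e_l$ (or $e_r$) can be extended through $P_l$ (resp.\ $P_r$) and, by the independent-class property at $v_l$ (resp.\ $v_r$), is compatible with the commitments already made inside $N$, giving the required extendibility and completing the proof. The correctness thus holds with high probability, conditioned on the high-probability event of \Cref{lem:RulingSet}.
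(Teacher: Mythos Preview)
Your proposal is correct and follows essentially the same approach as the paper: handle the degenerate case $v_l=v_r$ separately, then use \Cref{lem:RulingSet} to argue that the computed $Z$ and the subpaths of $N\setminus Z$ give exactly the layer structure of a valid partial $(\gamma,\ell,i)$-decomposition, so that correctness of the label choices inside $N$ and of the label-sets on $e_l,e_r$ reduces to the correctness of the generic compress step from \Cref{ssec:lcls}. The paper's own proof is terser and simply states that one obtains a valid $(\gamma,\ell,i)$-decomposition (analogous to \Cref{lem:compressCorrect}) and defers correctness to \cite{CP19timeHierarchy}; your more explicit treatment of the boundary nodes $v_l,v_r$ and their interface with the precomputed label-sets on $e_l,e_r$ just unpacks what that deferral is hiding.
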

\begin{proof}
We first take care of the initial case in which $v_l=v_r$, but then $v_l$ is between the two compress paths $P_l$ and $P_r$. So $v_l$ has only incoming edges and is therefore a local maximum. We therefore extend the existing partial $(\gamma, \ell, i)$-decomposition with $P_l \subset V^C_{i-1}$, $P_r \subset V^C_{i-1}$and $v_l \in V^R_{i,1}$. \\
In the second case we compute the subset $Z$, such that paths between nodes of $Z$ have length in $[\ell,2\ell]$ by \cref{lem:RulingSet}. We again extend the partial $(\gamma,\ell, i)$-decomposition, by putting all of the nodes in $Z$ into layer $V^R_{i,1}$ and the nodes in the paths in layer $V^C_{i-1}$. \\
In both cases we produce a valid $(\gamma, \ell, i)$-decomposition, analogous to the argumentation of \cref{lem:compressCorrect}. We then pick labels according to the labelsets given by the feasible function and maximal classes. We output the correct labelsets for both $P_l,P_r$ and $e_l,e_r$. Furthermore the labels that are chosen in $N$ are due to a valid $(\gamma,\ell,i)$-decomposition and using the feasible function $f_{\Pi,\infty}$, so correctness follows from the results of \cite{CP19timeHierarchy}, for details see \cref{ssec:lcls}.
\end{proof}

Finally, we prove that nodes are done and can terminate after a constant number of rounds in expectation, thus proving constant node averaged complexity for all LCL's with worst case complexity $O(\log n)$ by \cref{lem:FastCompressGivesConstant}. We first start with the probability that nodes are \emph{done}, meaning they have picked labels for their incident edges and can terminate.

\begin{lemma}\label{lem:ProbZDone}
Any node $z$ that was initially in $Z$, or that joined $Z$ after the Loop in Line~\ref{alg:line:CandidateLoop}, is marked as \emph{done} with constant probability.
\end{lemma}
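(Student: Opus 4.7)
The plan is to prove that for any node $z \in Z$ at the start of the mark-done loop (Line~\ref{alg:line:MaximumsTerminateLoop}) of an execution of Elect Maximums, the probability that $z$ satisfies the condition there --- having two other $Z$-nodes within distance $2\ell+1$ --- is bounded below by a constant. The key idea is to exhibit two specific potential candidates $u_L$ and $u_R$ at distance exactly $\ell+1$ from $z$ on the $v_l$ side and $v_r$ side of $z$ along $N$, respectively, and argue that with constant probability both of them, or cheap substitutes, are $Z$-nodes after Line~\ref{alg:line:CandidateLoop}.

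First I would handle the degenerate cases in which one or both of $u_L, u_R$ does not exist in $N$. If $u_L$ is missing then $z$ lies at distance at most $\ell$ from $v_l$, and since $v_l\in Z$ from the start, $v_l$ already supplies one of the two required nearby $Z$-nodes for free; the symmetric statement holds for $u_R$ and $v_r$. A similar reduction applies if $u_L$ (respectively $u_R$) exists but is inactive in this execution of Elect Maximums: inactivity means some $Z$-node sits within distance $\ell$ of $u_L$, hence within distance $2\ell+1$ of $z$, again supplying one of the two required $Z$-nodes for free. In any of these reduced cases it then suffices to show that the remaining side contributes a $Z$-node with constant probability, which follows directly from \cref{lem:ProbJoinZ} whenever the candidate on that side is active (and if not, the activity-reduction applies to it as well, and we are already done).

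In the generic case both $u_L$ and $u_R$ exist in $N$ and are active. Because $z\in Z$ sits at distance $\ell+1>\ell$ from each of them, $z$'s own presence does not force either candidate inactive. The two candidates are at distance $2\ell+2$ from each other, so their $\ell$-neighborhoods --- on which the ``only candidate in its $\ell$-neighborhood'' decision depends --- are disjoint, and therefore the events ``$u_L$ joins $Z$'' and ``$u_R$ joins $Z$'' depend on disjoint sets of random coins and are independent. By \cref{lem:ProbJoinZ} each event occurs with probability at least $1/(8\ell)$, so with probability at least $1/(64\ell^2)=\Omega(1)$ both occur, in which case $z$ has two $Z$-nodes (namely $u_L$ and $u_R$) within distance $\ell+1\le 2\ell+1$ and is marked done.

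The main obstacle I expect is setting up the case analysis cleanly so that pre-existing $Z$-nodes added in earlier executions, and in particular the fact that $z$ itself may have joined $Z$ during the current Line~\ref{alg:line:CandidateLoop}, do not invalidate the independence argument. Fortunately, because $u_L$ and $u_R$ both lie strictly beyond the distance-$\ell$ exclusion radius of $z$, any $Z$-node that could influence their activity already lies within distance $2\ell+1$ of $z$ and can simply be reused as one of the two required nearby $Z$-nodes. This is exactly what makes the boundary reductions collapse to an application of \cref{lem:ProbJoinZ} on (at most) one side, yielding the constant lower bound on the probability uniformly across all cases.
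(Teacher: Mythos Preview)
Your proposal is correct and follows essentially the same approach as the paper: place the two witnesses $u_L,u_R$ at distance exactly $\ell+1$ on either side of $z$, reduce the boundary/inactive cases to a pre-existing $Z$-node within $2\ell+1$, and in the generic case apply \cref{lem:ProbJoinZ} to each witness, using that their $\ell$-neighbourhoods are disjoint to get independence and hence probability at least $1/(8\ell)^2$. Your treatment of the boundary cases (where $u_L$ or $u_R$ falls outside $N$, or is inactive) is in fact more explicit than the paper's.

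One small point worth tightening: when $z$ itself joined $Z$ during the current call, the event you are conditioning on fixes the coin flips of the nodes at distance $\le\ell$ from $z$, and these nodes lie in the $\ell$-neighbourhoods of both $u_L$ and $u_R$. Your resolution paragraph talks about activity, but the actual worry is this coin-flip conditioning. It is harmless --- the conditioning forces those overlapping nodes to be non-candidates, which can only raise the probability that $u_L$ (resp.\ $u_R$) is the unique candidate in its $\ell$-ball, and the remaining coin flips (at distances $\ell+1,\dots,2\ell+1$ on each side) are still disjoint and independent. The paper handles this the same way and records in a footnote that the conditional bound is in fact $1/(4\ell)$ rather than $1/(8\ell)$.
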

\begin{proof}
Consider the $2\ell+1$ neighborhood of $z$ and let $A=(v^{left}_{2\ell+1}, \ldots, v^{left}_{1}, z, v^{right}_{1}, \ldots, v^{right}_{2\ell+1})$ be the path around $z$. The distinction between left and right is done arbitrarily based on the two outgoing edges of $z$ and just for the analysis. 
Because of \cref{lem:RulingSet} the nodes at distance at most $\ell$ of $z$ cannot be in $Z$ themselves, so $v^{left}_\ell, \ldots, v^{left}_1, v^{right}_1, \ldots, v^{right}_\ell$ are not in $Z$ and are also not active during this execution of \emph{Elect Maximums}. We will show that on both sides there is a constant probability of a node being in $Z$ and then argue that both sides are independent of one another. So lets for now just focus on the left side.
If any of the nodes in $v^{left}_{2\ell+1}, \ldots, v^{left}_{\ell+1}$(respectively $v^{right}_{\ell+1}, \ldots, v^{right}_{2\ell+1}$) were in $Z$ initially, then we are done. If this is not the case, than this means that $v^{left}_{\ell+1}$(respectively $v^{right}_{\ell+1}$) was active at the beginning of the execution. So by \cref{lem:ProbJoinZ} $v^{left}_{\ell+1}$ joined $Z$ with probability $\frac{1}{8\ell}$\footnote{In fact even with probability $\frac{1}{4\ell}$, by redoing the analysis of \cref{lem:ProbJoinZ} but conditioned on $v^{left}_1, v^{right}_1, \ldots, v^{right}_\ell$ not joining.} Notice also, that this is independent of the right side, as  $v^{left}_{\ell+1}$ and  $v^{right}_{\ell+1}$ are at least $2\ell$ apart and nodes decide to join $Z$ only considering their $2\ell$ neighborhood. So we get:
\[
Pr[z \text{ is done}\mid z \in Z] \geq Pr[v^{left}_{\ell+1} \text{ joins } Z] \cdot Pr[v^{right}_{\ell+1} \text{ joins } Z] \geq \frac{1}{(8\ell)^2}
\]
We also note that more active nodes in $A$ would only result in a higher probability.
\end{proof}

Now we use this to prove that also the nodes which do not join $Z$ are done quickly and that as a result nodes terminate after a constant number of iterations in expectation. This together with \cref{lem:FastCompressGivesConstant} proves \cref{thm:RandomConst}.
\begin{lemma}\label{lem:constantExpectation}
In every execution of \emph{Elect Maximums} every node has a constant probability of being done. Furthermore every node can terminate after a constant number of rounds in expectation when executing \emph{randomized Compress} \cref{alg:RandCompress}.
\end{lemma}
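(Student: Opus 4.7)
The plan is to build the argument in three stages: a per-iteration constant success probability, a geometric bound on the number of iterations needed, and finally a translation back to node-averaged complexity through \cref{lem:FastCompressGivesConstant}.

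First, I would prove the per-iteration claim: in a single call to \emph{Elect Maximums}, every not-yet-done node $u$ becomes done with probability at least some constant $p^*>0$. The crucial structural observation is that whether $u$ becomes done depends only on the random bits of nodes in the $O(\ell)$-hop neighborhood of $u$ along the path (since being done propagates at most $2\ell+1$ steps in a single execution, as used in \cref{lem:ElectRuntime}). Since $\ell=O(1)$, there are only constantly many random bits involved, so exhibiting a single favorable outcome automatically yields a constant probability bound. The favorable pattern I would describe explicitly, splitting on two cases. If $u\in Z$ already, then \cref{lem:ProbZDone} directly provides the constant probability. If $u\notin Z$, I pick a symmetric template: five ``slots'' at positions $u,\; u\pm(\ell+1),\; u\pm(2\ell+2)$ along the path, and consider the event that exactly the outer four of those slots become candidates while every other node within distance $\ell$ of them does not. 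This event has probability at least $(1/(2\ell))^4\,(1-1/(2\ell))^{O(\ell)}=\Theta(1)$. Under it, all four outer slots join $Z$ (they are pairwise at distance $\geq \ell+1$, so not blocked by each other), the two slots at distance $\ell+1$ from $u$ each acquire two $Z$-companions at distance $\leq 2\ell+1$ and therefore get marked done, and finally $u$ lies on the length-$(\ell+1)\leq 2\ell$ path between them and is marked done in the third loop. Edge cases (when $u$ is close to $v_l$ or $v_r$, or when additional $Z$-members already exist nearby) can only make things easier, since extra $Z$-nodes provide more ways for the two done witnesses to appear; I would handle them by symmetric templates on the available side together with the existing endpoint in $Z$.

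Second, combining constant per-iteration success with the iteration-cost bound of \cref{lem:ElectRuntime}, the number of executions of \emph{Elect Maximums} until $u$ becomes done is stochastically dominated by a geometric random variable with parameter $p^*$, and therefore has expectation at most $1/p^*=O(1)$. Since each execution costs $6\ell=O(1)$ rounds, by linearity of expectation the expected total number of rounds $u$ spends inside \emph{randomized Compress} is $O(1)$. The surrounding constant-round setup in \cref{alg:RandCompress} (applying $f_{\Pi,\infty}$ to $P_l, P_r$, handling the $v_l=v_r$ case) adds only a further $O(1)$ term.

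Finally, plugging this expected termination bound into \cref{lem:FastCompressGivesConstant} yields an algorithm solving $\Pi$ with node-averaged complexity $O(1)$, completing the proof of \cref{thm:RandomConst}. The main obstacle I anticipate is the careful case analysis for the favorable pattern at the boundary of $N$ (near $v_l, v_r$), where the symmetric five-slot template does not fit on one side; here I would exploit the fact that $v_l$ (resp.\ $v_r$) is permanently in $Z$ and substitute it for the missing outer slots, verifying that the resulting configuration still makes $u$ lie on a path of length $\leq 2\ell$ between two done $Z$-nodes.
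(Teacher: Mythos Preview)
Your overall plan---constant per-iteration success probability, then a geometric tail combined with \cref{lem:ElectRuntime}---matches the paper's, and your observation that done-ness depends only on $O(\ell)$ random bits is correct and is the essential reason any such argument works.

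The specific five-slot template, however, is broken by an off-by-one. You need the inner slots $u\pm(\ell+1)$ to become done, which requires each to have $Z$-companions within distance $2\ell+1$ on \emph{both} sides; but their mutual distance is $2(\ell+1)=2\ell+2>2\ell+1$, so each sees only one such companion (the outer slot on its own side). Thus neither inner slot is marked done in the second loop of \emph{Elect Maximums}, and $u$ is not marked done in the third. Shrinking the inner spacing does not straightforwardly rescue the construction either: the inner slots may then be inactive (a pre-existing $Z$-node at distance in $(\ell,2\ell]$ from $u$ deactivates them without deactivating $u$), a scenario your two-way split ($u\in Z$ versus $u\notin Z$) does not distinguish.

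The paper instead uses a three-way split---$v$ active, $v\in Z$, $v$ inactive---and reduces each case to \cref{lem:ProbJoinZ} and \cref{lem:ProbZDone} rather than exhibiting a single global template. In the inactive case it exploits the guaranteed nearby $Z$-node $z_r$ on one side and then argues, on the opposite side, that either a $Z$-node already sits in the appropriate window, or the first node outside $z_r$'s reach is active and joins $Z$ with constant probability; in both sub-cases the flanking $Z$-nodes are done by \cref{lem:ProbZDone} and $v$ lies between them. This is exactly the asymmetric reasoning you flag for boundary cases, but it is needed in the interior too, whenever $v$ happens to be inactive.
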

\begin{proof}
There are three different possibilities for every node $v$:
\begin{itemize}
    \item \textbf{Node $v$ is active:} Then by \cref{lem:ProbJoinZ,lem:ProbZDone} we get:
    \begin{align*}
        Pr[v \text{ is done}] \geq Pr[v \text{ joins } Z] Pr[v \text{ is done} \mid v \in Z] \geq \frac{1}{(8\ell)^3}
    \end{align*}
    \item \textbf{Node $v$ is already in $Z$:} Then by \cref{lem:ProbZDone} we have:
    \begin{align*}
        Pr[v \text{ is done}] = Pr[v \text{ is done} \mid v \in Z] \geq \frac{1}{(8\ell)^2}
    \end{align*}
    \item \textbf{Node $v$ is inactive:} Then $v$ can only be done, if it is between two nodes $z_1,z_2 \in Z$ that are themselves done. Since $v$ is inactive, this means that there is at least one node $z_r$ at distance less than $\ell$ that is in $Z$. Consider the $2\ell$ neighborhood of $v$ and let $A=(v^{left}_{2\ell}, \ldots, v^{left}_{1}, v, v^{right}_{1}, \ldots, v^{right}_{2\ell})$ be the path around $v$. The distinction between left and right is done arbitrarily based on the two outgoing edges of $v$ and just for the analysis. W.lo.g. let $z_r$ be among $v^{right}_{1}, \ldots, v^{right}_{\ell}$. We denote by $1\leq d:=\text{dist}(v,z_r) < \ell$ and refer the reader to \cref{fig:proofConstantExpectation} for an illustration of the scenario. As a result $v^{left}_1, \ldots v^{left}_{\ell-d-1}$ must also be inactive. So we consider two cases for the nodes $v^{left}_{2\ell - d }, \ldots, v^{left}_{\ell-d}$.
    \begin{itemize}
        \item \textbf{Case 1: } $Z \cap \{v^{left}_{2\ell - d }, \ldots, v^{left}_{\ell-d}\} \neq \emptyset$:\\
        Then let $z_l \in Z \cap \{v^{left}_{2\ell - d }, \ldots, v^{left}_{\ell-d}\}$ be the unique node in the set, because $Z$ is the subset of a ruling set according to \cref{lem:RulingSet}. Then the probability that $v$ is done is at least:
        \begin{align*}
            Pr[v \text{ is done}] \geq Pr[z_l \text{ is done} \mid v \in Z]Pr[z_r \text{ is done} \mid v \in Z] \geq \frac{1}{(8\ell)^4}
        \end{align*}
        \item \textbf{Case 2: } $Z \cap \{v^{left}_{2\ell - d }, \ldots, v^{left}_{\ell-d}\} = \emptyset$:\\
        Then at least $v^{left}_{\ell-d}$ must be active and therefore the probability that $v$ is done is at least:
        \begin{align*}
            Pr[v \text{ is done}] \geq Pr[v^{left}_{\ell-d} \text{ joins } Z]Pr[v^{left}_{\ell-d} \text{ is done} \mid v \in Z]Pr[z_r \text{ is done} \mid v \in Z] \geq \frac{1}{(8\ell)^5}
        \end{align*}
    \end{itemize}
\end{itemize}

\begin{figure}[h!]
    \centering
    \includegraphics[width=\textwidth]{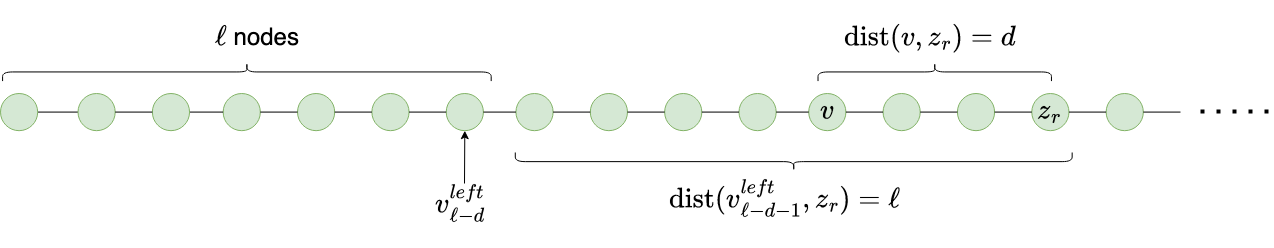}
    \caption{Illustrating the third case in the proof of \cref{lem:constantExpectation}.}
    \label{fig:proofConstantExpectation}
\end{figure}

So even in the worst case of the above three scenarios the probability is at least $\frac{1}{(8\ell)^5}$ as claimed.\\
For the furthermore part, we observe that the number of executions until a node is done is geometrically distributed. Let $p$ be the actual probability that a node $v$ is done after the execution of \emph{Elect Maximums}. Then the random variable $X_v$ of how many iterations of \emph{Elect Maximums} it takes for a node to terminate is geometrically distributed. As a result $\mathbb{E}[X_v] = \frac{1}{p} \leq (8\ell)^5$. This together with \cref{lem:ElectRuntime} proves the claim.
\end{proof}

\section{Improved Algorithms in the Polynomial Regime}\label{sec:polyupper}
In this section we show that, for infinitely many LCL problems with polynomial worst-case complexity, we can improve their node-averaged complexity. 
More precisely, in this section we show that, for a class of problems with worst-case complexity $\Theta(n^{1/k})$, we can provide an algorithm with node-averaged complexity $O(n^{1/(2^k - 1)})$.
As we show in \Cref{sec:lower}, this complexity is almost tight, since for all problems with worst-case complexity $\Theta(n^{1/k})$ we can show a lower bound of $\widetilde{\Omega}(n^{1/(2^k - 1)})$.

\subsection{\boldmath The Hierarchical $2 \frac{1}{2}$-Coloring Problems} \label{ssec:2_and_half_definition}
We now define a class of problems, already presented in \cite{CP19timeHierarchy}, called hierarchical $2 \frac{1}{2}$-coloring, that is parametrized by an integer $k \in \mathbb{Z}^+$. It has been shown in \cite{CP19timeHierarchy} that the problem with parameter $k$ has worst-case complexity $\Theta(n^{1/xk})$. We now give a formal definition of this class of problems and then we provide some intuition.

The set of input labels is $\Sigma_{\mathrm{in}}=\emptyset$. The set of output labels contains four possible labels, that is, $\Sigma_{\mathrm{out}} = \{W,B,E,D\}$, and these labels stand for \emph{white}, \emph{black}, \emph{exempt}, and \emph{decline}. Each node has a level in $\{1,\ldots,k+1\}$, that can be computed in constant time, and the constraints of the nodes depend on the level that they have.  The level of a node is computed as follows.
\begin{enumerate}
    \item Let $i \gets 1$.
    \item Let $V_i$ be the set of nodes of degree at most $2$ in the remaining tree. Nodes in $V_i$ are of level $i$. Nodes in $V_i$ are removed from the tree.
    \item Let $i \gets i+1$. If $i \le k$, continue from step 2.
    \item Remaining nodes are of level $k+1$.
\end{enumerate}
Each node must output a single label in $\Sigma_{\mathrm{out}}$, and based on their level, they must satisfy the following local constraints.
\begin{itemize}
    \item No node of level $1$ can be labeled $E$.
    \item All nodes of level $k+1$ must be labeled $E$.
    \item Any node of level $2\le i \le k$ is labeled $E$ iff it is adjacent to a lower level node labeled $W$, $B$, or $E$.
    \item Any node of level $1 \le i \le k$ that is labeled $W$ (resp.\ $B$) has no neighbors of level $i$ labeled $B$ (resp.\ $W$) or $D$. In other words, $W$ and $B$ are colors, and nodes of the same color cannot be neighbors in the same level.
    \item Nodes of level $k$ cannot be labeled $D$.
\end{itemize}
This problem can be expressed as a standard LCL (a type of LCLs defined in \Cref{sec:different-lcls}) by setting the checkability radius $r$ to be~$O(k)$, since in $O(k)$ rounds a node can determine its level and hence which constraints apply.

\subparagraph{Some intuition on these problems.}
In order to have a bit of intuition on this class of problems, let us consider the case when $k = 2$, for which the worst-case complexity is $\Theta(\sqrt{n})$. Let us focus on nodes of levels $1$ and $2$ (note that these are the only nodes that matter, since nodes of level $3$ will blindly output $E$). Notice that, by the definition of the level of a node, nodes of the same level form paths. Let $Q$ be a path of level-$1$ nodes. By the definition of the LCL problems described above, $Q$ must either be $2$-colored by using the labels $W$ and $B$, or all nodes in $Q$ must be labeled $D$. Now consider a path $P$ of level-$2$ nodes. By the definition of the LCL problems, a node in $P$ must be labeled $E$ if and only if it has a level-$1$ neighbor labeled with $W$ or $B$. On the other hand, the subpaths of $P$ induced by nodes having a level-$1$ neighbor labeled $D$ must be $2$-colored by using the labels $W$ and $B$.

On the lower bound side, a worst-case instance for this LCL with parameter $k = 2$ consists of a path $P$ of length $\Theta(\sqrt{n})$, where to each node $v_j$ of $P$ is attached a path $Q_j$ of length $\Theta(\sqrt{n})$. If an algorithm performs a $2$-coloring of any of the $Q_j$ paths, then it needs to spend $\Omega(\sqrt{n})$ rounds. Otherwise, if none of the $Q_j$ paths gets $2$-colored, then any correct algorithm must $2$-color $P$, spending $\Omega(\sqrt{n})$ rounds.

On the upper bound side, this LCL with parameter $k=2$ can be solved in $O(\sqrt{n})$ rounds in the following way. First, level-$1$ nodes spend $O(\sqrt{n})$ rounds to check if the path that they are in has length $O(\sqrt{n})$: if yes, then the path gets $2$-colored with labels $W$ and $B$; if no, the path gets labeled with $D$. If a level-$2$ node has a level-$1$ neighbor that is colored (i.e., it is labeled $W$ or $B$), then it outputs $E$. Finally, it is possible to prove that the subpaths induced by nodes of level $2$ that have a level-$1$ neighbor labeled $D$ must be of length $O(\sqrt{n})$, hence these subpaths can be $2$-colored in $O(\sqrt{n})$ rounds.

\subsection{Better Node-Averaged Complexity}\label{ssec:2_and_half_better}
We now show that, for the class of LCL problems described in \Cref{ssec:2_and_half_definition}, we can obtain a better node-averaged complexity. The algorithm is similar to the one presented in \cite{chang20} for the worst-case complexity, but it is modified to obtain a better node-averaged complexity (in \Cref{sec:lower} we show that this algorithm is tight up to a $\log n$ factor). 
\begin{theorem}
    The node-averaged complexity of the hierarchical $2 \frac{1}{2}$-coloring problem with parameter $k$ is $O(n^{1/(2^k - 1)})$.
\end{theorem}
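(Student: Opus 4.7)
The plan is to generalize the intuition sketched before the theorem for $k=2$ (where $q$-nodes use a budget of $n^{1/3}$ and $p$-nodes use a budget of $n^{2/3}$). I define an algorithm parameterized by thresholds $T_i := c \cdot n^{2^{i-1}/(2^k-1)}$ for each level $i \in \{1,\ldots,k\}$ and a suitable constant $c$. Each node first determines its level in $O(k)=O(1)$ rounds by simulating the peeling procedure locally. Level-$(k+1)$ nodes immediately output $E$. A level-$i$ node $v$ first waits for its lower-level neighbors to finalize their outputs; if $i \geq 2$ and all of them are in $\{W,B,E\}$, then $v$ outputs $E$. Otherwise $v$ participates in a $2$-coloring of the level-$i$ subpath of consecutive ``active'' nodes (those with some $D$-labeled lower-level neighbor, or all level-$1$ nodes): if the subpath has length at most $T_i$, the $2$-coloring completes in $O(T_i)$ rounds; if the length exceeds $T_i$ and $i<k$, then $v$ outputs $D$ after $O(T_i)$ rounds; for $i=k$, $v$ always carries out the $2$-coloring to completion. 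Correctness follows directly from the LCL specification.

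The first part of the analysis gives a recursive bound on ``bad'' nodes. Let $B_i$ denote the number of level-$i$ nodes with at least one lower-level $D$-neighbor, and set $B_1 := n_1$. Every declining level-$(i-1)$ path has length larger than $T_{i-1}$, and by the degree-$\leq 2$ property of the peeling, each such path has at most two neighbors at level strictly greater than $i-1$ in $G_{i-2}$. Hence $B_i \leq O(B_{i-1}/T_{i-1})$, and iterating yields $B_i \leq O(n/\prod_{j<i} T_j)$. With the chosen thresholds, one verifies the identities $T_i/\prod_{j<i} T_j = n^{1/(2^k-1)}$ for every $i$, and $(n/\prod_{j<k} T_j)^2 = n^{1+1/(2^k-1)}$, which is the algebraic reason behind the exponent $2^k-1$.

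The second part bounds $\sum_v t(v)$, where $t(v)$ is $v$'s termination time. Let $p(v)$ be $v$'s ``local'' contribution, excluding dependency waits: $p(v) = O(1)$ for $E$-nodes, $p(v) \leq T_i$ for non-$E$ nodes at level $i<k$, and $p(v) \leq B_k$ for non-$E$ level-$k$ nodes (since the level-$k$ subpath length is at most $B_k$). From the recurrence $t(v) \leq p(v) + \max_u t(u)$ over lower-level neighbors $u$, and the fact that levels strictly decrease along dependency chains, unrolling gives $t(v) \leq \sum_{u \in P(v)} p(u)$ along a chain $P(v)$ of length at most $k$. Swapping the order of summation, $\sum_v t(v) \leq \sum_u p(u) \cdot |U(u)|$, where $U(u)$ is the set of dependency-ancestors of $u$. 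Since each node has at most two strictly higher-level neighbors in the peeling, $|U(u)| \leq 2^k = O(1)$. Combining, $\sum_v t(v) = O\bigl(\sum_{i=1}^{k-1} B_i T_i + B_k^2\bigr)$, and each term evaluates to $O(n^{1+1/(2^k-1)})$ using the identities above; dividing by $n$ yields the desired bound.

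The main obstacle is handling the cascade of waiting times for $E$-nodes at high levels: a naive bound would charge each level-$i$ $E$-node the full $T_{i-1}$, which is prohibitive when many high-level $E$-nodes exist (e.g., in a complete binary tree, $\Theta(n/2^j)$ nodes sit at each of several levels $j$, forcing a total of $\Theta(n \cdot T_{k-1})$ under the naive bound). The charging argument through the bounded-size upward dependency tree circumvents this, ensuring that each $p(u)$ is multiplied by a constant rather than by $n$.
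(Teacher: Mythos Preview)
Your approach matches the paper's: the same thresholds $T_i = \Theta(n^{2^{i-1}/(2^k-1)})$, the same inductive charging to bound the number of non-$E$ level-$i$ nodes by $O(n/\prod_{j<i} T_j)$, and the same per-level contribution $O(n^{1+1/(2^k-1)})$ to the total running time. Your explicit accounting for the waiting cascade via the bound $|U(u)| \le 2^k$ is in fact more careful than the paper's own presentation, which only tallies the phase-$i$ work $|S_i|\cdot T_i$ and leaves the waiting-time bookkeeping implicit.

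One slip to correct: by the problem specification, a level-$i$ node (for $2 \le i \le k$) outputs $E$ iff \emph{some} lower-level neighbor is in $\{W,B,E\}$, not iff all are. Accordingly, ``active'' should mean \emph{all} lower-level neighbors are labeled $D$, and $B_i$ should count exactly those nodes. With your stated definition of $B_i$ (``at least one $D$-neighbor''), the recurrence $B_i \le O(B_{i-1}/T_{i-1})$ is not justified: a node in your $B_i$ might have its sole $D$-neighbor at some level $j < i-1$ while its level-$(i-1)$ neighbor is labeled $W$, so it need not be adjacent to any declining level-$(i-1)$ path. With the corrected definition the recurrence goes through cleanly: any $v$ with all lower-level neighbors $D$ necessarily has a level-$(i-1)$ neighbor (forced by the peeling, since $v$'s degree drops from $\ge 3$ to $\le 2$ exactly when level-$(i-1)$ is removed), that neighbor is then $D$ and sits at an endpoint of its level-$(i-1)$ path, and each such path has at most two higher-level neighbors.
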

\begin{proof}
    At first, all nodes spend $O(1)$ rounds to compute their level. Nodes of level $k+1$ output $E$. Then, the algorithm proceeds in phases, for $i$ in $1,\ldots,k$. In phase $i$, all nodes of level $i$ get a label, and hence let us assume that all nodes of levels $1, \ldots, i-1$ already have a label, and let us focus on level-$i$ nodes.

    Consider a node $v$ of level $i$. Node $v$ proceeds as follows. If $v$ has a neighbor from lower levels that is labeled $W$ or $B$ then $v$ outputs $E$. Otherwise, $v$ spends $t_i = c \cdot \gamma_i$ rounds to check the length of the path containing $v$ induced by nodes of level $i$, for some constant $c$ to be fixed later, and $\gamma_i = n^{2^{i-1}/(2^k - 1)}$. If this length is strictly larger than $t_i$, then $v$ outputs $D$. Otherwise, all nodes of the path are able to see the whole path, and hence they can output a consistent $2$-coloring by using the labels $W$ and $B$.

    The above algorithm correctly solves the problem if we assume that no nodes in level $k$ output $D$. In the following we show that indeed nodes of level $k$ do not output $D$, hence showing the correctness of the algorithm, and then we prove a bound on the node-averaged complexity.
    
    In order to do so, we first prove a useful statement. Let $S$ be the set of nodes of level $i$ that do not directly output $E$ at the beginning of phase $i$. It is possible to assign each node of level $j<i$ to exactly one node in $S$ such that to each node in $S$ are assigned $\Omega(n^{(2^{i-1} - 1) / (2^k - 1)})$ unique nodes of lower layers. Let $v$ be a node in $S$. Since $v\in S$, it means that $v$ is connected to a path $P$ of nodes of level $i-1$ of length strictly larger than $t_{i-1}$, and hence all nodes in $P$ are labeled $D$. Since, by the definition of the levels, $P$ has at most $2$ nodes of higher levels connected to it, then we can charge $t_{i-1}/2$ unique nodes to $v$. By repeating this reasoning inductively, we obtain that for each node of layer $i$ that does not directly output $E$, we can assign at least the following amount of nodes:
    \[
        \prod_{j=1}^{i-1} t_j / 2 = \Omega\left(\prod_{j=1}^{i-1} \gamma_j\right) = \Omega(n^{\sum_{j=1}^{i-1} 2^{j-1} / (2^k -1)}) = \Omega(n^{(2^{i-1} - 1) / (2^k - 1)}).
    \]
    Hence, the number of nodes that participate in phase $i$ is at most $O(n^{1 - (2^{i-1} - 1) / (2^k - 1)}) = O(n^{(2^k - 2^{i-1}) / (2^k - 1)})$. This implies that in phase $k$ the number of participating nodes is at most $O(n^{(2^k - 2^{k-1}) / (2^k - 1)}) = O(n^{2^{k-1} / (2^k - 1)}) = O(\gamma_k)$, where the hidden constant is inversely proportional to $c$. Hence, by picking $c$ large enough, we get that in $t_k$ rounds nodes of level $k$ see the whole path and thus no node of level $k$ outputs $D$, proving the correctness of the algorithm. 
    
    The total time spent during phase $i$ is bounded by 
        \[
        O(\gamma_{i} \cdot n^{1-(2^{i-1} - 1) / (2^k -1)}) = O( n^{2^{i-1}/(2^k - 1)} \cdot n^{1-(2^{i-1} - 1) / (2^k -1)} ) = O( n^{1 + 1/(2^k - 1)}).
    \]
    Therefore, the average time spent in phase $i$ is bounded by $O( n^{1/(2^k - 1)})$. Since this is done for $i \in \{1, \ldots, k\}$, and since $k = O(1)$, then the claim on the node-averaged complexity follows.
\end{proof}

\section{Lower Bounds in the Polynomial Regime}\label{sec:lower}

In this section we show that any LCL problem that requires polynomial time for worst-case complexity requires polynomial time also for node-averaged complexity. More precisely, we prove the following theorem.

\begin{theorem}
    Let $\Pi$ be an LCL problem with worst-case complexity $\Omega(n^{1/k})$ in the \LOCAL model. The randomized node-averaged complexity of $\Pi$ in the \LOCAL model is $\Omega(n^{1/(2^k - 1)} / \log n)$, and the deterministic node-averaged complexity of $\Pi$ in the \LOCAL model is $\Omega(n^{1/(2^k - 1)})$.
\end{theorem}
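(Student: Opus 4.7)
I would prove both bounds by contradiction, following the reduction sketched in the overview. Assume a randomized algorithm $\mathcal{A}$ solves $\Pi$ with node-averaged complexity $T = o(n^{1/(2^k-1)}/\log n)$. Starting from $\mathcal{A}$, I will construct a good function $f_{\Pi,k}$ in the sense of \Cref{def:computing-label-set} that enables the generic rake-and-compress algorithm of \Cref{ssec:lcls} to handle $k$ nested compress layers, which by the Chang--Pettie framework \cite{chang20} yields a worst-case algorithm of complexity $O(n^{1/(k+1)})$. Since $O(n^{1/(k+1)}) = o(n^{1/k})$, this contradicts the assumed $\Omega(n^{1/k})$ worst-case lower bound on $\Pi$. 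The deterministic variant follows from the same argument with the Markov step becoming an exact counting rather than a high-probability bound, removing the $\log n$ loss.

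The core construction builds $f_{\Pi,i}$ inductively for $i = 1,\dots,k$. For each $i$, one considers the behavior of $\mathcal{A}$ on canonical compress paths of length $\gamma_i = \Theta(n^{2^{i-1}/(2^k-1)})$, matching the doubling schedule used in the upper bound of \Cref{thm:twohalfcoloring}. Markov's inequality on the expected per-node runtime implies that on such a path a constant fraction of interior nodes terminate within a budget $T_i = O(2^{i-1} T \log n)$, and I would show that their outputs, conditioned on arbitrary labels of the two endpoint outgoing edges, assemble into an independent class as required by \Cref{def:classes}. The identity $\sum_{i=1}^k 2^{i-1} = 2^k - 1$ telescopes these per-layer lengths into a single bound $\prod_{i=1}^k \gamma_i \le n$, which is what allows the composition across $k$ compress layers to fit inside an instance of size $n$ and ties together the exponents in the target complexity $n^{1/(2^k-1)}$.

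The main obstacle is the one explicitly flagged at the end of the overview: traditional worst-case arguments \cite{chang20} rule out $o(n^{1/k})$ algorithms by showing that such an algorithm cannot see both endpoints of a carefully crafted long path, but this visibility argument fails for node-averaged complexity since individual nodes may run arbitrarily long. The fix is to use a distributional argument: the bound $\mathbb{E}[\sum_v T_v] \le n T$ guarantees that \emph{most} nodes commit after only $T_i$ rounds and hence depend only on their $T_i$-neighborhoods, and then a pigeonhole over possible ``types'' of their constant-radius environments extracts a large committed substructure that is consistent along the entire crafted path. Verifying the independent-class condition (\Cref{def:classes}) for this committed substructure, and propagating it across all $k$ compress layers while keeping the per-layer quantitative bounds tight and the Markov union bound under control, constitutes the technical heart of the proof and is the source of the $\log n$ loss in the randomized case.
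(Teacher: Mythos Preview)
Your high-level strategy is right and matches the paper: argue by contradiction, use the assumed fast algorithm $\mathcal{A}$ to construct a good function $f_{\Pi,k+1}$ (your indexing is off by one) on compress paths of length $\gamma_i = \Theta(n^{2^{i-1}/(2^k-1)})$, and conclude a worst-case $O(n^{1/(k+1)})$ algorithm, contradicting the $\Omega(n^{1/k})$ lower bound. The telescoping identity $\prod_i \gamma_i = O(n)$ is also the right arithmetic.

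The genuine gap is your use of Markov's inequality. You write that ``Markov's inequality on the expected per-node runtime implies that on such a path a constant fraction of interior nodes terminate within a budget $T_i$.'' But the node-averaged bound only controls $\frac{1}{n}\sum_{v\in V}\mathbb{E}[T_v]$; it gives no control over the runtimes of the \emph{specific} nodes on the layer-$i$ compress path, which in the relevant instances form a vanishing fraction of all nodes (most nodes live in the attached subtrees or in padding). Nothing prevents every interior path node from running for $\Theta(\gamma_i)$ rounds while leaves terminate instantly, and then your argument collapses. The paper closes this gap with two ingredients you are missing: first, a normalization (\Cref{lem:neighborsrun}, due to Feuilloley) ensuring that if a node runs for $t$ rounds then $\Omega(t)$ nearby nodes each run for $\Omega(t)$ rounds; second, a \emph{copying argument} (\Cref{lem:expectedruntime}): if the designated center node $v$ of the layer-$i$ path had expected runtime exceeding $c\cdot \gamma_i/\log n$, one replicates $v$'s $t$-neighborhood $\Theta(n/|B|)$ times into a single $n$-node instance, and the normalization forces the node-averaged complexity of that instance above the assumed bound. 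This is what converts the global averaged bound into a per-node bound at the location you need it, and is the technical heart of the proof. Your ``pigeonhole over types'' is not how the paper extracts the independent class either; instead it fixes the most probable output on the designated edge and propagates label-sets outward from there, combined with a careful ``designated neighborhoods are independent'' argument (\Cref{lem:onemodified}, \Cref{lem:goodbitsrand}) to control failure probability. The deterministic case is not merely ``Markov becomes exact counting''; it requires constructing many \emph{disjoint} ID assignments (\Cref{cor:expectedruntimedet}, \Cref{lem:s-independent}) so that the copying argument can be run with unique IDs, and this is where the $\log n$ factor disappears.
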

In order to prove this theorem, we proceed as follows (throughout this section we will use notions presented in \Cref{ssec:lcls}). It is known by \cite{chang20} that if an LCL problem $\Pi$ has worst-case complexity $o(n^{1/k})$, then it can actually be solved in $O(n^{1/(k+1)})$ deterministic rounds. This statement is proved by showing that it is possible to use an algorithm (possibly randomized) running in $o(n^{1/k})$ rounds to construct a good function $f_{\Pi,k+1}$ (that is, a function $f_{\Pi,k+1}$ that, if used, never creates empty classes), implying (as shown in \Cref{ssec:lcls}) the existence of a deterministic algorithm that solves $\Pi$ and has worst-case complexity $O(n^{1/(k+1)})$. In this section we show that it is possible to construct a good function $f_{\Pi,k+1}$ by starting from an algorithm $\mathcal{A}$ with randomized node-averaged complexity $o(n^{1/(2^k - 1)} / \log n)$ or deterministic node-averaged complexity $o(n^{1/(2^k - 1)})$. By \Cref{ssec:lcls}, this implies that if there exists an algorithm with $o(n^{1/(2^k - 1)} / \log n)$ randomized node-averaged complexity or with $o(n^{1/(2^k - 1)})$ deterministic node-averaged complexity, then there exists an algorithm with deterministic worst-case complexity $O(n^{1/(k+1)})$. This shows that any LCL with worst-case complexity $\Omega(n^{1/k})$ has node-averaged complexity at least $\Omega(n^{1/(2^k - 1)} / \log n)$ randomized and $\Omega(n^{1/(2^k - 1)})$ deterministic. While we will use some ideas already presented in \cite{chang20}, handling an algorithm with only guarantees on its node-averaged complexity arises many (new) issues that we need to tackle.

Our statement will be proved even for the case in which the randomized algorithm $\mathcal{A}$ satisfies the weakest possible assumptions, i.e., the assumptions are so relaxed that they are satisfied by any deterministic algorithm, any randomized Las Vegas algorithm, and any randomized Monte Carlo algorithm. The assumptions are the following.
\begin{itemize}
    \item We assume that $\mathcal{A}$ is a randomized algorithm that is only required to work when the unique IDs of nodes are assigned at random, among all possible valid assignments.
    \item We assume that $\mathcal{A}$ is an algorithm that fails with probability at most $1/n^c$ for any chosen constant $c \ge 1$.
    \item We assume that the bound on the node-averaged complexity of $\mathcal{A}$ holds with probability at least $1 - 1/n^c$ for any chosen constant $c \ge 1$.
\end{itemize}
However, in the following, we will assume that the bound on the node-averaged complexity of $\mathcal{A}$ always holds. In fact, observe that we can always convert an algorithm with node-averaged complexity $T$ that holds with probability at least $1 - 1/n^c$ into an algorithm with node-averaged complexity $O(T)$ that holds always, since, even for $c=1$, we can safely assume that when the bound does not hold (that happens with probability at most $1/n$), the runtime is anyways bounded by $n$ (since everything can be solved in $n$ rounds in the LOCAL model).

\subsection{Good Functions}\label{ssec:good-functions}
We now describe how, in \cite{CP19timeHierarchy,chang20}, it is determined whether a good function exists. In the following, we call \emph{solver} the algorithm for solving an LCL given a decomposition, presented in \Cref{ssec:lcls}.

There are two main ingredients that are used in \cite{CP19timeHierarchy,chang20}. One is that there are a finite amount of possible functions, and this is true since the classes of \Cref{def:classes} have finite size, and since there is a finite amount of ways to map maximal classes into independent ones. The other is that, given a function, there is a centralized algorithm that checks whether the function is good. The key question is how to test whether a function is good, and, on a high level, this is done by constructing all possible label-sets that could possibly appear while running the solver, which depend on the tested function. Crucially, this is a finite amount, and there is a recursive procedure that can generate them.

\subparagraph{What actually determines the complexity of a problem.}
We now provide some more intuition about the function $f_{\Pi,k}$, and about how the existence of this function is related with the worst-case complexity of a problem. 
In \cite{CP19timeHierarchy,chang20} it is shown how to determine whether a good function exists (that is, a function that never creates empty classes), and if it exists, how to construct it. Also, it is shown that:
\begin{itemize}
    \item If a problem is solvable in $O(\log n)$ rounds, then a good function $f_{\Pi,\infty}$ exists.
    \item If a problem is solvable in $o(n^{1/k})$, then a good function $f_{\Pi,k+1}$ exists.
\end{itemize}
This implies that, if a problem is solvable in $O(\log n)$ rounds, then we can automatically find a good function that makes the solver work and solve the problem in $O(\log n)$ rounds, and if a problem is solvable in $o(n^{1/k})$, then we can automatically find a good function that makes the solver work and solve the problem in $O(n^{1/(k+1)})$ rounds.
Also, given a problem $\Pi$, it is possible to compute what is the optimal target complexity, that is, we can determine whether the problem can be solved in $O(\log n)$ rounds, and if the answer is negative we can determine the best integer $k$ for which the problem can be solved in $O(n^{1/k})$ rounds.

The intuition about what determines the existence of a good function $f_{\Pi,k}$ is the following. Compress paths are something that is difficult to handle, because they require to restrict the label-sets that we propagate up in order to make them ``independent''. The number of compress layers that we can recursively handle is what determines the complexity of a problem.
\begin{itemize}
    \item If we can handle an arbitrary amount of compress layers, then we can construct a good function $f_{\Pi,\infty}$, and hence the problem can be solved in $O(\log n)$ rounds.
    \item If we can handle only a constant amount of compress layers, say $k-1$, then a good function $f_{\Pi,k}$ exists, but $f_{\Pi,k+1}$ does not, and then the complexity of the problem is $\Theta(n^{1/k})$.
\end{itemize}

\subparagraph{Testing procedure.} We now present an algorithm that tests whether a function $f_{\Pi,k}$ is good. We call this algorithm \emph{testing procedure}.
The procedure depends not only on the function to be tested, but also on a parameter $\ell$ that, in \cite{CP19timeHierarchy,chang20}, is shown that it can be determined solely as a function of $\Pi$. We observe that this algorithm is well-defined also when $k = \infty$, and in fact this algorithm can be used also to determine whether a problem can be solved with $O(\log n)$ worst-case complexity. The idea of the testing procedure is to keep track of all label-sets that one could possibly obtain while running the solver.
For each of these label-sets, we also keep track of a subtree (where nodes are also marked with the layers of a decomposition) where, if we run the solver by using the function that we are testing, we would obtain an edge with such a label-set. While this is not necessary for \emph{testing} a function, we will use these trees later for \emph{constructing} a function given an algorithm.
We now formally describe the testing procedure and later we will give more intuition on it.
\begin{enumerate}
    \item Initialize $S$ with all the possible values of the label-set $g(v)$ of $v$ (as defined in \Cref{def:computing-label-set}) that could be obtained when $v$ is a leaf. Note that the possible values are a finite amount that only depends on the amount of input labels of $\Pi$. Initialize $\mathcal{R}_1$ by inserting one pair $((\tilde{T},u),L)$ for each element $L$ in  $S$, where $\tilde{T}$ is a tree composed of $2$ nodes $\{u,v\}$ and $1$ edge $\{u,v\}$, and $L=g(v)$. Node $v$ is marked as a rake node of layer $1$, while $u$ is marked as a temporary node.\label{item:leaves}
    
    \item For $i = 1, \ldots, k$ do the following. If, at any step, an empty label-set is obtained, then the tested function is not good.\label{item:tryall}
    
    \begin{enumerate}
        \item Do the following in all possible ways. Consider $x$ arbitrary elements $((\tilde{T}_j,v_j),L_j)$ of $\mathcal{R}_i$, where $1\le j\le x$ and $1 \le x \le \Delta$. Construct the tree $T$ as the union of all trees $\tilde{T}_j$, where all the nodes $v_j$ (note that each node $v_j$ has degree $1$) are identified as a single node, call it $v$, which, after this process, has degree $x$ in $T$. Let $F_{\mathrm{incoming}}$ be the set of edges connected to $v$, and let $\mathcal{L}_{\mathrm{incoming}}$ be the label-set assignment given by the sets $L_j$. The node $v$ is marked as a rake node of layer $i$.
        If $v$ has an empty maximal class w.r.t. $F_{\mathrm{incoming}}$, $F_{\mathrm{outgoing}} = \{\}$, and $\mathcal{L}_{\mathrm{incoming}}$, then the tested function is not good.\label{item:top-level-can-complete}
        
        \item Do the following in all possible ways. Consider $x$ arbitrary elements $((\tilde{T}_j,v_j),L_j)$ of $\mathcal{R}_i$, where $1\le j\le x$ and $1 \le x \le \Delta -1$. Construct the tree $T$ as the union of all trees $\tilde{T}_j$, where all the nodes $v_j$ are identified as a single node, call it $v$. Attach an additional neighbor $u$ to $v$. Let $F_{\mathrm{outgoing}} = \{\{u,v\}\}$. Let $F_{\mathrm{incoming}}$ be the set of edges connected to $v$, excluding $\{u,v\}$, and let $\mathcal{L}_{\mathrm{incoming}}$ be the label-set assignment given by the sets $L_j$. The node $v$ is marked as a rake node of layer $i$, while $u$ is marked as a temporary node.
        Let $L = g(v)$ (as defined in \Cref{def:computing-label-set}). If $L$ is empty, then the function is not good. Add $((T, u), L)$ to $\mathcal{R}_i$ if no pair with second element $L$ is already present.\label{item:new-rake-labelsets}
        
        \item Repeat the previous two step until nothing new is added to $\mathcal{R}_i$. This must happen, since there are a finite amount of possible label-sets. \label{constructfunction-b}
        
        \item If $i = k$, stop.
        
        \item Initialize $\mathcal{C}_i = \emptyset$.
        
        \item Do the following in all possible ways. Construct a graph starting from a path $H$ of length between $\ell$ and $2 \ell$ where we connect nodes of degree $1$ to the nodes of $H$ satisfying: (i) all nodes in $H$ have degree at most $\Delta$; (ii) the two endpoints of $H$ have an outgoing edge that connects respectively to nodes $u_1$ and $u_2$ that are nodes of degree $1$; (iii) all the other edges connecting degree-$1$ nodes to the nodes of $H$ are incoming for $H$.
        Next, replace each incoming edge $e$ and the node of degree $1$ connected to it with a tree $\tilde{T}$ of a pair $((\tilde{T},u),L)$ in $\mathcal{R}_i$, by identifying $u$ with the node of the path connected to $e$. Different trees can be used for different edges. The nodes $u_1$ and $u_2$ are marked as temporary nodes, while the nodes of the path are marked as compress nodes of layer $i$.
        Use the function as described in \Cref{def:computing-label-set} to compute the label-sets $L_1$ and $L_2$ of the two endpoints. If $L_1$ or $L_2$ is empty, then the function is not good.
        Otherwise, add the pair $((H,u_1),L_1)$ (resp.\ $((H,u_2),L_2)$) to $\mathcal{C}_i$ if no pair with second element $L_1$ (resp.\ $L_2$) is already present. The \emph{representative tree} of $P = (H,F_{\mathrm{incoming}},F_{\mathrm{outgoing}},\mathcal{L}_{\mathrm{incoming}})$ is defined as $r(P) = T$.\label{item:compress-paths}
        
        \item Set $\mathcal{R}_{i+1} = \mathcal{R}_i \cup \mathcal{C}_i$. If $\mathcal{R}_{i+1} = \mathcal{R}_{i}$, stop.
    \end{enumerate}
\end{enumerate}
On a high level, the testing procedure does the following. 
\Cref{item:leaves} computes all possible label-sets that we can get from leaf nodes; leaf nodes are then marked as rake nodes. 
\Cref{item:top-level-can-complete} considers all possible ways to assign a label-set to all the incident edges of a node $v$, and checks, for each case, if there is a way for $v$ to label its incident edges (with a label from the provided label-sets) such that the constraints of node $v$ are satisfied. Node $v$ is marked as a rake node. This case corresponds to a rake node with no neighbors in higher layers.
\Cref{item:new-rake-labelsets} considers all possible ways to assign a label-set to all but one incident edge of a node $v$; let $e$ be the edge that does not have a label-set assigned. Then, the label-set of $e$ is computed as a function of all the label-sets of the other edges incident to $v$, and then $v$ is marked as a rake node. This case corresponds to a rake node with one neighbor in higher layers.
In \Cref{constructfunction-b} the last two steps are repeated until no new label-sets are obtained.
In other words, items \ref{item:top-level-can-complete}, \ref{item:new-rake-labelsets}, and \ref{constructfunction-b} generate all possible label-sets that we can get by performing only rakes, that is, before performing the first compress.
Finally, \Cref{item:compress-paths} considers all possible label-sets that we can obtain for the endpoints of a compress path. By repeating recursively the procedure interleaving between rake and compress nodes, and by checking at any point that we do not get empty classes or empty label-sets, we can test whether a function is good or not. 
An example of a tree generated by this procedure is shown in \Cref{fig:testing}.

It is possible to prove that the testing procedure generates exactly those label-sets that could possibly be obtained by running the solver \cite{CP19timeHierarchy, chang20}. Hence, if empty classes or empty label-sets are never obtained, then the function can indeed be used to solve a problem.
Observe that all the edges of a representative tree can be oriented such that: all nodes have at most one outgoing edge; any directed path contains layers in non-decreasing order (see \Cref{fig:testing}). While in a generic tree decomposition this may be false, this is not an issue: the obtained label-sets, in this restricted set of cases, are still all the possible ones that could be obtained on an arbitrary tree decomposition.

\begin{figure}
	\centering
	\includegraphics[width=0.5\textwidth]{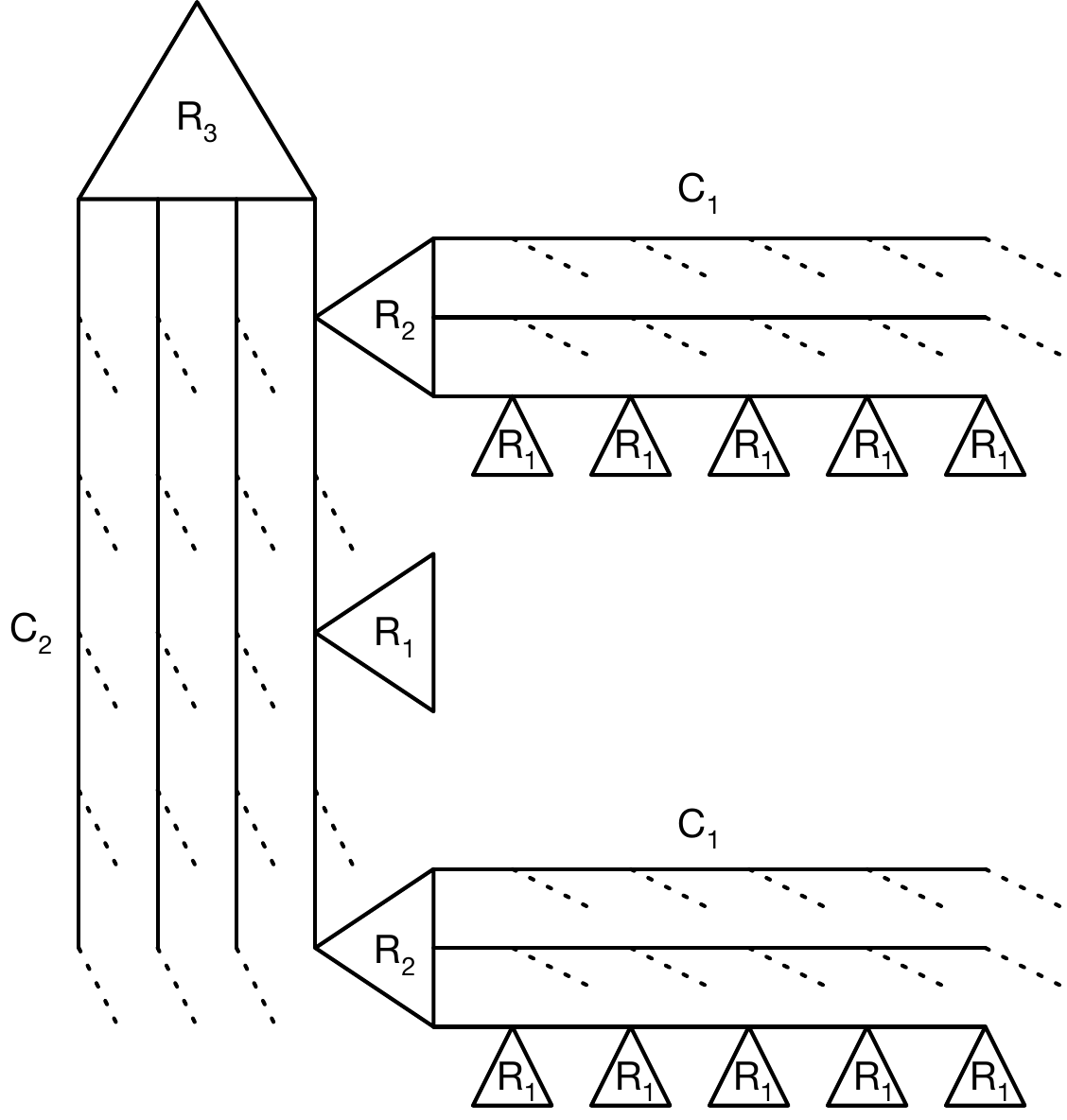}
	\caption{An example of a tree generated by the testing procedure. Triangles marked $R_i$ correspond to trees of constant size containing only nodes marked rake of layer $i$. Paths marked $C_i$ correspond to nodes marked compress of layer $i$. At the end of each path, on the side not connected to a triangle, is connected a temporary node, not shown in the picture. The top node of the triangle marked $R_3$ may be a temporary node.}
	\label{fig:testing}
\end{figure}

\subsection{Some Useful Ingredients}
As already discussed, the main goal of this section is to show that it is possible to construct a good function $f_{\Pi,k+1}$ by starting from an algorithm $\mathcal{A}$ with randomized node-averaged complexity $o(n^{1/(2^k - 1)} / \log n)$ or deterministic node-averaged complexity $o(n^{1/(2^k - 1)})$.
Before diving into that, we provide some ingredients that will later be useful. 

\subparagraph{Pumping lemma for trees.}
We will use a fundamental ingredient provided in \cite{CP19timeHierarchy}, that, informally, allows us to take a compress path of some class, and make the path longer while preserving its class.
\begin{lemma}[Pumping lemma for trees \cite{CP19timeHierarchy}]\label{lem:pump}
Assume we are given an LCL $\Pi = (\Sigma_{\mathrm{in}},\Sigma_{\mathrm{out}},C_W,C_B)$ in the black-white formalism.
Let a \emph{compress path} be a tuple $(H,F_{\mathrm{incoming}},F_{\mathrm{outgoing}},\mathcal{L}_{\mathrm{incoming}})$ satisfying the following.
\begin{itemize}
    \item To each node $v_i$ of $H$ are connected $d_i$ neighbors, $u_{i,1},\ldots,u_{i,d_i}$. $F_{\mathrm{incoming}}$ is the set of edges connecting the nodes of $H$ with such neighbors. 
    \item To each edge $e$ of $F_{\mathrm{incoming}}$ is assigned a label-set $L_e \subseteq \Sigma_{\mathrm{out}}$. $\mathcal{L}_{\mathrm{incoming}} = (L_e)_{e \in F_{\mathrm{incoming}}}$ is this assignment.
    \item To each endpoint of $H$ is connected one additional neighbor. $F_{\mathrm{outgoing}}$ is the set containing the two edges connecting the endpoints of $H$ to such neighbors.
\end{itemize}
Let $(H,F_{\mathrm{incoming}},F_{\mathrm{outgoing}},\mathcal{L}_{\mathrm{incoming}})$ be a compress path where $H$ has length $x$. Let $w$ be an arbitrary integer that is at least $x$. 
There exists a constant $\ell$ such that, if $x \ge \ell$, then it is possible to construct a compress path $(H',F'_{\mathrm{incoming}},F'_{\mathrm{outgoing}},\mathcal{L}'_{\mathrm{incoming}})$  satisfying the following.
\begin{itemize}
    \item The length of $H'$ is at least $w$ and at most $w + \ell$.
    \item The maximal class of $H'$ w.r.t.\ $\Pi$, $F'_{\mathrm{incoming}}$, $F'_{\mathrm{outgoing}}$, and $\mathcal{L}'_{\mathrm{incoming}}$ is equal to the maximal class of $H$ w.r.t.\ $\Pi$, $F_{\mathrm{incoming}}$, $F_{\mathrm{outgoing}}$, and $\mathcal{L}_{\mathrm{incoming}}$.
    \item The set of label-sets contained in $\mathcal{L}'_{\mathrm{incoming}}$ is equal to the set of label-sets contained in $\mathcal{L}_{\mathrm{incoming}}$.
\end{itemize}
\end{lemma}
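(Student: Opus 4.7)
I would prove this by a pumping argument in the spirit of the classical pumping lemma for regular languages, viewing the compress path as an input to a finite-state automaton. For each position $i \in \{1, \ldots, x\}$, define a state $\phi_i \subseteq \Sigma_{\mathrm{out}} \times \Sigma_{\mathrm{out}}$: a pair $(l_L, l_R)$ belongs to $\phi_i$ iff there is a valid partial labeling of the prefix subpath $v_1, \ldots, v_i$ together with its incoming edges (the latter drawn from their prescribed label-sets) such that $e_{\mathrm{left}}$ is labeled $l_L$, the right path edge out of $v_i$ (or $e_{\mathrm{right}}$ if $i = x$) is labeled $l_R$, and every node constraint at $v_1, \ldots, v_i$ is satisfied. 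By construction, $\phi_x$ is exactly the projection of the maximal class of $(H, F_{\mathrm{incoming}}, F_{\mathrm{outgoing}}, \mathcal{L}_{\mathrm{incoming}})$ onto $F_{\mathrm{outgoing}}$, which is the information the testing procedure uses, and hence what needs to be preserved when $H$ is replaced by $H'$.

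A short unrolling of the definition of feasible labeling then shows that $\phi_{i+1}$ is a function of $\phi_i$ and the \emph{local type} of $v_{i+1}$, where the local type records the color of the node, the multiset of (input-label, label-set) pairs on its incoming edges, and the input labels of the two adjacent path edges. The number of distinct states is bounded by $2^{|\Sigma_{\mathrm{out}}|^2}$ and the number of local types is bounded by a constant depending only on $\Pi$, $\Delta$, and the (finite) number of distinct label-sets appearing in $\mathcal{L}_{\mathrm{incoming}}$. Setting $\ell$ to be one larger than the number of distinct possible states, the assumption $x \geq \ell$ together with the pigeonhole principle applied to $\phi_1, \ldots, \phi_\ell$ yields indices $1 \leq i < j \leq \ell$ with $\phi_i = \phi_j$ and $j - i \leq \ell - 1$.

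To construct $H'$, I would insert $r$ fresh copies of the segment $v_{i+1}, \ldots, v_j$ (each copy consisting of new nodes with the same colors, incoming edges, label-sets, and path-edge input labels as the originals) between $v_j$ and $v_{j+1}$, choosing the unique nonnegative integer $r$ for which the resulting length lies in $[w, w + \ell]$; such an $r$ exists because consecutive achievable lengths differ by exactly $j - i \leq \ell$. Because $\phi_i = \phi_j$, the transitions across one additional copy of the segment send state $\phi_i$ back to itself, so after all $r$ insertions the state is still $\phi_i$, and processing the suffix $v_{j+1}, \ldots, v_x$ then produces the same final state $\phi_x$ as in $H$. This gives the required equality of maximal classes (interpreted as equality of projections onto $F_{\mathrm{outgoing}}$, since the two classes live over different edge sets otherwise). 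The set of distinct label-sets is also preserved: every original node of $H$ still appears in $H'$, and every inserted copy only reuses label-sets already present in the segment.

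The main obstacle I expect is making the two semantic underpinnings fully rigorous: (a) that $\phi_x$ really equals the projection of the maximal class onto $F_{\mathrm{outgoing}}$ (and not just a subset or superset of it), and (b) that the transition $\phi_i \mapsto \phi_{i+1}$ depends only on $\phi_i$ and the local type of $v_{i+1}$, with no hidden dependence on the structure further to the left. Both reduce to straightforward inductions that exploit locality of the node constraints, but the bookkeeping must simultaneously track path labels, incoming-edge labels, and compatibility with the multiset constraints $C_W$ and $C_B$; the cleanest route is probably to define the transition as an explicit operator on $\mathcal{P}(\Sigma_{\mathrm{out}} \times \Sigma_{\mathrm{out}})$ and verify its defining property by induction on $i$.
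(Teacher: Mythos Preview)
The paper does not prove this lemma; it is cited from \cite{CP19timeHierarchy} and used as a black box. Your automaton-style pumping argument is exactly the standard approach and is essentially correct, and almost certainly matches what the cited reference does.

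There is, however, a small technical gap in your state definition. You apply pigeonhole to the states $\phi_1,\ldots,\phi_\ell$ alone, but two additional pieces of information must repeat simultaneously for the splice to be well-formed:
\begin{itemize}
    \item The \emph{color} of $v_i$: since the graph is properly $2$-colored, inserting copies of the segment $v_{i+1},\ldots,v_j$ only yields a bipartite path if $v_i$ and $v_j$ have the same color (equivalently, $j-i$ is even). Otherwise the copy of $v_{i+1}$ would be adjacent to $v_j$ of the same color.
    \item The \emph{input label} of the edge $e_i$ to the right of $v_i$: your $\phi_i$ already bakes in $v_i$'s node constraint, which depends on the input label of $e_i$. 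When you glue a copy of $v_{i+1}$ after $v_j$, the new connecting edge must carry the input label of $e_i$ (so that the copy has the same local type as $v_{i+1}$), but then $v_j$'s right-edge input label has changed from that of $e_j$, potentially altering $\phi_j$. You need $i(e_i)=i(e_j)$ to avoid this.
\end{itemize}
Both issues are resolved by taking the pigeonhole state to be the triple $(\phi_i,\ \mathrm{color}(v_i),\ i(e_i))$, which still ranges over a set of size at most $2^{|\Sigma_{\mathrm{out}}|^2}\cdot 2\cdot |\Sigma_{\mathrm{in}}|$; set $\ell$ to one more than this bound. With this adjustment your argument goes through as written, including the third bullet of the lemma and your reading of ``equal maximal classes'' as equality of the projections onto $F_{\mathrm{outgoing}}$ (which is indeed the only sense in which the statement can be meant, and is exactly what the paper uses).
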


\subparagraph{Pumping trees.}
Let $T$ be a tree constructed in the testing procedure. Observe that, in this tree, the nodes are either marked rake, compress, or temporary, and with a layer number. We provide a procedure $\mathrm{pump}_k(w,T)$ that modifies $T$ to make all compress paths longer. In particular, all compress paths of layer $i$ will be of length between $w_i$ and $w_i + \ell$, where $w_i$ is defined to be $w^{2^{i-1} / (2^k - 1)}$. The procedure $\mathrm{pump}_k(w,T)$ performs the following operation recursively, by starting from compress layer $i=1$ and going up, and by considering all compress paths present in $T$.

\begin{itemize}
    \item Let $(H,F_{\mathrm{incoming}},F_{\mathrm{outgoing}},\mathcal{L}_{\mathrm{incoming}})$ be a compress path induced by a connected component of nodes marked as compress nodes of layer $i$.

    \item Apply \Cref{lem:pump} with parameter $w_i$ and get $(H',F'_{\mathrm{incoming}},F'_{\mathrm{outgoing}},\mathcal{L}'_{\mathrm{incoming}})$.

    \item Remove all the subtrees that are connected to $H$ via edges in $F_{\mathrm{incoming}}$.

    \item Replace $H$ with $H'$.

    \item Let $\{u,v\}$ be an edge in $F'_{\mathrm{incoming}}$, where $u$ is the node of $H'$. Note that $\{u,v\}$ has a label-set assigned according to $\mathcal{L}'_{\mathrm{incoming}}$. For each edge $\{u,v\}$ in $F'_{\mathrm{incoming}}$ (where $u$ is the node of $H'$) we attach a copy of one subtree that was connected to $H$ satisfying that the label-set assigned to the edge connecting the subtree to $H$ is the same as the label-set of the edge $\{u,v\}$.
\end{itemize}
Note that the last operation is possible since the set of the label-set in $\mathcal{L}'_{\mathrm{incoming}}$ is the same as the set of the label-sets in $\mathcal{L}_{\mathrm{incoming}}$, and hence the label-set of $\{u,v\}$ must be a label-set that occurs in some edge in $F_{\mathrm{incoming}}$ that connects some subtree to a node of $H$.

\begin{lemma}\label{lem:sizeRepr}
    Let $T$ be a representative tree. Let $T'=\mathrm{pump}_k(w,T)$. $T'$ contains $O(w)$ nodes.
\end{lemma}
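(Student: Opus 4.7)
The plan is to exploit the recursive structure of a representative tree and combine it with the multiplicative effect of each layer's pumping operation. First I would observe that before any pumping is applied, a representative tree $T$ generated by the testing procedure has constant size: at every step of the procedure, the number of items glued together is bounded by $\Delta$, the compress paths have length at most $2\ell$, and the recursion has depth at most $k$, all of which are constants depending only on $\Pi$. Hence $|T| = O(1)$, and in particular, at every compress layer $i \in \{1,\ldots,k-1\}$, the path inside $T$ has length $O(\ell)$ and carries $O(1)$ subtrees hanging off it, each of which is itself a representative tree restricted to layers $\leq i$.

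Next I would track how the sizes grow when we apply $\mathrm{pump}_k(w, \cdot)$ layer by layer, from $i=1$ upward. Define $N_i(w)$ as the maximum size that a subtree of $T$ whose top compress path lies in layer $i$ can attain after being pumped at all layers $1,\ldots,i$. From the description of $\mathrm{pump}_k$, replacing the compress path $H$ of layer $i$ with $H'$ gives a path of length $w_i + O(\ell)$, and for every edge in $F'_{\mathrm{incoming}}$ we re-attach a copy of one of the subtrees that was originally attached to $H$; each such subtree has already been pumped up through layer $i-1$ and therefore has size at most $N_{i-1}(w)$ (with the convention $N_0(w) = O(1)$, capturing the constant-size pure rake pieces). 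Since the max degree is $\Delta = O(1)$, each node of $H'$ carries only $O(1)$ subtrees. This yields the recurrence
\[
    N_i(w) \;=\; O\!\bigl(w_i \cdot N_{i-1}(w)\bigr), \qquad N_0(w) = O(1).
\]

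I would then solve the recurrence explicitly. Unrolling and using $w_j = w^{2^{j-1}/(2^k-1)}$ gives
\[
    N_i(w) \;=\; O\!\left(\prod_{j=1}^{i} w_j\right) \;=\; O\!\left(w^{\sum_{j=1}^{i} 2^{j-1} / (2^k-1)}\right) \;=\; O\!\left(w^{(2^i - 1)/(2^k-1)}\right).
\]
Setting $i = k$ yields $N_k(w) = O(w^{(2^k-1)/(2^k-1)}) = O(w)$. Finally, a full representative tree $T$ consists of a constant-size rake cap at the root sitting on top of at most $\Delta = O(1)$ subtrees of the kind counted by $N_k$, so $|T'| = O(1) + O(1) \cdot N_k(w) = O(w)$, establishing the lemma.

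The step I expect to require the most care is the formalization of the recursive structure, in particular checking that pumping at layer $i$ does not interfere with the already-performed pumpings at lower layers (so that the ``attached subtree'' size really is bounded by $N_{i-1}(w)$ and not something larger). The guarantees of \Cref{lem:pump} -- that the maximal class and the multiset of incoming label-sets are preserved -- are exactly what ensures the replacement subtrees are legitimate, and that this bookkeeping is valid; once this is verified, the remaining size accounting is a routine geometric-sum calculation.
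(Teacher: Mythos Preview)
Your proposal is correct and follows essentially the same approach as the paper: both arguments bound the pumped tree's size by $O\bigl(\prod_{i} w_i\bigr)$ and then compute $\sum_{i} 2^{i-1}/(2^k-1) = 1$ to obtain $O(w)$. The paper's proof is a one-liner that simply asserts the product bound by pointing to the recursive structure of representative trees, whereas you spell out the recurrence $N_i(w) = O(w_i \cdot N_{i-1}(w))$ explicitly; this added rigor is fine and does not diverge from the paper's reasoning.
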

\begin{proof}
    By construction of representative trees (see \Cref{fig:testing}), the number of nodes is bounded by
    $O(\prod_{i=1}^k w_i) = O(w^{\sum_{i=1}^{k} 2^{i-1} / (2^k - 1)}) = O(w)$.
\end{proof}

\subparagraph{Adding nodes.}
We now provide a procedure that takes as input a value $n$ and a pumped tree of at most $n$ nodes containing at least one temporary node, and adds nodes to the tree in order to achieve that the amount of nodes are exactly $n$. The procedure $\mathrm{add}(n,T)$ is defined as follows.
Let $j$ be the maximum layer index of temporary nodes in $T$, and let $v$ be an arbitrary temporary node $v$ of layer $j$.
We attach a path to $v$ in such a way that the number of nodes becomes exactly $n$.

\subparagraph{Optimal form of an algorithm.}
It has been shown in \cite{Feuilloley17} that any algorithm can be normalized in a way that nodes do not spend useless rounds of computation. While such a result has been proved on graphs with linearly bounded growth, it can be easily adapted, in the case of LCLs in the black-white formalism, to be more general.
\begin{lemma}[\cite{Feuilloley17}]\label{lem:neighborsrun}
    Let $A$ be an algorithm that solves an LCL problem $\Pi$ with node-averaged complexity $T$. Then, if a node $v$ runs for $t$ rounds, either there exist at least $t/2 - 1$ nodes in the $t/2$-radius neighborhood of $v$ that run for at least $t/2$ rounds, or $v$ could terminate earlier.
\end{lemma}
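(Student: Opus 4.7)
The proof proceeds by a normalization argument, reasoning via the contrapositive: I would assume that $v$ cannot terminate earlier in any correct algorithm with the same node-averaged guarantees, and then deduce the structural condition. The first step is to establish a \emph{chain property}: if a node $u$ runs for $s$ rounds in an algorithm $A$ that is optimal at $u$, then at least one neighbor of $u$ runs for at least $s-1$ rounds. Suppose otherwise. Then by round $s-1$, every neighbor of $u$ has already terminated, so within its $(s-1)$-radius view the node $u$ knows precisely the labels that its neighbors have committed to on each incident edge. In the black-white formalism, the local constraint at $u$ depends only on $u$'s input and the labels on edges incident to $u$, so $u$ can commit at round $s-1$ to a valid completion of its remaining incident edges; correctness of $A$ guarantees that such a completion exists, contradicting the assumed optimality $T_u = s$.

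The second step is to iterate the chain property starting from $v_0 := v$, producing a sequence $v = v_0, v_1, \ldots, v_{\lfloor t/2 \rfloor}$ where each $v_{i+1}$ is a neighbor of $v_i$ with $T_{v_{i+1}} \ge T_{v_i} - 1 \ge t - (i+1)$. Because the underlying graph is a tree, at each step I would choose $v_{i+1}$ to be a neighbor of $v_i$ different from $v_{i-1}$ whenever possible, turning the walk into a simple path; then $v_1, \ldots, v_{\lfloor t/2 \rfloor}$ are distinct nodes lying inside the $t/2$-ball around $v$, each with runtime at least $t/2$. This yields the required $\lfloor t/2 \rfloor \ge t/2 - 1$ long-running nodes in $v$'s $t/2$-neighborhood.

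The main obstacle is the case where $v_i$'s only long-running neighbor is precisely $v_{i-1}$, i.e.\ the walk wants to backtrack. To handle this, I would strengthen the normal form from single nodes to adjacent pairs: if two adjacent long-running nodes $u, w$ have all of their \emph{other} neighbors terminating early, then $u$ and $w$ can jointly simulate and commit to compatible output labels on all their incident edges at a strictly earlier round, since each already has the full relevant view of the other, again contradicting optimality. This extension removes the backtracking case and makes the simple-path construction above go through. The most delicate point is extending this coordination argument to the randomized setting: one must show that the joint commitment preserves the failure probability and the node-averaged complexity guarantees of $A$, using only the shared knowledge of identifiers and random bits available in the common view of the pair at the earlier round.
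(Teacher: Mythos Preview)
Your basic chain property in step one is correct, and on paths or cycles (Feuilloley's original setting) the chain-plus-pair argument does work, because there backtracking can only mean oscillating between two adjacent nodes, in which case the pair is genuinely isolated and your joint-simulation lemma applies. On trees with branching, however, the pair fix does not close the gap. Concretely: suppose the simple-path walk $v_0,\ldots,v_i$ gets stuck at some $v_i$ with $i\ge 2$, meaning $v_i$'s only neighbor running at least $t-i-1$ rounds is $v_{i-1}$. Then the pair $(v_i,v_{i-1})$ does \emph{not} satisfy your hypothesis ``all of their other neighbors terminating early,'' because $v_{i-1}$ still has the long-running neighbor $v_{i-2}$ (with $T(v_{i-2})\ge t-i+2$). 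So your pair lemma simply does not fire, and nothing forces the walk to reach length $\lfloor t/2\rfloor$. The same obstruction persists if you try to extend from pairs to triples, quadruples, and so on: you are rediscovering the need to reason about the whole connected set of long-running nodes at once.

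The paper's proof does exactly that. It works directly with the connected component $C$ of $v$ in the subgraph of nodes still running after round $t/2$. If $|C|<t/2$, then $C$ is a subtree with fewer than $t/2$ nodes and hence diameter less than $t/2$, so strictly before round $t$ the node $v$ can gather the round-$t/2$ state of every node in $C$ together with the committed outputs on the already-terminated boundary of $C$; $v$ then simulates the remainder of $A$ on all of $C$ without further communication and terminates before round $t$. Your pair lemma is precisely the special case $|C|=2$ of this component argument; what is missing from your proposal is the generalization to arbitrary $|C|$.
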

\begin{proof}[Proof Sketch]
    Suppose that in the $t$-radius neighborhood of $v$ there are at most $t/2 - 2$ nodes that run for at least $t/2$ rounds. This means that the connected component $C$ induced by $v$ and nodes that still have to terminate after $t/2$ rounds is of size strictly less than $t/2$. Also, in strictly less than $t$ rounds,  $v$ sees the state that all these nodes had at time $t/2$, and of their neighbors. We show that this implies that $v$ can terminate in strictly less than $t$ rounds, proving the claim. Since $C$ is surrounded by nodes that already terminated, then $v$, solely as a function of its state, the state that the nodes of $C$ had at time $t/2$, and the output of the nodes connected to $C$, can simulate the execution of $A$ for all nodes of $C$ without additional communication, and obtain an output for all the nodes in $C$, and hence $v$ can terminate in strictly less than $t$ rounds. 
\end{proof}

\subparagraph{From node-averaged complexity to expected runtime.}
In the following, we assume $\mathcal{A}$ to be an algorithm for $\Pi$ that has been normalized according to \Cref{lem:neighborsrun}. We now show that a stronger notion of node-averaged complexity must hold for $\mathcal{A}$, namely that the expected running time of each node is bounded. In the case of randomized algorithms, the expectation is taken over all possible assignments of random bits and IDs, while for deterministic algorithms the expectation is taken over all possible assignments of IDs (if we imagine to run the deterministic algorithm with an ID assignment chosen uniformly at random among the valid ones).
\begin{lemma}\label{lem:expectedruntime}
    Let $\mathcal{A}$ be an algorithm with node-averaged complexity at most $w^{1/(2^k - 1)} / f(n)$, where:
    \begin{itemize}
        \item $w = w(n)$ is in $O(n)$;
        \item for all $n$ that are at least a large enough constant $n'$, $f(n)$ is at least a large enough constant.
    \end{itemize}
    Let $T = \mathrm{add}(n,\mathrm{pump}_k(w,T'))$, where $T'$ is a representative tree. There exists a constant $c > 0$ satisfying the following.
    For an arbitrary $i$, let $v$ be an arbitrary node in a compress layer $i$ satisfying that $v$ is at distance at least $c \cdot w^{2^{i-1}/(2^k - 1)} / f(n)$ from the endpoints of the compress path.
    
    Assume that $\mathcal{A}$ is run on $T$. Then, the expected running time of node $v$ is at most $c \cdot w^{2^{i-1}/(2^k - 1)} / f(n)$. 
\end{lemma}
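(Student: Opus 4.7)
The plan is to argue by contradiction: assume $\E[T_v] > c \cdot w_i/f(n)$ for some deep node $v$ in a compress layer-$i$ path $P$, where $w_i := w^{2^{i-1}/(2^k-1)}$ and $c$ is a constant (depending on $k$, $\Delta$, and $\Pi$) to be chosen sufficiently large. From this I will derive a lower bound on $\E[\sum_u T_u]$ that contradicts the node-averaged bound $\E[\sum_u T_u] \le n \cdot w^{1/(2^k-1)}/f(n)$.

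The first main ingredient is a symmetry argument based on the pumping construction. The proof of \Cref{lem:pump} extends a constant-length compress path into a path of length $\Theta(w_i)$ by repeating a constant-length block, so the resulting pumped path $P$, together with all its hanging subtrees, is translation-invariant modulo a constant-sized period. Because the unique IDs are assigned uniformly at random and $\mathcal{A}$'s internal randomness is independent of them, a set $S \subseteq V(P)$ of $\Omega(w_i)$ translation-equivalent deep nodes all share the same expected runtime as $v$; in particular $\E[T_{v'}] > cw_i/f(n)$ for every $v' \in S$.

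The second ingredient converts these many individual large expectations into a lower bound on the total runtime via \Cref{lem:neighborsrun}. For each $v' \in S$, combining \Cref{lem:neighborsrun} with a Markov-type argument shows that with constant probability the $T^*$-neighborhood of $v'$, for $T^* := c w_i/(4f(n))$, contains $\Omega(T^*)$ nodes running for $\Omega(T^*)$ rounds, and hence contributes $\Omega((T^*)^2)$ to $\E[\sum_u T_u]$ in expectation. Selecting a sparse subfamily $S' \subseteq S$ of size $\Omega(f(n))$ whose elements are pairwise at distance more than $2T^*$ along $P$ (which exists because consecutive elements of $S$ are at distance $O(1)$ apart), the corresponding neighborhoods are disjoint, giving $\E[\sum_u T_u] = \Omega(c^2 w_i^2/f(n))$. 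Together with the node-averaged hypothesis this forces $c^2 w_i^2 \le O(n \cdot w^{1/(2^k-1)})$.

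The hard part is handling the case of small $i$: using only the bound $w \le O(n)$, the inequality above does not immediately fail for a merely large constant $c$ when $i$ is small compared to $k$, so the bare symmetry-plus-counting step is not by itself enough. To close the argument uniformly in $i$, I iterate the same translational-symmetry reasoning inside the hanging subtrees of $P$, which are themselves pumped structures for lower layers and thus contribute additional $\Omega(w_j)$ disjoint blocks of runtime at each layer $j<i$ (this is where \Cref{lem:sizeRepr}, which constrains the total mass available in the pumped tree by $O(w)$, is used to balance the contributions). An induction on $i$ that combines these across-layer contributions yields the required contradiction and hence the per-node bound $\E[T_v] \le c \cdot w_i/f(n)$ for a single uniform constant $c$.
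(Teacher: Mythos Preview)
Your first two ingredients are sound, and you correctly identify that by themselves they only yield $\E[\sum_u T_u] = \Omega(c\, w_i^2/f(n))$, which forces $c\,w_i^2 \lesssim n\, w^{1/(2^k-1)}$; for $i<k$ and $w=\Theta(n)$ the right-hand side is $\Theta\!\bigl(w^{2^k/(2^k-1)}\bigr)$ while $w_i^2 = w^{2^i/(2^k-1)}$, so no constant $c$ gives a contradiction. The problem is that you stay inside the given $n$-node tree $T$, where only $\Theta(w_i)$ translates of $v$ exist, and the remaining $n - O(w)$ nodes of $T$ (the path added by $\mathrm{add}$) dilute the average.

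Your proposed inductive repair does not close this gap. The induction hypothesis provides \emph{upper} bounds on the expected runtimes of nodes in lower compress layers, which cannot be leveraged to increase a \emph{lower} bound on $\E[\sum_u T_u]$. Saying that the hanging subtrees ``contribute additional $\Omega(w_j)$ disjoint blocks of runtime'' is not justified: nothing in the hypothesis forces those lower-layer nodes to run long, and \Cref{lem:neighborsrun} applied to $v$ only guarantees $\Omega(t)$ long-running nodes somewhere in the $t/2$-ball, not specifically in each lower layer. Even if all $\Omega(t)$ long-running nodes happened to lie on the layer-$i$ path, summing over your $\Theta(f(n)/c)$ disjoint balls still gives only $\Omega(c\,w_i^2/f(n))$, the same insufficient bound.

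The paper avoids the dilution issue altogether by \emph{not} working inside $T$. It bounds the size of the $t$-ball $B$ of $v$ by $y = O\bigl(t\cdot\prod_{j=1}^{i-1} w_j\bigr) = O\bigl(c\,w^{(2^i-1)/(2^k-1)}/f(n)\bigr)$, using that each node of the layer-$i$ path carries a hanging subtree of size $O(\prod_{j<i} w_j)$ by the structure of representative trees. It then builds a \emph{new} $n$-node instance $U$ by gluing $\Theta(n/y)$ disjoint copies of $B$. Each copy of $v$ has the same view and hence the same expected runtime $>t$, so \Cref{lem:neighborsrun} gives total runtime $\Omega\!\bigl((n/y)\cdot t^2\bigr)$ on $U$, i.e.\ average $\Omega(t^2/y) = \Omega\bigl(c\,w^{1/(2^k-1)}/f(n)\bigr)$, contradicting the node-averaged bound for every $i$ simultaneously. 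The missing idea in your argument is precisely this: replace the periodicity-within-$T$ step by the ball-size bound plus the freedom to manufacture a dense instance $U$.
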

\begin{proof}
    Assume, for a contradiction, that there exists a node $v$ in a compress layer $i$ with expected running time strictly larger than $t = c \cdot w^{2^{i-1}/(2^k - 1)} / f(n)$, and that is at distance at least $t$ from the endpoints. Let $B$ be the $t$-radius neighborhood of $v$. Observe that, by construction (see \Cref{fig:testing}), this neighborhood contains $y = O(t \cdot \prod_{j=1}^{i-1} w^{2^{j-1}/(2^k - 1)}) = O(c \cdot w^{(2^i - 1)/(2^k -1)} / f(n))$ nodes. Moreover, for $n$ large enough, this neighborhood does not contain all nodes of $T$ ($n'$ is set to be the smallest value of $n$ for which this is true). This implies that we can make $x = \Omega(n / y)$ copies of $B$ and connect them in such a way that each copy of $v$ has the same view in the constructed tree $T$ and in $B$. Let $U$ be the resulting tree. 
    By using \Cref{lem:neighborsrun}, we obtain that, by running $\mathcal{A}$ on $U$, the average of the expected running times is strictly larger than $x \cdot (t/2)(t/2 - 1)/n=\Omega(t^2/y)$, that, for some constant $c''$, is at least,
    \[
        c'' \cdot \frac{c^2 \cdot w^{2^{i}/(2^k - 1)} / f(n)^2}{c \cdot w^{(2^i - 1)/(2^k -1)} / f(n)} = \frac{c'' \cdot c}{f(n)} \cdot w^{(2^i - 2^i + 1) / (2^k - 1)} = c'' \cdot c \cdot w^{1 / (2^k - 1)} / f(n).
    \]
    This implies that the average running time is strictly larger than $c'' \cdot c \cdot w^{1 / (2^k - 1)} / f(n)$. By setting $c = c' / c''$, we get a contradiction on the assumption on the runtime of $\mathcal{A}$.
\end{proof}

For randomized algorithms, we will use \Cref{lem:expectedruntime} for the case $f(n) = \Theta(\log n)$, for which we get the following corollary.
 
\begin{corollary}\label{cor:expectedruntime}
    Let $\mathcal{A}$ be a randomized algorithm with node-averaged complexity at most $c' \cdot w^{1/(2^k - 1)} / \log n$, where $w = w(n)$ is in $O(n)$, and $c'$ is a small enough constant. Let $T = \mathrm{add}(n,\mathrm{pump}_k(w,T'))$, where $T'$ is a representative tree. 

    There exists a constant $c > 0$ directly proportional to $c'$ satisfying the following.
    For an arbitrary $i$, let $v$ be an arbitrary node in a compress layer $i$ satisfying that $v$ is at distance at least $c \cdot w^{2^{i-1}/(2^k - 1)} / \log n$ from the endpoints of the compress path.
    
    Assume that $\mathcal{A}$ is run on $T$. Then, the expected running time of $v$ is at most $c \cdot w^{2^{i-1}/(2^k - 1)} / \log n$, for all $n \ge n'$ for some constant $n'$.
\end{corollary}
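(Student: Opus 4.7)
The plan is to obtain the corollary as an immediate instantiation of \Cref{lem:expectedruntime} with the specific choice $f(n) := \log(n)/c'$. With this choice, the lemma's hypothesis ``$\mathcal{A}$ has node-averaged complexity at most $w^{1/(2^k-1)}/f(n)$'' becomes exactly the corollary's hypothesis ``at most $c' \cdot w^{1/(2^k-1)}/\log n$''. So the whole job is to verify that this $f$ is admissible for the lemma and then translate the conclusion back.

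First I would verify the regularity assumption \Cref{lem:expectedruntime} imposes on $f$: for all $n$ above some constant $n'$, $f(n)$ must exceed any prescribed constant. This is immediate because $\log(n)/c' \to \infty$ as $n \to \infty$ for any fixed $c' > 0$; for every constant $C$ there is an $n'$ with $f(n) \ge C$ whenever $n \ge n'$. Choosing $c'$ small enough (this is why the corollary stipulates ``$c'$ a small enough constant'') accommodates whatever implicit constant the lemma needs before the regularity kicks in, and also subsumes the case $f(n) \ge 1$ for all $n \ge n'$.

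Next I would invoke \Cref{lem:expectedruntime}, which then produces an absolute constant $c_0 > 0$ (depending only on $\Pi$, $k$, and the combinatorics of the pumped tree, not on $c'$) such that any node $v$ in a compress layer $i$ lying at distance at least $c_0 \cdot w^{2^{i-1}/(2^k-1)}/f(n)$ from the endpoints of its compress path has expected running time at most the same quantity. Substituting $f(n) = \log(n)/c'$ turns both expressions into $c_0 c' \cdot w^{2^{i-1}/(2^k-1)}/\log n$. Setting $c := c_0 c'$ then yields precisely the statement of the corollary, and by construction $c$ is directly proportional to $c'$ as required.

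There is essentially no technical obstacle here; the only thing to watch is the bookkeeping of the two constants, so that ``directly proportional to $c'$'' really comes out of the substitution (which it does, since $c_0$ is independent of $c'$). The one external check is that the expectation underlying \Cref{lem:expectedruntime}---over the joint distribution of random IDs and random bits of $\mathcal{A}$---is the same expectation meant in the corollary. This has already been arranged by the reduction at the beginning of \Cref{sec:lower}, where the ``with high probability'' node-averaged bound for a randomized algorithm is promoted to an almost-sure bound at the cost of at most a constant factor; so no additional argument is needed at this point.
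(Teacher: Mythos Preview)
Your proof is correct and takes exactly the same approach as the paper, which simply states that the corollary follows by setting $f(n) := \log n / c'$ in \Cref{lem:expectedruntime}. Your version is more detailed in verifying the admissibility of $f$ and tracking the proportionality $c = c_0 c'$, but the substance is identical.
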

\begin{proof}
    This corollary follows by setting $f(n) := \log n /c'$ in \Cref{lem:expectedruntime}.
\end{proof}

As already mentioned, \Cref{lem:expectedruntime} holds also if $\mathcal{A}$ is a deterministic algorithm that is run on a random ID assignment. However, by the definition of deterministic node-averaged complexity, the bound on the node-averaged complexity must hold for any possible valid ID assignment. Moreover, for deterministic algorithms, we will use $f(n) = \Theta(1)$. We show the following corollary.

\begin{corollary}\label{cor:expectedruntimedet}
Let $\mathcal{A}$ be a deterministic algorithm with node-averaged complexity at most $c' \cdot w^{1/(2^k - 1)}$, where $w = w(n)$ is in $O(n)$, and $c'$ is a small enough constant.
    Let $T = \mathrm{add}(n,\mathrm{pump}_k(w,T'))$, where $T'$ is a representative tree. Let $\mathcal{X}$ be an arbitrary subset of the possible ID assignments in $T$.

There exists a constant $c > 0$ directly proportional to $c'$ satisfying the following.
    For an arbitrary $i$, let $v$ be an arbitrary node in a compress layer $i$ satisfying that $v$ is at distance at least $c \cdot w^{2^{i-1}/(2^k - 1)}$ from the endpoints of the compress path.
    Assume that there exist at least $n$ ID assignments $X_1,\ldots,X_n$ in $\mathcal{X}$ satisfying the following. Let $I_i$ be the set of IDs used by nodes at distance at most $c \cdot w^{2^{i-1}/(2^k - 1)}$ from $v$ when using the ID assignment $X_i$. For all $i \neq j$, the sets $I_i$ and $I_j$ are disjoint.
    
    Then, there exists an ID assignment from $\mathcal{X}$ satisfying that, by running $\mathcal{A}$ with such an assignment, the runtime of $v$ is bounded by $c \cdot w^{2^{i-1}/(2^k - 1)}$.
\end{corollary}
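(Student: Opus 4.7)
The strategy mirrors the contradiction argument in \Cref{lem:expectedruntime}, adapted to the deterministic setting by exploiting the disjointness of the ID sets $I_1,\ldots,I_n$ in place of averaging over random bits. Throughout, $i$ denotes the compress-layer index from the statement and $j$ indexes the given assignments; write $t := c \cdot w^{2^{i-1}/(2^k-1)}$ for the target runtime bound on $v$, and let $B$ denote the $t$-radius neighborhood of $v$ in $T$, which has $y = O(c \cdot w^{(2^i-1)/(2^k-1)})$ nodes by the same count as in \Cref{lem:expectedruntime}.

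Assume for contradiction that for every $j \in \{1,\ldots,n\}$, running $\mathcal{A}$ on $T$ with ID assignment $X_j$ makes $v$ terminate strictly after round $t$. By \Cref{lem:neighborsrun} (which I may invoke since $\mathcal{A}$ can be normalized into that optimal form without affecting its node-averaged bound), under each $X_j$ there exist at least $t/2 - 1$ other nodes inside the $t/2$-ball around $v$ that run for at least $t/2$ rounds.

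Next, build exactly as in the proof of \Cref{lem:expectedruntime} a tree $U$ on $n$ nodes that contains $x = \Omega(n/y)$ vertex-disjoint copies $B_1,\ldots,B_x$ of $B$, glued to a padding structure so that every copy $v_j$ of $v$ has $B_j$ as its entire $t$-ball in $U$. Assign IDs to $U$ as follows: on the copy $B_j$ (for $j = 1,\ldots,x \leq n$), use the IDs that $X_j$ assigns to the corresponding nodes of $B$ in $T$; on padding nodes, use any IDs not already consumed. By the disjointness hypothesis $I_j \cap I_{j'} = \emptyset$ for $j \neq j'$, these choices are globally consistent and yield a valid ID assignment. Because $\mathcal{A}$ is deterministic and each $v_j$ has in $U$ the same $t$-radius view (structure plus IDs) as $v$ has in $T$ under $X_j$, each $v_j$ runs for more than $t$ rounds, and the corresponding $t/2 - 1$ witnesses near $v_j$ run for at least $t/2$ rounds.

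Summing the per-node runtimes over all $x$ copies yields a total of at least $x \cdot (t/2)(t/2 - 1) = \Omega(n t^2 / y)$, so the node-averaged complexity of $\mathcal{A}$ on $U$ under this ID assignment is at least $\Omega(t^2 / y) = \Omega(c \cdot w^{1/(2^k-1)})$; pick $c = c'/c''$, where $c''$ is the implicit constant in this $\Omega$-bound, so that this exceeds $c' \cdot w^{1/(2^k-1)}$ and contradicts the hypothesized deterministic node-averaged complexity of $\mathcal{A}$. This gives the desired constant $c$ directly proportional to $c'$. The main obstacle is making the construction of $U$ rigorous (ensuring $U$ is a tree, that the copies are embedded faithfully, and that the padding and connectors do not leak into any $t$-ball of a $v_j$), but this is inherited verbatim from the construction already carried out for \Cref{lem:expectedruntime}; the genuinely new ingredient here is only the ID-alignment step, which the disjointness of the $I_j$'s makes immediate.
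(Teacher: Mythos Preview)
Your proposal is correct and follows essentially the same approach as the paper's own proof: assume for contradiction that all $n$ assignments make $v$ run too long, replicate the $t$-ball of $v$ as in \Cref{lem:expectedruntime}, and use the disjointness of the $I_j$ to equip each copy with a distinct ID set so that the glued instance has globally unique IDs and violates the deterministic node-averaged bound. The paper's proof is a terse two-sentence sketch pointing back to \Cref{lem:expectedruntime} and singling out the ID-uniqueness issue; you have simply unpacked that sketch, including the invocation of \Cref{lem:neighborsrun} and the $t^2/y$ count, which the paper leaves implicit.
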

\begin{proof}
Assume for a contradiction that there are $n$ ID assignments satisfying the requirements, and for all such assignments, $v$ has runtime strictly larger than $c \cdot w^{2^{i-1}/(2^k - 1)}$.
Similarly as in the proof of \Cref{lem:expectedruntime}, we can create an instance where the average runtime of the nodes is too high, reaching a contradiction.
In particular, as in the proof of \Cref{lem:expectedruntime}, we can construct an instance by copying the neighborhood of $v$ for enough times so that the average runtime is high. However, differently from the randomized case, we need to ensure that the constructed instance has unique IDs. By the assumption on the IDs, we can assign different IDs in each copy.
\end{proof}

\subparagraph{Designated nodes, edges, and neighborhoods.}
Let $T'$ be a representative tree, and let $T = \mathrm{add}(n,\mathrm{pump}_k(w,T'))$.
For each compress path $H$ in $T$, we define the \emph{designated node} of $H$ to be the white node $v$ in the middle of the path $H$ (breaking ties arbitrarily). Moreover, we define the \emph{designated edge} of the path $H$ to be an arbitrary edge in $H$ that is incident to $v$.

Let $v$ be a designated node of level $i$. We define the designated neighborhood of $v$ w.r.t.\ a constant $\beta$ and a function $g$ as the set $S$ containing the following nodes. Initialize $S$ with $v$ and all the nodes at distance at most $w_i / (\beta g(n))$ from $v$ in the compress path containing $v$. Then, start from a node $u$ in $S$ and add to $S$ all its neighbors $z$ of strictly lower layers. If $z$ is in a compress path of layer $j$, then also add all the nodes at distance at most $w_j / \beta$ from $z$ in the compress path containing $z$. Repeat this operation until nothing new is added.

\subparagraph{Independent executions.}
We now prove a statement about randomized algorithms running in $T(n) = o(w^{1/(2^k -1)}/\log n )$, where $w = w(n)$ is in $O(n)$. For this purpose, we consider the set $\mathcal{S}$ of designated neighborhoods w.r.t.\ $g(n) = \log n$ and some constant $\beta$ to be fixed later.
More in detail, we use \Cref{cor:expectedruntime} to show that,  with large enough probability, it holds that any node of the graph, within its running time, is not able to communicate with two nodes that are part of different designated neighborhoods in $\mathcal{S}$.
\begin{lemma}\label{lem:onemodified}
    Let $\mathcal{A}$ be a randomized algorithm with node-averaged complexity $o(w^{1/(2^k -1)}/\log n )$, where $w = w(n)$ is in $O(n)$.
    Then, for any arbitrary constant $c$, with probability at least $1 - 1/n^c$, a node is able to communicate with the nodes of at most one designated neighborhood in $\mathcal{S}$.
\end{lemma}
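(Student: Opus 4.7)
\medskip
\noindent\textbf{Proof plan.}
The plan is to combine \Cref{cor:expectedruntime} with a Markov-type tail bound and a union bound over the $O(1)$ pairs of designated neighborhoods in $\mathcal{S}$. First I would apply \Cref{cor:expectedruntime}: choosing the implicit constant $c'$ in the $o(\cdot)$ assumption to be as small as desired (which is legitimate since the hypothesis is $o$, not $O$), we obtain $\mathbb{E}[T_u] \le c\, w_{i_u}/\log n$ for every compress-layer node $u$ at level $i_u$ that lies sufficiently deep inside its compress path, where $w_{i_u} = w^{2^{i_u - 1}/(2^k - 1)}$. I would also record the structural observation that the representative tree $T'$ generated by the testing procedure has constant size (bounded as a function of $\Pi$ and $k$) and that $\mathrm{pump}_k$ and $\mathrm{add}$ only enlarge paths without introducing new compress paths; hence $|\mathcal{S}| = O(1)$.

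Next I would control the pairwise graph distances between designated neighborhoods. For two distinct $S, S' \in \mathcal{S}$ around designated nodes $v, v'$ of levels $i, i'$, one walks up from $v$ in the pumped hierarchy until meeting the branch that leads down to $v'$; let $j$ be the level of that meeting compress path. Because a designated neighborhood consumes only a $1/\beta$-fraction of the surrounding compress path width at lower levels and a $1/(\beta \log n)$-fraction at its own top level, the closest point of $S$ to the closest point of $S'$ is separated by at least $(1 - O(1/\beta))\, w_j = \Omega(w_j)$, where the implicit constant can be made arbitrarily close to one by choosing $\beta$ large (depending on the desired $c$).

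For a node $u$ to intersect both $S$ and $S'$ with its communication ball $B(u, T_u)$, the triangle inequality forces $T_u \ge d(S, S')/2 = \Omega(w_j)$. Combining this with Markov's inequality and \Cref{cor:expectedruntime} yields
\[
\Pr\!\left[T_u \ge d(S,S')/2\right] \;\le\; \frac{2\,\mathbb{E}[T_u]}{d(S,S')} \;=\; O\!\left(\frac{w_{i_u}}{w_j \log n}\right),
\]
which is smallest when $i_u \le j$ and largest for $u$ lying in the meeting compress path itself. Taking a union bound over the $O(1)$ pairs $(S,S')$ contributes the same order; the remaining task is to union-bound carefully over the at most $O(n)$ nodes in $T$.

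The main obstacle is precisely this final union bound: plain Markov only yields per-node probability $O(1/\log n)$, so a naïve sum over all nodes falls far short of $1/n^c$. To overcome this, I would exploit two additional structural facts. First, the deterministic bound $\sum_u T_u \le n\, T(n) = o(n\, w^{1/(2^k-1)}/\log n)$ on the high-probability conditioning event limits the number of nodes with $T_u \ge t$ to $o(n\, w^{1/(2^k-1)}/(t \log n))$. Second, \Cref{lem:neighborsrun} forces any node with $T_u \ge t$ to sit in a cluster of at least $t/2 - 1$ nodes whose runtimes are all $\Omega(t)$, so each such cluster costs $\Omega(t^2)$ of the total runtime budget, bringing the number of clusters down to $o(n\, w^{1/(2^k-1)}/(t^2 \log n))$. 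Applying this level-by-level with $t = \Omega(w_j)$ for each $j$, together with the freedom to take $\beta$ arbitrarily large and $c'$ arbitrarily small, should amplify the per-node $O(1/\log n)$ tail into the desired $1 - 1/n^c$ probability. I expect this clustering-plus-budget amplification to be the technically delicate step, since one must verify that the bad clusters are compatible with the geometry of the designated neighborhoods and that no level contributes more than a $1/n^c$ share to the final failure probability.
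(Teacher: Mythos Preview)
Your structural observations are sound: the representative tree has $O(1)$ compress paths so $|\mathcal S|=O(1)$, any two designated neighborhoods are separated along some level-$j$ compress path by distance $\Omega(w_j)$, and \Cref{cor:expectedruntime} bounds the expected runtime of deep compress nodes by $O(w_i/\log n)$. You also correctly isolate the obstacle: a single Markov application gives only an $O(1/\log n)$ tail per node, far short of $n^{-c}$.

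The gap is in your proposed remedy. The total-runtime budget $\sum_u T_u\le nT(n)$ and the clustering from \Cref{lem:neighborsrun} are \emph{counting} constraints: they upper-bound how many nodes (or clusters) can have $T_u\ge t$ on any single execution, but they say nothing about the \emph{probability} that this count is zero. Even with the $t^2$ saving from clustering, at the smallest relevant threshold $t=\Theta(w^{1/(2^k-1)})$ the number of admissible bad clusters is $o\bigl(nT(n)/t^2\bigr)=o\bigl(n/(w^{1/(2^k-1)}\log n)\bigr)$, which for $w=O(n)$ and $k\ge 1$ is still polynomially large in $n$. So the budget never forces the count to be $<1$, and it provides no randomness to exploit. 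Tuning the constants $\beta$ and $c'$ only moves constant factors; it cannot convert an $O(1/\log n)$ tail into $n^{-c}$. Your final sentence (``should amplify \dots'') is exactly where the argument stops being a proof.

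The paper obtains the $n^{-c}$ bound by a different mechanism you are missing: \emph{independence}. For each compress path of level $i$ it considers the $\Theta(w_i)$-long buffer between the designated-node region and the endpoint region, and partitions this buffer into $\delta\log n$ disjoint subpaths $P_1,\dots,P_{\delta\log n}$, each of length $\Theta(w_i/\log n)$. By \Cref{cor:expectedruntime} the center $v_j$ of each $P_j$ has expected runtime at most a small constant times $|P_j|$, so Markov gives $\Pr\bigl[T_{v_j}>\tfrac12|P_j|\bigr]\le\tfrac12$. Because on the complementary event $v_j$ sees only inside $P_j$, these events depend on disjoint randomness and are independent; hence the probability that \emph{all} centers overrun is at most $2^{-\delta\log n}=n^{-\delta}$. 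The $\log n$ in the hypothesis is spent precisely here, buying $\Theta(\log n)$ independent trials instead of one. A union bound over the $O(1)$ compress paths then finishes. The budget and clustering lemmas play no role in this lemma; the amplification comes entirely from carving out independent witnesses along the separating path.
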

\begin{proof}
    Let $P$ be a compress path of layer $i$ and let $v$ be the designated node of $P$. By construction, $P$ is of length at least $w_i$, and $P$ is connected to a node of a higher layer on at most one side (see \Cref{fig:testing}). Assuming the higher layer exists, let $u$ be the last node of $P$ on that side. We prove that there exists a node $z$ in $P$ with the following properties: $z$ lies between $u$ and $v$; with high probability, $z$ does not see neither $u$ nor $v$, nor nodes at distance at most $w_i / \beta$ from them. Showing the existence of such a node $z$ would imply the claim, since, by definition, a designated neighborhood, defined as a function of a designated node of layer $i$, contains:
    \begin{itemize}
        \item some nodes $V$ that are close to the center of a compress path of layer $i$;
        \item all nodes that are within some bounded distance of nodes in $V$; by construction, if these nodes lie on some lower-layer compress path, then they are close to the beginning of such a path.
    \end{itemize}
    Hence, for a node to see two different designated neighborhoods, it must be able to see, at the same time, both the beginning and the middle of the same compress path, which is not possible if $z$ exists.

    Now, what is left, is to show the existence of node $z$. Let $z$ be the node at distance $w_i / 4$ from $v$ that lies in the path connecting $u$ and $v$. Let $P'$ be the path containing $z$ induced by nodes at distance strictly larger than $w_i / \beta$ from both $v$ and $u$. Assume $\beta \ge 8$, we get that $P'$ has length at least $w_i / 4$. Consider the subpath $P''$ of $P'$ that lies between $z$ and $u$, which has length at least $w_i / 8$. We prove that in $P''$, with high probability, there is a node that terminates in a runtime that is strictly less than half the length of $P''$, implying that $z$ cannot see $u$. The same proof holds for the path between $z$ and $v$. 

    We can split $P''$ into $\delta \log n$ disjoint subpaths of length at least $ w_i / (16 \delta \log n) $, for any chosen constant $\delta$. For each subpath $P_i$, let $v_i$ be the central node of the path, breaking ties arbitrarily.

    By applying \Cref{cor:expectedruntime}, we get that the expected runtime of each node $v_i$ is at most $t = w^{2^{i-1} / (2^k - 1)} / (\alpha \log n)$, for any chosen constant $\alpha$.
    The subpaths are of length at least \[
    w_i / (16 \delta \log n) = w^{2^{i-1}/(2^k - 1)}  / (16 \delta \log n).
    \]
    Hence, we can choose $\alpha$ large enough, as a function of $\delta$, so that the expected runtime of the nodes $v_i$ is less than $1/4$ of the length of $P_i$. Observe that, by the Markov inequality, the probability that a node $v_i$ runs for more than $1/2$ of the length of $P_i$ is at most $1/2$, and also observe that these events are independent for each $v_i$. Finally, note that, in order to be possible for $z$ to communicate with $u$, it must hold that all nodes $v_i$ run for strictly more than $1/2$ of the length of $P_i$, and this happens with probability at most $1/2^{\delta \log n} = 1 / n^\delta$.
\end{proof}

\begin{lemma}\label{lem:goodbitsrand}
    Let $\mathcal{A}$ be a randomized algorithm with node-averaged complexity $o(w^{1/(2^k -1)}/\log n )$ and failure probability at most $1 / n^c$, where $w = w(n)$ is in $O(n)$.
    For each designated neighborhood $S \in \mathcal{S}$, let $E^1_{S}$ be the event satisfying that, by running $\mathcal{A}$ on the designated node $v$ in $S$, $v$ does not communicate with any node outside $S$. Note that  $E^1_{S}$ solely depends on the random bits (including random IDs) assigned to the nodes of $S$.
    Moreover, let $E^2_S$ be an event 
    that solely depends on the random bits assigned to the nodes of $S$,
    and satisfying $P[E^2_{S}\mid E^1_{S}]=p_2\ge 2 / n^c $.
    Then, there exists a non-empty subset of possible random bits assignments satisfying the following.
    \begin{itemize}
        \item For each $S \in \mathcal{S}$, the events $E^1_S$ and $E^2_S$ hold.
        \item The failure probability of $\mathcal{A}$, when run using such random bits assignments, is at most $16 / (p_2 \cdot n^c)$.
    \end{itemize}
\end{lemma}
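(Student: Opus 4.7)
The plan is to exploit mutual independence of $(E^1_S \cap E^2_S)_{S\in\mathcal{S}}$ across the designated neighborhoods and then take the subset $Y$ to be their common intersection.

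\emph{Disjointness and independence.} First, I will verify that the designated neighborhoods in $\mathcal{S}$ are pairwise vertex-disjoint, which via the hypothesis that $E^1_S, E^2_S$ depend only on the random bits of $S$ yields independence of the events across different $S$. Unwinding the recursive definition of a designated neighborhood: for a layer-$i$ designated node $v$, the neighborhood $S$ is a window of half-length $w_i/(\beta\log n)$ around $v$ on $v$'s compress path, together with the lower-layer appendages that get truncated at distance $w_j/\beta$ from entry into any lower-layer compress path. On the other hand, the layer-$j$ designated neighborhood inside that same lower-layer path sits at its midpoint within radius $w_j/(\beta\log n)$. A short geometric check shows that for $\beta$ a sufficiently large absolute constant, the higher-layer window (which lives within $w_j/\beta$ of an endpoint of the layer-$j$ path) and the layer-$j$ window (which lives within $w_j/(\beta\log n)$ of its midpoint, hence $\Theta(w_j)$ from any endpoint) do not meet. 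Iterating this observation gives pairwise disjointness of all $S$'s and thus independence of the blocks $(R_S)_{S}$.

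\emph{Per-$S$ lower bound via Markov.} Second, for each fixed $S$ I apply \Cref{cor:expectedruntime} to the designated node $v$ of $S$. Since $\mathcal{A}$ has node-averaged complexity $o(w^{1/(2^k-1)}/\log n)$, the leading constant in the corollary can be chosen as small as we wish; taking it smaller than $1/(2\beta)$ and using Markov's inequality gives $P[T_v > w_i/(\beta\log n)] \le 1/2$. By the construction of the designated neighborhood this runtime bound is contained in $E^1_S$, so $P[E^1_S] \ge 1/2$, whence $P[E^1_S \cap E^2_S] = P[E^1_S] \cdot p_2 \ge p_2/2$.

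\emph{Constructing $Y$ and bounding the failure probability.} Finally, set $Y := \bigcap_{S\in\mathcal{S}}(E^1_S \cap E^2_S)$. Independence and the previous step give $P[Y] \ge (p_2/2)^{|\mathcal{S}|}$. The key observation is that $|\mathcal{S}|$ is a constant $M = M(\Pi,k)$: the unpumped representative tree $T'$ has bounded size depending only on $\Pi$ and $k$, and the pumping operation does not add compress paths. Since $\mathcal{A}$ may be assumed to have failure probability at most $1/n^{c_1}$ for any constant $c_1$ of our choosing, pick $c_1 := cM$; a short calculation using $p_2 \ge 2/n^c$ then gives
\[
P[F \mid Y] \;\le\; \frac{P[F]}{P[Y]} \;\le\; \frac{2^M}{p_2^M \, n^{cM}} \;\le\; \frac{16}{p_2 \, n^c},
\]
where the last inequality reduces to $2^{M-4} \le (p_2 n^c)^{M-1}$, which holds because $p_2 n^c \ge 2$. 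Non-emptiness of $Y$ follows from $P[Y]>0$, and the first bullet of the lemma holds by definition of $Y$. The main nontrivial ingredient is the geometric disjointness of distinct designated neighborhoods; the rest is a standard Markov-plus-Bayes argument, with the constant $16$ absorbing the $2^M$ slack from the $M$-fold product.
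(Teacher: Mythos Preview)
Your argument hinges on the claim that $|\mathcal{S}|$ is a constant $M=M(\Pi,k)$ because ``the pumping operation does not add compress paths.'' This is false. When $\mathrm{pump}_k$ processes a layer-$i$ compress path, it stretches $H$ to length $\Theta(w_i)$ and then attaches to each of the $\Theta(w_i)$ new path nodes a \emph{copy} of one of the subtrees that was originally hanging off $H$. Those subtrees come from $\mathcal{R}_i=\mathcal{R}_{i-1}\cup\mathcal{C}_{i-1}$ and may therefore contain (already pumped) compress paths of layers $<i$. Concretely, for $k\ge 2$ a layer-$k$ representative tree after pumping can have $\Theta(w_k)$ layer-$(k-1)$ compress paths, each of which carries $\Theta(w_{k-1})$ layer-$(k-2)$ paths, and so on; in total $|\mathcal{S}|$ is polynomial in $w$ (cf.\ the product bound in \Cref{lem:sizeRepr}). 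With $|\mathcal{S}|$ polynomial in $n$, your bound $P[Y]\ge(p_2/2)^{|\mathcal{S}|}$ is exponentially small, and the Bayes step $P[F\mid Y]\le P[F]/P[Y]$ blows up. The ``pick $c_1=cM$'' trick cannot rescue this because $M$ is not a constant.

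The paper sidesteps the size of $\mathcal{S}$ entirely. It first conditions on the high-probability global event of \Cref{lem:onemodified} (every node's view meets at most one $S\in\mathcal{S}$), and then restricts the bits inside each $S$ separately. The point is that on this global event, for any fixed node $u$ the output of $u$ is unaffected by the restrictions in all $S'\neq S(u)$, so the conditional failure probability picks up only a single factor $O(1/(p_1 p_2))$ rather than $|\mathcal{S}|$ such factors. Your disjointness/independence observation is correct and is implicitly used there too, but it is the ``each node sees at most one $S$'' step---not a product bound over $\mathcal{S}$---that makes the argument go through.
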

\begin{proof}
    Let $p_1$ be the probability that $E^1_{S}$ happens.
    By \Cref{lem:onemodified}, with probability at least $1 - 1/n^c$, each node is able to communicate with the nodes of at most one designated neighborhood in $\mathcal{S}$. Observe that, if we restrict to random bits assignments satisfying this condition, we get that:
    \begin{itemize}
        \item the failure probability increases at most to $2 / n^c$;
        \item the probability that $E^1_S$ holds is still at least $p_1/2$;
        \item the probability that $E^2_S$ holds, conditioned on $E^1_S$, is still a least $p_2/2$.
    \end{itemize}
    First, we restrict to these random bits. Then,
    for each $S \in \mathcal{S}$, we restrict the random bit assignment even further, by taking an assignment at random among the ones satisfies both $E^1_S$ and $E^2_S$. We get that the failure probability of each node is at most $\frac{2}{n^c} \cdot \frac{2}{p_1} \cdot \frac{2}{p_2}$, since a node is able to see at most one neighborhood $S$ where random bits have been restricted to satisfy $E^1_S$ and $E^2_S$, and such an assignment could have happened with probability at least $p_1/2 \cdot p_2/2$.

    By the Markov inequality, and by applying \Cref{lem:expectedruntime} with suitable constants, by running $\mathcal{A}$ on the designated node $v$ in $S$, $v$ does not communicate with any node outside $S$ with probability at least $1/2$. Hence, $p_1 = 1/2$. We thus get that the failure probability is bounded by $16 / (p_2 \cdot n^c)$.
\end{proof}

By setting $E^2_S$ as an event that always holds, we get the following corollary.
\begin{corollary}\label{cor:goodbitsrand}
    Let $\mathcal{A}$ be a randomized algorithm with node-averaged complexity $o(w^{1/(2^k -1)}/\log n )$ and failure probability at most $1 / n^c$, where $w = w(n)$ is in $O(n)$.
    Then, there exists a non-empty subset of possible random bits assignments satisfying the following.
    \begin{itemize}
        \item For each $S \in \mathcal{S}$, by running $\mathcal{A}$ on the designated node $v$ in $S$, $v$ does not communicate with any node outside $S$.
        \item The failure probability of $\mathcal{A}$, when run using such random bits assignments, is at most $16 / n^c$.
    \end{itemize}
\end{corollary}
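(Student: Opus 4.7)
The plan is to apply Lemma~\ref{lem:goodbitsrand} with a trivial choice of the auxiliary events $E^2_S$. Concretely, for each designated neighborhood $S \in \mathcal{S}$, I would set $E^2_S$ to be the full sample space of random-bit assignments restricted to the nodes of $S$, i.e., the event that always holds. Then $E^2_S$ clearly depends only on the random bits assigned to nodes of $S$ (trivially), and $P[E^2_S \mid E^1_S] = 1$, which in particular satisfies $P[E^2_S \mid E^1_S] \ge 2/n^c$ for all $n$ with $n^c \ge 2$ (which we may always assume, since otherwise the failure-probability guarantee $1/n^c$ is vacuous).

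Having verified the hypotheses, Lemma~\ref{lem:goodbitsrand} yields a non-empty subset of random-bit assignments with the following two properties. First, for every $S \in \mathcal{S}$, both $E^1_S$ and $E^2_S$ hold; since $E^2_S$ is trivial this reduces to saying that $E^1_S$ holds, i.e., when $\mathcal{A}$ is executed on the designated node $v \in S$, $v$ does not communicate with any node outside $S$. This is exactly the first bullet of the corollary. Second, the failure probability of $\mathcal{A}$ restricted to these assignments is bounded by $16/(p_2 \cdot n^c)$; substituting $p_2 = 1$ gives the claimed bound of $16/n^c$.

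There is essentially no obstacle: the corollary is simply Lemma~\ref{lem:goodbitsrand} specialized to the case where no extra conditioning beyond $E^1_S$ is imposed, and the only small thing to check is that the lower bound $p_2 \ge 2/n^c$ in the lemma does not fail when we pick the trivial event (which it does not, since $p_2 = 1$). The purpose of stating this as a separate corollary is that most later applications will only need the non-communication property $E^1_S$ to hold simultaneously across all designated neighborhoods, without imposing any additional structural event $E^2_S$, and it is convenient to have this specialization recorded explicitly.
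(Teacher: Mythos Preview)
Your proposal is correct and matches the paper's own proof exactly: the paper also derives the corollary by setting $E^2_S$ to be the event that always holds in Lemma~\ref{lem:goodbitsrand}, so that $p_2 = 1$ and the failure bound becomes $16/n^c$.
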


We now prove a statement for deterministic algorithms that run in $T(n) = o(w^{1/(2^k -1)})$, where $w = w(n)$ is in $O(n)$.  For this purpose, we consider the set $\mathcal{S}$ of designated neighborhoods w.r.t.\ $g(n) = 1$ and some constant $\beta$.
\begin{lemma}\label{lem:s-independent}
    Let $\mathcal{A}$ be a deterministic algorithm with node-averaged complexity $T(n) = o(w^{1/(2^k -1)})$, where $w = w(n)$ is in $O(n)$. Let $S \in \mathcal{S}$ be a designated neighborhood. Let the ID space be $\{1,\ldots,n^q\}$, for some constant $q$ large enough. There are at least $n^{q-2}$ disjoint sets of IDs such that, for each of them, it is possible to assign the IDs to the nodes of $S$ such that, by running $\mathcal{A}$ on the designated node $v$ in $S$, $v$ does not communicate with any node outside $S$.
\end{lemma}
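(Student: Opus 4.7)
The plan is to argue by contradiction, invoking Corollary \ref{cor:expectedruntimedet}. Call an ID set $I \subseteq \{1, \ldots, n^q\}$ of size $|S|$ \emph{good} if some bijection of $I$ onto the nodes of $S$ causes $v$ to avoid communicating outside $S$ during its execution of $\mathcal{A}$. Suppose for contradiction that some (hence every) maximal family of pairwise disjoint good sets has size $M < n^{q-2}$; fix such a family $I_1, \ldots, I_M$ and let $R = \{1, \ldots, n^q\} \setminus \bigcup_{j=1}^{M} I_j$. By maximality, no subset of $R$ of size $|S|$ is good.

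The key observation is that, because $\mathcal{A}$ has node-averaged complexity $o(w^{1/(2^k-1)})$, we may invoke Corollary \ref{cor:expectedruntimedet} with the leading constant $c'$ as small as we like. I would choose $c'$ so that the resulting constant $c$ from the corollary satisfies $c \leq 1/\beta$. This has two useful consequences: first, any ID assignment under which $v$'s runtime is bounded by $c \cdot w^{2^{i-1}/(2^k-1)}$ automatically keeps $v$ inside $S$, since on $v$'s compress path the designated neighborhood extends by $w_i/\beta \geq c \cdot w^{2^{i-1}/(2^k-1)}$ in each direction; and second, the corollary's disjointness hypothesis on the $c$-radius neighborhoods of $v$ is implied by disjointness of the IDs used on $S$. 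I would apply the corollary with $\mathcal{X}$ defined as the family of ID assignments to $T$ whose restriction to $S$ is a bijection onto some subset of $R$ of size $|S|$.

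Now I would construct the contradictory witnesses. By an argument analogous to Lemma \ref{lem:sizeRepr}, $|S| = O(w) = O(n)$, so $|R| \geq n^q - M \cdot |S| \geq n^q - O(n^{q-1})$, which is much larger than $n \cdot |S|$ provided $q$ is a sufficiently large constant. Hence one can carve $n$ pairwise disjoint subsets $J_1, \ldots, J_n \subseteq R$ of size $|S|$ and, for each $\ell$, extend an arbitrary bijection $J_\ell \to S$ to a full ID assignment $X_\ell \in \mathcal{X}$. These $n$ assignments satisfy the disjointness hypothesis of Corollary \ref{cor:expectedruntimedet}, so the corollary yields some $X_\ell$ for which $v$ terminates within $c \cdot w^{2^{i-1}/(2^k-1)} \leq w_i/\beta$ rounds, i.e., without leaving $S$. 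But then $J_\ell \subseteq R$ is a good set, contradicting maximality. The main technical obstacle is aligning the constant $c$ with the $\beta$-slack built into the designated neighborhood, and this is exactly where the $o(\cdot)$ (rather than $O(\cdot)$) complexity assumption is essential; the remaining work is a routine capacity count on $R$.
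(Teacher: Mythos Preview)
Your argument has a genuine gap at the step where you claim that bounding $v$'s runtime by $c\cdot w^{2^{i-1}/(2^k-1)}\le w_i/\beta$ ``automatically keeps $v$ inside $S$.'' This would be correct if $S$ were the full ball of radius $w_i/\beta$ around $v$, but it is not. By definition, the designated neighborhood only includes, for each lower-layer compress path of layer $j<i$ reachable from $v$'s path, the first $w_j/\beta$ nodes from the entry point. Since $w_j=\sqrt{w_{j+1}}$, the entire depth of the lower-layer structure hanging off a single node of $v$'s path is $O(w_{i-1})=O(\sqrt{w_i})\ll c\,w_i$. Hence a node $v$ running for $c\,w_i$ rounds can see arbitrarily far into (and through) those lower-layer compress paths, escaping $S$ long before it reaches the boundary of $S$ along its own layer-$i$ path. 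Your application of Corollary~\ref{cor:expectedruntimedet} to $v$ alone therefore does not yield a good ID set, and the contradiction does not go through. (The same issue also undermines your claim that disjointness of the IDs on $S$ implies the corollary's disjointness hypothesis on the $c\,w_i$-neighbourhoods of $v$.)

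The paper's proof avoids this by not trying to bound $v$'s runtime at all. Instead it considers \emph{every} compress subpath of \emph{every} layer $j\le i$ inside $S$, and for each such subpath applies Corollary~\ref{cor:expectedruntimedet} to its midpoint. This guarantees a node on each subpath whose runtime is at most half the subpath length, which blocks information from passing through. Each application discards at most $n$ of the $n^{q-1}$ candidate ID assignments, and there are at most $n$ subpaths, leaving $n^{q-1}-n^2\ge n^{q-2}$ assignments for which $v$ is sealed inside $S$ by these blocking nodes. The essential idea you are missing is that confinement must be enforced separately on each lower-layer path, not globally via $v$'s own running time.
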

\begin{proof}
  From the ID space $\{1,\ldots,n^q\}$, we can construct $n^{q-1}$ disjoint sets of $n$ IDs, and for each set we can construct an arbitrary specific ID assignment that uses only IDs from that set. Let $X$ be the set of such ID assignments (observe that $|X| = n^{q-1}$). 
  
  Let $i$ be the layer of the designated node in $S$.
  Let $\mathcal{P}$ be the set containing, for each $j \in \{1,\ldots,i\}$, all the compress subpaths of layer $j$ induced by nodes in $S\setminus \{v\}$. We prove that we can process each subpath $P \in \mathcal{P}$, and discard at most $n$ elements among the remaining elements in $X$, to make sure that, for all the remaining ID assignments, there exists at least one node in $P$ running in at most half of the length of the subpath.
  Since there are at most $n$ elements in $\mathcal{P}$, we obtain that there exists a subset of $X$ of size $n^{q-1} - n^2 \ge n^{q-2}$ where $v$ cannot see outside $S$, proving the statement.

We process all subpaths $P \in \mathcal{P}$ in an arbitrary order, and each time we restrict the set $X$ to make sure that it satisfies the required condition. Let $j$ be the layer of $P$. Consider the node $v$ at the center of $P$, breaking ties arbitrarily. Since $P$ is of length at least $w^{2^{j-1}/(2^k -1)} / \beta$ for a fixed constant $\beta$, for $c$ small enough $v$ is at distance at least $c \cdot w^{2^{j-1}/(2^k -1)}$ from the endpoints of $P$.
Let $n'$ be the amount of ID assignments that are still in $X$, and assume that $n' \ge n$. We take $n$ arbitrary ID assignments from $X$, let $Y$ be this set. We apply \Cref{cor:expectedruntimedet} on $v$ and the ID assignments in $Y$, and we obtain that there exists an element $y \in Y$ where the runtime of $v$ is bounded by $c \cdot w^{2^{j-1}/(2^k -1)}$ for an arbitrarily small constant $c$. Now, we can replace $y$ with another element from $X$, apply the lemma again, and obtain the same result. This operation can be repeated for $n' - n +1$ times, in order to obtain $n' - n +1$ ID assignments in $X$ where the runtime of $v$ is bounded by $c \cdot w^{2^{j-1}/(2^k -1)}$. We discard all the other ID assignments. We obtain that we removed $n-1$ ID assignments from $X$, and that, in all the remaining ones, the runtime of $v$ is bounded by $c \cdot w^{2^{j-1}/(2^k -1)}$, that for $c$ small enough is less than half of the length of $P$.
\end{proof}

\begin{corollary}\label{cor:goodbitsdet}
    Let $\mathcal{A}$ be a deterministic algorithm with node-averaged complexity $o(w^{1/(2^k -1)})$, where $w = w(n)$ is in $O(n)$. Let the ID space be $\{1,\ldots,n^c\}$. There are at least $n^{c-3}$ disjoint sets of IDs such that, for each of them, 
    it is possible to assign the IDs to the nodes of each $S \in \mathcal{S}$, such that,
    by running $\mathcal{A}$ on the designated node $v$ in $S$, $v$ does not communicate with any node outside $S$.
\end{corollary}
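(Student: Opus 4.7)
The plan is to extend the proof of Lemma~\ref{lem:s-independent} from a single designated neighborhood to \emph{all} $S\in\mathcal{S}$ simultaneously, paying one extra factor of $n$ in the loss. First, I would partition the ID space $\{1,\ldots,n^c\}$ into $n^{c-1}$ pairwise disjoint size-$n$ subsets and, for each subset, fix an arbitrary assignment of its $n$ IDs to the $n$ nodes of the tree. Let $X$ be the resulting family of $n^{c-1}$ candidate global ID assignments; any two assignments in $X$ use entirely disjoint ID sets by construction.

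Next, I would run the discarding procedure of Lemma~\ref{lem:s-independent} uniformly over every pair $(S,P)$, where $S\in\mathcal{S}$ and $P$ is a compress subpath of some layer $j$ induced by the nodes of $S\setminus\{v\}$ (with $v$ the designated node of $S$). For each such $(S,P)$, I would repeatedly extract a batch of $n$ assignments from $X$ that still do not expose a blocking node in $P$ (a node in $P$ with runtime at most half of $|P|$); by Corollary~\ref{cor:expectedruntimedet} applied to the center of $P$ (which lies at distance $\Omega(w^{2^{j-1}/(2^k-1)})$ from the endpoints of its containing compress path, thanks to the pumping setup), at least one assignment in such a batch actually exposes a blocking node, so discarding the other $n-1$ preserves the invariant that every surviving assignment exhibits a blocking node for $(S,P)$. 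Thus at most $n-1$ assignments are discarded per $(S,P)$ pair.

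To bound the total loss, I would observe that both $|\mathcal{S}|$ and $\max_{S}|\mathcal{P}_S|$ are $O(1)$: the pumped representative tree contains only a constant number of compress paths (pumping changes lengths but not counts, and $\mathrm{add}$ merely attaches a single path at one temporary node without creating new compress paths), and the definition of a designated neighborhood forces it to touch only constantly many compress subpaths at lower layers. Consequently, the number of $(S,P)$ pairs is $O(1)$ and the total number of discarded assignments is $O(n)$. At least $n^{c-1}-O(n)\geq n^{c-3}$ assignments remain for $c$ large enough (as implicit in the statement), and each surviving global assignment has a blocking node on every compress subpath leaving every designated neighborhood, so the designated node $v$ of every $S$ cannot communicate outside $S$. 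Since the $n^{c-3}$ surviving assignments still use pairwise disjoint ID sets, this yields the claimed $n^{c-3}$ disjoint ID sets.

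The only subtlety that needs care is that Corollary~\ref{cor:expectedruntimedet} demands that the $n$ ID assignments being compared use pairwise disjoint IDs in a local neighborhood of the center of $P$; this is automatic because any $n$ assignments drawn from $X$ use entirely disjoint ID sets globally. Everything else is a straightforward parallelization of the single-$S$ argument, with the loss of one extra factor of $n$ being a loose accounting of the constantly many $(S,P)$ constraints that must be satisfied at once rather than one.
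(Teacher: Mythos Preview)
Your counting step contains a genuine error. You assert that ``the pumped representative tree contains only a constant number of compress paths (pumping changes lengths but not counts)'' and that a designated neighborhood ``touches only constantly many compress subpaths at lower layers.'' Both claims are false. When $\mathrm{pump}_k$ lengthens a layer-$i$ compress path $H$ to $H'$ of length $\Theta(w_i)$, it must attach a subtree to \emph{every} incoming edge of $H'$, and it does so by attaching \emph{copies} of the subtrees that hung from $H$ (see the last bullet of the pumping procedure). Those subtrees may themselves contain already-pumped compress paths of layer $i-1$. Hence pumping at layer $i$ multiplies the number of layer-$(i-1)$ compress paths by a factor of $\Theta(w_i/\ell)$, and iterating gives $\Theta\!\big(\prod_{j=2}^{k} w_j\big)$ compress paths at layer $1$---polynomial in $n$, not $O(1)$. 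This is precisely why \Cref{lem:sizeRepr} computes the size of the pumped tree as a \emph{product} $\prod_i w_i$. For the same reason, a designated neighborhood $S$ rooted at layer $i$ contains $\Theta(w_i/\beta)$ nodes on its top path, each with its own lower-layer subtree, so $|\mathcal{P}_S|$ is not $O(1)$ either. Your loss bound of $O(n)$ therefore does not follow.

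Your global-discarding scheme can be rescued with the crude bounds $|\mathcal{S}|\le n$ and $|\mathcal{P}_S|\le n$ (as the paper uses inside \Cref{lem:s-independent}): one then discards at most $n\cdot n\cdot(n-1)<n^3$ assignments and keeps $n^{c-1}-n^3\ge n^{c-3}$ for $c$ large enough. The paper, however, takes a cleaner route that reuses \Cref{lem:s-independent} as a black box: it partitions $\{1,\ldots,n^c\}$ into $n$ disjoint blocks of size $n^{c-1}$, hands one block to each $S\in\mathcal{S}$, and applies \Cref{lem:s-independent} inside each block to obtain $n^{c-3}$ good assignments for that $S$. Since under each such assignment the designated node of $S$ does not see outside $S$, the per-$S$ choices can be combined independently into $n^{c-3}$ global assignments. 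This avoids your global discarding over all $(S,P)$ pairs and sidesteps any need to count them.
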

\begin{proof}
    We split the ID space into $n$ disjoint subsets of size $n^{c-1}$. Note that $n$ is an upper bound on the elements of $\mathcal{S}$. To each element $S \in \mathcal{S}$ we assign a different subset.
    We apply \Cref{lem:s-independent} on each $S \in \mathcal{S}$, and we obtain that, for each $S \in \mathcal{S}$, there exist $n^{c-3}$ ID assignments for the whole graph, where the designated node $v$ of $S$ does not see outside $S$. Since $v$ does not see outside $S$, changing the IDs outside $S$ does not affect the runtime of $v$. Hence, we can combine the ID assignments of all the designated neighborhoods to produce $n^{c-3}$ ID assignments satisfying the requirements.
\end{proof}

\subsection{\boldmath The $f_{\Pi, k+1}$ Function Given $\mathcal{A}$}

\subparagraph{Function definition.}
We now show how to define the function $f_{\Pi,k+1}$ by either using a given algorithm $\mathcal{A}$ that solves $\Pi$ with randomized node-averaged complexity $o(w(n)^{1/(2^k-1)}/\log n)$, or by using a given algorithm $\mathcal{A}$ that solves $\Pi$ with deterministic node-averaged complexity $o(w(n)^{1/(2^k-1)})$, where $w(n) \le c \cdot n$ for a small enough constant $c > 0$ and $n$ large enough. Observe that, by setting $w(n) := c\cdot n$, we obtain exactly the main goal of this section. However, we prove a generic statement for all functions $w$, with the hope that such a result could be useful in the future when studying other complexity regimes.

Recall that the input of the function $f_{\Pi,k+1}$ is a compress path $P = (H,F_{\mathrm{incoming}},F_{\mathrm{outgoing}},\mathcal{L}_{\mathrm{incoming}})$, and it is required to produce an independent class for it. 
As a technicality, let us mention that, for each compress path $P$, the representative tree $r(P)$ is well defined, since we can define $f_{\Pi,k+1}$ while being tested by the testing procedure.

We will define the function $f_{\Pi,k+1}$, as a function of a parameter $n$ that is at least some large enough constant $n_0$.
For deterministic algorithms, we first set $n_0$ to be the smallest value satisfying that the running time of $\mathcal{A}$, on all instances of size $n\ge n_0$, is at most $t = w(n)^{1/(2^k - 1)} / \alpha$, for some constant $\alpha$ to be fixed later. Such a value exists by the assumption on the node-averaged complexity of $\mathcal{A}$. Then, we update $n_0 = \max\{n_0,n'\}$, where $n'$ is the value in \Cref{cor:expectedruntimedet}.
For randomized algorithms, we first set $n_0$ to be the smallest value satisfying that the running time of $\mathcal{A}$, on all instances of size $n\ge n_0$, is at most $t = w(n)^{1/(2^k - 1)} / (\alpha \log n)$, for some constant $\alpha$ to be fixed later. Such a value exists by the assumption on the node-averaged complexity of $\mathcal{A}$. Then, we update $n_0 = \max\{n_0,n'\}$, where $n'$ is the value in \Cref{cor:expectedruntime}.

Let $N_f(w)$ be the maximum size of any graph obtained from the function $\mathrm{pump}_k$, as a function of the parameter $w$, when applied to the graphs constructed in the testing procedure with the function $f$. Then, let $N(w)$ be the maximum value of $N_f(w)$, taken over all possible functions $f$, again as a function of the parameter $w$. In other words, $N(w)$ is the maximum size of any graph obtained from the function $\mathrm{pump}_k$, as a function of the parameter $w$, over all possible functions $f$.

In the following, we consider an arbitrary $n$ that is at least $n_0$, and let $w = w(n)$. By \Cref{lem:sizeRepr}, $N(w)=O(w)$. Thus, $N \le c \cdot n$ for a small enough constant $c > 0$. In the following, we assume $c$ to be small enough so that $n \ge 16|\Sigma_{\mathrm{out}}|\Delta N + 1$. 

We define the function $f_{\Pi,k+1}$ on the input $P = (H,F_{\mathrm{incoming}},F_{\mathrm{outgoing}},\mathcal{L}_{\mathrm{incoming}})$ as follows. 
First, construct the tree $T = \mathrm{add}(n,\mathrm{pump}(w,r(P)))$, and let $H'$ be the pumped path corresponding to the nodes of $H$ in $T$. Let $v$ be the designated node of $H'$, and let $e$ be the designated edge of $H'$. Let $i$ be the layer number of $v$. The goal is to run $\mathcal{A}$ on $T$ and use its output on $e$ to define the function $f_{\Pi,k+1}$. Observe that the output of $\mathcal{A}$, and its runtime, depend on the random bits (if $\mathcal{A}$ is randomized) and on the ID assignment assigned to the nodes of $T$. 
Let $g(n) = \log n$ if $\mathcal{A}$ is randomized, and $g(n) = 1$ otherwise. 
Let $B$ be the subset of random bits and ID assignments to the nodes of $T$ given by \Cref{cor:goodbitsrand} and \Cref{cor:goodbitsdet}.
We consider all assignments in $B$, and we take the output $o$ on $e$ that appears more often (breaking ties arbitrarily).
Observe that this output appears with probability at least $1/|\Sigma_{\mathrm{out}}|$.

Let $(H',F'_{\mathrm{incoming}},F'_{\mathrm{outgoing}},\mathcal{L}'_{\mathrm{incoming}})$ be the result of applying \Cref{lem:pump} on $P$ with parameter $w_i$. For each endpoint $u$ of $H'$, we can compute a label-set as follows. Let $u_1$ be the node, among the endpoints of the designated edge $e$, that is closest to $u$, and let $u_z = u$. Consider the subpath $u_1,\ldots,u_z$ of $H'$ induced by $u_1$, $u_z$, and the nodes in between.
We define the label-set of $e$ as $\{o\}$. Then, we compute the label-set of each edge $e_i = \{u_i,u_{i+1}\}$, for $i = 1, \ldots, z$, by applying  \Cref{def:computing-label-set} as if the current node $u_i$ were to be a single rake node with outgoing edge $e_i$, where $e_z$ is defined to be the edge outgoing from $u$. 
Observe that the label-sets $L_1$ and $L_2$ of the endpoints induce an independent class for $H'$, because, for any choice $(l_1,l_2) \in L_1 \times L_2$, by construction, we can pick a feasible assignment for $P$. 
By \Cref{lem:pump}, this independent class is valid also for $H$.

\subparagraph{Function correctness.}
We start by proving the correctness of the function created by using a deterministic algorithm.
\begin{lemma}\label{lem:summarydet}
    Assume that the following holds.
    \begin{itemize}
        \item $f_{\Pi,k+1}$ is a function that has been constructed by using trees of size $n$ and a deterministic algorithm with node-averaged complexity $o(w^{1/(2^k -1)})$, where $w = w(n)$ is in $O(n)$.
        \item $n \ge \Delta N + 1$.
        \item The ID space is $\{1,\ldots,n^c\}$, for some large enough constant $c$.
        \item $((T,v), L)$ is a pair obtained by the testing procedure by using the function $f_{\Pi,k+1}$.
    \end{itemize}
    Let $T'=\mathrm{pump}_k(w, T)$. Then
    \begin{itemize}
        \item $T'$ has at most $N$ nodes, where $N = \Theta(w(n))$, and
        \item there exists a set $\mathcal{X}$ containing $O(n^{c-4})$ disjoint sets of IDs
    \end{itemize}
    satisfying that, for each $X \in \mathcal{X}$, it is possible to assign IDs to the nodes of $T'$, such that:
    \begin{itemize}
        \item All the IDs are from $X$.
        \item There exists a set of nodes $D$ in $T'$ that, by running $\mathcal{A}$, they do not see outside $T'$.
        \item Any labeling of $T'$ that agrees with the outputs of the nodes of $D$ restricts the labels on the edge incident to the node of $T'$ that corresponds to $v$ to a subset of $L$.
    \end{itemize}
\end{lemma}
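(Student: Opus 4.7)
My plan is to prove the first conclusion ($|V(T')| \le N$ with $N = \Theta(w)$) as a direct appeal to \Cref{lem:sizeRepr} and the definition of $N(w)$, then prove the existence of $\mathcal{X}$ by induction on which line of the testing procedure produced the pair $((T,v),L)$. I will strengthen the induction hypothesis to say that $\mathcal{X}$ contains \emph{at least} $n^{c-c_0}$ pairwise disjoint ID sets for some constant $c_0 \le 4$, so that the constant losses accrued at each inductive step can be absorbed into the exponent when $c$ is taken large enough.

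For the base case, a leaf pair from item \ref{item:leaves} has $T'$ consisting of a single edge and $L = g(v)$ at the leaf, so taking $D = \emptyset$ makes the restriction statement vacuous (any valid labeling of a lone leaf places a label in $g(v) = L$ by definition), and we partition $\{1,\dots,n^c\}$ into $\Omega(n^c)$ disjoint pairs to populate $\mathcal{X}$. For the rake inductive step (item \ref{item:new-rake-labelsets}), I would allocate an ID-space block to each attached subtree $T_j$ (of which there are at most $\Delta - 1$), plus two singletons for the new rake node $v$ and temporary $u$; applying the induction hypothesis inside each block yields families $\mathcal{X}_j$ which I combine index-wise into $\Omega(n^{c-c_0})$ disjoint ID sets for $T'$. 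The distinguished set is $D = \bigcup_j D_j$, which cannot see outside $T'_j \subseteq T'$ by induction, and the outputs of $D$ restrict each incoming edge of $v$ to $L_j$; by the very definition of $g(v)$ this forces the label on the outgoing edge into $L$.

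The main case is the compress step (item \ref{item:compress-paths}), where the label-set $L_1$ (or $L_2$) was constructed by $f_{\Pi,k+1}$ as follows: run $\mathcal{A}$ on $T_{\text{big}} = \mathrm{add}(n,\mathrm{pump}_k(w,r(P)))$, take the most common output $o$ on the designated edge, then propagate $\{o\}$ out through the subpath from the designated edge to the endpoint via \Cref{def:computing-label-set}. I would apply \Cref{cor:goodbitsdet} to $T_{\text{big}}$ to obtain $n^{c-3}$ disjoint ID sets under which the designated node of every $S \in \mathcal{S}$ does not see outside $S$; by pigeonhole over $|\Sigma_{\mathrm{out}}|$ at least $n^{c-3}/|\Sigma_{\mathrm{out}}|$ of them produce the specific output $o$ on the designated edge of the current compress path. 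Restricting each such ID set to the portion lying inside $T'$ gives the ID assignments I need; the set $D$ consists of the designated node together with, inductively, the $D$-sets of all attached rake subtrees. Because none of these nodes see outside $T'$, any labeling consistent with their outputs must place $o$ on the designated edge and label-sets within the subtree-witnesses on each rake-tree boundary; propagating through the intermediate middle subpath via the deterministic $g$-computation as in \Cref{def:computing-label-set} then forces the label on the endpoint edge into $L_1$ (or $L_2$), which is exactly $L$.

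The main obstacles I expect are bookkeeping obstacles rather than new ideas: first, ensuring that the pigeonhole loss $|\Sigma_{\mathrm{out}}|$, combined with the $n^{c-3}$ starting point of \Cref{cor:goodbitsdet}, and any additional constant-factor losses from combining with sub-induction inside the rake subtrees of the representative tree, still leave at least $n^{c-4}$ disjoint sets (this drives the choice of a sufficiently large $c$); and second, verifying that the ``does not see outside $T'$'' property is preserved as $T'$ is embedded into a larger tree in a later inductive step, which follows because the runtime bound came from \Cref{cor:expectedruntimedet} applied with IDs entirely inside $T'$, and the view of any $D$-node at its termination is already contained in $T'$, independent of whatever is glued on outside.
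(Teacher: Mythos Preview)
Your inductive skeleton is natural, but the compress step as you describe it does not go through, and the issue is not just bookkeeping. You propose to apply \Cref{cor:goodbitsdet} to $T_{\text{big}}$, pigeonhole over $|\Sigma_{\mathrm{out}}|$ to force the output $o$ on the \emph{top-level} designated edge, and then take $D$ to be that designated node \emph{together with} the inductive $D$-sets of the attached rake subtrees. These two ingredients are incompatible: the inductive hypothesis hands you a specific family of ID assignments on each rake subtree under which \emph{its} $D$-set has the restriction property, whereas \Cref{cor:goodbitsdet} hands you a different family of ID assignments on the whole of $T_{\text{big}}$. If you keep the assignments from \Cref{cor:goodbitsdet}, the inductive $D$-nodes need not be fast nor output their ``most probable'' values, so the restriction on the rake-subtree boundary edges (the $L_j$) is lost. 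If instead you overwrite the rake-subtree IDs with the ones from the inductive hypothesis, you change the view of the top-level designated node (its designated neighborhood contains portions of those rake subtrees), and the guarantee from \Cref{cor:goodbitsdet} that it terminates inside $S$ and outputs $o$ evaporates. A single pigeonhole on one designated edge cannot rescue this: the pumped tree contains many lower-layer designated edges (polynomially many in $w$), and each must carry its own most-probable output for the propagation argument to force the endpoint edge into $L$.

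The paper avoids the induction entirely. It pads once to $T'' = \mathrm{add}(n,T')$, applies \Cref{cor:goodbitsdet} to $T''$ to get $n^{c-3}$ disjoint ID sets under which \emph{every} designated node stays inside its designated neighborhood, and then processes the designated neighborhoods one by one: for each $S$, among its $n^{c-3}$ admissible assignments at least $n^{c-3}/|\Sigma_{\mathrm{out}}|$ produce the most-probable output on $S$'s designated edge, and since the neighborhoods draw from disjoint ID sub-blocks (this is how \Cref{cor:goodbitsdet} is proved) these per-$S$ choices are independent. A greedy pass thus fixes one global ID assignment on $T''$ under which \emph{all} designated edges carry their most-probable outputs simultaneously; taking $D$ to be the set of all designated nodes then forces, via \Cref{lem:pump} and the definition of $f_{\Pi,k+1}$, the edge at $v$ into $L$. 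Iterating with fresh IDs yields the $O(n^{c-4})$ disjoint families. No structural induction on $T$ is needed, and the constant loss in the exponent does not compound with $k$.
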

\begin{proof}
Recall that $v$ is a temporary node, and hence $T'$ contains at least one temporary node. Thus, we can construct $T'' = \mathrm{add}(n,T')$.

Let $\mathcal{L}$ be the partial labeling obtained by putting the most probable output (as previously defined) on each designated edge of $T''$. 

We process the designated neighborhoods in $T''$ one by one, and each time we assign the IDs in the neighborhood such that: the designated node does not see outside the designated neighborhood; the designated edge has the most probable output; the assigned IDs are not used in the rest of the graph.

Let $S$ be the current designated neighborhood to which we want to assign IDs.
Observe that, by \Cref{cor:goodbitsdet}, there are at least $n^{c-3} / |\Sigma_{\mathrm{out}}|$ disjoint ID assignments such that the designated edge gets the required output. Since there are at most $n$ IDs that have been already used to assign IDs to previously-handled designated neighborhoods, there are still at least $n^{c-3} / |\Sigma_{\mathrm{out}}| - n > 0$ possible choices that satisfy the requirements. 

We showed that there exists one ID assignment that satisfies the above requirements. Note that, by repeating this procedure with unused IDs, it is possible to construct $n^{c-4}/ |\Sigma_{\mathrm{out}}| = O(n^{c-4})$ such ID assignments.

Since the designated nodes do not see outside the designated neighborhoods, we get that there exist $O(n^{c-4})$ ID assignments (using disjoint sets of IDs) such that the algorithm outputs a labeling that agrees with $\mathcal{L}$.
By \Cref{lem:pump}, and by the definition of $f_{\Pi,k+1}$, we get that the labels allowed for the edge incident to the temporary node are a subset of $L$.
\end{proof}
\begin{corollary}
    Let $\mathcal{A}$ be a deterministic algorithm with node-averaged complexity $o(w^{1/(2^k -1)})$, where $w = w(n)$ is in $O(n)$. The function $f_{\Pi,k+1}$ constructed using $\mathcal{A}$ passes the testing procedure.
\end{corollary}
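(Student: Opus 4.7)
The plan is to prove the corollary by contradiction: I will suppose that, at some step of the testing procedure run with the constructed $f_{\Pi, k+1}$, an empty maximal class arises at Item~\ref{item:top-level-can-complete}, or an empty label-set arises at Item~\ref{item:new-rake-labelsets} or Item~\ref{item:compress-paths}. From the tree produced at that step I will assemble a concrete $n$-node instance of $\Pi$ on which $\mathcal{A}$ is forced to emit a locally invalid labeling, contradicting the correctness of $\mathcal{A}$.

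The engine of the construction is \Cref{lem:summarydet}. Proceeding by induction on the order in which the testing procedure creates pairs, I may assume that each sub-pair $((\tilde{T}_j, v_j), L_j)$ that is combined in the failing step has already been shown to be ``realizable'': pumping gives a tree $\tilde{T}_j' = \mathrm{pump}_k(w, \tilde{T}_j)$ of size at most $N = \Theta(w(n))$ together with a family $\mathcal{X}_j$ of $\Omega(n^{c-4})$ pairwise disjoint ID sets such that, for each $X \in \mathcal{X}_j$, there is an ID assignment drawn from $X$ producing an ``independent'' set of internal nodes $D_j$ whose $\mathcal{A}$-view lies entirely inside $\tilde{T}_j'$ and whose outputs force the label that $\mathcal{A}$ places on the outgoing edge at $v_j$ to lie in $L_j$. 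Since every step of the testing procedure combines at most $\Delta$ sub-pairs and $\Omega(n^{c-4}) \gg \Delta$, I can pick one ID set from each $\mathcal{X}_j$ so that they are pairwise disjoint, giving a single globally-valid ID assignment after gluing.

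I then assemble the contradicting instance exactly as the testing procedure would: I identify the $v_j$'s into a common node $v$ (preserving the black/white coloring), attach any outgoing neighbor or pumped compress path prescribed by Items~\ref{item:new-rake-labelsets} or~\ref{item:compress-paths}, and invoke $\mathrm{add}(n, \cdot)$ to pad the tree to exactly $n$ nodes. Because each $D_j$'s $\mathcal{A}$-view stays inside its own $\tilde{T}_j'$, the output $\mathcal{A}$ produces on the edge at $v_j$ still lies in $L_j$. In the Item~\ref{item:top-level-can-complete} case, the assumed emptiness of the maximal class at $v$ means no choice of labels from the $L_j$ satisfies $v$'s local constraint. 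In the Item~\ref{item:new-rake-labelsets} case, emptiness of $L$ means no label on the outgoing edge $\{u,v\}$ is consistent with the forced incoming labels. In the Item~\ref{item:compress-paths} case, the same contradiction propagates along the pumped compress path via \Cref{def:computing-label-set}, ultimately forcing one of the endpoints to lack any valid outgoing label. In every case $\mathcal{A}$ outputs a labeling that violates $\Pi$ at some node, contradicting correctness.

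The main obstacle I expect is bookkeeping around the gluing: I must check that the hypotheses of \Cref{lem:summarydet} used on each sub-pair remain intact once the sub-instances share a common boundary node $v$. Concretely, (i) each $D_j$ must still fail to see outside $\tilde{T}_j'$ after gluing, which holds because the independence of $D_j$ in \Cref{lem:summarydet} was established by bounding $\mathcal{A}$'s reach strictly within $\tilde{T}_j'$, so no content attached beyond that boundary can alter $D_j$'s behaviour; (ii) the padding step $\mathrm{add}(n, \cdot)$ has to attach to a temporary node at a high-enough layer without disturbing the designated-neighbourhood structure inside any $\tilde{T}_j'$; and (iii) the total number of IDs used must fit in the space $\{1, \dots, n^c\}$, which is automatic from the disjointness of the $\mathcal{X}_j$. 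Once these three bookkeeping items are verified, the contradiction is immediate and the corollary follows, yielding the randomized lower bound $\Omega(n^{1/(2^k-1)} / \log n)$ and deterministic lower bound $\Omega(n^{1/(2^k-1)})$ advertised in \Cref{thm:polylowerbound}.
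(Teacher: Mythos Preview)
Your proposal is correct and follows essentially the same approach as the paper: assume the testing procedure fails, apply \Cref{lem:summarydet} to each incoming sub-pair to obtain pumped subtrees with pairwise-disjoint ID assignments, glue them at the common node, pad to exactly $n$ nodes via $\mathrm{add}(n,\cdot)$, and observe that the deterministic algorithm $\mathcal{A}$ is then forced to produce an invalid output at that node.

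Two minor remarks. First, the induction framing you add is unnecessary: \Cref{lem:summarydet} already applies directly to every pair produced by the testing procedure, so you may simply invoke it on each $((\tilde{T}_j,v_j),L_j)$ without re-deriving realizability step by step. Second, your disjointness argument ``$\Omega(n^{c-4}) \gg \Delta$'' is slightly too quick; to guarantee you can choose one $X_j$ from each $\mathcal{X}_j$ so that all of them are pairwise disjoint you need that each chosen set (of at most $N \le n$ IDs) eliminates at most $n$ members from every other $\mathcal{X}_j$, and hence the correct inequality is $\Omega(n^{c-4}) \gg \Delta \cdot n$, which of course still holds for $c$ large enough. The paper collapses your three failure cases (Items~\ref{item:top-level-can-complete}, \ref{item:new-rake-labelsets}, \ref{item:compress-paths}) into a single ``empty class at some node $v$'' and also explicitly notes why at least two temporary nodes survive so that $\mathrm{add}$ can attach its path away from $v'$; your obstacle~(ii) is exactly this point.
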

\begin{proof}
    Assume that the testing procedure fails. This means that an empty class on some node $v$ in some representative tree $T$ is obtained. A necessary condition for obtaining an empty class is that $T$ contains at least a compress path (otherwise, the function would not even be used), implying that $T$ contains at least two temporary nodes. Let $L_1,\ldots,L_d$ be the label-sets assigned to the incoming edges of $v$.
    We use \Cref{lem:summarydet} to construct $d$ pairs $((T'_1,v_1),L_1),\ldots,((T'_d,v_d),L_d)$ with disjoint ID assignments, and we merge them into a single tree $T''$ of at most $\Delta N\le n$ nodes by identifying nodes $v_1,\ldots,v_d$ into a single node $v'$. If $v$ has degree $d+1$, we add an additional neighbor to $v'$, and the number of nodes is now at most $\Delta N + 1 \le n$.
    Observe that, in $T''$, the following holds. Node $v'$ has the same degree as $v$, and $d$ incident edges are restricted to have the label-sets $L_1,\ldots,L_d$, if $T''$ were to be of size exactly $n$. 
    However, note that we can increase the number of nodes to be exactly $n$ by using $\mathrm{add}(n, T'')$. This is doable since $T''$ contains at least one temporary node different from $v'$ (since, as argued, $T$ contains at least two temporary nodes).
    By the definition of class, and by the assumption that $v$ has an empty class, there is no valid labeling for the edges of $v$, which is a contradiction on the correctness of $\mathcal{A}$.
\end{proof}

We now prove the correctness of the function created by using a randomized algorithm.
\begin{lemma}\label{lem:summaryrand}
    Assume that the following holds.
    \begin{itemize}
        \item $f_{\Pi,k+1}$ is a function that has been constructed by using trees of size $n$ and a randomized algorithm with node-averaged complexity $o(w^{1/(2^k -1)} / \log n)$, where $w = w(n)$ is in $O(n)$.
        \item $n \ge 16|\Sigma_{\mathrm{out}}|N + 1$.
        \item $C$ is a class obtained by the testing procedure on some representative tree $T$ that contains at least one temporary node $v$, by using the function $f_{\Pi,k+1}$.
    \end{itemize}
    Let $T'=\mathrm{pump}_k(w, T)$ and $T'' = \mathrm{add}(n,T')$. Then, 
    \begin{itemize}
        \item $T'$ has at most $N$ nodes, where $N = \Theta(w(n))$.
    \end{itemize}
    Moreover, by running $\mathcal{A}$ on $T''$, with non-zero probability,
    \begin{itemize}
        \item all nodes in $T'$ do not fail, and
        \item there exists a node $u$ in $T'$ that has an output compatible with the class $C$ (this, in particular, implies that $C$ is not empty).
    \end{itemize}
\end{lemma}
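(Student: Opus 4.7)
The size bound $|V(T')| = O(w)$ is exactly \Cref{lem:sizeRepr}. For the main claim, the plan is to apply \Cref{lem:goodbitsrand} to produce a positive-probability event in which $\mathcal{A}$ (i) succeeds on all of $T''$ and (ii) labels each designated edge of $T'$ with exactly the ``most probable output'' fixed when $f_{\Pi,k+1}$ was defined on the corresponding compress path; a layer-by-layer induction then shows that the labeling induced on $T'$ is compatible with $C$ at a well-chosen node $u$.

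For each designated neighborhood $S \in \mathcal{S}$ of $T''$ let $E^1_S$ be the event from \Cref{lem:goodbitsrand}, and let $E^2_S$ be the event that $\mathcal{A}$'s output on the designated edge of $S$ equals the most probable output used in the definition of $f_{\Pi,k+1}$ on the pumped compress path that contains $S$. Given $E^1_S$, the output on the designated edge is a function only of the random bits of $S$, so $E^2_S$ depends only on those bits. By the very definition of the most probable output (the most frequent label, over the subset of assignments produced by \Cref{cor:goodbitsrand}), we have $P[E^2_S \mid E^1_S] \ge 1/|\Sigma_{\mathrm{out}}|$, a positive constant that dominates $2/n^c$ once the failure-probability exponent $c$ of $\mathcal{A}$ is chosen large enough. \Cref{lem:goodbitsrand} then yields a non-empty subset of random-bit assignments under which every $E^1_S$ and $E^2_S$ hold and the failure probability of $\mathcal{A}$ is at most $16|\Sigma_{\mathrm{out}}|/n^c < 1$, so with positive probability $\mathcal{A}$ produces a correct labeling of $T''$ in which every designated edge of $T'$ carries its prescribed label; in particular, no node of $T'$ fails.

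Let $u \in T'$ correspond to the node of $T$ for which the testing procedure computed $C$. The function $f_{\Pi,k+1}$ derives, from the fixed designated-edge label of each pumped compress path, the label-sets on the path's outgoing edges by the local propagation of \Cref{def:computing-label-set}, and by \Cref{lem:pump} these label-sets coincide with the ones attached by the testing procedure to the un-pumped representative compress path when it built $C$. A straightforward induction along the layer order of $T$ (with base case at the leaves and deepest compress paths of $T'$) then forces the labels $\mathcal{A}$ assigns around $u$ to form a feasible labeling recorded in $C$, so the output at $u$ is compatible with $C$ and $C$ is in particular non-empty. The main obstacle I anticipate is this last induction: at each layer one must verify that $\mathcal{A}$'s propagated labels agree with the label-sets tracked internally by the testing procedure, using \Cref{lem:pump} to argue that local propagation on the pumped path $T'$ reproduces exactly the propagation performed inside the testing procedure on the representative tree $T$.
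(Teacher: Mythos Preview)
Your approach matches the paper's: apply \Cref{lem:goodbitsrand} with $E^2_S$ the event that the designated edge carries the most probable output, obtain a positive-probability event in which all designated labels are as prescribed, and then argue compatibility with $C$ via \Cref{lem:pump} and the definition of $f_{\Pi,k+1}$. Your closing layer-by-layer induction is a reasonable expansion of the single line the paper uses for that last step.

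One point to tighten: the $16|\Sigma_{\mathrm{out}}|/n^c$ bound produced by \Cref{lem:goodbitsrand} is a \emph{per-node} failure probability (this is explicit in the proof of that lemma), not a global one, so ``failure probability $<1$'' does not by itself yield ``all of $T'$ is correctly labeled with positive probability''. The paper closes this by a union bound over the at most $N$ nodes of $T'$, getting success probability at least $1 - N\cdot 16|\Sigma_{\mathrm{out}}|/n^c$, and this is precisely where the hypothesis $n \ge 16|\Sigma_{\mathrm{out}}|N + 1$ (taking $c=1$) is consumed. You never invoke that hypothesis, which is the small gap in your write-up; either add the union bound and use the hypothesis, or note explicitly that you take $c\ge 2$ so that $N\cdot 16|\Sigma_{\mathrm{out}}|/n^c < 1$ holds automatically.
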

\begin{proof}
Let $\mathcal{L}$ be the partial labeling obtained by putting the most probable output (as previously defined) on each designated edge of $T''$. 

We run $\mathcal{A}$ on $T''$ by using the random bits given by \Cref{cor:goodbitsrand}, restricted to the case in which, for each $S \in \mathcal{S}$, the designated edge gets the most probable output. That is, we apply \Cref{lem:goodbitsrand} by setting the event $E^2_S$ to the event in which the designated edge gets the most probable output when restricted to the random bit assignments given by \Cref{cor:goodbitsrand}. Note that $p_2 = 1 / |\Sigma_{\mathrm{out}}| $. We get that the failure probability of $\mathcal{A}$ is at most $16 / (1 / |\Sigma_{\mathrm{out}}| \cdot n^c) = 16 |\Sigma_{\mathrm{out}}| / n^c$. By a union bound, the probability that all nodes in $T'$ succeed is at least $1 - N \cdot 16 |\Sigma_{\mathrm{out}}| / n^c \ge 1 - N \cdot 16 |\Sigma_{\mathrm{out}}| / n \ge 1 - N \cdot 16 |\Sigma_{\mathrm{out}}| / (N \cdot 16 |\Sigma_{\mathrm{out}}| + 1) > 0$.
Hence, by the probabilistic method, there exists an assignment of output labels that is valid for all the nodes of $T'$ and that agrees with $\mathcal{L}$. By \Cref{lem:pump}, and by the definition of $f_{\Pi,k+1}$, we get that the output on the edges incident to $u$ is compatible with $C$.
\end{proof}
\begin{corollary}
    Let $\mathcal{A}$ be a randomized algorithm with node-averaged complexity $o(w^{1/(2^k -1)} / \log n)$, where $w = w(n)$ is in $O(n)$. The function $f_{\Pi,k+1}$ constructed using $\mathcal{A}$ passes the testing procedure.
\end{corollary}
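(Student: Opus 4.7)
The plan is to proceed by contradiction, mirroring the structure of the deterministic corollary but invoking Lemma~\ref{lem:summaryrand} in place of Lemma~\ref{lem:summarydet}. Suppose the testing procedure on $f_{\Pi,k+1}$ fails at one of its three flagging points: item~\ref{item:top-level-can-complete} (empty maximal class at a rake node), item~\ref{item:new-rake-labelsets} (empty label-set produced for a newly-added outgoing edge), or item~\ref{item:compress-paths} (empty endpoint label-set of a compress path). In each case I would extract a tree $T$ (containing at least one temporary node) and a class $C$ witnessing the failure that is forced to be empty.

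For the compress-path failure, the natural choice is $T = r(P)$ itself, which already contains two temporary endpoints $u_1, u_2$; an empty endpoint label-set translates directly into an empty class at the relevant temporary node. For the two rake failures, I would glue the subtrees $\tilde T_j$ of the pairs chosen at $v$ in exactly the same manner as in the deterministic corollary: each $\tilde T_j$ carries its own temporary node (and in the second case the newly added node $u$ is temporary as well), so the glued tree $T$ has at least one temporary node, and its maximal class at $v$ is empty because $v$'s node constraint cannot be satisfied with the incoming label-sets $L_1, \ldots, L_d$. By Lemma~\ref{lem:sizeRepr}, $\mathrm{pump}_k(w, T)$ has $O(w(n))$ nodes, and the standing assumption $n \geq 16 |\Sigma_{\mathrm{out}}| \Delta N + 1$ ensures this fits inside $n$, so $T'' := \mathrm{add}(n, \mathrm{pump}_k(w, T))$ is a well-defined instance of size exactly $n$ that satisfies the hypotheses of Lemma~\ref{lem:summaryrand}.

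Applying Lemma~\ref{lem:summaryrand} to $T$ and $C$ then yields the contradiction: it says that running $\mathcal{A}$ on $T''$ produces, with strictly positive probability, a global labeling in which every node of the pumped subtree $T' = \mathrm{pump}_k(w, T)$ terminates successfully and some node $u$ in $T'$ receives an output compatible with the class $C$. However $C$ is empty by construction, so no labeling is compatible with $C$ and the event has probability zero. This contradicts the positive lower bound from the lemma, forcing the testing procedure on $f_{\Pi,k+1}$ to pass.

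The main obstacle I expect is the bookkeeping between the three failure modes and Lemma~\ref{lem:summaryrand}'s hypothesis, which is phrased for a class on a representative tree containing a temporary node. The compress-path case is immediate because $r(P)$ is literally a representative tree. The two rake-node cases require the glueing construction from the deterministic proof, and I would need to verify that the glued tree together with its empty maximal class at $v$ is legitimately ``a class obtained by the testing procedure on some representative tree,'' or---if the lemma is read strictly---to instead re-package the failure as emptiness of a label-set produced at a temporary node of a standard representative tree that contains the glued structure as a subtree, so that Lemma~\ref{lem:summaryrand} applies verbatim.
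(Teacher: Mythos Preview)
Your proposal is correct and follows the same overall route as the paper (contradiction via Lemma~\ref{lem:summaryrand}), but the paper's proof is considerably shorter because it collapses your three-way case split into one observation. The paper simply notes that if the testing procedure fails at \emph{any} of the three flagging points, the tree $T$ on which the empty class appears must already contain a compress path---otherwise $f_{\Pi,k+1}$ has never been invoked and the rake-only process cannot produce an empty class for a solvable problem. Since every compress path created in item~\ref{item:compress-paths} carries two temporary endpoints, $T$ automatically has at least one temporary node, and Lemma~\ref{lem:summaryrand} applies verbatim to conclude $C \neq \emptyset$. This dissolves the ``obstacle'' you flag in your last paragraph: no glueing or repackaging is needed, and the rake-failure trees of items~\ref{item:top-level-can-complete} and~\ref{item:new-rake-labelsets} are themselves trees built by the testing procedure, so they qualify directly. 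In contrast, the deterministic corollary you were mirroring genuinely needs the glueing step because Lemma~\ref{lem:summarydet} only controls label-sets on individual outgoing edges, whereas Lemma~\ref{lem:summaryrand} already asserts non-emptiness of the class as part of its conclusion.
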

\begin{proof}
    Assume that the testing procedure fails. This means that an empty class on some representative tree $T$ is obtained. Note that there must be at least one compress path, otherwise the function is not even used. This implies that $T$ contains at least one temporary node. By \Cref{lem:summaryrand}, any class obtained by the testing procedure is non-empty, which is a contradiction.
\end{proof}

\section{Open Questions}\label{sec:open}
We conclude with some open questions. We showed that all problems that have $O(\log n)$ worst-case complexity can be solved with $O(\log^* n)$ node-averaged complexity. We leave open to determine for which problems this can be improved. 
\begin{oq}
    For which LCLs with $O(\log n)$ worst-case complexity can we obtain $o(\log^* n)$ node-averaged complexity?
\end{oq}

We showed that all problems that have worst-case complexity $\Theta(n^{1/k})$ must have node-averaged complexity $\Omega(n^{1 / (2^k - 1)} / \log n)$. We conjecture that the $\log n$ factor is an artefact of our proof technique and that it should not be there.
\begin{oq}
    Can we prove a lower bound of $\Omega(n^{1 / (2^k - 1)} )$ rounds for the node-averaged complexity of all problems with worst-case complexity $\Theta(n^{1/k})$?
\end{oq}

We showed that for some LCL problems that have worst-case complexity $\Theta(n^{1/k})$, we can provide an algorithm with node-averaged complexity $O(n^{1 / (2^k - 1)})$. It is not clear if this can be done for all problems with  worst-case complexity $\Theta(n^{1/k})$.
\begin{oq}
    Can we prove an upper bound of $O(n^{1 / (2^k - 1)} )$ rounds for the node-averaged complexity of all problems that have worst-case complexity $\Theta(n^{1/k})$?
\end{oq}
\begin{oq}
    Can we prove a lower bound of $\Omega(n^{1 / k} )$ rounds for the node-averaged complexity of some problems that have worst-case complexity $\Theta(n^{1/k})$?
\end{oq}

\bibliographystyle{plainurl}
\bibliography{biblio}

\appendix

\section{Additional Related Work About LCLs}\label{apx:related-lcls}
LCLs were introduced by Naor and Stockmeyer \cite{NaorStockmeyer95}, but locally checkable problems were studied in the distributed setting even before \cite{AfekKY97}.
Since then, LCL problems have been studied a lot, and we now known, for different possible topologies of graphs, what kind of worst-case complexities are possible.

\subparagraph{Paths and cycles.}
We know that in paths and cycles there are only three possible complexities: $O(1)$, $\Theta(\log^* n)$, $\Theta(n)$ \cite{NaorStockmeyer95,CKP19exponential}. Moreover, we know that randomness does not help to solve problems faster.

\subparagraph{Trees.}
The possible deterministic complexities in trees are $O(1)$, $\Theta(\log^* n)$, $\Theta(\log n)$, and $\Theta(n^{1/k})$ for all integer $k\ge 1$; randomness either helps exponentially or not at all, and only for problems with deterministic complexity $\Theta(\log n)$, that hence have randomized complexity either $\Theta(\log n)$ or $\Theta(\log \log n)$ \cite{brandt21trees,BHOS19HomogeneousLCL,CKP19exponential,CP19timeHierarchy,BBOS18almostGlobal,chang20}.

\subparagraph{General graphs.}
In general graphs, some \emph{complexity gaps} that hold in the case of trees still hold, but now there are also many dense areas; while in trees randomness either helps exponentially or not at all, there are cases on general graphs where randomness helps only polynomially \cite{CP19timeHierarchy,CKP19exponential,FischerGhaffari17LLL,RG20NetDecomposition,BHKLOS18lclComplexity,BBOS20paddedLCL}.

\subparagraph{LCLs in other settings.}
On $d$-dimensional balanced toroidal grids, it is known that  the only possible complexities are $O(1)$, $O(\log^* n)$, and $\Theta(n^{1/d})$ (even by allowing randomness) \cite{lcls_on_grids}.

For problems that, in the black-white formalism (as defined in \Cref{sec:definitions}), can be expressed by using at most two labels, on regular trees the only possible deterministic complexities are $O(1)$, $\Theta(\log n)$ and $\Theta(n)$, implying that any LCL with complexity $\Theta(\log^* n)$ needs to be expressed with at least $3$ labels \cite{binary_lcls}.

It is known that any LCL on trees that can be solved in $T$ rounds in the LOCAL model of distributed computing can be solved in $O(T)$ rounds in  the CONGEST model; it is also known that this does not hold if we consider general graphs \cite{bcmos21}.

On \emph{rooted} trees, it is known that all possible complexities are $O(1)$, $\Theta(\log^* n)$, $\Theta(\log n)$, $\Theta(n^{1/k})$ for all integer $k\ge 1$, but perhaps surprisingly, randomness never helps in solving problems faster  \cite{LCLs_in_rooted_trees}.

LCLs have been studied also in other models of interest such as MPC \cite{B0FLMOU22}.

\subparagraph{Decidability.}
LCLs have been studied not only from a point of view of complexity theory. Researchers tried also to address the following questions.
Given a specific LCL, can we decide its time complexity with a centralized algorithm? Is it possible to automate the design of algorithms for solving LCLs?

In general, the complexity of an LCL is not decidable: even on unlabeled non-toroidal grid graphs it is undecidable whether the complexity of an LCL is $O(1)$ \cite{NaorStockmeyer95}. However, there are some positive results in more restricted but still interesting settings. 

On paths and cycles, it is possible to determine what is the time complexity of a given problem, but it becomes EXPTIME-hard if some input is provided to the nodes \cite{NaorStockmeyer95, lcls_on_grids, lcls_on_paths_and_cycles,balliu19lcl-decidability,chang20}.

In unlabeled toroidal grids, it is decidable whether the complexity of an LCL is $O(1)$, but it is undecidable whether its complexity is $\Theta(\log^* n)$ or $\Theta(n)$ \cite{lcls_on_grids}. 

On trees, given an LCL, it is possible to decide on which side of the gap $\omega(\log n)$ \--- $n^{o(1)}$ its complexity lies. Moreover, it is decidable if an LCL has complexity $\Theta(n^{1/k})$ for some $k$, and it is also possible to determine the exact value of $k$ \cite{CP19timeHierarchy,chang20}, but the algorithm is very far from being practical. However, if we restrict to regular trees with no inputs, then there is a practical polynomial-time algorithm (in the size of the description of the LCL problem) for deciding if an LCL has complexity $\Theta(n^{1/k})$ and determining the exact value of $k$ \cite{B0COSS22_LCLregularTrees}. Similarly, for rooted trees, there are efficient algorithms for determining the optimal asymptotic complexity of a given LCL \cite{LCLs_in_rooted_trees, B0COSS22_LCLregularTrees}. On unrooted regular trees with no input, if we restrict to problems that can be expressed by using at most two labels in the black-white formalism, the deterministic complexity of a problem is decidable \cite{binary_lcls}.

It is still an open question whether, on trees, we can obtain decidability for the lower complexities, e.g., it is unknown whether we can decide if an LCL on trees can be solved in $O(1)$ rounds or if it requires $\Omega(\log^* n)$ rounds, and whether it can be solved in $O(\log^* n)$ rounds or if it requires $\Omega(\log n)$ with deterministic algorithms.

\section{\boldmath An Algorithm For Solving All LCLs in $O(D)$ Rounds}\label{sec:diamsolver}
In order to give more intuition about the generic method that can be used to solve LCL problems, we repropose a simplified setting already presented in \cite{bcmos21}, where we restrict a bit the class of problems that we consider, as follows. We are given a tree of constant maximum degree, where to each edge is assigned an input label that comes from a finite set, and the goal is to assign an output label to each edge, also from a finite set, in such a way that, for each node, the multiset of incident input-output pairs of labels is contained in a list of given allowed configurations (this list is the same for all nodes). In other words, in this simplified setting, an LCL problem is described by providing a list of allowed configurations $C$, that are multisets of input-output pairs of labels. We now describe a procedure, already presented in \cite{bcmos21}, that solves any problem of this form in $O(D)$ rounds, where $D$ is the diameter of the tree. 
The procedure works as follows:
\begin{enumerate}
    \item Each leaf node $v$, as a function of the input label $\ell_i$ of its incident edge $e$ and the list of allowed configurations $C$, computes the set $S_v$ of output labels $\ell_o$ satisfying that the multiset $\{(\ell_i,\ell_o)\}$ is in $C$, that is, $v$ computes the set $S_v$ of output labels $\ell_o$ that, if assigned to $e$, would make $v$ happy. 
    \item Each leaf $v$ sends the set $S_v$ to its neighbor.
    \item Leaves are removed from the tree. Let $L$ be the set of nodes that became leaves after the removal operation. Each node $v \in L$ proceeds as follows. Let $u_1, \ldots, u_d$ be the neighbors of $v$ that got removed in previous steps, let $\ell_{i_1}, \ldots, \ell_{i_d}$ be the input labels on the edges connecting node $v$ to the nodes $u_1, \ldots, u_d$, and let $\ell_i$ be the input label connecting $v$ to its neighbor that is still present. Node $v$ computes the set $S_v$ of output labels $\ell_o$ satisfying that there exists a choice $(\ell_{o_1},\ldots,\ell_{o_d}) \in S_{u_1} \times \ldots \times S_{u_d}$ such that the multiset $\{(\ell_i,\ell_o),(\ell_{i_1},\ell_{o_1}),\ldots,(\ell_{i_d},\ell_{o_d})\}$ is in $C$. In other words, node $v$ computes the set $S_v$ of output labels for its remaining edge satisfying that, even if the output from $S_v$ is chosen adversarially, there is still a choice that $v$ can make, over the sets received from its removed neighbors, that would give an assignment of labels that makes $v$ happy. Then, node $v$ sends $S_v$ to its remaining neighbor.
    \item Repeat step 3 until the graph is empty.
    \item The last removed node chooses a label from each received set, in such a way that the resulting multiset of input-output pairs of labels is a configuration in $C$. (It may happen that two neighboring nodes get removed last, at the same time, but this case can be handled in a similar way.)
    \item Removed nodes are put back in reverse order. Observe that, when a node $v$ is put back, the output label $\ell_o$ of the edge connecting $v$ to its only neighbor that is currently present has already been assigned, and that this output label $\ell_o$ is in $S_v$. Node $v$ picks a label from the sets assigned to the edges connecting $v$ to its other neighbors (that is, neighbors that are going to be put back in the next step) in a way that is compatible with $\ell_o$.
\end{enumerate}
It is not difficult to see that this algorithm computes a correct solution for the problem, assuming that the computed sets never become empty, that is, as long as, for each $v \in V$, $S_v \neq \emptyset$. Also, it is clear that the time complexity of this algorithm is $O(D)$. It has been shown in \cite{bcmos21} that, if it happens that some node $v$ gets $S_v = \emptyset$, then it means that the problem is unsolvable. Moreover, \cite{bcmos21} showed that by following this algorithm, we get a generic way to solve all problems in $O(D)$ rounds that is actually bandwidth efficient, since the sets that are sent at each step have constant size. 

\subparagraph{Improving the round complexity.}
The previously described procedure shows how to solve all (solvable) problems in $O(D)$ rounds, which is nice for the CONGEST model, but is a trivial result in the LOCAL model. Moreover, the diameter $D$ could be as high as $\Omega(n)$, and for some problems this complexity may be suboptimal. Hence, in some cases we would like to obtain a faster algorithm, and this is what has been shown in \cite{CP19timeHierarchy, chang20}.
The idea of \cite{CP19timeHierarchy, chang20} is that, if, instead of removing only nodes of degree $1$ at each step, we also sometimes remove nodes of degree $2$ as well, then it takes less steps to obtain an empty graph. One of the issues to handle when running this modified procedure is that we now have to find a way to assign sets to nodes of degree $2$, which may form long paths, and this is the part that turns out to be quite challenging. 

\subparagraph{Restricting to a specific kind of LCLs.}
In order to prove our results, we use and extend ideas presented in \cite{CP19timeHierarchy, chang20, bcmos21}. In particular, in \cite{bcmos21} it has been shown that, in bounded-degree trees, if an LCL problem has time complexity $T$ in the LOCAL model, then it has time complexity $O(T)$ in the CONGEST model, both for deterministic and randomized worst-case complexities. In order to show this result, the authors of \cite{bcmos21} not only extended the results of \cite{CP19timeHierarchy, chang20}, but also provided a more accessible version of some of the proofs of \cite{CP19timeHierarchy, chang20}. The reason why these proofs are more accessible is that \cite{bcmos21} does not show results for standard LCLs, but only for a restriction of those, that are LCLs that can be expressed in a formalism called black-white. Importantly, the authors also showed that, if we restrict to trees, for any standard LCL (LCLs as they are usually defined in the literature), we can define an LCL in the black-white formalism that has the same complexity of the original one, up to an \emph{additive} constant. In other words, considering LCLs in the black-white formalism is not really a restriction if the graph class we are working on is trees (in general graphs this turns out to not be the case). In this paper, we follow a similar route of \cite{bcmos21}: in order to keep our proofs more accessible we prove our statements for LCLs expressed in the black-white formalism, but we also prove that, for any standard LCL, we can define an LCL in the black-white formalism that has the same node-averaged complexity of the original one, up to a \emph{multiplicative} constant factor, implying that our results hold for all LCLs as well. This equivalence is shown in \Cref{lem:node-edge-enough}.

\section{Different Ways to Define LCLs}\label{sec:different-lcls}
In this section we prove an equivalence, for node-averaged complexity, between different definitions of LCLs. We start by providing the standard definition of LCLs, that in the following we will call standard LCLs.

\subparagraph{Locally Checkable Labeling problems.}\label{def:generic-lcls}
An LCL problem $\Pi$ is defined as a tuple $(\Sigma_{\mathrm{in}},\Sigma_{\mathrm{out}},C, r)$, where:
\begin{itemize}
    \item $\Sigma_{\mathrm{in}}$ and $\Sigma_{\mathrm{out}}$ are finite sets of labels that represent the possible input and output labels.
    
    \item The parameter $r\ge 1$ is an integer, and it represents the so-called \emph{checkability radius} of the LCL problem $\Pi$.

    \item $C$ is a finite set of labeled graphs that represent allowed neighborhoods, and more precisely $C$ is a finite set of pairs $(H, v)$, where:
    \begin{itemize}
        \item $H=(V_H, E_H)$ is a graph, and $v\in V_H$.
        \item The eccentricity of $v$ in $H$ is at most $r$.
        \item To each pair $(v,e)\in V_H\times E_H$ is assigned a label $\ell_{\mathrm{in}}\in \Sigma_{\mathrm{in}}$ and a label $\ell_{\mathrm{out}}\in \Sigma_{\mathrm{out}}$.
    \end{itemize}
\end{itemize}
Solving an LCL problem $\Pi$ on a graph $G=(V,E)$ means that:
\begin{itemize}
   \item To each node-edge pair $(v,e) \in V \times E$ is assigned a label $i(e) \in \Sigma_{\mathrm{in}}$.
    \item The task is to assign a label $o(e) \in \Sigma_{\mathrm{out}}$ to each node-edge pair $(v,e) \in V \times E$ such that, for each node $v \in V$, it holds that the (labeled) graph $N^r_v$ induced by $v$'s radius-$r$ neighborhood is in $C$, that is, $N^r_v$ is isomorphic to a (labeled) graph contained in $C$.
\end{itemize}

\subparagraph{Node-edge formalism.}
Using the black-white formalism (as defined in \Cref{sec:definitions}) requires the graph to be properly $2$-colored. We now describe a formalism, called node-edge, that does not have this requirement. We will soon show an equivalence between the node-edge formalism and the black-white formalism.
A problem $\Pi$ described in the node-edge formalism is a tuple $(\Sigma_{\mathrm{in}},\Sigma_{\mathrm{out}},C_N,C_E)$, where:
\begin{itemize}
    \item $\Sigma_{\mathrm{in}}$ and $\Sigma_{\mathrm{out}}$ are finite sets of labels.
    \item $C_N$ and $C_E$ are both multisets of pairs, where each pair $(\ell_{\mathrm{in}},\ell_{\mathrm{out}})$ is in $\Sigma_{\mathrm{in}} \times \Sigma_{\mathrm{out}}$. The multisets in $C_E$ have size exactly $2$.
\end{itemize}
Solving a problem $\Pi$ on a graph $G$ means that:
\begin{itemize}
    \item To each node-edge pair $(v,e) \in V \times E$ is assigned a label $i(e) \in \Sigma_{\mathrm{in}}$.
    \item The task is to assign a label $o(e) \in \Sigma_{\mathrm{out}}$ to each node-edge pair $(v,e) \in V \times E$ such that, for each node $v \in V$ (resp.\ each edge $e \in E$) it holds that the multiset of incident input-output pairs is in $C_V$ (resp.\ in $C_E$).
\end{itemize}

\subparagraph{Equivalence between black-white and node-edge formalisms.}
Observe that, if we focus only on what is a problem, and we forget about what it means to solve a problem, it is clear that the node-edge formalism is a subcase of the more general black-white formalism. In fact, it is easy to see that a problem $\Pi$ in the node-edge formalism implicitly defines also a problem $\Pi'$ in the black-white formalism, while a problem $\Pi'$ in the black-white formalism implicitly defines also a problem $\Pi$ in the node-edge formalism if it satisfies that $C_B$ contains only multisets of size exactly~$2$.

Moreover, it has been show in \cite{bcmos21} that also the time complexities of $\Pi$ and $\Pi'$ are related. In fact, one can prove that, given an algorithm $A$ for $\Pi$, we can define a new algorithm $A'$ for $\Pi'$ with the same asymptotic worst-case round complexity, and vice versa. 

We now describe the ideas that show that, given an algorithm $A'$ for $\Pi'$, we can use it to solve $\Pi$ (the other direction can be shown in a similar way). 
The idea is to simulate the algorithm for $\Pi'$ on a virtual graph $G'$, defined as a function of the real graph $G = (V,E)$ as follows.
Let $G' = (W \cup B, E')$, where $W = V$, $B = E$, and we connect $v \in V$ with $e \in B$ if $v \in e$. In other words, original nodes are white, and we add a black node in the middle of each original edge. Observe that, by solving $\Pi'$ on $G'$, and then mapping the output obtained for the edges of $G'$ to the node-edge pairs of $G$, we also solve $\Pi$ on $G$. Also, observe that simulating the execution of a $T$ round algorithm for $G'$ only costs $T/2 + O(1)$ rounds on $G$.

\subparagraph{Equivalence between standard LCLs and LCLs in the node-edge formalism.}
In \cite{bcmos21} it has been shown that, given a standard LCL $\Pi$ on trees, it is possible to define a node-edge checkable problem $\Pi'$ satisfying that, given a solution for $\Pi$, it is possible to spend $O(1)$ rounds to solve $\Pi'$, and vice versa. 
\begin{lemma}[\cite{bcmos21}]\label{lem:lclequivalence}
    For any LCL problem $\Pi$ on trees with checkability radius $r = O(1)$ we can define a node-edge checkable problem $\Pi'$ that satisfies the following:
    \begin{itemize}
        \item There exists an $O(1)$-rounds algorithm that, given in input a solution for $\Pi'$, outputs a solution for $\Pi$.
        \item There exists an $O(1)$-rounds algorithm that, given in input a solution for $\Pi$, outputs a solution for  $\Pi'$.
    \end{itemize}
\end{lemma}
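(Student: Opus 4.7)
The plan is to construct $\Pi'$ by having each node-edge pair encode enough information about the local radius-$r$ neighborhood of the incident node to allow the standard-LCL constraint at that node to be verified purely by a node constraint in $\Pi'$, and then to use the edge constraint of $\Pi'$ to enforce consistency between the encodings output by the two endpoints of the edge. Concretely, I would set $\Sigma_{\mathrm{in}}'=\Sigma_{\mathrm{in}}$ and let $\Sigma_{\mathrm{out}}'$ be the finite set of all labeled radius-$r$ balls (with both input labels from $\Sigma_{\mathrm{in}}$ and output labels from $\Sigma_{\mathrm{out}}$) that a node in a bounded-degree tree could have; since $r$, $\Delta$, $|\Sigma_{\mathrm{in}}|$, $|\Sigma_{\mathrm{out}}|$ are all constants, $\Sigma_{\mathrm{out}}'$ is finite. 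A node-edge pair $(v,e)$ will be labeled by the ball that $v$ claims to see, restricted/rooted appropriately so that the label carries both the view and the identity of $v$ inside the ball.

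First I would specify the node constraint set $C_N$ of $\Pi'$: it accepts a multiset at a node $v$ iff all incident labels encode the same radius-$r$ ball $(H,v)$, this ball is an element of the allowed set $C$ of $\Pi$, and the input labels encoded in the ball match the actual input $i(e)$ on each incident edge $e$. Next I would specify the edge constraint set $C_E$ of $\Pi'$: for each edge $e=\{u,v\}$, the two labels on $(u,e)$ and $(v,e)$ must be balls rooted at $u$ and $v$ respectively whose intersection (which is the radius-$(r-1)$ ball centered at the edge) is labeled identically in the two encodings. Both $C_N$ and $C_E$ are finite lists of allowed configurations, so $\Pi'$ is a bona fide LCL in the node-edge formalism.

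For the two reductions: given a solution to $\Pi'$, a node $v$ reads off its own label on any incident node-edge pair, extracts the output label $o(e)$ encoded there for each edge $e$ within radius $r$, and in particular outputs $o(e)$ for each of its own incident edges; this requires $0$ communication rounds, and correctness for $\Pi$ follows because the node constraint of $\Pi'$ guarantees that the encoded ball lies in $C$, while the edge constraint guarantees global consistency so the extracted labels of $\Pi$ are well-defined on each edge. Conversely, given a solution to $\Pi$, each node $v$ collects its radius-$r$ input/output neighborhood in exactly $r=O(1)$ rounds and outputs the encoding of this ball on every incident node-edge pair; the node constraint is satisfied because the ball is by construction in $C$, and the edge constraint is satisfied because two neighboring nodes see overlapping portions of the same underlying labeling of $G$.

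The only real obstacle is making sure the encoding is canonical enough that two neighbors' encodings are literally equal on the overlap (not merely isomorphic), since $C_E$ is checked symbolically. I would handle this by fixing, once and for all, a canonical way to linearize a rooted labeled ball of radius $r$ in a tree of maximum degree $\Delta$ (for example, a canonical BFS traversal broken by local port labels, which are available since nodes know which of their edges correspond to which port). With such a canonicalization, the finite alphabets and constant-radius constraints of $\Pi'$ are well-defined and the two reductions run in $O(1)$ rounds, proving the claimed equivalence.
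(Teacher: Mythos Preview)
The paper does not prove this lemma itself; it is quoted from \cite{bcmos21} and used as a black box, so there is no in-paper argument to compare against. Your construction is the standard one and is correct: encoding the labeled radius-$r$ ball on each half-edge, verifying membership in $C$ via the node constraint, and enforcing overlap consistency via the edge constraint is exactly how the cited reference proceeds.

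Two minor points worth tightening in a formal write-up. First, the half-edge label should also record which edge of the encoded ball is the one currently being labeled; without this marker the multiset-only node constraint cannot match the encoded root incidences to the actual ones, and the edge constraint has no canonical way to align the two balls. Second, the soundness direction (from a $\Pi'$-solution to a $\Pi$-solution) deserves one explicit sentence arguing, by induction on distance along the chain of edge-overlap constraints, that each node's claimed ball coincides with its \emph{actual} radius-$r$ neighborhood under the extracted labeling, not merely with some member of $C$.
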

By combining this lemma with the equivalence that we discussed before, we obtain that by studying the worst-case complexity of LCL problems on trees in the black-white formalism is not a restriction, and hence we get results that hold for LCLs in the more general form. \Cref{fig:lcls-MM} shows an example of an LCL problem defined in different formalisms. In the following we show a similar statement for the node-averaged complexity of LCLs.

\begin{figure}
	\centering
	\includegraphics[width=0.75\textwidth]{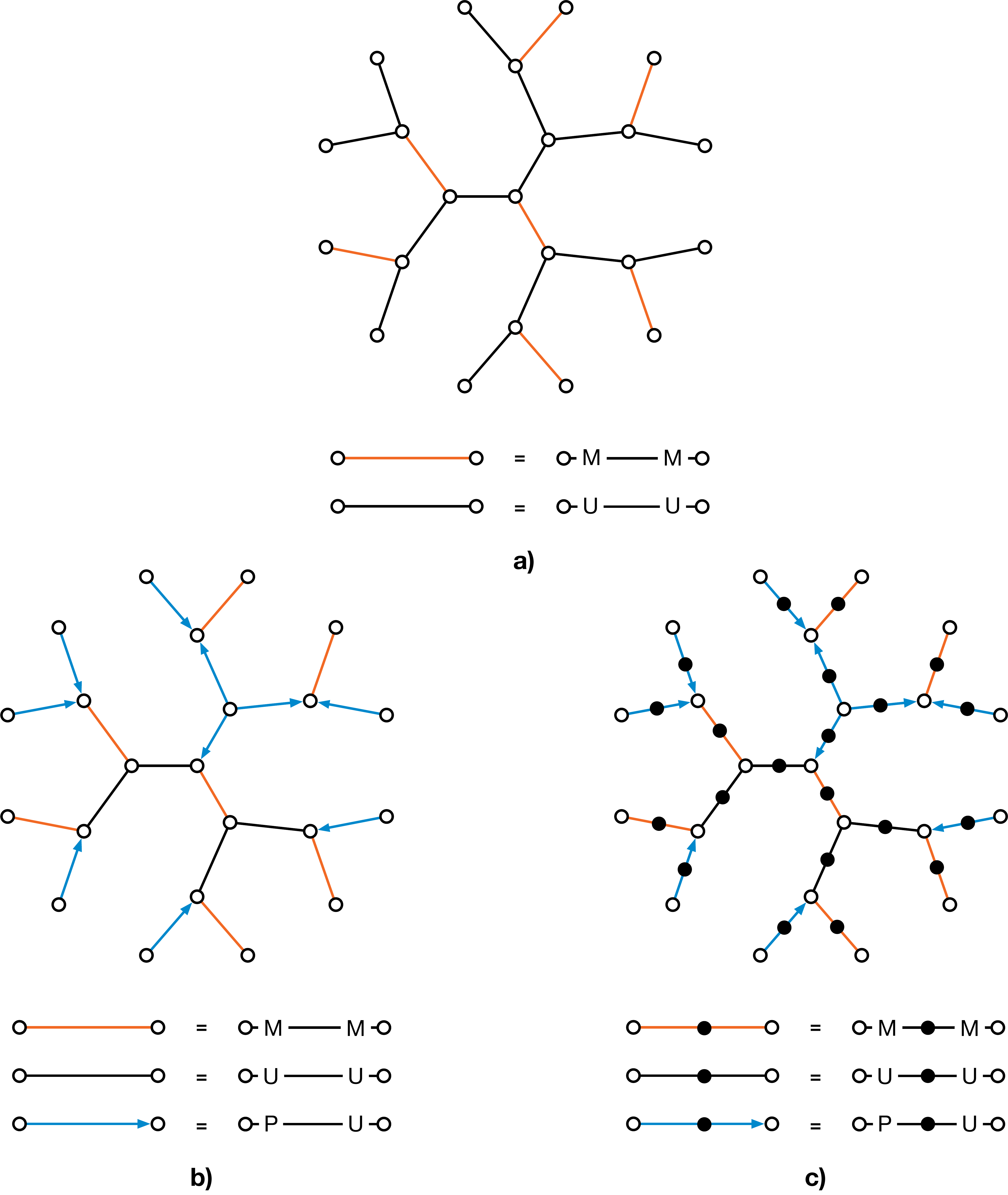}
	\caption{a) In the maximal matching problem, a node is either matched or all its neighbors are matched. This can be described as an LCL by letting each edge be either labeled $\{M,M\}$ or $\{U,U\}$, and then listing all possible valid radius-$2$ neighborhoods. b) The maximal matching problem encoded in the node-edge formalism. One way is to require unmatched nodes to orient all their edges outgoing, and then require all nodes with at least one incoming edge to be matched. These constraints can be given in the node-edge formalism. c) The $2$-colored graph in which the maximal matching problem has the same complexity as in the node-edge setting.}
	\label{fig:lcls-MM}
\end{figure}

\subparagraph{Equivalence between standard LCLs and black-white formalism for node-averaged complexity.}
In order to prove that we can restrict to the black-white formalism also for the case of node-averaged complexity, we first prove an equivalence between standard LCLs and the node-edge formalism, and then we prove an equivalence between the node-edge formalism and the black-white formalism. 

In the case of worst case complexity, if we have an algorithm $A_1$ that produce a result in $T_1$ rounds, and we have a different algorithm $A_2$ that takes the output of $A_1$ in input and produces a result in $T_2$ rounds, then we can execute $A_2$ after $A_1$ and obtain a running time of $T_1 + T_2$. For node-averaged complexity, unfortunately, the same does not hold. Nevertheless we can achieve something similar.
\begin{lemma}\label{lem:concat}
    Suppose that we have an algorithm $A_1$ with node-averaged complexity $T_1$, and an algorithm $A_2$ with worst-case complexity $T_2$. Consider the algorithm $A_3$ obtained by combining $A_1$ and $A_2$, where nodes execute $A_2$ only after all nodes in their $T_2$ radius neighborhood terminated the execution of $A_1$. The node-averaged complexity of $A_3$ is $O(T_1 \cdot \Delta^{T_2})$.
\end{lemma}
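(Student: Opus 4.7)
The plan is to control the running time of each node under $A_3$ by two contributions: the time the node has to wait before it can start executing $A_2$, and the $T_2$ rounds spent running $A_2$ itself. Since a node $v$ starts $A_2$ only once every node in $N^{T_2}(v)$ (its radius-$T_2$ neighborhood) has completed $A_1$, we have the pointwise bound
\[
    T_v^G(A_3) \;\le\; \max_{u\in N^{T_2}(v)} T_u^G(A_1) \;+\; T_2 .
\]
Taking expectations and summing over $v\in V$ gives
\[
    \sum_{v\in V}\E\!\left[T_v^G(A_3)\right] \;\le\; \sum_{v\in V}\E\!\left[\max_{u\in N^{T_2}(v)} T_u^G(A_1)\right] + |V|\cdot T_2.
\]

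The key step is to replace the $\max$ by a sum, which is clearly an upper bound since the terms $T_u^G(A_1)$ are non-negative. This gives
\[
    \sum_{v\in V}\E\!\left[\max_{u\in N^{T_2}(v)} T_u^G(A_1)\right] \;\le\; \sum_{v\in V}\sum_{u\in N^{T_2}(v)}\E\!\left[T_u^G(A_1)\right].
\]
Now I swap the order of summation. A node $u$ appears in $N^{T_2}(v)$ exactly when $v\in N^{T_2}(u)$, and since the graph has maximum degree $\Delta$ we have $|N^{T_2}(u)|\le \Delta^{T_2}$ (absorbing the $+1$ into the big-$O$). Hence
\[
    \sum_{v\in V}\sum_{u\in N^{T_2}(v)}\E\!\left[T_u^G(A_1)\right] \;=\; \sum_{u\in V}\bigl|\{v:u\in N^{T_2}(v)\}\bigr|\cdot \E\!\left[T_u^G(A_1)\right] \;\le\; \Delta^{T_2}\sum_{u\in V}\E\!\left[T_u^G(A_1)\right].
\]
By the hypothesis on the node-averaged complexity of $A_1$, the last sum is at most $|V|\cdot T_1$. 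Dividing the resulting inequality through by $|V|$ yields
\[
    \nodeavg(A_3) \;\le\; \Delta^{T_2}\cdot T_1 + T_2 \;=\; O\!\bigl(T_1\cdot \Delta^{T_2}\bigr),
\]
where the last equality uses that $T_1\ge 1$ (which we may assume, treating the degenerate case $T_1=0$ separately).

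No step here is really an obstacle; the only subtle point is ensuring that the two ``phases'' can be treated additively per node despite the fact that node-averaged complexity does not compose like worst-case complexity. That is handled exactly by the waiting rule in the definition of $A_3$: each node $v$ only starts $A_2$ once its $T_2$-radius neighborhood has finished $A_1$, so the $T_2$-round simulation of $A_2$ is locally well-defined at $v$, and the pointwise bound above is valid. The $\Delta^{T_2}$ blow-up is the unavoidable price for each slow node of $A_1$ potentially delaying every node within distance $T_2$ of it.
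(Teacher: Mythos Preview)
Your proof is correct and is essentially the same approach as the paper's. The paper phrases it as a charging argument (charge each $v$'s running time to the single slowest node $t(v)\in N^{T_2}(v)$, then observe each node is charged at most $O(\Delta^{T_2})$ times), whereas you replace the max by the full sum over $N^{T_2}(v)$ and then swap the order of summation; both routes yield the identical inequality $\sum_v T_v^G(A_3)\le \Delta^{T_2}\sum_u T_u^G(A_1)+|V|\cdot O(T_2)$.
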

\begin{proof}
We show a charging scheme that satisfies that the sum of the charges is an upper bound on the sum of the running times of the nodes.
For each node $v$ of the graph, let $t(v)$ be the last node in the $T_2$-radius neighborhood of $v$ that terminates the execution of $A_1$, break ties arbitrarily. We charge the running time of $v$ in $A_3$ to $t(v)$. Observe that each node is charged $O(\Delta^{T_2})$ times, and that the sum of the running times in $A_1$ of the nodes charged at least once is $O(n \cdot T_1)$. Hence, the sum of the charges, and the running times in $A_3$, is $O(n \cdot T_1 \cdot \Delta^{T_2})$, implying that the node-averaged complexity of $A_3$ is $O(T_1 \cdot \Delta^{T_2})$, as required. 
\end{proof}
We are now ready to prove an equivalence, for node-averaged complexity, between standard LCLs and LCLs in the node-edge formalism.
\begin{lemma}\label{lem:avg-standard-ne-eq}
    For any LCL problem $\Pi$ on trees with checkability radius $r$ and node-averaged complexity $T$ we can define a node-edge
checkable problem $\Pi'$ with node-averaged complexity $\Theta(T)$.
\end{lemma}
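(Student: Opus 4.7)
The plan is to use the same problem $\Pi'$ constructed by \Cref{lem:lclequivalence} and to transfer the node-averaged complexity bounds in both directions by combining an algorithm for one problem with the $O(1)$-round conversion procedure provided by that lemma, applying \Cref{lem:concat} for the analysis. Since the checkability radius $r$ is assumed constant and we work in bounded-degree trees (so $\Delta = O(1)$), both $T_2$ and $\Delta^{T_2}$ in the statement of \Cref{lem:concat} will be absorbed into a constant.

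First, for the upper bound, I would argue that $\Pi'$ has node-averaged complexity $O(T)$. Let $A_\Pi$ be an algorithm solving $\Pi$ with node-averaged complexity $T$, and let $A_{\Pi \to \Pi'}$ be the $O(1)$-round conversion algorithm from \Cref{lem:lclequivalence} that transforms a solution of $\Pi$ into a solution of $\Pi'$. Define $A_{\Pi'}$ as the composition: every node first simulates $A_\Pi$, and once all nodes in its constant-radius neighborhood have terminated $A_\Pi$, it runs $A_{\Pi \to \Pi'}$ to determine its output for $\Pi'$. Applying \Cref{lem:concat} with $T_1 = T$ and $T_2 = O(1)$ yields a node-averaged complexity of $O(T \cdot \Delta^{O(1)}) = O(T)$, since $\Delta$ is constant.

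Second, for the matching lower bound $\Omega(T)$, I would argue by contradiction. Suppose $\Pi'$ admitted an algorithm $A_{\Pi'}$ with node-averaged complexity $T' = o(T)$. Using the other direction of \Cref{lem:lclequivalence}, there is an $O(1)$-round conversion algorithm $A_{\Pi' \to \Pi}$ taking a solution of $\Pi'$ to a solution of $\Pi$. Composing $A_{\Pi'}$ with $A_{\Pi' \to \Pi}$ in the same manner as above and invoking \Cref{lem:concat} again would yield an algorithm for $\Pi$ with node-averaged complexity $O(T' \cdot \Delta^{O(1)}) = o(T)$, contradicting the fact that $T$ is the node-averaged complexity of $\Pi$.

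I do not expect a major technical obstacle. The main thing to be careful about is that \Cref{lem:concat} requires $A_2$ to be a worst-case algorithm and that nodes only start executing $A_2$ after all nodes in their $T_2$-neighborhood terminate $A_1$; this is automatically achievable since each node can locally detect termination within its constant-radius neighborhood. A minor subtlety is that $r = O(1)$ must be assumed explicitly (as in \Cref{lem:lclequivalence}) so that the conversion radius and the degree-bound exponent are constants; with this, the entire argument is a clean two-way reduction and gives the claimed $\Theta(T)$ bound.
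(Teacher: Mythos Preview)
Your proposal is correct and takes essentially the same approach as the paper: apply \Cref{lem:concat} using the $O(1)$-round conversions supplied by \Cref{lem:lclequivalence}, and use $\Delta = O(1)$ to absorb the $\Delta^{T_2}$ factor. The paper's own proof is very terse and only spells out the $O(T)$ direction, leaving the $\Omega(T)$ direction implicit; your version is actually more complete in that you explicitly run the reduction in both directions to obtain the full $\Theta(T)$ bound.
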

\begin{proof}
    We apply \Cref{lem:concat} where $T_1$ is the node-averaged complexity of the given LCL problem $\Pi$, and $T_2$ is the worst-case time complexity required to convert a solution for $\Pi$ into a solution of its equivalent node-edge checkable variant. By \Cref{lem:lclequivalence} we have that $T_2 = O(1)$. Since $\Delta = O(1)$, then the claim follows.
\end{proof}

We now prove an equivalence, for node-averaged complexity, between LCLs in the node-edge formalism and LCLs in the black-white formalism.
\begin{lemma}\label{lem:bw-ne-eq}
    For any node-edge checkable LCL problem $\Pi$ on trees with node-averaged complexity $T$ we can define an LCL $\Pi'$ in the black-white formalism with node-averaged complexity $\Theta(T)$.
\end{lemma}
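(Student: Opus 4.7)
The plan is to construct $\Pi'$ as the natural subdivision encoding of $\Pi$ and then establish matching upper and lower bounds on its node-averaged complexity via a pair of symmetric simulation arguments, both exploiting the bounded-degree assumption.

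Given $\Pi = (\Sigma_{\mathrm{in}}, \Sigma_{\mathrm{out}}, C_N, C_E)$ in the node-edge formalism, I would define $\Pi' = (\Sigma_{\mathrm{in}}, \Sigma_{\mathrm{out}}, C_W, C_B)$ in the black-white formalism by taking $C_W := C_N$ and $C_B := C_E$. Given an input tree $G = (V, E)$ for $\Pi$, the corresponding instance for $\Pi'$ is the subdivision $G' = (V \cup B, E')$: the set $B$ contains one black node $b_e$ per edge $e \in E$, and each $e = \{u, v\}$ is replaced by the two edges $\{u, b_e\}$ and $\{v, b_e\}$. Input labels carry over: $i(\{v, b_e\})$ in $G'$ copies $i_v(e)$ in $G$, and the output label on $\{v, b_e\}$ in $G'$ is identified with the node-edge output $o_v(e)$ in $G$. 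Since $G$ is a tree, $|V(G')| = 2n - 1 = \Theta(n)$.

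For the upper bound, I would simulate any algorithm $A$ for $\Pi$ of node-averaged complexity $T$ on $G'$, using the fact that one round on $G$ corresponds to two rounds on $G'$ (as each $G$-edge becomes a length-$2$ path). A white node $v$ terminates in $A'$ once it has determined the labels on every edge $\{v, b_e\}$, i.e., once $v$ has terminated in $A$, which can be arranged by time $2 T_v^G(A) + O(1)$. A black node $b_e$ with $e = \{u, v\}$ must additionally wait to learn the labels on both its incident edges, produced by $u$ and $v$, so it terminates by time $\max(2 T_u^G(A), 2 T_v^G(A)) + O(1)$. Bounding the max by a sum and swapping summation orders,
\[
\sum_{w \in V(G')} T_w^{G'}(A') \;\leq\; O(1)\sum_{v} T_v^G(A) \;+\; O(1)\sum_{v} \deg(v)\cdot T_v^G(A) \;+\; O(n) \;=\; O(\Delta)\sum_{v} T_v^G(A) + O(n),
\]
and dividing by $|V(G')| = \Theta(n)$ yields node-averaged complexity $O(T)$ for $A'$. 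For the matching lower bound, I would conversely simulate any algorithm $A'$ for $\Pi'$ of node-averaged complexity $T'$ on $G$: each $v \in V$ stores its white-node state, and each edge $\{u,v\} \in E$ is used by $u$ and $v$ to maintain synchronized copies of the intermediate black node $b_{\{u,v\}}$. Two rounds on $G'$ can be compressed into one round on $G$ (a single $G$-edge exchange suffices for both endpoints to update themselves and the shared black-node state two $G'$-steps forward), so $T_v^G(A) \leq T_v^{G'}(A')/2 + O(1)$ while black nodes incur no direct cost; summing and dividing by $n$ gives node-averaged complexity $O(T')$ for $A$. Combining both directions shows that the optimal node-averaged complexity of $\Pi'$ is exactly $\Theta(T)$.

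The main subtlety is that, unlike worst-case complexity, node-averaged complexity does not tolerate arbitrary additive-constant slack per node: the $\max$ appearing in the black-node termination time could a priori blow up if a single edge sees one huge endpoint runtime. The saving grace is that when summed over all edges the max is dominated by $\sum_v \deg(v)\cdot T_v^G(A) \leq \Delta \sum_v T_v^G(A)$, so the bounded-degree assumption is essential precisely here. A secondary concern is preservation of \emph{expected} runtimes in the randomized setting, but since the simulation overhead is a deterministic constant factor this extends automatically by linearity of expectation.
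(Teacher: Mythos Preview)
Your proposal is correct and follows essentially the same approach as the paper: both define $\Pi'$ by taking $C_W := C_N$, $C_B := C_E$ on the subdivision graph, and both directions are established by symmetric simulations with charging through the degree bound $\Delta = O(1)$. The only minor difference is in the second direction: the paper assigns each black node to one endpoint and has $v$ terminate when its slowest simulated (white or black) node finishes, charging that runtime to the corresponding simulated node, whereas you let $v$ commit as soon as white $v$ terminates and sum only over white nodes---both yield the same $O(T')$ bound.
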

\begin{proof}
    Let $\Pi = (\Sigma_{\mathrm{in}},\Sigma_{\mathrm{out}},C_N,C_E)$.
    We define $\Pi' = \Pi$, that is, the input and output labels are exactly the same, the white constraint of $\Pi'$ is $C_N$, and the black constraint of $\Pi'$ is $C_E$.
    Observe that, while $\Pi$ and $\Pi'$ are syntactically the same, an algorithm for $\Pi$ is designed for working on a tree with no $2$-coloring given, and it needs to assign an output label to each node-edge pair, while an algorithm for $\Pi'$ is designed for working on a tree that is properly $2$-colored and where all black nodes have degree $2$, and it needs to assign an output label to each edge.
    
    We first show that, given an algorithm $A$ for solving $\Pi$ with node-averaged complexity $T$, we can design an algorithm $A'$ for solving $\Pi'$ with node-averaged complexity $O(T)$. The algorithm $A'$ works as follows. White nodes simulate the execution of $A$, while black nodes relay messages that are exchanged between white nodes. This algorithm clearly solves $\Pi'$. Each white node spends exactly twice the running time of $A$, since each round of $A$ is simulated with $2$ rounds of $A'$ (it takes $2$ rounds for white nodes to exchange messages). Observe that black nodes, in order to know their output, need to wait that the two incident white neighbors terminate. Hence, we can charge their running time to their slowest neighbor. Since each node is charged at most $\Delta = O(1)$ times, then we obtain that $A'$ has node-averaged complexity $O(T)$.

    We now show that, given an algorithm $A'$ for solving $\Pi'$ with node-averaged complexity $T$, we can design an algorithm $A$ for solving $\Pi$ with node-averaged complexity $O(T)$. The algorithm $A$ is defined as follows. Each edge of the tree is assigned to one incident node, arbitrarily. In the algorithm $A$, the nodes pretend to be white nodes, and simulate the execution of $A'$. Moreover, the nodes simulate, for each incident edge assigned to them, the execution of a black node, that is connected to the two nodes incident to the edge. This algorithm clearly solves $\Pi$, and its running time can be bounded as follows. The running time of each node is charged to its longest running simulated (black or white) node. Observe that each simulated node is charged at most once, and that the sum of the running times of the simulated nodes that have been charged at least once is an upper bound on the sum of the running times in $A$. This amount is at most $n \cdot T$, where $n$ is the size of the simulated graph. Since the number of nodes in the real graph and in the simulated one is asymptotically the same, then we obtain that the node-averaged complexity of $A$ is at most $O(T)$.
\end{proof}

By combining \Cref{lem:avg-standard-ne-eq} and \Cref{lem:bw-ne-eq} we obtain that focusing on LCLs in the black-white formalism is not a restriction, and that our results apply to standard LCLs as well. In particular, we obtain the following.
\begin{lemma}\label{lem:node-edge-enough}
    For any LCL problem $\Pi$ on trees with checkability radius $r$ and node-averaged complexity $T$ we can define an LCL $\Pi'$ in the black-white formalism with node-averaged complexity $\Theta(T)$.
\end{lemma}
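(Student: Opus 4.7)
The plan is to obtain \Cref{lem:node-edge-enough} as an immediate corollary by composing the two equivalences already proved in the section: \Cref{lem:avg-standard-ne-eq} (standard LCL $\leftrightarrow$ node-edge LCL with $\Theta(T)$ node-averaged complexity) and \Cref{lem:bw-ne-eq} (node-edge LCL $\leftrightarrow$ black-white LCL with $\Theta(T)$ node-averaged complexity). So the entire argument is: given $\Pi$ with checkability radius $r = O(1)$ and node-averaged complexity $T$, first apply \Cref{lem:avg-standard-ne-eq} to produce a node-edge checkable problem $\Pi''$ of node-averaged complexity $\Theta(T)$; then apply \Cref{lem:bw-ne-eq} to $\Pi''$ to produce a problem $\Pi'$ in the black-white formalism with node-averaged complexity $\Theta(\Theta(T)) = \Theta(T)$.

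Since there is essentially nothing new to prove beyond the composition, my write-up will simply verify that both lemmas can be applied in sequence and that the asymptotic constant factors incurred by each transformation multiply (rather than, say, compose badly) because both are of the form $\Theta(\cdot)$ preserving. I would also briefly highlight the only subtle point: the transformations are asymmetric in the sense that the reduction from standard LCLs to node-edge-checkable problems relies on \Cref{lem:concat} (because converting between solutions takes $O(1)$ rounds in the \emph{worst case}, and concatenating with a node-averaged algorithm requires paying a $\Delta^{O(1)} = O(1)$ blow-up for bounded degree). The conversion between node-edge and black-white is purely a simulation that charges each node's time to a bounded number of simulated nodes, so the node-averaged complexity is preserved up to a constant factor.

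Concretely, the two-line proof will read: apply \Cref{lem:avg-standard-ne-eq} to $\Pi$, obtaining a node-edge checkable problem $\Pi''$ of node-averaged complexity $\Theta(T)$; then apply \Cref{lem:bw-ne-eq} to $\Pi''$, obtaining a black-white formalism problem $\Pi'$ of node-averaged complexity $\Theta(T)$. The main (and essentially only) obstacle has already been handled in \Cref{lem:concat}, namely that in the node-averaged setting one cannot naively concatenate algorithms---but since the postprocessing step given by \Cref{lem:lclequivalence} has worst-case complexity $O(1)$ and $\Delta = O(1)$, the blow-up factor $\Delta^{T_2}$ in \Cref{lem:concat} is just a constant, so everything goes through. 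No new ideas are required; the statement is simply the headline consequence of the section that justifies restricting attention to the black-white formalism throughout the rest of the paper.
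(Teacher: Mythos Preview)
Your proposal is correct and matches the paper's approach exactly: the paper also obtains \Cref{lem:node-edge-enough} simply by composing \Cref{lem:avg-standard-ne-eq} and \Cref{lem:bw-ne-eq}, with no additional argument. Your commentary on why \Cref{lem:concat} is the only subtle ingredient is accurate and even a bit more explicit than the paper, which just states the composition without further remark.
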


\end{document}